%% file: tocl-article.tex
\documentclass[manuscript,screen,review=false]{acmart}

\usepackage{./sty/leosstylefile-acm}
\usepackage{./sty/tikzit}
\usetikzlibrary{positioning,intersections,quotes}
\input{./sty/layered-journal.tikzstyles}
\input{./sty/tikz-ct.tikzstyles}
\input{./sty/macros.tex}

\input{./sty/circuits.tikzdefs}
\input{./sty/circuits.tikzstyles}
\input{./sty/zx.tikzdefs}
\input{./sty/zx.tikzstyles}

\begin{document}

\title{Layered Monoidal Theories I: Diagrammatic Algebra and Applications}

\author{Leo Lobski}
\email{leo.lobski.21@ucl.ac.uk}
\orcid{0000-0002-0260-0240}
\author{Fabio Zanasi}
\email{f.zanasi@ucl.ac.uk}
\orcid{0000-0001-6457-1345}
\affiliation{%
  \institution{University College London}
  \city{London}
  \country{United Kingdom}
}

\begin{abstract}
We develop {\em layered monoidal theories} -- a generalisation of monoidal theories combining formal descriptions of a system at different levels of abstraction. Via their representation as {\em string diagrams}, monoidal theories provide a graphical formalism to reason algebraically about information flow in models across different fields of science. {\em Layered} monoidal theories allow mixing several monoidal theories (together with translations between them) within the same string diagram, while retaining mathematical precision and semantic interpretability. We develop the mathematical foundations of layered monoidal theories, as well as providing several instances of our approach, including digital and electrical circuits, quantum processes, chemical reactions, concurrent processes, and probability theory.
\end{abstract}

\begin{CCSXML}
<ccs2012>
   <concept>
       <concept_id>10003752.10010124.10010131.10010137</concept_id>
       <concept_desc>Theory of computation~Categorical semantics</concept_desc>
       <concept_significance>500</concept_significance>
       </concept>
   <concept>
       <concept_id>10003752.10003790</concept_id>
       <concept_desc>Theory of computation~Logic</concept_desc>
       <concept_significance>500</concept_significance>
       </concept>
 </ccs2012>
\end{CCSXML}

\ccsdesc[500]{Theory of computation~Categorical semantics}
\ccsdesc[500]{Theory of computation~Logic}

\keywords{Monoidal theories, String diagrams, Layers of abstraction}


\maketitle

\section{Introduction}\label{sec:layered-introduction}
\input{./tex/tocl-article/intro}

\section{Monoidal theories}\label{sec:monoidal-theories}
\input{./tex/tocl-article/monoidal-theories-models}

\section{Layered theories}\label{sec:syntax-layered-theories}
\input{./tex/tocl-article/layered-theories}

\section{Functor boxes and coboxes}\label{sec:functor-boxes}
\input{./tex/tocl-article/functor-boxes}

\section{Case studies}\label{sec:layered-examples}
\input{./tex/tocl-article/layered-examples}

\section{Discussion and future work}\label{sec:layered-discussion}
\input{./tex/tocl-article/layered-discussion}

\begin{acks}
The authors thank the anonymous referees for their helpful suggestions, and acknowledge support from the ARIA Safeguarded AI Programme.
\end{acks}

\bibliographystyle{ACM-Reference-Format}
\bibliography{bibliography}

\appendix

\section{Proofs for Subsection~\ref{subsec:ccs}}\label{sec:ccs-proofs}
\input{./tex/tocl-article/appendix/ccs-proofs}

\section{Parametric monoidal theories}\label{sec:para-copara}
\input{./tex/tocl-article/appendix/para-copara}

\section{ZX-calculus and measurement based quantum computing}\label{sec:zx-mbqc}
\input{./tex/tocl-article/appendix/zx-mbqc}

\end{document}

%% file: sty/layered-journal.tikzstyles
\tikzstyle{node}=[fill=black, draw=black, shape=circle]
\tikzstyle{wnode}=[fill=white, draw=black, shape=circle]
\tikzstyle{textbox}=[inner sep=2pt, shape=rectangle, fill=none]
\tikzstyle{textnode}=[inner sep=0mm, shape=circle, fill=white]
\tikzstyle{gnode}=[inner sep=0mm, minimum size=1mm, fill={rgb,255: red,221; green,221; blue,221}, draw={rgb,255: red,221; green,221; blue,221}, shape=circle]
\tikzstyle{refine}=[fill=black, draw=black, shape=regular polygon, regular polygon sides=3, rotate=180, scale=0.5]
\tikzstyle{coarsen}=[fill=white, draw=black, shape=regular polygon, regular polygon sides=3, scale=0.5]
\tikzstyle{bdytextbox}=[fill=white, draw=black, shape=rectangle]
\tikzstyle{redbox}=[fill=white, draw=red, shape=rectangle, text=red]
\tikzstyle{bluecirc}=[inner sep=1mm, fill=white, draw={rgb,255: red,4; green,51; blue,255}, shape=circle, text={rgb,255: red,4; green,51; blue,255}]
\tikzstyle{rednode}=[fill=red, draw=red, shape=circle, scale=0.5]
\tikzstyle{new style 0}=[fill=white, draw=red, shape=circle, scale=0.5]
\tikzstyle{bluenode}=[fill={rgb,255: red,4; green,51; blue,255}, draw={rgb,255: red,4; green,51; blue,255}, shape=circle, scale=0.5]
\tikzstyle{yellownode}=[fill={rgb,255: red,255; green,210; blue,75}, draw={rgb,255: red,255; green,210; blue,75}, shape=circle, scale=0.5]
\tikzstyle{blacksq}=[fill=black, draw=black, shape=rectangle, scale=0.5]
\tikzstyle{bluetext}=[fill=none, draw=none, shape=rectangle, text={rgb,255: red,4; green,51; blue,255}]
\tikzstyle{reg}=[draw, fill=white, rounded rectangle, rounded rectangle left arc=none, minimum height=1em, minimum width=1em, node font={\scriptsize}, scale=1.25]
\tikzstyle{coreg}=[draw, fill=white, rounded rectangle, rounded rectangle right arc=none, minimum height=1em, minimum width=1em, node font={\scriptsize}, scale=1.25]
\tikzstyle{turquoisenode}=[fill={rgb,255: red,115; green,255; blue,239}, draw=black, shape=circle, scale=0.5]
\tikzstyle{resistor}=[R]
\tikzstyle{inductor}=[L]
\tikzstyle{capacitor}=[C]
\tikzstyle{voltage-source}=[american voltage source]
\tikzstyle{current-source}=[american current source]
\tikzstyle{togray}=[anchor=center]
\tikzstyle{snode}=[fill=black, draw=black, shape=circle, scale=0.5]
\tikzstyle{swnode}=[fill=white, draw=black, shape=circle, scale=0.5]

\tikzstyle{edge}=[-, draw=black]
\tikzstyle{diredge}=[->, draw=black]
\tikzstyle{dashed edge}=[-, draw=black]
\tikzstyle{loosedash}=[-, dash pattern=on 1.5pt off 3pt, draw=black]
\tikzstyle{dirdash}=[->, dashed, dash pattern=on 2pt off 0.5pt, draw=black]
\tikzstyle{mapsto}=[{|->}, draw=black]
\tikzstyle{gray diredge}=[draw={rgb,255: red,221; green,221; blue,221}, ->]
\tikzstyle{dark grey dirdash}=[->, dashed, dash pattern=on 2pt off 0.5pt, draw={rgb,255: red,81; green,81; blue,81}]
\tikzstyle{doubedge}=[-, draw=black, double=none, double distance=3pt, inner sep=0pt, thick]
\tikzstyle{thedge}=[-, line width=1.5pt, draw=black]
\tikzstyle{gray dashed}=[-, dashed, dash pattern=on 1pt off 1.5pt, draw={rgb,255: red,128; green,128; blue,128}]
\tikzstyle{rededge}=[-, draw=red]
\tikzstyle{gray edge}=[-, draw={rgb,255: red,128; green,128; blue,128}]
\tikzstyle{blthedge}=[-, thick, draw={rgb,255: red,4; green,51; blue,255}]
\tikzstyle{blthedge-extend}=[-, thick, draw={rgb,255: red,4; green,51; blue,255}, shorten >= -0.3pt, shorten <= -0.3pt]
\tikzstyle{blthdash}=[-, dashed, dash pattern=on 3pt off 1pt, thick, draw={rgb,255: red,4; green,51; blue,255}]
\tikzstyle{dirrededge}=[draw=red, ->]

%% file: sty/tikz-ct.tikzstyles
\tikzstyle{object}=[inner sep=0mm, shape=circle, fill=none]
\tikzstyle{bullet}=[fill=black, draw=black, shape=circle, scale=0.3]
\tikzstyle{circ}=[fill=white, draw=black, shape=circle, scale=0.3]
\tikzstyle{objectbox}=[inner sep=3pt, shape=rectangle, fill=none]
\tikzstyle{bdyobjectbox}=[fill=white, draw=black, shape=rectangle]

\tikzstyle{morphism}=[->, draw=black]
\tikzstyle{dash morphism}=[->, dashed, dash pattern=on 1.5pt off 1pt, draw=black]
\tikzstyle{mapsto}=[{|->}, draw=black]
\tikzstyle{nat transf}=[-implies, double, double distance=3pt, thick]
\tikzstyle{gray nat transf}=[-implies, draw=gray, double, double distance=3pt, thick]
\tikzstyle{equality}=[-, double, double distance=3pt]
\tikzstyle{squig morphism}=[->, draw=black, line join=round, decorate, decoration={zigzag, segment length=4, amplitude=.9,post=lineto, post length=2pt}]
\tikzstyle{hookarrow}=[right hook->, draw=black]
\tikzstyle{hookarrowmirror}=[left hook->, draw=black]
\tikzstyle{dotted line}=[-, dotted, draw=black]

%% file: sty/macros.tex
\definecolor{electroblue}{RGB}{4,51,255}
\usepackage{siunitx}
\usepackage[american,siunitx]{circuitikz}
\ctikzset{
    color/.initial=electroblue,
    color=electroblue,
    bipoles/thickness=2.5,
    bipoles/length=1.55cm,
    sources/scale=0.8,
    capacitors/scale=0.9,
}
\tikzstyle{vsource}=[rmeter, t={\textsf{\tiny -- +}}] 
\tikzstyle{ammeter}=[rmeter, t={\textsf{A}}] 
\tikzstyle{vmeter}=[rmeter, t={\textsf{V}}] 
\tikzstyle{elecdot}=[circle,fill,inner sep=0.85pt]

\newcommand{\includegraffle}[1]{
  {\lower10pt\hbox{$\includegraphics[height=1cm]{tikz/#1.pdf}$}}
}

%% file: sty/circuits.tikzstyles
\tikzstyle{none}=[anchor=center]
\tikzstyle{vertex}=[anchor=center, fill=black, draw=black, shape=circle, minimum size=2.5mm, tikzit category=hypergraph, inner sep=0]
\tikzstyle{edge subgraph}=[fill=neutral, draw=black, shape=rectangle, font={\boxsize}, tikzit category=hypergraph, minimum width=8mm, minimum height=8mm, rounded corners=3mm, very thick, dashed]
\tikzstyle{port}=[minimum size=2.5mm, fill=none, draw=none, shape=circle, tikzit draw={rgb,255: red,154; green,154; blue,154}, anchor=center, tikzit fill=neutral]
\tikzstyle{product}=[fill=neutral, draw=black, shape=circle, scale=0.66, tikzit category=string diagram]
\tikzstyle{type}=[fill=none, draw=none, shape=circle, font={\large}, tikzit fill={rgb,255: red,37; green,193; blue,141}, inner sep=0, anchor=center, tikzit category=string diagram]
\tikzstyle{tiny box white}=[{\corners}, font={\boxsize}, fill=neutral, draw=black, shape=rectangle, tikzit category=string diagram, minimum width={\tinywidth}, minimum height={\tinywidth}, anchor=center, line width={\stringwidth}, tikzit fill=neutral, inner sep={\innersep}]
\tikzstyle{tiny box comb}=[{\corners}, font={\boxsize}, fill=comb, draw=black, shape=rectangle, tikzit category=string diagram, minimum width={\tinywidth}, minimum height={\tinywidth}, anchor=center, line width={\stringwidth}, tikzit fill=comb, inner sep={\innersep}]
\tikzstyle{tiny box seq}=[{\corners}, font={\boxsize}, fill=seq, draw=black, shape=rectangle, tikzit category=string diagram, minimum width={\tinywidth}, minimum height={\tinywidth}, anchor=center, line width={\stringwidth}, inner sep={\innersep}]
\tikzstyle{tiny signal seq}=[{\corners}, font={\boxsize}, shape=signal, signal to=west, signal pointer angle=110, fill=seq, draw=black, tikzit category=string diagram, minimum width=6mm, minimum height=5mm, anchor=center, line width={\stringwidth}, inner sep={\innersep}]
\tikzstyle{small box white}=[{\corners}, font={\boxsize}, fill=neutral, draw=black, shape=rectangle, minimum height={\smallwidth}, minimum width={\tinywidth}, tikzit category=string diagram, anchor=center, line width={\stringwidth}, inner sep={\innersep}]
\tikzstyle{small square box white}=[{\corners}, font={\boxsize}, fill=neutral, draw=black, shape=rectangle, minimum height={\smallwidth}, minimum width={\smallwidth}, tikzit category=string diagram, anchor=center, line width={\stringwidth}, inner sep={\innersep}]
\tikzstyle{medium box}=[{\corners}, font={\boxsize}, fill=neutral, draw=black, shape=rectangle, tikzit category=string diagram, minimum height={\mediumwidth}, minimum width={\smallwidth}, anchor=center, line width={\stringwidth}, tikzit fill=neutral, inner sep={\innersep}]
\tikzstyle{medium box white}=[{\corners}, font={\boxsize}, fill=neutral, draw=black, shape=rectangle, tikzit category=string diagram, minimum height={\mediumwidth}, minimum width={\smallwidth}, anchor=center, line width={\stringwidth}, tikzit fill=comb, inner sep={\innersep}]
\tikzstyle{medium box comb}=[{\corners}, font={\boxsize}, fill=comb, draw=black, shape=rectangle, tikzit category=string diagram, minimum height={\mediumwidth}, minimum width={\smallwidth}, anchor=center, line width={\stringwidth}, tikzit fill=comb, inner sep={\innersep}]
\tikzstyle{medium square box white}=[{\corners}, font={\boxsize}, fill=neutral, draw=black, shape=rectangle, tikzit category=string diagram, minimum height={\mediumwidth}, minimum width={\mediumwidth}, line width={\stringwidth}, tikzit fill=neutral, inner sep={\innersep}]
\tikzstyle{medium square box comb}=[{\corners}, font={\boxsize}, fill=comb, draw=black, shape=rectangle, tikzit category=string diagram, minimum height={\mediumwidth}, minimum width={\mediumwidth}, line width={\stringwidth}, tikzit fill=comb, inner sep={\innersep}]
\tikzstyle{medium square box seq}=[{\corners}, draw=black, font={\boxsize}, fill=seq, shape=rectangle, tikzit category=string diagram, minimum height={\mediumwidth}, minimum width={\mediumwidth}, line width={\stringwidth}, tikzit fill=seq, inner sep={\innersep}]
\tikzstyle{medium square rounded box white}=[rounded corners, font={\boxsize}, fill=neutral, draw=black, shape=rectangle, tikzit category=string diagram, minimum height={\mediumwidth}, minimum width={\mediumwidth}, line width={\stringwidth}, tikzit fill=neutral, inner sep={\innersep}]
\tikzstyle{medium square rounded box comb}=[rounded corners, font={\boxsize}, fill=comb, draw=black, shape=rectangle, tikzit category=string diagram, minimum height={\mediumwidth}, minimum width={\mediumwidth}, line width={\stringwidth}, tikzit fill=comb, inner sep={\innersep}]
\tikzstyle{medium square rounded box seq}=[rounded corners, draw=black, font={\boxsize}, fill=seq, shape=rectangle, tikzit category=string diagram, minimum height={\mediumwidth}, minimum width={\mediumwidth}, line width={\stringwidth}, tikzit fill=seq, inner sep={\innersep}]
\tikzstyle{large box}=[{\corners}, font={\boxsize}, fill=neutral, draw=black, shape=rectangle, tikzit category=string diagram, minimum height=15mm, minimum width=10mm, anchor=center, line width={\stringwidth}, tikzit fill=neutral, inner sep={\innersep}]
\tikzstyle{large square box white}=[{\corners}, font={\boxsize}, fill=neutral, draw=black, shape=rectangle, tikzit category=string diagram, minimum width=15mm, minimum height=15mm, line width={\stringwidth}, tikzit fill=neutral, inner sep={\innersep}]
\tikzstyle{large square box comb}=[{\corners}, font={\boxsize}, fill=comb, draw=black, shape=rectangle, tikzit category=string diagram, minimum width=12mm, minimum height=12mm, line width={\stringwidth}, tikzit fill=comb, inner sep={\innersep}]
\tikzstyle{huge box}=[{\corners}, fill=neutral, draw=black, shape=rectangle, minimum height=40mm, minimum width=15mm, anchor=center, tikzit category=string diagram, line width={\stringwidth}, tikzit fill=neutral, inner sep={\innersep}]
\tikzstyle{output-node}=[fill=neutral, draw=black, shape=circle, anchor=west, inner sep=0.5]
\tikzstyle{state}=[fill=neutral, draw=none, shape=rectangle, minimum height=10mm, minimum width=10mm]
\tikzstyle{delay}=[fill=neutral, draw=black, font={\boxsize}, shape=signal, tikzit category=string diagram, line width={\stringwidth}, minimum height={\tinywidth}, minimum width={\tinywidth}, inner sep=0.5mm, outer sep=0mm, minimum height=1em, minimum width=1em]
\tikzstyle{register}=[fill=seq, draw=black, font={\boxsize}, shape=signal, tikzit category=string diagram, line width={\stringwidth}, minimum height={2em}, minimum width={1.5em}, align=center, anchor=center, inner xsep=1mm, inner ysep=-1mm, outer xsep=0mm]
\tikzstyle{waveform}=[fill=seq, draw=black, font={\boxsize}, shape=signal, signal from=west, signal to=east, tikzit category=string diagram, line width={\stringwidth}, minimum height={2em}, minimum width={1.5em}, align=center, anchor=center, inner xsep=1mm, inner ysep=-1mm, outer xsep=0mm]
\tikzstyle{bproduct}=[fill=black, draw=black, shape=circle, scale=0.5, tikzit category=string diagram]
\tikzstyle{wproduct}=[fill=neutral, draw=black, shape=circle, scale=0.5, line width=0.75, tikzit category=string diagram]
\tikzstyle{gproduct}=[fill=unit, draw=unit, shape=circle, scale=0.5, tikzit category=string diagram]
\tikzstyle{bport}=[minimum size=2.5mm, fill=none, draw=none, shape=circle, tikzit draw={rgb,255: red,154; green,154; blue,154}, anchor=center, tikzit fill=neutral, tikzit category=string diagram]
\tikzstyle{mux}=[fill=neutral, draw=black, shape=trapezium, tikzit category=circuits, rotate=-90, minimum height=1em, line width={\stringwidth}, scale=1.25]
\tikzstyle{fork}=[fill=black, draw=black, shape=circle, tikzit category=circuits, scale=0.25]
\tikzstyle{box}=[fill=neutral, draw=black, shape=rectangle, tikzit category=circuits]
\tikzstyle{dangling}=[fill=none, draw=none, shape=circle, anchor=east, scale=0.01, tikzit category=string diagram, tikzit fill={rgb,255: red,162; green,76; blue,77}]
\tikzstyle{label}=[fill=none, draw=none, shape=circle, align=center, inner sep=0, outer sep=0]
\tikzstyle{small label}=[scale=0.75, fill=none, draw=none, shape=circle, outer sep=0, inner sep=0, anchor=center]
\tikzstyle{wire label left}=[scale=1.25, fill=none, draw=none, shape=rectangle, outer sep=0, inner sep=0, anchor=east]
\tikzstyle{wire label right}=[scale=1.25, fill=none, draw=none, shape=rectangle, outer sep=0, inner sep=0, anchor=west]
\tikzstyle{wire label mid}=[scale=1.25, fill=\bgcolour, shape=rectangle, outer sep=0, inner sep=0, shape=circle, minimum size=0]
\tikzstyle{tile}=[fill=neutral, draw=black, shape=rectangle, tikzit category=string diagram, {\corners}, minimum height=5mm, minimum width=5mm]
\tikzstyle{commuting label}=[fill=none, draw=none, shape=circle, scale=0.5]
\tikzstyle{and}=[fill=neutral, draw=black, shape=and gate US, line width={\stringwidth}, and gate, tikzit category=circuits]
\tikzstyle{or}=[fill=neutral, draw=black, or gate US, line width={\stringwidth}, tikzit category=circuits]
\tikzstyle{not}=[fill=neutral, draw=black, not gate US, line width={\stringwidth}, tikzit category=circuits]
\tikzstyle{nor}=[fill=neutral, draw=black, nor gate, scale=1.75, line width={\stringwidth}, tikzit category=circuits]
\tikzstyle{nand}=[fill=neutral, draw=black, nand gate, scale=1.75, line width={\stringwidth}, tikzit category=circuits]
\tikzstyle{xor}=[fill=neutral, draw=black, xor gate, scale=1.75, line width={\stringwidth}, tikzit category=circuits]
\tikzstyle{xnor}=[fill=neutral, draw=black, xnor gate, scale=1.75, line width={\stringwidth}, tikzit category=circuits]
\tikzstyle{west}=[fill=none, draw=none, shape=circle, anchor=west]
\tikzstyle{bundler}=[fill=neutral, draw=black, shape=rounded rectangle, minimum width=3em, rounded rectangle arc length=180, rotate=90, line width={\stringwidth}]
\tikzstyle{long bundler}=[fill=neutral, draw=black, shape=rounded rectangle, minimum width=5em, rounded rectangle arc length=180, rotate=90, line width={\stringwidth}]

\tikzstyle{interface}=[-, fill={rgb,255: red,238; green,238; blue,255}, dashed, draw={rgb,255: red,170; green,170; blue,225}, ultra thick]
\tikzstyle{graph}=[-, fill={rgb,255: red,238; green,238; blue,238}, draw={rgb,255: red,191; green,191; blue,191}, dashed, ultra thick, anchor=center]
\tikzstyle{tentacle}=[-, very thick]
\tikzstyle{wire}=[-, tikzit category=string diagram, line width={\stringwidth}]
\tikzstyle{empty}=[-, densely dashed, {\corners}, dash pattern=on 0pt off 1.25pt on 1.25pt, line width={\stringwidth}]
\tikzstyle{transition}=[->]
\tikzstyle{output}=[->, decorate, decoration=zigzag]
\tikzstyle{gate}=[-, fill=neutral]
\tikzstyle{thicc}=[-, line width=1]
\tikzstyle{strikethrough}=[-, decoration={markings, mark=at position 0.5 with {
        \draw [blue, thick, - ] (-0.15,-0.15);
    }}, postaction=decorate]
\tikzstyle{traced}=[-, densely dashed, draw=gray]
\tikzstyle{arrow}=[<-, line width={\stringwidth}]
\tikzstyle{arrow up}=[->, line width={\stringwidth}]
\tikzstyle{dashed arrow}=[<-, dashed]
\tikzstyle{unit wire}=[-, fill=none, draw=unit, line width={\stringwidth}]
\tikzstyle{interfacearrow}=[->, very thick, dashed]
\tikzstyle{tile none}=[-, draw=none, {\corners}, fill=none, line width={\stringwidth}]
\tikzstyle{tile white}=[-, {\corners}, fill=neutral, line width={\stringwidth}]
\tikzstyle{tile comb}=[-, {\corners}, fill=comb, line width={\stringwidth}]
\tikzstyle{tile seq}=[-, {\corners}, fill=seq, line width={\stringwidth}]
\tikzstyle{commute}=[->]
\tikzstyle{curved rectangle}=[-, rounded corners]
\tikzstyle{hasse}=[-]
\tikzstyle{wiredash}=[-, line width={\stringwidth}]
\tikzstyle{rewrite}=[-, dashed, line width={\stringwidth}]
\tikzstyle{juxtaposition}=[-, draw={rgb,255: red,128; green,128; blue,128}, densely dashed, line width={\stringwidth}]
\tikzstyle{functor box}=[-, thick, draw={rgb,255: red,83; green,83; blue,83}]
\tikzstyle{boundary box}=[-, draw=none, tikzit draw={rgb,255: red,255; green,0; blue,4}]

%% file: sty/zx.tikzstyles
\tikzstyle{box}=[shape=rectangle, text height=1.5ex, text depth=0.25ex, yshift=0.5mm, fill=white, draw=black, minimum height=5mm, yshift=-0.5mm, minimum width=5mm, font={\small}]
\tikzstyle{Z dot}=[inner sep=0mm, minimum size=2mm, shape=circle, draw=black, fill={rgb,255: red,221; green,255; blue,221}]
\tikzstyle{Z phase dot}=[minimum size=1.2em, font={\footnotesize\boldmath}, shape=rectangle, rounded corners=0.5em, inner sep=0.2em, outer sep=-0.2em, scale=0.8, tikzit shape=circle, draw=black, fill={rgb,255: red,221; green,255; blue,221}, tikzit draw=blue]
\tikzstyle{X dot}=[Z dot, shape=circle, draw=black, fill={rgb,255: red,255; green,136; blue,136}]
\tikzstyle{X phase dot}=[Z phase dot, tikzit shape=circle, tikzit draw=blue, fill={rgb,255: red,255; green,136; blue,136}, font={\footnotesize\boldmath}]
\tikzstyle{hadamard}=[fill=yellow, draw=black, shape=rectangle, inner sep=0.6mm, minimum height=1.5mm, minimum width=1.5mm]
\tikzstyle{vertex}=[inner sep=0mm, minimum size=1mm, shape=circle, draw=black, fill=black]
\tikzstyle{vertex set}=[inner sep=0mm, minimum size=1mm, shape=circle, draw=black, fill=white, font={\footnotesize\boldmath}]
\tikzstyle{target}=[inner sep=0mm, minimum size=3mm, shape=circle, draw=black]

\tikzstyle{hadamard edge}=[-, dashed, dash pattern=on 2pt off 1.5pt, thick, draw={rgb,255: red,68; green,136; blue,255}]
\tikzstyle{brace edge}=[-, tikzit draw=blue, decorate, decoration={brace,amplitude=1mm,raise=-1mm}]
\tikzstyle{diredge}=[->]
\tikzstyle{dashed edge}=[-, dashed, dash pattern=on 2pt off 0.5pt, draw=black]

%% file: tex/tocl-article/intro.tex
Categorical modelling is a widespread approach for the formal description of systems across the natural sciences and engineering, with a particular emphasis on {\em compositionality} -- i.e.~ability to systematically deduce properties of the system from those of its constituent parts. In particular, {\em monoidal categories} (and their associated algebraic specification, {\em monoidal theories}) provide a well-developed and intuitive graphical syntax for many scientific theories via their representation as {\em string diagrams}~\cite{piedeleu-zanasi,selinger}. Monoidal theories have found applications, for instance, in quantum computing~\cite{zx-for2020,heunen-vicary-book}, linear algebra~\cite{zanasi-thesis,graphical-affine-algebra}, signal flow theory~\cite{survey-signal-flow}, electrical circuit theory~\cite{electrical-circuits}, digital circuit theory~\cite{kaye-thesis}, linguistics~\cite{discocat}, probability theory~\cite{fritz-markov20,jacobs-spr,JacobsZanasi-essentials} and causal reasoning~\cite{JacobsKZ21,kissinger19,lorenz-tull-causal}.

Categorical models often involve a translation between theories, for instance via an adjunction interpreting one theory in another, or a functor refining a more abstract theory by translating it to a more detailed one. Such translations are typically studied on a case-by-case basis, without a uniform mechanism allowing different theories to interact directly, e.g.~by describing them within a single category. However, such translations arise naturally, since systems can be described at different {\em levels of abstraction}. The levels correspond to distinct perspectives on the system, emphasising different details or features. It is common to regard some levels as {\em coarser} and others as more {\em fine-grained}, with translations  corresponding intuitively to ``zooming in'' or ``abstracting away'' the details. For example, reaction networks treat chemical compounds as placeholders with no internal structure, whereas the molecular structure is considered when modelling the compounds as graphs~\cite{chem-trans-motifs}. Similarly, higher level descriptions in computer architecture  must ultimately be implemented as microelectronic circuits~\cite{kaye-thesis}. The left side of Figure~\ref{fig:layers-of-abstraction} illustrates a translation from a coarse layer, consisting of a restricted set of English names for molecules, to a finer layer of molecular graphs equipped with rewriting rules. These examples demonstrate the ubiquity of mixed levels of abstraction across sciences: our aim is, therefore, to provide a general formalism for mathematical reasoning across different layers.

\begin{figure}
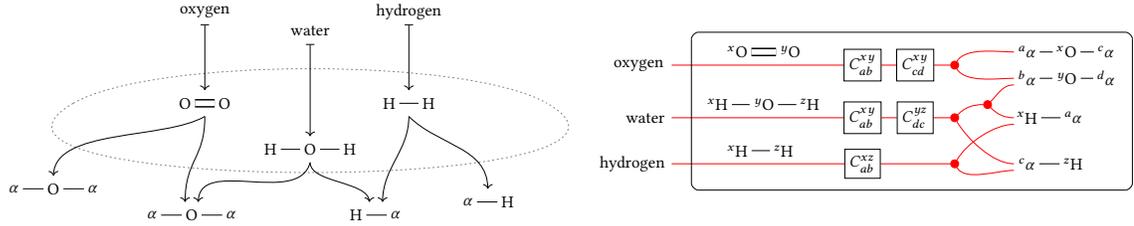

\begin{minipage}{0.49\textwidth}
\hspace{-35pt}\scalebox{.7}{\input{journal-figures/layers-of-abstraction.tikz}}
\end{minipage}\hfill
\begin{minipage}{0.49\textwidth}
\hfill\scalebox{.7}{\input{journal-figures/layers-of-abstraction-formally.tikz}}
\end{minipage}
\caption{Left: an informal translation from a coarser language (English names) to a finer language (molecular graphs). Right: A formalisation of said translation as a term in a layered monoidal theory.\label{fig:layers-of-abstraction}}
\end{figure}

The main contribution of this work is the introduction of {\em layered monoidal theories}, a mathematical framework where layers of abstractions for a wide range of phenomena may be studied algebraically within the same string diagrammatic formalism. Intuitively, a layered monoidal theory can be thought of as a ``glueing'' of different monoidal theories. Somewhat more formally, the data of a layered monoidal theory is equivalent to a diagram of (strict) monoidal categories and (strict) monoidal functors between them.

To establish some preliminary intuition, we show a typical term (morphism) in a layered monoidal theory in Figure~\ref{subfig:typical-term}. Here $x$ is a morphism in the category $\omega$ (drawn as a box on the left) and $y$ is a morphism in the category $\tau$ (drawn as a box on the right). The dotted lines indicate that $x$ and $y$ sit above $\omega$ and $\tau$, and that the ``functor boundary'' $\refine_f$ sits above $f$ -- these are formally not part of the term and are included for illustrative purposes only. The morphism $x$ can be, moreover, pushed through the functor boundary, as shown in Figure~\ref{subfig:functor-boundary}. Moreover, we allow functor boundaries $\coarsen_f$ in the opposite direction, yielding diagrams of the shape shown in Figure~\ref{subfig:window}. We call such diagrams {\em windows} (see Section~\ref{sec:functor-boxes}). Windows allow one to ``peek in'' at the semantics of the functor $f$ -- without performing the full translation. Furthermore, in many scenarios (e.g.~Sections~\ref{sec:functor-boxes} and~\ref{subsec:zx-extraction}), one encounters transformations between processes that are non-trivial, in the sense that they are not identities, thus genuinely modifying the process. Within layered monoidal theories, such transformations are modelled as {\em 2-cells}, drawn in Figure~\ref{subfig:twocell}.

\begin{figure}
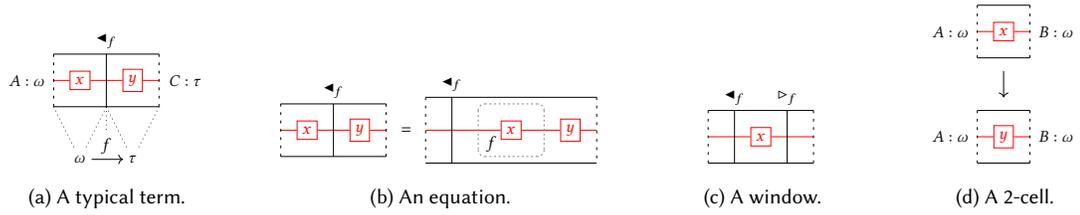

\centering
\begin{subfigure}{0.22\textwidth}
\centering
\scalebox{.7}{\input{journal-figures/term-layered-theory.tikz}}
\caption{A typical term.\label{subfig:typical-term}}
\end{subfigure}\hfill
\begin{subfigure}{0.34\textwidth}
\centering
\scalebox{.7}{\input{journal-figures/term-layered-pushing.tikz}}
\caption{An equation.\label{subfig:functor-boundary}}
\end{subfigure}\hfill
\begin{subfigure}{0.2\textwidth}
\centering
\scalebox{.7}{\input{journal-figures/window.tikz}}
\caption{A window.\label{subfig:window}}
\end{subfigure}\hfill
\begin{subfigure}{0.2\textwidth}
\centering
\scalebox{.7}{\input{journal-figures/twocell-int-xy.tikz}}
\caption{A 2-cell.\label{subfig:twocell}}
\end{subfigure}
\caption{Various features of layered monoidal theories.}
\end{figure}

Equipped with such formalism, we will study layered levels of abstractions in several different domains. In particular, the intuitive picture on the left of Figure~\ref{fig:layers-of-abstraction} will be translated into the layered string diagrams shown on the right in Figure~\ref{fig:layers-of-abstraction}, as discussed in Subsection~\ref{subsec:glucose}.

In summary, layered monoidal theories enjoy the following desirable properties:
\begin{itemize}
\item they provide a precise specification of multiple (potentially infinitely many) monoidal theories and functorial translations between them via finite, easily manageable syntax: see, for example, Figure~\ref{fig:dig-circ-eqns}, which gives the full recursive definition of the infinite layered monoidal theory of {\em simple arithmetic circuits}, expressing logical and arithmetic operations on $n$-bit wires in terms of single bit wires,
\item the construction is {\em modular}, in the sense that the internal terms contained between any two functor boundaries can always be interpreted within a single monoidal theory; this gives the possibility of {\em partial translations} of terms, which do no affect the other parts of the diagram,
\item the special case with two layers and one translation between them recovers standard monoidal {\em functorial semantics}; this functor is faithful (corresponding to completeness) if a certain single equation in the layered monoidal theory holds (Equation~\eqref{eq:cobox-pres-id}),
\item the existing notation for {\em functor boxes} naturally arises within the formalism, together with its dual notion, which we dub a {\em functor cobox}, hitherto appearing in the literature only informally,
\item our framework encompasses a variety of existing systems: variable amount of bits in digital circuits, impedance boxes in electrical circuits, rewriting for quantum circuit extraction, reaction mechanisms in chemistry, derivations in concurrency theory, conditionals in probability theory; whereas in the literature these systems have been treated as flat (single layer) monoidal theories, we use our framework to analyse their multilayered structure.
\end{itemize}

The study of the levels of abstraction is motivated by hierarchies of scientific theories. For instance, chemistry provides the vocabulary to talk about interactions between molecules, whereas particle physics lacks the language to describe interactions of such complexity. Similarly, biology can describe the function of a cell within an organism, while chemistry is restricted to describing reactions that are necessary to fulfil this function. Choosing the right level of abstraction is important for successfully interpreting one theory in another one. Our aim is to provide a mathematical formalism for doing so. At the same time, we explicitly place questions of philosophical commitment to reductionism or emergentism outside of the scope of the current work. Very roughly, reductionism holds that any higher level theory is ultimately reducible to some lower level (usually taken as a fundamental theory of physics), whereas emergentism maintains that  there are emergent, higher level phenomena which are lost under such reductions. Our formalism, insofar as possible, remains neutral and may be used to investigate either position.

\paragraph{Structure of the article} This work separates into four parts: (1) {\em Preliminaries} --- we recall monoidal theories in Section~\ref{sec:monoidal-theories}; (2) {\em Syntax}  --- we introduce three closely related classes of layered monoidal theories: {\em opfibrational}, {\em fibrational} and {\em deflational} theories in Section~\ref{sec:syntax-layered-theories}; (3) {\em Reasoning within layered monoidal theories} --- we define and study {\em functor boxes} and {\em coboxes} in Section~\ref{sec:functor-boxes}; (4) {\em Applications} --- we provide six extended case studies in Section~\ref{sec:layered-examples}. For a reader who wishes to see concrete examples first, it is certainly possible to follow the general ideas of the case studies in Section~\ref{sec:layered-examples} without reading the preceding theory. The syntax developed in Section~\ref{sec:syntax-layered-theories} can then be taken as a mathematical formalisation of such ideas.

Note that this paper contains no explicit discussion of semantic models of layered monoidal theories -- this is the topic of Part~II~\cite{lmt-part2}. This paper is based on the material presented in Chapters~3--6 of the first author's PhD thesis~\cite{lobski-phdthesis}, with additional material drawn from other chapters. Part of this work is based on a conference paper~\cite{lobski-zanasi}, which has been greatly expanded both in the theoretical contributions and in the applications.

%% file: tex/tocl-article/monoidal-theories-models.tex
Here we briefly define monoidal signatures and theories. Everything in this section is well-known, and is included in order to establish terminology and notation that will be built on in the subsequent sections.

A monoidal signature consists of a set of generating objects called {\em colours}, and for each pair of lists of colours, a set of generating morphisms called {\em monoidal generators}. Let us denote by $(-)^*:\Set\rightarrow\Mon$ the free monoid functor.
\begin{definition}[Monoidal signature]
A {\em monoidal signature} is a pair $(C,\Sigma)$ of a set $C$ and a function $\Sigma : C^*\times C^*\rightarrow\Set$.
\end{definition}
Given a monoidal signature $(C,\Sigma)$, the elements of $C$ are called {\em colours}, and for $a,b\in C^*$, the elements in $\Sigma(a,b)$ are the {\em monoidal generators} with arity $a$ and coarity $b$.

\begin{definition}[Morphism of monoidal signatures]
A {\em morphism} between two monoidal signatures $(C,\Sigma)\rightarrow (D,\Gamma)$ consists of a function $f:C\rightarrow D$, and for each pair $(a,b)\in C^*\times C^*$, a function $f_{a,b}:\Sigma(a,b)\rightarrow\Gamma\left(f^*a,f^*b\right)$.
\end{definition}
We denote the resulting category of monoidal signatures by $\MSgn$. We often denote a morphism in $\MSgn$ simply by $f:(C,\Sigma)\rightarrow (D,\Gamma)$ ranging over the whole family of functions: $f:C\rightarrow D$ (without subscripts) is the function on the colours, and $f_{a,b}:\Sigma(a,b)\rightarrow\Gamma(f^*a,f^*b)$ are the functions on the monoidal generators.

Note that there are two ways of ``forgetting'' the structure of $\MSgn$. There is the projection to the first component in the tuple defining the monoidal signature $\MSgn\rightarrow\Set$, and there is the functor into the category of families of sets $\MSgn\rightarrow\Fam(\Set)$ defined by $(C,\Sigma)\mapsto (C^*\times C^*,\Sigma)$, which ``forgets'' that morphisms need to preserve (co)arities. These functors arise in the following situation:

\noindent\begin{minipage}{0.6\textwidth}
\begin{proposition}\label{prop:msgn-pullback}
The square~\eqref{eq:msgn-pullback} is a pullback, where the right vertical map is the family fibration.
\end{proposition}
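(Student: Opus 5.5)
I will prove the statement by showing that the canonical comparison functor from $\MSgn$ into the strict pullback of categories is an isomorphism. I take the square~\eqref{eq:msgn-pullback} to have the projection $\MSgn\rightarrow\Set$, $(C,\Sigma)\mapsto C$ on the left and the forgetful functor $\MSgn\rightarrow\Fam(\Set)$, $(C,\Sigma)\mapsto(C^*\times C^*,\Sigma)$ on top. The right map is the family fibration $\Fam(\Set)\rightarrow\Set$, $(I,X)\mapsto I$, and the bottom map is $\Set\rightarrow\Set$, $C\mapsto C^*\times C^*$, $f\mapsto f^*\times f^*$. Recall that a morphism $(I,X)\rightarrow(J,Y)$ in $\Fam(\Set)$ is a function $u:I\rightarrow J$ together with functions $\varphi_i:X_i\rightarrow Y_{u(i)}$ for each $i\in I$.

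The first step is to check that the square commutes on the nose. Both composites send $(C,\Sigma)$ to $C^*\times C^*$. Both send a morphism $f$ to $f^*\times f^*$, because the top functor sends $f$ to the pair $\bigl(f^*\times f^*,(f_{a,b})_{(a,b)}\bigr)$. This uses that the codomain of $f_{a,b}$ is $\Gamma(f^*a,f^*b)$, so the families of functions are indexed exactly along $f^*\times f^*$.

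Next, I will describe the strict pullback $\mathbf{P}$ in $\mathbf{Cat}$ explicitly. Its objects are pairs $\bigl(C,(I,X)\bigr)$ with $I=C^*\times C^*$, and its morphisms are pairs $\bigl(f,(u,\varphi)\bigr)$ with $u=f^*\times f^*$. Composition and identities are computed componentwise. Such an object is exactly a set $C$ together with a function $X:C^*\times C^*\rightarrow\Set$, that is, a monoidal signature. Such a morphism is exactly a function $f:C\rightarrow D$ together with functions $\varphi_{(a,b)}:X_{(a,b)}\rightarrow Y_{(f^*a,f^*b)}$, that is, a morphism of monoidal signatures.

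Finally, I check that composition and identities agree under this correspondence. In $\Fam(\Set)$ the composite of $(u,\varphi)$ and $(v,\psi)$ is $\bigl(v\circ u,(\psi_{u(i)}\circ\varphi_i)_i\bigr)$. This matches composition in $\MSgn$, namely $(g\circ f)_{a,b}=g_{f^*a,f^*b}\circ f_{a,b}$, using functoriality of $(-)^*$ to get $(g\circ f)^*=g^*\circ f^*$. Hence the comparison functor $\MSgn\rightarrow\mathbf{P}$ induced by the commuting square is bijective on objects and on morphisms, so it is an isomorphism. Therefore the square is a (strict) pullback.

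No step is a genuine obstacle. The only care needed is in the bookkeeping that matches the reindexing in the $\Fam(\Set)$ morphisms with the arity condition in $\MSgn$, and in noting that a strict pullback of categories, rather than a pseudo-pullback, is what is claimed.
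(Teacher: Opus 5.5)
Your proof is correct and is essentially the paper's argument. The only difference is framing: the paper checks the universal property directly, taking an arbitrary test category with a commuting cone $(F,G)$ and building the mediating functor $X\mapsto (F(X),\Sigma_X)$, whereas you construct the strict pullback in $\mathbf{Cat}$ explicitly and show the comparison functor from $\MSgn$ is an isomorphism. In both versions the content is the same observation: a set $C$ with a family of sets indexed by $C^*\times C^*$, and morphisms reindexed along $f^*\times f^*$, are exactly monoidal signatures and their morphisms. Your explicit check of identities and composition also covers the functoriality of the mediating functor, which the paper leaves implicit.
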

\end{minipage}
\begin{minipage}{0.4\textwidth}
\begin{equation}\label{eq:msgn-pullback}
\input{journal-figures/msgn-pullback.tikz}
\end{equation}
\end{minipage}
\begin{proof}
Let $F:\cat X\rightarrow\Set$ and $G:\cat X\rightarrow\Fam(\Set)$ be functors such that the resulting square commutes. Given an object $X\in\Ob(\cat X)$ and a morphism $u:X\rightarrow Y$ in $\cat X$, let us write $G(X) = (I_X,\Sigma_X)$ and $G(u) = (f_u,\{f^u_i\}_{i\in I_X})$. With this notation, commutativity means that $I_X = F(X)^*\times F(X)^*$ and $f_u = F(u)^*\times F(u)^*$ for all objects $X$ and morphisms $u$ of $\cat X$. We thus define the functor $H:\cat X\rightarrow\MSgn$ by $X\mapsto (F(X),\Sigma_X)$ on objects and by $u\mapsto (F(u),\{f^u_i\}_{i\in I_X})$ on morphisms. It follows by construction that $H$ composed with the forgetful functors gives $F$ and $G$. Uniqueness is immediate, as composition with $\MSgn\rightarrow\Set$ determines the colours and functions between them, whereas composition with $\MSgn\rightarrow\Fam(\Set)$ determines the generators and functions between them.
\end{proof}
\begin{corollary}\label{cor:forgetful-msgn-fibration}
The forgetful functor $\MSgn\rightarrow\Set$ is a fibration.
\end{corollary}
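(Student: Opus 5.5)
The plan is to deduce the corollary from Proposition~\ref{prop:msgn-pullback} together with two standard facts: the family fibration $\Fam(\Set)\rightarrow\Set$ is a (cloven) fibration, and fibrations are stable under pullback along arbitrary functors. In the square~\eqref{eq:msgn-pullback}, the functor $\MSgn\rightarrow\Set$ is the pullback of the family fibration along the bottom functor $\Set\rightarrow\Set$, $C\mapsto C^*\times C^*$. So it is a fibration.

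Concretely, I would first recall that $\Fam(\Set)\rightarrow\Set$ is a fibration. Given a family $(J,\{\Gamma_j\}_{j\in J})$ and a function $g:I\rightarrow J$, the cartesian lift is $(I,\{\Gamma_{g(i)}\}_{i\in I})\rightarrow (J,\Gamma)$, with components the identity maps. Second, I would prove, or cite as standard, that if $P:\cat E\rightarrow\cat B$ is a fibration and $F:\cat A\rightarrow\cat B$ is any functor, then the strict pullback $\cat A\times_{\cat B}\cat E\rightarrow\cat A$ is a fibration. Cartesian lifts there are pairs $(u, \text{lift of } Fu)$. Cartesianness follows from the universal property of the pullback together with the cartesianness in $\cat E$: a factorisation in the pullback is a pair of factorisations that agree in $\cat B$, and uniqueness holds componentwise. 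By Proposition~\ref{prop:msgn-pullback}, $\MSgn$ is (isomorphic to) this strict pullback, which finishes the proof.

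To be safe, I would also make the lift explicit. Given a signature $(D,\Gamma)$ and a function $f:C\rightarrow D$, the lift is the signature $(C,\Sigma)$ with $\Sigma(a,b)=\Gamma(f^*a,f^*b)$. The morphism $(C,\Sigma)\rightarrow(D,\Gamma)$ is $f$ on colours and the identity on generators. For cartesianness, take a morphism $h:(E,\Theta)\rightarrow(D,\Gamma)$ with $h=f\circ k$ on colours. Then $h_{a,b}:\Theta(a,b)\rightarrow\Gamma(h^*a,h^*b)=\Sigma(k^*a,k^*b)$, so it factors uniquely through the lift with generator components $h_{a,b}$ themselves.

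The only step needing care is that pullback-stability requires the square to be a strict pullback in $\mathbf{Cat}$. Proposition~\ref{prop:msgn-pullback} gives exactly this, so I do not expect a genuine obstacle.
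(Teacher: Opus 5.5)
Your proposal is correct and follows the paper's proof: the paper also deduces the corollary from Proposition~\ref{prop:msgn-pullback}, using that the family fibration $\Fam(\Set)\rightarrow\Set$ is a fibration and that fibrations are stable under pullback. Your explicit cartesian lift, with $\Sigma(a,b)=\Gamma(f^*a,f^*b)$ and identity components on generators, correctly unpacks an argument the paper leaves implicit.
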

\begin{proof}
The family fibration $\Fam(\Set)\rightarrow\Set$ is a fibration, and fibrations are stable under pullbacks.
\end{proof}
We note that, due to uniqueness of pullbacks, we could take Proposition~\ref{prop:msgn-pullback} as the definition of $\MSgn$.

In order to equip monoidal signatures with non-trivial equations, we need a way to build more complicated expressions from the generators. The following defines the {\em terms} of a monoidal signature, which are used to define monoidal theories as well as free models. Upon quotienting by the structural identities (Definition~\ref{def:str-id}), the terms become what are known as {\em string diagrams}.

\begin{definition}[Terms of a monoidal signature]\label{def:terms-monoidal}
Given a monoidal signature $(C,\Sigma)$, the set of {\em sorts} is given by $C^*\times C^*$. The {\em terms} are generated by the recursive sorting procedure below:
\begin{center}
  \centering\scriptsize\noindent
  \begin{bprooftree}
    \AxiomC{$\sigma\in\Sigma(a,b)$}
    \RightLabel{\;\;}
    \UnaryInfC{$\scalebox{.8}{\input{journal-figures/sigmadiag.tikz}} : (a,b)$}
    \DisplayProof
    \AxiomC{$a\in C$}
    \RightLabel{\;\;}
    \UnaryInfC{$\scalebox{.8}{\input{journal-figures/iddiag-dashed.tikz}} : (a,a)$}
    \DisplayProof
    \AxiomC{\phantom{$a\in C$}}
    \RightLabel{\;\;}
    \UnaryInfC{$\scalebox{.8}{\input{journal-figures/emptydiag.tikz}} : (\varepsilon,\varepsilon)$}
    \DisplayProof
    \AxiomC{$\scalebox{.8}{\input{journal-figures/t-box.tikz}} : (a,b)$}
    \AxiomC{$\scalebox{.8}{\input{journal-figures/s-box.tikz}} : (b,c)$}
    \RightLabel{\;\;}
    \BinaryInfC{$\scalebox{.8}{\input{journal-figures/t-s-compose.tikz}} : (a,c)$}
    \DisplayProof
    \AxiomC{$\scalebox{.8}{\input{journal-figures/t-box.tikz}} : (a,b)$}
    \AxiomC{$\scalebox{.8}{\input{journal-figures/s-box.tikz}} : (c,d)$}
    \RightLabel{.}
    \BinaryInfC{$\scalebox{.8}{\input{journal-figures/t-s-tensor.tikz}} : (ac,bd)$}
  \end{bprooftree}
\end{center}
When using the linear notation, we denote the terms generated by the above rules by $\sigma$, $\id_a$, $\id_{\varepsilon}$, $(t;s)$ and $(t\otimes s)$, respectively. We denote the set of terms by $\Term_{C,\Sigma}$, and by $S:\Term_{C,\Sigma}\rightarrow C^*\times C^*$ the ``underlying sort function" mapping $t:(a,b)\mapsto (a,b)$.
\end{definition}

A pair of terms $(t,s)\in\Term_{C,\Sigma}\times\Term_{C,\Sigma}$ is {\em parallel} if $S(t)=S(s)$. The set of parallel pairs of terms is denoted by $P_{C,\Sigma}$.

Given a morphism $f:(C,\Sigma)\rightarrow (D,\Gamma)$ of monoidal signatures, it immediately extends to a function on terms, also denoted by $f:\Term_{C,\Sigma}\rightarrow\Term_{D,\Gamma}$, by recursively defining:\label{p:extend-morphism-signatures-terms}
\begin{center}
\begin{tabular}{c c c}
$\sigma : (a,b) \mapsto f_{a,b}(\sigma) : (f^*a,f^*b)$, & $\id_a : (a,a) \mapsto \id_{fa} : (fa,fa)$, & $\id_{\varepsilon} : (\varepsilon,\varepsilon) \mapsto \id_{\varepsilon} : (\varepsilon,\varepsilon)$, \\
$(t;s) : (a,c) \mapsto \left(f(t);f(s)\right) : (f^*a,f^*c)$, & \multicolumn{2}{c}{$(t\otimes s) : (ac,bd) \mapsto \left(f(t)\otimes f(s)\right) : (f^*(ac),f^*(bd))$.}
\end{tabular}
\end{center}
Observe that $f:\Term_{C,\Sigma}\rightarrow\Term_{D,\Gamma}$ preserves parallel terms: if $(t,s)\in P_{C,\Sigma}$, then $(ft,fs)\in P_{D,\Gamma}$.

\begin{definition}[Monoidal theory]\label{def:monoidal-theory}
A {\em monoidal theory} $\mathcal T$ is a triple $(C,\Sigma,E)$, where $(C,\Sigma)$ is a monoidal signature, and $E\sse P_{C,\Sigma}$ is a set of parallel terms. We refer to $E$ as the {\em equations} of $\mathcal T$.
\end{definition}

A morphism of monoidal theories $f:(C,\Sigma,E)\rightarrow (D,\Gamma,F)$ is given by a morphism of monoidal signatures $f:(C,\Sigma)\rightarrow (D,\Gamma)$ such that for all $(s,t)\in E$, we have $(fs,ft)\in F$. We denote the category of monoidal theories by $\MTh$.

\begin{definition}[Structural identities]\label{def:str-id}
Given a monoidal signature $(C,\Sigma)$, the set of {\em structural identities} $S$ is given by the following equations, where $s$, $s_i$ and $t_i$ range over the terms of the appropriate type:
\begin{equation*}
\scalebox{.6}{\input{journal-figures/structural-identities.tikz}}.
\end{equation*}
\end{definition}

\begin{definition}[Term congruence]
Given a monoidal theory $(C,\Sigma,E)$, the {\em term congruence} $\Eeq$ is the smallest equivalence relation on $\Term_{C,\Sigma}$ generated by $E\cup S$, which is a congruence with respect to $;$ and $\otimes$, i.e.~if $t_1\Eeq t_2$ and $s_1\Eeq s_2$, then $t_1;s_1\Eeq t_2;s_2$
and $t_1\otimes s_1\Eeq t_2\otimes s_2$, whenever the composition is defined.
\end{definition}

Note that the structural equations justify dropping all the dashed boxes when drawing the terms up to a term congruence: each diagram uniquely determines an equivalence class of terms. Such equivalence classes of terms are known as {\em string diagrams}. We utilise this in the following convention: if $w\in C^*$ with $w=a_1\cdots a_n$, we often abbreviate the identity term on $w$ as \scalebox{.6}{\input{journal-figures/const-id-terms.tikz}}; the structural identities guarantee that this expression unambiguously identifies an equivalence class of terms, no matter how the diagram is parsed.

\subsection{Examples of monoidal theories}\label{subsec:examples-monoidal-theories}

We give examples of monoidal theories that will be referenced later. Most of these are very well-known and commonly used. Monoidal theories with more involved generators and equations appear in Section~\ref{sec:layered-examples}, where we discuss several domain specific theories. An example of obtaining a monoidal theory from an existing one appears in Appendix~\ref{sec:para-copara}, where we present a construction that allows morphisms to be {\em parameterised}.

\begin{definition}[Symmetric monoidal theory]\label{def:symm-mon-thy}
A {\em symmetric monoidal theory} is a monoidal theory $(C,\Sigma,\mathcal S)$ such that for all $a,b\in C$, the set $\Sigma(ab,ba)$ contains the special generator, denoted by $\scalebox{.4}{\input{journal-figures/symmetry.tikz}}$, called the {\em symmetry}, and the set $\mathcal S$ contains the following equations, where $s$ ranges over all terms of the appropriate type, and the symmetry on the non-generator wires is defined recursively: \scalebox{.6}{\input{journal-figures/symmetry-eqns.tikz}}.
\end{definition}
Symmetric monoidal theories allow permuting the wires in any order -- working with a symmetric monoidal theory is in a sense equivalent to treating the (co)arities as multisets rather than lists. We remark that most of the concrete examples of monoidal theories we will see in this article will be symmetric.

\begin{definition}[Theory of monoids]\label{def:thy-mon}
The {\em theory of monoids} is the monoidal theory $(\{\bullet\},\Sigma,\mathcal M)$ with generators \scalebox{.5}{\input{journal-figures/monoid.tikz}} and equations \scalebox{.5}{\input{journal-figures/monoid-equations.tikz}}.
\end{definition}
We note that the models of the theory of monoids are monoidal categories with a chosen object with a monoid structure. In the monoidal category $(\Set,\times,1)$ of sets with the cartesian product, the models are precisely monoids in the sense of universal algebra.

\begin{definition}[Theory of comonoids]\label{def:thy-comon}
The {\em theory of comonoids} is the monoidal theory $(\{\bullet\},\Sigma,\mathcal C)$ with generators \scalebox{.5}{\input{journal-figures/comonoid.tikz}} and equations \scalebox{.5}{\input{journal-figures/comonoid-equations.tikz}}.
\end{definition}
Note that the theory of comonoids is the precise mirror symmetry of the theory of monoids: the generators are reflected horizontally by swapping the arities with the coarities. This makes the theory {\em coalgebraic} rather than algebraic.

\begin{definition}[Theory with uniform comonoids]\label{def:thy-univ-comonoids}
We say that a symmetric monoidal theory $(C,\Sigma,\mathcal U)$ has {\em uniform comonoids} if every $a\in C$ has a comonoid structure (i.e.~the sets $\Sigma(a,aa)$ and $\Sigma(a,\varepsilon)$ contain the comonoid generators and $\mathcal U$ contains the comonoid equations from Definition~\ref{def:thy-comon}), and upon extending the comonoid structure to all $w\in C^*$ by the following recursion:
\begin{equation*}
\scalebox{.6}{\input{journal-figures/comon-extend.tikz}},
\end{equation*}
the set $\mathcal U$ contains the following equations, where $s$ ranges over all terms with appropriate type:
\begin{equation*}
\scalebox{.6}{\input{journal-figures/comon-nat-eqns.tikz}}.
\end{equation*}
\end{definition}
It is well-known that a monoidal category is a model of the theory with uniform comonoids if and only if its monoidal product is {\em cartesian} -- this is known as {\em Fox's theorem}. We refer the reader e.g.~to~\cite[Section~6.4]{mellies-cat-semantics-linear-logic} for the details.

%% file: figures/journal-figures/sigmadiag.tikz
\begin{tikzpicture}
	\begin{pgfonlayer}{nodelayer}
		\node [style=none] (4) at (-0.75, 0) {};
		\node [style=none] (5) at (0.75, 0) {};
		\node [style=bdytextbox] (6) at (0, 0) {$\sigma$};
	\end{pgfonlayer}
	\begin{pgfonlayer}{edgelayer}
		\draw [style=edge] (4.center) to (6);
		\draw [style=edge] (6) to (5.center);
	\end{pgfonlayer}
\end{tikzpicture}

%% file: figures/journal-figures/t-box.tikz
\begin{tikzpicture}
	\begin{pgfonlayer}{nodelayer}
		\node [style=none] (0) at (-0.75, 0) {};
		\node [style=none] (1) at (0.75, 0) {};
		\node [style=bdytextbox] (2) at (0, 0) {$t$};
	\end{pgfonlayer}
	\begin{pgfonlayer}{edgelayer}
		\draw [style=edge] (0.center) to (2);
		\draw [style=edge] (2) to (1.center);
	\end{pgfonlayer}
\end{tikzpicture}

%% file: figures/journal-figures/s-box.tikz
\begin{tikzpicture}
	\begin{pgfonlayer}{nodelayer}
		\node [style=none] (0) at (-0.75, 0) {};
		\node [style=none] (1) at (0.75, 0) {};
		\node [style=bdytextbox] (2) at (0, 0) {$s$};
	\end{pgfonlayer}
	\begin{pgfonlayer}{edgelayer}
		\draw [style=edge] (0.center) to (2);
		\draw [style=edge] (2) to (1.center);
	\end{pgfonlayer}
\end{tikzpicture}

%% file: tex/tocl-article/layered-theories.tex
This section contains the main syntactic constructions of the article. We define {\em layered signatures} in Subsection~\ref{subsec:layered-signatures}, their types and terms in Subsection~\ref{subsec:types-terms} and finally give the full general definition of a layered monoidal theory in Subsection~\ref{subsec:equations-theories}. We then define {\em opfibrational}, {\em fibrational} and {\em deflational} theories in Subsection~\ref{subsec:opfib-defl-theories}. We try to mimic as close as possible the structure of the presentation in Section~\ref{sec:monoidal-theories} on monoidal theories, with additions where necessary, such as the 2-terms. We remark that the terms {\em layered theory} and {\em layered monoidal theory} are used interchangeably; ditto the terms {\em layered signature} and {\em layered monoidal signature}.

\subsection{Layered signatures}\label{subsec:layered-signatures}

A layered monoidal signature can be thought of as an indexing of monoidal signatures.

\begin{definition}[Layered signature]\label{def:layered-signature}
A {\em layered signature} is a tuple $\left(\Omega,\mathcal F,\left\{\M_{\omega}\right\}_{\omega\in\Omega}\right)$,
where $\Omega$ is a set, $\mathcal F:\Omega\times\Omega\rightarrow\Set$ is a function, and $\M_{\omega}$ is a monoidal signature for each $\omega\in\Omega$.
\end{definition}
Given a layered signature with monoidal signatures $\M_{\omega}$, we write $\M_{\omega}=(C_{\omega},\Sigma_{\omega})$. We often abbreviate a layered signature to $(\Omega,\mathcal F,\M_{\omega})$, where the index $\omega$ is implicitly assumed to range over $\Omega$. We refer to the elements of $\Omega$ as {\em layers} and to the elements of $\mathcal F(\omega,\tau)$ as {\em generators} with domain layer $\omega$ and codomain layer $\tau$. Semantically, the layers $\Omega$ correspond to monoidal categories presented by $\M_{\omega}$, and the generators in $\mathcal F(\omega,\tau)$ correspond to monoidal functors $\omega\rightarrow\tau$; we refer the reader to~\cite{lmt-part2} for the details.

A {\em morphism of layered signatures} $\left(\Omega,\mathcal F,\M_{\omega}\right)\rightarrow\left(\Psi,\mathcal G,\M_{\psi}\right)$ is given by: (1) a function $F:\Omega\rightarrow\Psi$, (2) for every pair $(\omega,\tau)\in\Omega\times\Omega$, a function $F_{\omega,\tau}:\mathcal F(\omega,\tau)\rightarrow\mathcal G(F(\omega), F(\tau))$, and (3) for every $\omega\in\Omega$, a morphism of monoidal signatures $F^{\omega}:\M_{\omega}\rightarrow\M_{F(\omega)}$.

We denote the resulting category of layered signatures by $\LSgn$. As for the category of monoidal signatures $\MSgn$, we denote a morphism in $\LSgn$ simply by $F:(\Omega,\mathcal F,\M_{\omega})\rightarrow (\Psi,\mathcal G,\M_{\psi})$ as follows: $F:\Omega\rightarrow\Psi$ (without subscripts or superscripts) denotes the function on layers, $F_{\omega,\tau}:\mathcal F(\omega,\tau)\rightarrow\mathcal G(F(\omega),F(\tau))$ the functions on the generators, and $F^{\omega}:\M_{\omega}\rightarrow\M_{F(\omega)}$ the morphisms of monoidal signatures.

As with monoidal signatures, there are two ways of ``forgetting'' the structure of $\LSgn$. First, one can ``forget'' the monoidal signatures, obtaining a functor $\LSgn\rightarrow\mathtt{Fam}(\Set)$ given by $(\Omega,\mathcal F,\M_{\omega})\mapsto (\Omega\times\Omega,\mathcal F)$. Second, one can ``forget'' the generators between the layers, obtaining a functor $\LSgn\rightarrow\mathtt{Fam}(\MSgn)$ simply given by $(\Omega,\mathcal F,\M_{\omega})\mapsto (\Omega,\M_{\omega})$. These functors arise as part of a pullback of categories (cf.~Proposition~\ref{prop:msgn-pullback}).

\noindent\begin{minipage}{0.5\textwidth}
\begin{proposition}\label{prop:lsgn-pullback}
The square~\eqref{eq:lsgn-pullback} is a pullback, where the right vertical map is the family fibration, and the bottom left horizontal map is the forgetful functor from families of monoidal signatures to sets.
\end{proposition}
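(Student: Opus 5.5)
The approach is to verify the universal property of the pullback directly, just as in the proof of Proposition~\ref{prop:msgn-pullback}. The square in question has $\LSgn$ in the top left, the forgetful functor $\LSgn\rightarrow\Fam(\Set)$ along the top, the family fibration $\Fam(\Set)\rightarrow\Set$ on the right, the functor $\LSgn\rightarrow\Fam(\MSgn)$ on the left, and along the bottom the forgetful functor $\Fam(\MSgn)\rightarrow\Set$ followed by $X\mapsto X\times X$. The latter sends $(\Omega,\M_\omega)\mapsto\Omega\times\Omega$ and $(F,\{F^\omega\})\mapsto F\times F$. This bottom composite must be a functor in order for the square to make sense, and it clearly is. Commutativity of the square is immediate: both paths send $(\Omega,\mathcal F,\M_\omega)$ to $\Omega\times\Omega$ and a morphism $F$ to $F\times F$.

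For the universal property, let $G:\cat X\rightarrow\Fam(\MSgn)$ and $K:\cat X\rightarrow\Fam(\Set)$ be functors making the outer square commute. For an object $X$ write $G(X)=(\Omega_X,\{\M^X_\omega\}_{\omega\in\Omega_X})$ and $K(X)=(I_X,\mathcal F_X)$. For a morphism $u:X\rightarrow Y$ write $G(u)=(F_u,\{F_u^\omega\})$ and $K(u)=(k_u,\{k^u_i\}_{i\in I_X})$. Commutativity says exactly that $I_X=\Omega_X\times\Omega_X$ and $k_u=F_u\times F_u$. We then define $H:\cat X\rightarrow\LSgn$ by $X\mapsto(\Omega_X,\mathcal F_X,\M^X_\omega)$ and $u\mapsto(F_u,\{k^u_{(\omega,\tau)}\},\{F^\omega_u\})$.

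We must check that $H(u)$ is a morphism of layered signatures. The map $k^u_{(\omega,\tau)}$ has type $\mathcal F_X(\omega,\tau)\rightarrow\mathcal F_Y(k_u(\omega,\tau))=\mathcal F_Y(F_u\omega,F_u\tau)$, which is exactly condition (2). Similarly, $F^\omega_u:\M^X_\omega\rightarrow\M^Y_{F_u\omega}$ is exactly condition (3). Functoriality of $H$ follows componentwise from functoriality of $G$ and $K$, since composition in $\LSgn$ is computed componentwise in the same way as in $\Fam(\Set)$ and $\Fam(\MSgn)$. By construction, composing $H$ with the two forgetful functors recovers $K$ and $G$.

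Uniqueness holds because an object or morphism of $\LSgn$ is determined by its three components. The layers and the morphisms $F^\omega$ are fixed by the image in $\Fam(\MSgn)$, and the generator sets $\mathcal F$ and maps $F_{\omega,\tau}$ are fixed by the image in $\Fam(\Set)$.

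The only point needing care is the typing check that $k^u_{(\omega,\tau)}$ lands in the right fibre, which is where commutativity of the square is used. Everything else is routine bookkeeping.
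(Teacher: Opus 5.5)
Your proof is correct and takes the same route as the paper. Both verify the universal property directly, as in Proposition~\ref{prop:msgn-pullback}: commutativity of the square forces $I=\Omega\times\Omega$, so a compatible pair of families is exactly the data of a layered signature. You additionally write out the morphism-level typing check, functoriality and uniqueness, which the paper's one-line proof leaves implicit.
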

\end{minipage}
\begin{minipage}{0.5\textwidth}
\begin{equation}\label{eq:lsgn-pullback}
\input{journal-figures/lsgn-pullback.tikz}
\end{equation}
\end{minipage}
\begin{proof}
The proof is very similar to that of Proposition~\ref{prop:msgn-pullback}: choosing a family of sets $(I,\mathcal F)$ and a family of monoidal signatures $(\Omega,\M_{\omega})$ such that the resulting square commutes amounts to the equality $I=\Omega\times\Omega$, and is thus equivalent to choosing a layered signature.
\end{proof}
As for Proposition~\ref{prop:msgn-pullback}, we observe that Proposition~\ref{prop:lsgn-pullback} could be taken as the definition of layered signatures, using the uniqueness of pullbacks.
\begin{corollary}
The forgetful functor $\LSgn\rightarrow\Fam(\MSgn)$ is a fibration.
\end{corollary}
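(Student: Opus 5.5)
The corollary follows the pattern of Corollary~\ref{cor:forgetful-msgn-fibration}. It rests on two standard facts: the family fibration is a fibration, and fibrations are stable under pullback. Proposition~\ref{prop:lsgn-pullback} exhibits $\LSgn$ as a pullback. The subtlety is identifying which leg of the square is the pullback of the family fibration.

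In square~\eqref{eq:lsgn-pullback}, the right vertical map is the family fibration $\Fam(\Set)\rightarrow\Set$, sending $(I,\mathcal F)\mapsto I$. The bottom map sends $(\Omega,\M_\omega)\mapsto\Omega\times\Omega$; it is the forgetful functor to $\Set$ composed with $X\mapsto X\times X$. The left vertical map $\LSgn\rightarrow\Fam(\MSgn)$ is therefore the pullback of the family fibration along this bottom functor. Since fibrations are stable under pullback along arbitrary functors, the left vertical map is a fibration, which is exactly the claim.

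One needs the pullback of Proposition~\ref{prop:lsgn-pullback} to be the strict pullback in $\Cat$, so that the standard stability result applies. The proof of that proposition shows this: objects and morphisms are computed componentwise as pairs agreeing over $\Set$.

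For completeness, one can also give the cartesian lifts explicitly. Take a morphism $(F,\{F^\omega\}):(\Omega,\M_\omega)\rightarrow(\Psi,\M_\psi)$ in $\Fam(\MSgn)$ and a layered signature $(\Psi,\mathcal G,\M_\psi)$ over the codomain. Define $\mathcal F(\omega,\tau)=\mathcal G(F\omega,F\tau)$, and take the lift to act as the identity on these generator sets and as $F^\omega$ on the monoidal signatures. Cartesianness reduces to the observation that a map $\mathcal H(\omega,\tau)\rightarrow\mathcal G(F\omega,F\tau)$ factors uniquely through the identity.

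The only real obstacle is checking that the pullback is strict, so that the abstract argument applies; the explicit reindexing above settles this directly.
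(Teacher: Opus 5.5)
Your proposal is correct and is the paper's argument: $\LSgn\rightarrow\Fam(\MSgn)$ is the pullback (Proposition~\ref{prop:lsgn-pullback}) of the family fibration $\Fam(\Set)\rightarrow\Set$ along $\Fam(\MSgn)\rightarrow\Set\rightarrow\Set$, and fibrations are stable under pullback. Your explicit reindexing $\mathcal F(\omega,\tau)=\mathcal G(F\omega,F\tau)$ with identity components is the cartesian lift inherited from the family fibration; it is a valid direct check, though the paper does not need it.
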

\begin{proof}
As in Corollary~\ref{cor:forgetful-msgn-fibration}, the functor $\LSgn\rightarrow\Fam(\MSgn)$ is a pullback of the family fibration $\Fam(\Set)\rightarrow\Set$, and fibrations are stable under pullbacks.
\end{proof}

When defining a layered signature (especially one that has finitely many layers and generators), it will be useful to simply draw the graph (quiver) with the layers and the generators. We formalise this as the notion of {\em shape}.
\begin{definition}[Shape of a layered signature]\label{def:shape-layered-signature}
Given a layered signature $(\Omega,\mathcal F,\M_{\omega})$, its {\em shape} is the quiver
$$\left(\Omega,\coprod_{\omega,\tau\in\Omega}\mathcal F(\omega,\tau),s,t\right)$$
with the source and the target maps defined by $s(f)\coloneq\omega$ and $t(f)\coloneq\tau$ if $f\in\mathcal F(\omega,\tau)$. Conversely, given a quiver $Q=(V,E,s,t)$, a layered signature with shape $Q$ has the vertices $V$ as the layers, and for the vertices $u,v\in V$ the generators from $u$ to $v$ are given by $s^{-1}(u)\cap t^{-1}(v)$.
\end{definition}

\begin{example}\label{ex:layered-sgn}
A layered signature with the shape $\bullet$ (a quiver with one vertex and no edges) is an ordinary monoidal signature. A layered signature with the shape $\bullet\longrightarrow\bullet$ consists of two monoidal signatures and a single generator between them.
\end{example}

\subsection{Types and terms}\label{subsec:types-terms}

We define types, terms and 2-terms generated by a layered signature. All three levels may come with equations, ultimately giving rise to the notion of a {\em layered theory} (Definition~\ref{def:layered-theory}). We summarise the three levels of generated objects in the following table:
\begin{center}
\renewcommand{\arraystretch}{1.5}
\begin{tabular}{ c | c | c | c }
level & name & meaning & equations \\
\hhline{=|=|=|=}
$\Type_{\mathcal L}$ & types & objects / 0-cells & 0-equations (Definition~\ref{def:0equations}) \\
\hline
$\Term^1_{\mathcal L}$ & terms & morphisms / 1-cells & 1-equations (Definition~\ref{def:1equations}) \\
\hline
$\Term^2_{\mathcal L}$ & 2-terms & 2-morphisms / 2-cells & 2-equations (Definition~\ref{def:2equations})
\end{tabular}
\renewcommand{\arraystretch}{1}
\end{center}

A {\em type} is a list of pairs $A : \omega$, where $\omega$ is a layer and $A$ is a list of colours from $C_{\omega}$. Thinking of this syntax as generating a category of elements (or a Grothendieck construction), $\omega$ corresponds to a (monoidal) category and $A$ is an object in $\omega$; the list then represents an object in the cartesian product of the categories. Syntactically, we make this formal below.
\begin{definition}[Types of a layered signature]\label{def:types-layered-signature}
Given a layered signature $(\Omega,\mathcal F,\M_{\omega})$ with $\M_{\omega}=(C_{\omega},\Sigma_{\omega})$, the set of {\em types} $\Type_{\mathcal L}$ is recursively generated by the rules:
\begin{center}
  \centering\small\noindent
  \begin{bprooftree}
    \AxiomC{\phantom{$\omega\in\Omega$}}
    \RightLabel{\scriptsize\customlabel{type:e-u}{(e-u)}\;}
    \UnaryInfC{$\varepsilon : \varepsilon$}
    \DisplayProof
    \AxiomC{$\omega\in\Omega$}
    \RightLabel{\scriptsize\customlabel{type:i-u}{(i-u)}\;}
    \UnaryInfC{$\varepsilon : \omega$}
    \DisplayProof
    \AxiomC{$\omega\in\Omega$}
    \AxiomC{$a\in C_{\omega}$}
    \RightLabel{\scriptsize\customlabel{type:i-c}{(i-c)}\;}
    \BinaryInfC{$a : \omega$}
  \end{bprooftree}
  \begin{prooftree}
    \AxiomC{$A : \omega$}
    \AxiomC{$f\in\mathcal F(\omega,\tau)$}
    \RightLabel{\scriptsize\customlabel{type:i-g}{(i-g)}\;}
    \BinaryInfC{$f(A) : \tau$}
    \DisplayProof
    \AxiomC{$A : \omega$}
    \AxiomC{$B : \omega$}
    \RightLabel{\scriptsize\customlabel{type:i-m}{(i-m)}\;}
    \BinaryInfC{$AB : \omega$}
    \DisplayProof
    \AxiomC{$T$}
    \AxiomC{$S$}
    \RightLabel{\scriptsize\customlabel{type:e-m}{(e-m)},}
    \BinaryInfC{$T,S$}
  \end{prooftree}
\end{center}
subject to the condition that the rules~\ref{type:i-g} and~\ref{type:i-m} apply only to {\em internal} types, defined as the types generated by the rules whose label begins with i (i.e.~without applying the rules~\ref{type:e-u} and~\ref{type:e-m}). We further require the types formed with the rule~\ref{type:e-m} to define a monoid with the unit $\varepsilon : \varepsilon$, and for each $\omega\in\Omega$, the types formed with the rule~\ref{type:i-m} to define a monoid with the unit $\varepsilon : \omega$ (i.e.~we quotient by associativity and unitality).
\end{definition}
We note that the internal types are of the form $A:\omega$ for some $\omega\in\Omega$ and $A\in C_{\omega}^*$. The subset of internal types is denoted by $\IntType_{\mathcal L}$.

Given a morphism of layered signatures $F:\mathcal L\rightarrow\mathcal K$ (with $\mathcal L = (\Omega,\mathcal F,\mathcal M_{\omega})$), it induces a function $F:\Type_{\mathcal L}\rightarrow\Type_{\mathcal K}$ recursively defined as follows:
\begin{align*}
\varepsilon : \varepsilon &\mapsto \varepsilon : \varepsilon &
\varepsilon : \omega &\mapsto \varepsilon : F(\omega) &
a : \omega &\mapsto F^{\omega}(a) : F(\omega) \\
f(A) : \tau &\mapsto F_{\omega,\tau}(f)(F(A)) : F(\tau) &
AB : \omega &\mapsto F(A)F(B) : F(\omega) &
T,S &\mapsto F(T),F(S).
\end{align*}

Let us define the binary relation $P^0_{\mathcal L}\sse\IntType_{\mathcal L}\times\IntType_{\mathcal L}$ on the internal types as containing those terms which are in the same layer: $(A:\omega,B:\tau)\in P^0_{\mathcal L}$ if and only if $\omega = \tau$. We note that this relation is preserved by any function $F:\Type_{\mathcal L}\rightarrow\Type_{\mathcal K}$ induced by a morphism of layered signatures: if $(T,S)\in P^0_{\mathcal L}$, then $(F(T),F(S))\in P^0_{\mathcal K}$.

\begin{definition}[0-equations]\label{def:0equations}
Given a layered signature $\mathcal L$, a set of {\em 0-equations} is a subset $E^0\sse P^0_{\mathcal L}$.
\end{definition}
Given a set of 0-equations $E^0$, we extend it to the {\em type congruence} $\Zeq$, i.e.~to the smallest equivalence relation on $\Type_{\mathcal L}$ satisfying $E^0\sse {\Zeq}$ as well as the recursive clauses:
\begin{center}
  \centering\small\noindent
  \begin{bprooftree}
    \AxiomC{$A:\omega\Zeq B:\omega$}
    \AxiomC{$f\in\mathcal F(\omega,\tau)$}
    \RightLabel{\;}
    \BinaryInfC{$f(A):\tau\Zeq f(B):\tau$}
    \DisplayProof
    \AxiomC{$A:\omega\Zeq C:\omega$}
    \AxiomC{$B:\omega\Zeq D:\omega$}
    \RightLabel{\;}
    \BinaryInfC{$AB:\omega\Zeq CD:\omega$}
    \DisplayProof
    \AxiomC{$T\Zeq S$}
    \AxiomC{$U\Zeq K$}
    \RightLabel{.}
    \BinaryInfC{$T,U\Zeq S,K$}
  \end{bprooftree}
\end{center}
We call the pairs of types $\Type_{\mathcal L}\times\Type_{\mathcal L}$ {\em sorts}, and denote a sort by $(T \mid S)$.

The next definition introduces the class of terms that has to be contained in every layered theory. The important notion is that of {\em internal terms}: informally, these are the terms that can be drawn inside a single layer or ``tube'' without partitioning or splitting the area; formally, they are defined by mutual recursion with the basic terms below.
\begin{definition}[Basic terms]\label{def:terms-layered-signature}
Let $(\Omega,\mathcal F,\M_{\omega})$ be a layered signature. The {\em basic terms} are generated by the recursive rules in Figure~\ref{fig:layered-terms}, with the side condition that the rules~\ref{term:int-box} and \ref{term:int-tensor} only apply to {\em internal} terms, defined as follows:
\begin{itemize}
\item the terms generated by the rules~\ref{term:int-unit},~\ref{term:int-id},~\ref{term:int-gen},~\ref{term:int-box} or~\ref{term:int-tensor} are internal,
\item if the terms $x : (T \mid S)$ and $y : (S \mid U)$ are internal, then so is the term $x;y : (T \mid U)$ obtained by~\ref{term:comp}.
\end{itemize}
\end{definition}

\begin{figure}
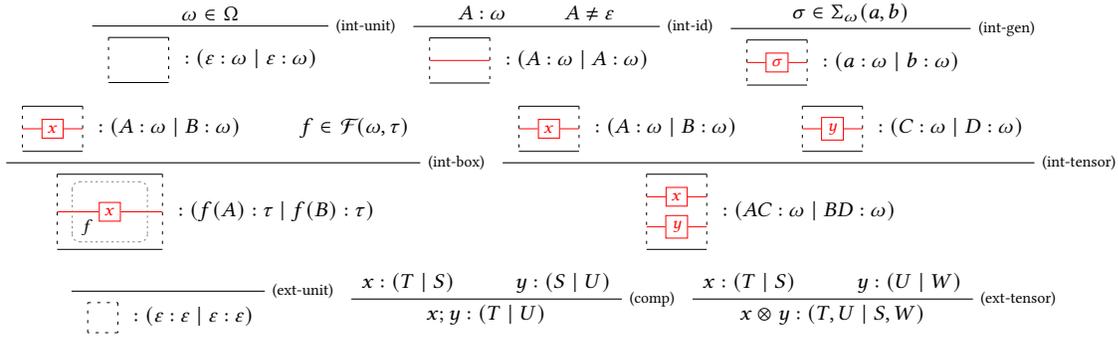

  \centering\small\noindent
  \begin{prooftree}
    \AxiomC{$\omega\in\Omega$}
    \RightLabel{\scriptsize\customlabel{term:int-unit}{(int-unit)}\;\;}
    \UnaryInfC{$\scalebox{.8}{\input{journal-figures/emptydiag-sheet.tikz}} : (\varepsilon : \omega\mid\varepsilon : \omega)$}
    \DisplayProof
    \AxiomC{$A:\omega$}
    \AxiomC{$A\neq\varepsilon$}
    \RightLabel{\scriptsize\customlabel{term:int-id}{(int-id)}\;\;}
    \BinaryInfC{$\scalebox{.8}{\input{journal-figures/iddiag-sheet.tikz}} : (A:\omega \mid A:\omega)$}
    \DisplayProof
    \AxiomC{$\sigma\in\Sigma_{\omega}(a,b)$}
    \RightLabel{\scriptsize\customlabel{term:int-gen}{(int-gen)}}
    \UnaryInfC{$\scalebox{.8}{\input{journal-figures/internalsigmadiag.tikz}} : (a:\omega\mid b:\omega)$}
  \end{prooftree}
  \begin{bprooftree}
    \AxiomC{$\scalebox{.8}{\input{journal-figures/internalxdiag.tikz}} : (A:\omega\mid B:\omega)$}
    \AxiomC{$f\in\mathcal F(\omega,\tau)$}
    \RightLabel{\scriptsize\customlabel{term:int-box}{(int-box)}\;\;}
    \BinaryInfC{$\scalebox{.8}{\input{journal-figures/f-box.tikz}} : (f(A):\tau \mid f(B):\tau)$}
    \DisplayProof
    \AxiomC{$\scalebox{.8}{\input{journal-figures/internalxdiag.tikz}} : (A:\omega\mid B:\omega)$}
    \AxiomC{$\scalebox{.8}{\input{journal-figures/internalydiag.tikz}} : (C:\omega\mid D:\omega)$}
    \RightLabel{\scriptsize\customlabel{term:int-tensor}{(int-tensor)}}
    \BinaryInfC{$\scalebox{.8}{\input{journal-figures/internalxydiag.tikz}} : (AC:\omega \mid BD:\omega)$}
  \end{bprooftree}
  \begin{prooftree}
    \AxiomC{\phantom{$\omega\in\Omega$}}
    \RightLabel{\scriptsize\customlabel{term:ext-unit}{(ext-unit)}\;\;}
    \UnaryInfC{$\scalebox{.8}{\input{journal-figures/emptydiag.tikz}} : (\varepsilon : \varepsilon\mid\varepsilon : \varepsilon)$}
    \DisplayProof
    \AxiomC{$x : (T\mid S)$}
    \AxiomC{$y : (S\mid U)$}
    \RightLabel{\scriptsize\customlabel{term:comp}{(comp)}\;\;}
    \BinaryInfC{$x;y : (T\mid U)$}
    \DisplayProof
    \AxiomC{$x : (T\mid S)$}
    \AxiomC{$y : (U \mid W)$}
    \RightLabel{\scriptsize\customlabel{term:ext-tensor}{(ext-tensor)}}
    \BinaryInfC{$x\otimes y : (T,U \mid S,W)$}
  \end{prooftree}
  \caption{Rules for generating the basic terms of a layered signature.\label{fig:layered-terms}}
\end{figure}

\begin{remark}\label{rem:external-generators-sliding}
In the definition of a layered signature (Definition~\ref{def:layered-signature}), the generators between the layers are all of the sort 1-1, i.e.~the arity and coarity both consist of a single layer -- this is in contrast to the generators within the layers, whose arity and coarity are lists of colours. This restriction is imposed as there is no obvious generalisation of the rule~\ref{term:int-box} to generators with multiple layers in the (co)arity that would make use of the diagrammatic notation. For example, if there was a generator $f\in\mathcal F(\omega\rho,\tau\mu)$, the rule would need to assume two terms $x : (A:\omega\mid B:\omega)$ and $y : (C:\rho\mid D:\rho)$, and output two terms with the sorts $(f(A,C):\tau\mid f(B,D):\tau)$ and $(f(A,C):\mu\mid f(B,D):\mu)$ each of which would need to be indexed with the terms $x$ and $y$; it is not clear how to draw this. Graphically producing such terms is important, since the crucial aspect of reasoning with layered theories amounts to ``sliding'' the internal terms through the diagram: see e.g.~the equations in Figures~\ref{fig:structural-twocells-monoidal} and~\ref{fig:structural-twocells-functors-ext}.
\end{remark}

Despite Remark~\ref{rem:external-generators-sliding}, the layers are allowed to interact via terms with non-singleton (co)arities: such terms are introduced at the level of rules that define terms rather than as generators. We thus define a layered theory (Definition~\ref{def:layered-theory}) relative to further recursive rules extending the basic ones in Figure~\ref{fig:layered-terms}. The additional rules we shall consider are the {\em symmetry terms} (Definition~\ref{def:symmetry-terms}), {\em opfibrational terms} (Figure~\ref{fig:opfibrational-terms}) and {\em fibrational terms} (Figure~\ref{fig:fibrational-terms}). In the next section, we will also add the rule for the {\em cobox} (Definition~\ref{def:cobox}), although the terms generated by this rule can always be eliminated. Any choice of these additional rules is referred to as a {\em recursive sorting procedure}. We remark that any choice of a recursive sorting procedure does not change the set of internal terms.

For the remainder of this subsection, we assume a fixed recursive sorting procedure. Given a layered signature $\mathcal L$, the set of resulting terms is denoted by $\Term^1_{\mathcal L}$, and we write $t:(T\mid S)$ for a term $t$ with the sort $(T\mid S)$. We emphasise that the precise terms in $\Term^1_{\mathcal L}$ depend on the chosen recursive sorting procedure, although it will always contain the basic terms of Definition~\ref{def:terms-layered-signature}. We note that any morphism of layered signatures $F:\mathcal L\rightarrow\mathcal K$ uniquely extends to a function $F:\Term^1_{\mathcal L}\rightarrow\Term^1_{\mathcal K}$ by translating the types and terms in the assumption of each recursive rule, and then applying the corresponding rule in $\Term^1_{\mathcal K}$. The induced function is moreover consistent with the induced function on types: $t:(T\mid S)\mapsto F(t) : (F(T)\mid F(S))$.

We say that two terms are {\em parallel} when they have the same sort, up to 0-equations, if there are any. Formally, given a set of 0-equations $E^0$, we define the binary relation $P^1_{\mathcal L}\sse\Term^1_{\mathcal L}\times\Term^1_{\mathcal L}$ by $\left(t:(T_t\mid S_t),s:(T_s\mid S_s)\right)\in P^1_{\mathcal L}$ if and only if both $T_t\Zeq T_s$ and $S_t\Zeq S_s$, and call the pairs of terms in $P^1_{\mathcal L}$ {\em parallel} with respect to $E^0$. Note that if $E^0$ is empty (i.e.~there are no equations), then terms are parallel if and only if they have the same sort.

\begin{proposition}\label{prop:parallel-terms-preserved}
Let $F:(\mathcal L,E^0_{\mathcal L})\rightarrow (\mathcal K,E^0_{\mathcal K})$ be a morphism of layered signatures such that the induced function on types $F:\Type_{\mathcal L}\rightarrow\Type_{\mathcal K}$ preserves the chosen 0-equations. Then the induced function on terms $F:\Term^1_{\mathcal L}\rightarrow\Term^1_{\mathcal K}$ preserves parallel terms: if $(s,t)\in P^1_{\mathcal L}$, then $(F(s),F(t))\in P^1_{\mathcal K}$.
\end{proposition}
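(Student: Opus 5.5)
The plan is to reduce the claim about terms to a claim about types, and then prove the type-level claim by induction on the type congruence.

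First, recall that the induced function on terms is sort-compatible: $F(t:(T\mid S)) = F(t):(F(T)\mid F(S))$, as noted just before the statement. So take $(s,t)\in P^1_{\mathcal L}$ with $s:(T_s\mid S_s)$ and $t:(T_t\mid S_t)$. By definition this means $T_s\Zeq T_t$ and $S_s\Zeq S_t$ with respect to $E^0_{\mathcal L}$. It therefore suffices to show that the induced map on types $F:\Type_{\mathcal L}\rightarrow\Type_{\mathcal K}$ sends the type congruence generated by $E^0_{\mathcal L}$ into the one generated by $E^0_{\mathcal K}$. That is, we need $T\Zeq T'$ in $\mathcal L$ to imply $F(T)\Zeq F(T')$ in $\mathcal K$. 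Once this holds, $(F(s),F(t))\in P^1_{\mathcal K}$ follows immediately.

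For the type-level claim, consider the relation $R$ on $\Type_{\mathcal L}$ defined by $T\mathrel{R}T'$ if and only if $F(T)\Zeq F(T')$ in $\mathcal K$. Since $\Zeq$ is the smallest equivalence relation containing $E^0_{\mathcal L}$ and closed under the three recursive clauses, it suffices to show that $R$ has each of these properties.

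\textbf{Equivalence relation.} $R$ is an equivalence relation because it is the pullback along $F$ of the equivalence relation $\Zeq$ on $\Type_{\mathcal K}$.

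\textbf{Contains $E^0_{\mathcal L}$.} This is exactly the hypothesis that $F$ preserves the chosen 0-equations.

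\textbf{Closure under the clauses.} Each clause follows from the recursive definition of $F$ on types together with the matching clause in $\mathcal K$:
\begin{itemize}
\item If $F(A)\Zeq F(B)$, then $F(f(A)) = F_{\omega,\tau}(f)(F(A))\Zeq F_{\omega,\tau}(f)(F(B)) = F(f(B))$.
\item $F(AB)=F(A)F(B)$, so the monoidal clause transfers directly.
\item $F(T,S)=F(T),F(S)$, so the external clause transfers directly.
\end{itemize}

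The only point needing care is well-definedness: $F$ must respect the quotient of types by associativity and unitality, so that it is a genuine function on $\Type_{\mathcal L}$. This holds because $F$ sends $\varepsilon:\omega$ to $\varepsilon:F(\omega)$ and $\varepsilon:\varepsilon$ to $\varepsilon:\varepsilon$, and it acts homomorphically on both monoid operations. I expect no real obstacle here; the argument is purely bookkeeping.
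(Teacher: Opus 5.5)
Your proof is correct and follows the same route as the paper: the induced map on terms is sort-compatible, and the induced map on types preserves the type congruence by induction on its construction. You carry out that induction explicitly by checking that the pulled-back relation is an equivalence relation containing $E^0_{\mathcal L}$ and closed under the three recursive clauses.
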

\begin{proof}
By the fact that terms with the same sort are mapped to the terms with the same sort, and by induction on the construction of the type congruence.
\end{proof}

\begin{definition}[1-equations]\label{def:1equations}
Given a layered signature $\mathcal L$ together with a set of 0-equations $E^0$, a set of {\em 1-equations} with respect to $E^0$ is a subset $E^1\sse P^1_{\mathcal L}$.
\end{definition}
Given a set of 1-equations $E^1$, we extend it to the {\em term congruence} $\Oeq$, i.e.~the smallest equivalence relation on $P^1_{\mathcal L}$ containing $E^1$ that is preserved by the recursive rules in Figure~\ref{fig:layered-terms}.

In many cases, equality between terms is too strong a notion. Often, two 1-cells are not equal but merely related by a 2-cell, or there is a clear direction to a transformation that preserves semantics, as in the graph rewrites for MBQC-graphs (Subsection~\ref{subsec:zx-extraction}). To capture this, we need to extend the layered signature to include generating 2-cells (in addition to 1-equations).
\begin{definition}[Choice of 2-cells]\label{def:choice-2cells}
Given a layered signature $\mathcal L$ together with a set of 0-equations $E^0$, a {\em choice of 2-cells} with respect to $E^0$ is a function $\eta : P^1_{\mathcal L}\rightarrow\Set$, assigning to each parallel pair of terms a set of {\em generating 2-cells}.
\end{definition}
Note that any layered signature in the sense of Definition~\ref{def:layered-signature} can be seen as having a choice of 2-cells by setting $\eta$ to be the constant function returning the empty set. A morphism between layered signatures with a choice of 2-cells $(F,F^2) : (\mathcal L, E^0_{\mathcal L}, \eta_{\mathcal L})\rightarrow (\mathcal K, E^0_{\mathcal K}, \eta_{\mathcal K})$ is given by a morphism of layered signatures $F:\mathcal L\rightarrow\mathcal K$ such that the induced function $F:\Type_{\mathcal L}\rightarrow\Type_{\mathcal K}$ preserves the 0-equations, and for each pair $(t,s)\in P^1_{\mathcal L}$, a function $F^2_{t,s}:\eta_{\mathcal L}(t,s)\rightarrow\eta_{\mathcal K}(F(t),F(s))$.

\begin{definition}[2-terms]\label{def:2terms}
Given a layered signature $\mathcal L$ with a choice of 2-cells $\eta$, a {\em 2-term} is an expression of the form $\alpha : (s,t)$, where $(s,t)\in P^1_{\mathcal L}$ is its {\em sort}, generated by the following recursive procedure:
\begin{center}
  \small
  \begin{bprooftree}
    \AxiomC{$a\in\eta(t,s)$}
    \RightLabel{\;\;}
    \UnaryInfC{$a : (t,s)$}
    \DisplayProof
    \AxiomC{$t\in\Term^1_{\mathcal L}$}
    \RightLabel{\;\;}
    \UnaryInfC{$\id_t : (t,t)$}
    \DisplayProof
    \AxiomC{$\alpha : (t,s)$}
    \AxiomC{$\beta : (s,k)$}
    \RightLabel{\;\;}
    \BinaryInfC{$\alpha;\beta : (t,k)$}
    \DisplayProof
    \AxiomC{$\alpha : (t_1,s_1)$}
    \AxiomC{$\beta : (t_2,s_2)$}
    \RightLabel{\;\;}
    \BinaryInfC{$\alpha\otimes\beta : (t_1\otimes t_2,s_1\otimes s_2)$}
  \end{bprooftree}
  \begin{prooftree}
    \AxiomC{$\alpha : (t_1,s_1)$}
    \AxiomC{$\beta : (t_2,s_2)$}
    \AxiomC{$t_1 : (T\mid S)$}
    \AxiomC{$t_2 : (S\mid K)$}
    \RightLabel{.}
    \QuaternaryInfC{$\alpha *\beta : (t_1;t_2, s_1;s_2)$}
  \end{prooftree}
  \end{center}
Akin to terms, we denote the set of 2-terms by $\Term^2_{\mathcal L}$.
\end{definition}
\begin{remark}
We have chosen not to extend the 2-terms to the recursively defined internal terms (i.e.~the ones generated by the rules~\ref{term:int-box} and~\ref{term:int-tensor}). There is no inherent reason for omitting these terms: upon adding them, one would just need to add the corresponding structural 2-equations (see Definition~\ref{def:str-2eqns}). However, in the models and the examples we will consider the 2-categorical structure only arises at the level of external terms, or does not need to be propagated between layers. Thus, for the sake of slightly reducing the complexity of the current presentation, we omit these terms. Moreover, as we shall observe in Corollary~\ref{cor:defl-twocells-extend}, in deflational theories, such terms are automatically induced by the external ones.
\end{remark}
Let $(F,F^2) : (\mathcal L, E^0_{\mathcal L}, \eta_{\mathcal L})\rightarrow (\mathcal K, E^0_{\mathcal K}, \eta_{\mathcal K})$ be a morphism between layered signatures with a choice of 2-cells. The functions between the generating 2-cells $F^2_{t,s}$ then yield a function $F^2:\Term^2_{\mathcal L}\rightarrow\Term^2_{\mathcal K}$ recursively defined as follows:
\begin{align*}
a : (t,s) &\mapsto F^2_{t,s}(a) : (F(t),F(s)) &
\id_t : (t,t) &\mapsto \id_{F(t)} : (F(t),F(t)) \\
\alpha\otimes\beta : (t_1\otimes t_2,s_1\otimes s_2) &\mapsto F^2(\alpha)\otimes F^2(\beta) : (F(t_1)\otimes F(t_2),F(s_1)\otimes F(s_2)) &
\alpha;\beta : (t,k) &\mapsto F^2(\alpha); F^2(\beta) : (F(t),F(k)) \\
\alpha *\beta : (t_1;t_2, s_1;s_2) &\mapsto F^2(\alpha) * F^2(\beta) : (F(t_1);F(t_2), F(s_1);F(s_2)).
\end{align*}

As for the terms, given a set of 1-equations $E^1$, a pair of 2-terms is in the {\em parallel 2-term relation} $P^2_{\mathcal L}$ with respect to $E^1$ if and only if both 2-terms have the same sort up to the 1-equations: $\left(\alpha:(t_{\alpha},s_{\alpha}), \beta:(t_{\beta},s_{\beta})\right)\in P^2_{\mathcal L}$ if and only if both $t_{\alpha}\Oeq t_{\beta}$ and $s_{\alpha}\Oeq s_{\beta}$. We draw a 2-term $\alpha : (t,s)$ either as \scalebox{.8}{\input{journal-figures/2term.tikz}}, or simply as an arrow $t\xrightarrow{\alpha} s$ (see e.g.~Figure~\ref{fig:structural-twocells-adjoints} on page~\pageref{fig:structural-twocells-adjoints}).

\begin{proposition}\label{prop:parallel-2terms-preserved}
Let $(F,F^2) : (\mathcal L, E^0_{\mathcal L}, \eta_{\mathcal L}, E^1_{\mathcal L})\rightarrow (\mathcal K, E^0_{\mathcal K}, \eta_{\mathcal K}, E^1_{\mathcal K})$
be a morphism between layered signatures with a choice of 2-cells such that the induced function on terms $F:\Term^1_{\mathcal L}\rightarrow\Term^1_{\mathcal K}$ preserves the specified 1-equations. Then the induced function on 2-terms $F^2:\Term^2_{\mathcal L}\rightarrow\Term^2_{\mathcal K}$ preserves parallel terms: if $(\alpha,\beta)\in P^2_{\mathcal L}$, then $(F^2(\alpha),F^2(\beta))\in P^2_{\mathcal K}$.
\end{proposition}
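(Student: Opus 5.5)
The plan mirrors the proof of Proposition~\ref{prop:parallel-terms-preserved}. It has two ingredients: \emph{sort compatibility} of $F^2$, and the fact that $F$ carries the term congruence $\Oeq_{\mathcal L}$ into $\Oeq_{\mathcal K}$.

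\textbf{Sort compatibility.} First I would show that if $\alpha:(t,s)$, then $F^2(\alpha):(F(t),F(s))$. This goes by induction on the recursive generation of 2-terms in Definition~\ref{def:2terms}, and each clause of the recursive definition of $F^2$ already lists the sort on the right-hand side.
\begin{itemize}
\item For a generator $a\in\eta_{\mathcal L}(t,s)$, the definition of $F^2_{t,s}$ gives $F^2(a)\in\eta_{\mathcal K}(F(t),F(s))$.
\item For identities, $F^2(\id_t)=\id_{F(t)}$.
\item For $\alpha;\beta$ and $\alpha\otimes\beta$, the claim follows from the induction hypothesis, because $F$ commutes with $\otimes$ on terms.
\item For $\alpha*\beta$, I also need $F(t_1;t_2)=F(t_1);F(t_2)$ and that $F(t_1):(F(T)\mid F(S))$ and $F(t_2):(F(S)\mid F(K))$ are composable. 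Both hold because $F$ on terms acts rule by rule and is consistent with types.
\end{itemize}
One point needs checking here. The rules for $\alpha;\beta$ and $\alpha*\beta$ may only match sorts up to the congruences (for instance, if $t_1$ and $t_2$ are composable only up to $\Zeq$). In that case I would use the hypothesis that $F$ preserves 0-equations, exactly as in Proposition~\ref{prop:parallel-terms-preserved}, to see that the images are still well-sorted.

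\textbf{Preservation of the term congruence.} Next I would show that $t\Oeq_{\mathcal L} t'$ implies $F(t)\Oeq_{\mathcal K} F(t')$. I expect this to be the main step requiring care. The argument is by induction on the construction of $\Oeq_{\mathcal L}$ as the smallest equivalence relation containing $E^1_{\mathcal L}$ and closed under the recursive term-forming rules.
\begin{itemize}
\item Base case: pairs in $E^1_{\mathcal L}$ are sent into $\Oeq_{\mathcal K}$ by the hypothesis that $F$ preserves the specified 1-equations. I read that hypothesis as $(F(t),F(t'))\in E^1_{\mathcal K}$ or, more weakly, $F(t)\Oeq_{\mathcal K}F(t')$; either reading suffices.
\item Reflexivity, symmetry and transitivity are immediate.
\item Closure under each rule of Figure~\ref{fig:layered-terms}, and under any additional rules of the sorting procedure, holds because $F$ is defined rule by rule. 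The image of a rule applied to congruent premises is therefore the same rule applied to premises that are congruent in $\mathcal K$ by the induction hypothesis, and $\Oeq_{\mathcal K}$ is closed under that rule.
\end{itemize}

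\textbf{Conclusion.} Take $(\alpha:(t_\alpha,s_\alpha),\beta:(t_\beta,s_\beta))\in P^2_{\mathcal L}$, so that $t_\alpha\Oeq t_\beta$ and $s_\alpha\Oeq s_\beta$. By sort compatibility, $F^2(\alpha):(F(t_\alpha),F(s_\alpha))$ and $F^2(\beta):(F(t_\beta),F(s_\beta))$. By preservation of the term congruence, $F(t_\alpha)\Oeq_{\mathcal K}F(t_\beta)$ and $F(s_\alpha)\Oeq_{\mathcal K}F(s_\beta)$. Hence $(F^2(\alpha),F^2(\beta))\in P^2_{\mathcal K}$.
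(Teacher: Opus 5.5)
Your proposal is correct and takes the same route as the paper's proof. The paper also combines two facts: $F^2$ sends a 2-term of sort $(t,s)$ to one of sort $(F(t),F(s))$, and $F$ carries the term congruence of $\mathcal L$ into that of $\mathcal K$, proved by induction on the construction of that congruence. You spell out the inductive cases that the paper leaves implicit.
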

\begin{proof}
By the fact that 2-terms with the same sort are mapped to 2-terms with the same sort, and by induction on the construction of the term congruence.
\end{proof}

\begin{definition}[2-equations]\label{def:2equations}
Let $\mathcal L$ be layered signature with a set of 0-equations $E^0$. Given a set of 1-equations $E^1$ and a choice of 2-cells $\eta$ with respect to $E^0$, a set of {\em 2-equations} with respect to $E^1$ and $\eta$ is a subset $E^2\sse P^2_{\mathcal L}$.
\end{definition}
Given a set of 2-equations $E^2$, we extend it to the {\em 2-term congruence} $\Teq$, i.e.~the smallest equivalence relation on $P^2_{\mathcal L}$ containing $E^2$ that is preserved by the recursive rules in Definition~\ref{def:2terms}.

\subsection{Theories and structural equations}\label{subsec:equations-theories}

We now have all the ingredients to define a layered theory.

\begin{definition}[Layered theory]\label{def:layered-theory}
Let us fix a recursive sorting procedure. A {\em layered theory} $(\mathcal L,E^0,E^1,\eta,E^2)$ consists of the following:
\begin{itemize}
\item a layered signature $\mathcal L$ (Definition~\ref{def:layered-signature}),
\item a set of 0-equations $E^0$ (Definition~\ref{def:0equations}),
\item a set of 1-equations $E^1$ with respect to $E^0$ (Definition~\ref{def:1equations}),
\item a choice of 2-cells $\eta$ with respect to $E^0$ (Definition~\ref{def:choice-2cells}),
\item a set of 2-equations $E^2$ with respect to $E^1$ and $\eta$ (Definition~\ref{def:2equations}).
\end{itemize}
\end{definition}
For a fixed recursive sorting procedure, a {\em morphism of layered theories}
$$(F,F^2) : (\mathcal L,E^0_{\mathcal L},E^1_{\mathcal L},\eta_{\mathcal L},E^2_{\mathcal L})\rightarrow (\mathcal K,E^0_{\mathcal K},E^1_{\mathcal K},\eta_{\mathcal K},E^2_{\mathcal K})$$
is given by a morphism of layered signatures with a choice of 2-cells $(F,F^2)$ (thus, in particular, the induced function $F:\Type_{\mathcal L}\rightarrow\Type_{\mathcal K}$ preserves the 0-equations), such that the induced functions $F:\Term^1_{\mathcal L}\rightarrow\Term^1_{\mathcal K}$ and $F^2:\Term^2_{\mathcal L}\rightarrow\Term^2_{\mathcal K}$ preserve the 1-equations and the 2-equations, respectively. We denote any category of layered theories by $\LTh$: note that each recursive sorting procedure results in a different category.
\begin{proposition}\label{prop:lth-lsgn-fibration}
Let us fix a recursive sorting procedure. The forgetful functor $\LTh\rightarrow\LSgn$ is a fibration. Moreover, the objects in the fibre $\LTh(\mathcal L)$ are precisely the layered theories with the signature $\mathcal L$.
\end{proposition}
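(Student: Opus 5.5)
The plan is to construct cartesian liftings directly by \emph{pulling back} all the equational and 2-cell data along a morphism of layered signatures. The forgetful functor $U:\LTh\rightarrow\LSgn$ sends $(\mathcal L,E^0,E^1,\eta,E^2)\mapsto\mathcal L$ and $(F,F^2)\mapsto F$. Fix a theory $\mathbb K=(\mathcal K,E^0_{\mathcal K},E^1_{\mathcal K},\eta_{\mathcal K},E^2_{\mathcal K})$ and a morphism of layered signatures $F:\mathcal L\rightarrow\mathcal K$. Throughout, I read ``$F$ preserves the $i$-equations'' as ``$F$ maps the generating $i$-equations into the corresponding congruence of the codomain''.

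I will define the lifted theory $F^*\mathbb K$ on $\mathcal L$ level by level, because each level of data is defined relative to the previous one.
\begin{itemize}
\item The 0-equations are $E^0_{\mathcal L}\coloneq\{(T,S)\in P^0_{\mathcal L} : F(T)\Zeq F(S)\text{ in }\mathcal K\}$.
\item The 1-equations are $E^1_{\mathcal L}\coloneq\{(t,s)\in P^1_{\mathcal L} : F(t)\Oeq F(s)\text{ in }\mathcal K\}$.
\item The 2-cells are $\eta_{\mathcal L}(t,s)\coloneq\eta_{\mathcal K}(F(t),F(s))$.
\item The 2-equations are the pairs in $P^2_{\mathcal L}$ whose images under $F^2$ are $\Teq$-related in $\mathbb K$.
\end{itemize}
The cartesian arrow is $(F,\id)$, whose 2-cell component $F^2_{t,s}$ is the identity.

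\textbf{Well-definedness.} The key lemma is that the preimage under $F$ of the type congruence of $\mathbb K$ is an equivalence relation on $\Type_{\mathcal L}$ closed under the recursive clauses. This holds because $F$ commutes with \ref{type:i-g}, \ref{type:i-m} and \ref{type:e-m} by its recursive definition. Hence the congruence generated by $E^0_{\mathcal L}$ is contained in this preimage, so $F$ maps $\Zeq$-related types of $\mathcal L$ to $\Zeq$-related types of $\mathcal K$.

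It follows that $F$ maps $P^1_{\mathcal L}$ into $P^1_{\mathcal K}$, exactly as in Proposition~\ref{prop:parallel-terms-preserved}. This is what makes $\eta_{\mathcal L}$ well-typed. The same preimage argument at the next level gives that the term congruence of $E^1_{\mathcal L}$ lies in the preimage of the term congruence of $\mathbb K$. This uses that the induced function on terms commutes with every rule of the fixed recursive sorting procedure. Then Proposition~\ref{prop:parallel-2terms-preserved} makes $E^2_{\mathcal L}\sse P^2_{\mathcal L}$ meaningful. By construction, $(F,\id):F^*\mathbb K\rightarrow\mathbb K$ is a morphism of layered theories.

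\textbf{Cartesianness.} Let $(G,G^2):\mathbb M\rightarrow\mathbb K$ be a morphism of theories and $h:\mathcal M\rightarrow\mathcal L$ a morphism of signatures with $F\circ h=G$. Any lift has underlying signature morphism $h$, and its 2-cell component must satisfy $F^2\circ h^2=G^2$. Since $F^2$ is the identity, this forces $h^2_{t,s}\coloneq G^2_{t,s}:\eta_{\mathcal M}(t,s)\rightarrow\eta_{\mathcal K}(Gt,Gs)=\eta_{\mathcal L}(ht,hs)$, which gives uniqueness.

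For existence, suppose $(t,s)\in E^1_{\mathbb M}$. Then $Fh(t)=G(t)\Oeq G(s)=Fh(s)$, so $(h(t),h(s))\in E^1_{\mathcal L}$. The same argument applies to 0- and 2-equations, so $(h,h^2)$ is a morphism of theories.

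The fibre statement is essentially definitional. An object over $\mathcal L$ is a theory with signature $\mathcal L$, and the morphisms in the fibre are those with identity signature part.

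\textbf{Main obstacle.} I expect the delicate step to be the level-by-level dependency: $P^1$ depends on $E^0$, and $P^2$ depends on $E^1$. One has to check at each stage that $F$ is compatible with the \emph{generated} congruences and not merely the generating equations. The preimage-of-a-congruence argument is the tool I would use, and it needs the induced maps to commute with every rule of the chosen recursive sorting procedure. Strictly speaking, the 2-cell component forces $U$ to be seen as forgetting $F^2$ as well, and the uniqueness argument above handles that.
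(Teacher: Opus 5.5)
Your proposal is correct and is essentially the paper's argument: the cartesian lift of $F$ is $(F,\id)$, with the 0-, 1- and 2-equations and the generating 2-cells pulled back along $F$. The one difference is that you take preimages of the generated congruences, whereas the paper takes preimages of the generating sets $E^i$ themselves. Your choice matches your reading of ``preserve'' as landing in the congruence, while the paper's matches the strict convention used for $\MTh$. You also make explicit the universal property and the $P^1$/$P^2$ well-typedness checks (including phrasing the 2-equations via images under $F^2$), which the paper leaves implicit.
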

\begin{proof}
The cartesian maps are given by the pairs $(F,F^2)$ such that $F^2_{t,s}=\id$ is the identity map for each pair of parallel terms, and the induced functions on types, terms and 2-terms not only preserve but also reflect the equations.

Let $(\mathcal L,E^0_{\mathcal L},E^1_{\mathcal L},\eta_{\mathcal L},E^2_{\mathcal L})$ be a layered theory and let $F:\mathcal K\rightarrow\mathcal L$ be a morphism of layered signatures. We define the layered theory $(\mathcal K,E^0_{\mathcal K},E^1_{\mathcal K},\eta_{\mathcal K},E^2_{\mathcal K})$ as follows:
\begin{itemize}
\item $(S,T)\in E^0_{\mathcal K}$ if and only if $\left(F(S),F(T)\right)\in E^0_{\mathcal L}$,
\item $(s,t)\in E^1_{\mathcal K}$ if and only if $\left(F(s),F(t)\right)\in E^1_{\mathcal L}$,
\item $\eta_{\mathcal K}(s,t) \coloneq \eta_{\mathcal L}\left(F(s),F(t)\right)$,
\item $(\alpha,\beta)\in E^2_{\mathcal K}$ if and only if $(\alpha,\beta)\in E^2_{\mathcal L}$.
\end{itemize}
Now the pair $(F,\id)$ gives the cartesian lifting of $F$, where $\id$ returns the identity map for each parallel pair of terms.
\end{proof}

Next, we define the structural equations that will hold in nearly all the models we will encounter, with the exception the ``conditional box'' notation defined in Subsection~\ref{subsec:prob-channels}, where we have to drop the functoriality requirement.

\begin{definition}[Structural 0-equations]\label{def:str-0eqns}
Given a layered signature, the {\em structural 0-equations} $S^0$ are given by:
\begin{align*}
(AB)C : \omega &= A(BC) : \omega, &
\varepsilon A : \omega &= A : \omega = A\varepsilon : \omega, &
f(AB):\tau &= f(A)f(B):\tau, &
f(\varepsilon) : \tau &= \varepsilon : \tau.
\end{align*}
\end{definition}

\begin{definition}[Structural 1-equations]\label{def:str-1eqns}
Given a layered signature with the structural 0-equations (Definition~\ref{def:str-0eqns}), the {\em structural 1-equations} $S^1$ are given by the following:
\begin{itemize}
\item the structural identities for monoidal theories (Definition~\ref{def:str-id}), where sequential composition is given by the rule~\ref{term:comp}, the identities are given by the terms~\ref{term:int-id}, parallel composition is given by the rule~\ref{term:ext-tensor}, and the monoidal unit is given by~\ref{term:ext-unit},
\item the structural identities for the monoidal product (the bottom half of Definition~\ref{def:str-id}) hold for the internal terms, where parallel composition is given by the rule~\ref{term:int-tensor}, the monoidal unit is given by~\ref{term:int-unit}, and the sequential composition is given by~\ref{term:comp}\footnote{We note that the structural identities for internal terms which only involve sequential composition and identities are already implied by requiring them for all terms.},
\item the following identity for the identity terms with sort $(AB:\omega\mid AB:\omega)$ generated by the rules~\ref{term:int-id} and~\ref{term:int-tensor}: $\scalebox{.8}{\input{journal-figures/int-id-equals-int-tensor.tikz}} : (AB:\omega\mid AB:\omega)$, where the term on the left-hand side is generated by the rule~\ref{term:int-id}, while the term on the right-hand side is obtained by applying~\ref{term:int-tensor} to the identity terms with sorts $(A:\omega\mid A:\omega)$ and $(B:\omega\mid B:\omega)$, which are, in turn, obtained from~\ref{term:int-id},
\item the identities in Figure~\ref{fig:structural-twocells-functors-int} involving the~\ref{term:int-box} terms, making each $f$ a strict monoidal functor.
\end{itemize}
\end{definition}

\begin{figure}
  \centering\small\noindent
  \scalebox{.8}{\input{journal-figures/structural-twocells-functors-int.tikz}}
  \caption{1-equations defining monoidal functors inside the layers.\label{fig:structural-twocells-functors-int}}
\end{figure}

\begin{definition}[Structural 2-equations]\label{def:str-2eqns}
Given a layered signature and the structural 1-equations of Definition~\ref{def:str-1eqns}, the {\em structural 2-equations} $S^2$ are given in Figure~\ref{fig:str-2eqns} by the expressions on the left, as long as both sides are defined (on the right, we draw the situation in which each equation applies).
\end{definition}

\begin{figure}
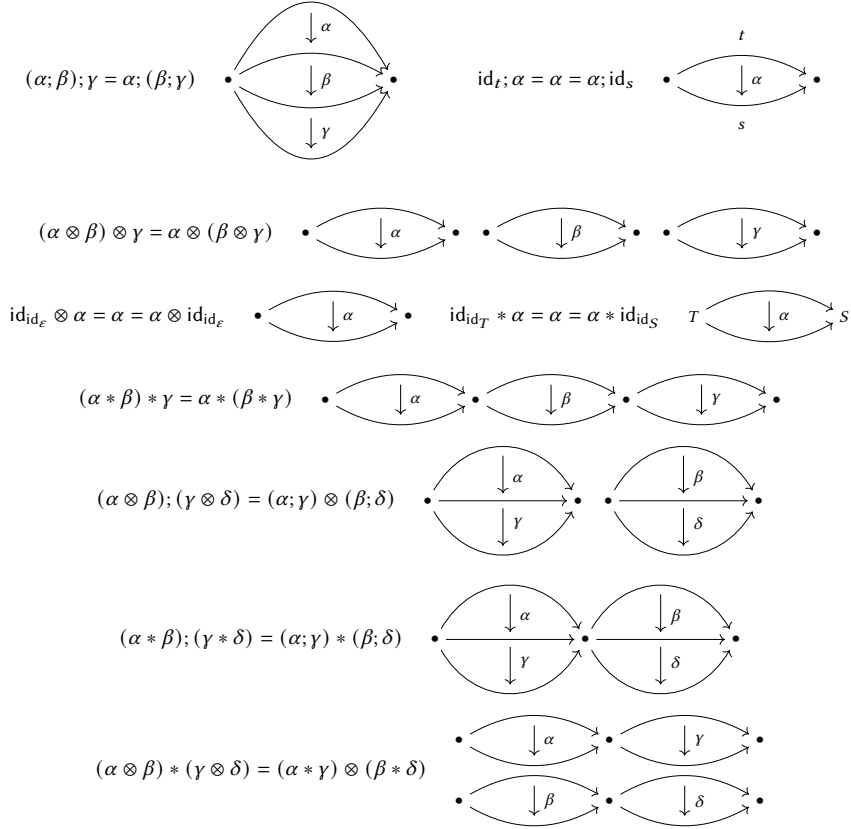

\centering\small
\renewcommand{\arraystretch}{3}
\begin{tabular}{c c}
$(\alpha;\beta);\gamma = \alpha;(\beta;\gamma)$\quad\scalebox{.8}{\input{journal-figures/2cells-compose-assoc.tikz}} & $\id_t;\alpha = \alpha = \alpha;\id_s$\quad\scalebox{.8}{\input{journal-figures/2cells-compose-id.tikz}} \\
\multicolumn{2}{c}{$(\alpha\otimes\beta)\otimes\gamma = \alpha\otimes(\beta\otimes\gamma)$\quad\scalebox{.8}{\input{journal-figures/2cells-tensor-assoc.tikz}}} \\
$\id_{\id_{\varepsilon}}\otimes\alpha = \alpha = \alpha\otimes\id_{\id_{\varepsilon}}$\quad\scalebox{.8}{\input{journal-figures/2cells-tensor-id.tikz}} & $\id_{\id_T} *\alpha = \alpha = \alpha *\id_{\id_S}$\quad\scalebox{.8}{\input{journal-figures/2cells-horizontal-id.tikz}} \\
\multicolumn{2}{c}{$(\alpha *\beta) *\gamma = \alpha *(\beta *\gamma)$\quad\scalebox{.8}{\input{journal-figures/2cells-horizontal-assoc.tikz}}} \\
\multicolumn{2}{c}{$(\alpha\otimes\beta);(\gamma\otimes\delta) = (\alpha;\gamma)\otimes (\beta;\delta)$\quad\scalebox{.8}{\input{journal-figures/2cells-tensor-compose.tikz}}} \\
\multicolumn{2}{c}{$(\alpha *\beta);(\gamma *\delta) = (\alpha;\gamma) * (\beta;\delta)$\quad\scalebox{.8}{\input{journal-figures/2cells-horizontal-compose.tikz}}} \\
\multicolumn{2}{c}{$(\alpha\otimes\beta) * (\gamma\otimes\delta) = (\alpha *\gamma)\otimes (\beta *\delta)$\quad\scalebox{.8}{\input{journal-figures/2cells-tensor-horizontal.tikz}}}
\end{tabular}
\renewcommand{\arraystretch}{1}
\caption{The structural 2-equations. Here $\id_T$ and $\id_S$ are the identity terms on the types $T$ and $S$, while $\id_{\varepsilon}$ is the identity term on $\varepsilon : \varepsilon$ obtained by the rule~\ref{term:ext-unit}.\label{fig:str-2eqns}}
\end{figure}

We say that a layered theory {\em has structural equations} when its sets of equations contain all the structural equations of Definitions~\ref{def:str-0eqns}, \ref{def:str-1eqns} and~\ref{def:str-2eqns}.

We finish this subsection by covering some examples of layered theories.

\noindent\begin{minipage}{0.4\textwidth}
\begin{definition}[Symmetry terms]\label{def:symmetry-terms}
Given a layered signature, the {\em symmetry terms} are generated by the rule on the right:
\end{definition}\end{minipage}
\begin{minipage}{0.6\textwidth}
\begin{center}
  \begin{prooftree}
    \AxiomC{$A:\omega$}
    \AxiomC{$B:\tau$}
    \RightLabel{\customlabel{term:swap}{(swap)}.\;\;}
    \BinaryInfC{$\scalebox{.8}{\input{journal-figures/symdiag-sheet1.tikz}} : (A:\omega, B:\tau \mid B:\tau, A:\omega)$}
  \end{prooftree}
\end{center}
\end{minipage}

\begin{definition}[Externally symmetric layered theory]\label{def:symmetry-1equations}
A layered theory is {\em externally symmetric} if it has the symmetry terms, and the structural 0-equations (Definition~\ref{def:str-0eqns}) and 1-equations (Definition~\ref{def:str-1eqns}) hold, and the 1-equations contain the equations for symmetric monoidal theories in Definition~\ref{def:symm-mon-thy} for the symmetry terms.
\end{definition}

\begin{example}
Consider a layered theory $(\bullet,S^0,E^1,\eset,\eset)$ with no generating 2-cells or 2-equations, where $\bullet$ is a layered signature from Example~\ref{ex:layered-sgn} (one layer and no generators). The internal terms contain the terms of a monoidal signature (Definition~\ref{def:terms-monoidal}). If further $S^1\sse E^1$, then there is a one-to-one correspondence between the equivalence classes of internal terms and the equivalence classes of monoidal terms.
\end{example}

\begin{example}
Consider a layered theory $(\omega\rightarrow\tau,S^0,S^1,\eset,\eset)$ with no generating 2-cells or 2-equations, where $\omega\rightarrow\tau$ is a layered signature from Example~\ref{ex:layered-sgn} (two layers with one generator between them). Then the internal terms with type $\omega$ give the free monoidal category generated by the signature indexed by $\omega$, while the internal terms with type $\tau$ give the free monoidal category generated by the disjoint union of the signature indexed by $\tau$ and the generators obtained by the rule~\ref{term:int-box}. The 1-equations of Figure~\ref{fig:structural-twocells-functors-int} then make $f:\omega\rightarrow\tau$ a strict monoidal functor between the two free models. The functor is free in the sense that all the terms obtained by applying $f$ to the internal terms with type $\omega$ (the \ref{term:int-box} rule) remain uninterpreted, and are simply added as generators to the monoidal category defined by the internal terms with type $\tau$.
\end{example}

\subsection{Opfibrational, fibrational and deflational theories}\label{subsec:opfib-defl-theories}

We finish our abstract discussion of layered theories by defining three classes of layered theories that we will focus on: opfibrational, fibrational and deflational theories. In each case, we specify the recursive sorting procedure as well as the corresponding structural equations. In the case of deflational theories, we also define the structural 2-cells. The three classes of theories give the free models of {\em opfibrations with indexed monoids}, {\em fibrations with indexed comonoids} and {\em monoidal deflations}, respectively -- this is where the names {\em opfibrational} and {\em fibrational} come from. These notions of model are defined and studied in detail in the sequel~\cite{lmt-part2}. Suffice it to say that the models correspond to monoids in the category of opfibrations (with a fixed base), comonoids in the category of fibrations (with a fixed base) and collages of certain diagrams of profunctors. All three kinds of theories ultimately contain the same information -- that of a covariant functor into the category of strict monoidal categories and strict monoidal functors, i.e.~an (op)indexed monoidal category~\cite{hofstra-demarchi2006,shulman2008,ponto-shulman12,monoidal-gro} -- packaged in different ways. One may think of the resulting diagrammatic terms as morphisms of the indexing category ``filled'' or ``inflated'' with the morphisms of the monoidal categories in the image; ``flattening'' or ``deflating'' the diagrams recovers the indexing category -- hence the name {\em deflational} for our main theories of interest.

Opfibrational theories are defined with respect to the recursive procedure specified in Definition~\ref{def:opfib-layered-theory} and have no generating 2-cells.

\begin{figure}
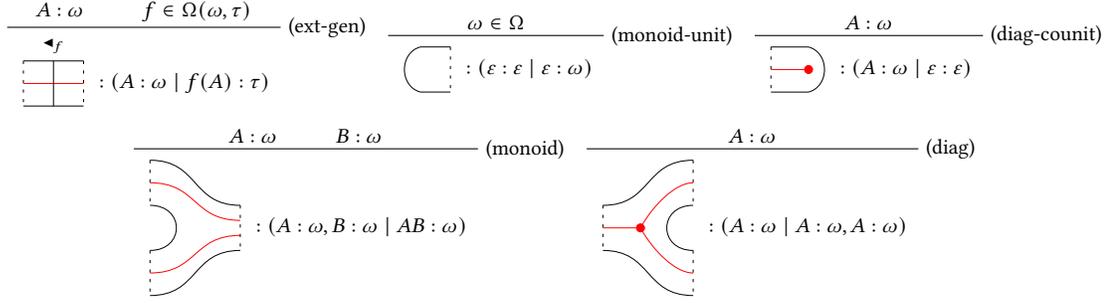

  \centering\small\noindent
  \begin{prooftree}
    \AxiomC{$A:\omega$}
    \AxiomC{$f\in\Omega(\omega,\tau)$}
    \RightLabel{\customlabel{term:ext-gen}{(ext-gen)}\;\;}
    \BinaryInfC{$\scalebox{.8}{\input{journal-figures/refine-sheet.tikz}} : (A:\omega \mid f(A):\tau)$}
    \DisplayProof
    \AxiomC{$\omega\in\Omega$}
    \RightLabel{\customlabel{term:monoid-unit}{(monoid-unit)}\;\;}
    \UnaryInfC{$\scalebox{.8}{\input{journal-figures/cup.tikz}} : (\varepsilon : \varepsilon\mid\varepsilon : \omega)$}
    \DisplayProof
    \AxiomC{$A:\omega$}
    \RightLabel{\customlabel{term:diag-counit}{(diag-counit)}\;\;}
    \UnaryInfC{$\scalebox{.8}{\input{journal-figures/a-cap.tikz}} : (A:\omega \mid \varepsilon : \varepsilon)$}
  \end{prooftree}
  \begin{prooftree}
    \AxiomC{$A:\omega$}
    \AxiomC{$B:\omega$}
    \RightLabel{\customlabel{term:monoid}{(monoid)}\;\;}
    \BinaryInfC{$\scalebox{.8}{\input{journal-figures/pants.tikz}} : (A:\omega, B:\omega \mid AB:\omega)$}
    \DisplayProof
    \AxiomC{$A:\omega$}
    \RightLabel{\customlabel{term:diag}{(diag)}\;\;}
    \UnaryInfC{$\scalebox{.8}{\input{journal-figures/copants-copy.tikz}} : (A:\omega \mid A:\omega, A:\omega)$}
  \end{prooftree}
  \caption{Rules for generating the opfibrational terms.\label{fig:opfibrational-terms}}
\end{figure}

\begin{definition}[Opfibrational layered theory]\label{def:opfib-layered-theory}
We say that a layered theory is {\em opfibrational} if it has no generating 2-cells, and its recursive sorting procedure consists of the rules in Figure~\ref{fig:opfibrational-terms} and the symmetry terms of Definition~\ref{def:symmetry-terms} (in addition to the basic terms in Figure~\ref{fig:layered-terms}).
\end{definition}

The intuition behind the opfibrational terms is that we take an ``external'' view of the monoidal categories and monoidal functors between them. In more detail:
\begin{itemize}
\item the terms~\ref{term:monoid} and~\ref{term:monoid-unit} capture the monoidal product and unit in the layer $\omega$: e.g.~sliding the internal terms through the term~\ref{term:monoid} defines the monoidal product on morphisms (see Figure~\ref{fig:structural-twocells-monoidal}),
\item the terms~\ref{term:ext-gen} capture the monoidal functors $f:\omega\rightarrow\tau$: any internal term appearing on the left-hand side of the boundary is in the domain $\omega$, and can be pushed through the boundary into the codomain $\tau$ (see Figure~\ref{fig:structural-twocells-functors-ext}),
\item the terms~\ref{term:diag-counit} and~\ref{term:diag} capture the cartesian monoidal structure of the category of monoidal categories; note that they do not correspond to any deleting or copying {\em inside} a layer, rather, it might be instructive to think of them as {\em branching} or {\em nondeterminism}: \ref{term:diag} represents branching into two potential histories of a system, while \ref{term:diag-counit} corresponds to discarding one of the branches or histories.
\end{itemize}

We denote the resulting category of opfibrational layered theories by $\OpFTh$. In this case the morphisms are simply equation preserving morphisms of signatures rather than pairs of morphisms, as there are no generating 2-cells.

\begin{definition}[Structural opfibrational equations]\label{def:str-opfib-eqns}
By the {\em structural opfibrational equations} we mean the structural equations, the external symmetry equations, as well as the following sets of 1-equations, denoted by $S_{\mathsf{opf}}^1$: (1) the equations of uniform comonoids (Definition~\ref{def:thy-univ-comonoids}) for the terms~\ref{term:diag} and~\ref{term:diag-counit}, (2) the defining identities of monoidal categories in Figure~\ref{fig:structural-twocells-monoidal}, and (3) the defining identities of monoidal functors in Figure~\ref{fig:structural-twocells-functors-ext}.
\end{definition}

\begin{figure}
  \centering\small\noindent
  \scalebox{.8}{\input{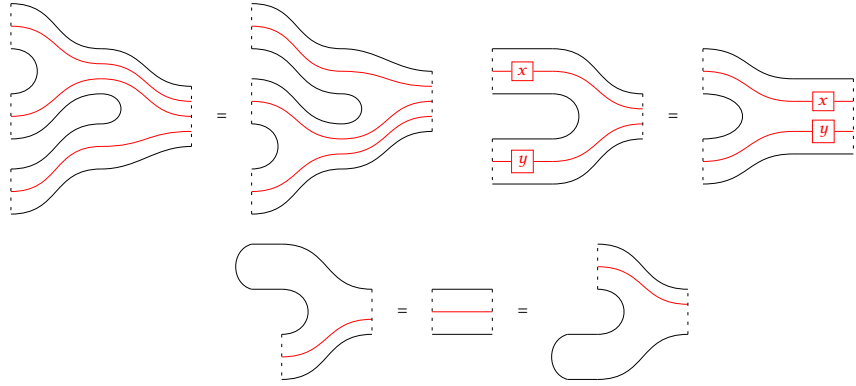}}
  \caption{1-equations defining monoidal categories.\label{fig:structural-twocells-monoidal}}
\end{figure}

\begin{figure}
  \centering\small\noindent
  \scalebox{.8}{\input{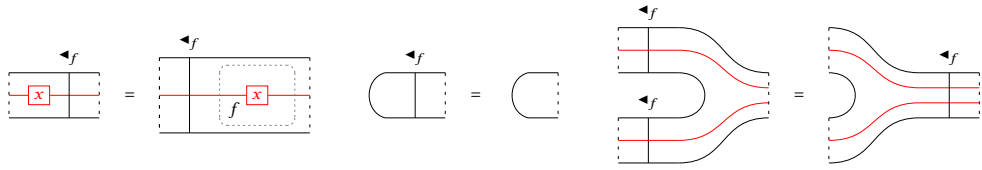}}
  \caption{1-equations defining monoidal functors.\label{fig:structural-twocells-functors-ext}}
\end{figure}

Dualising the construction of opfibrational theories, we obtain {\em fibrational theories}, where the composition of internal and external terms go in opposite directions. Formally, we obtain opfibrational theories by replacing the opfibrational terms (Figure~\ref{fig:opfibrational-terms}) by the {\em fibrational terms} (Figure~\ref{fig:fibrational-terms}) in Definition~\ref{def:opfib-layered-theory}. Note that each fibrational term is a horizontal reflection of an opfibrational term, and that in the rule~\ref{term:ext-gen-op} we use an unfilled triangle for emphasis -- the lack of filling plays no formal role.

\begin{figure}
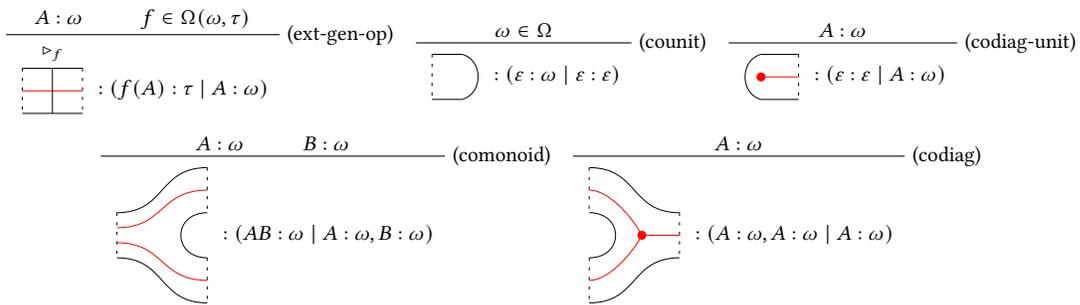

  \centering\small\noindent
  \begin{prooftree}
    \AxiomC{$A:\omega$}
    \AxiomC{$f\in\Omega(\omega,\tau)$}
    \RightLabel{\customlabel{term:ext-gen-op}{(ext-gen-op)}\;\;}
    \BinaryInfC{$\scalebox{.8}{\input{journal-figures/coarsen-sheet.tikz}} : (f(A):\tau \mid A:\omega)$}
    \DisplayProof
    \AxiomC{$\omega\in\Omega$}
    \RightLabel{\customlabel{term:counit}{(counit)}\;\;}
    \UnaryInfC{$\scalebox{.8}{\input{journal-figures/cap.tikz}} : (\varepsilon : \omega \mid \varepsilon : \varepsilon)$}
    \DisplayProof
    \AxiomC{$A:\omega$}
    \RightLabel{\customlabel{term:codiag-unit}{(codiag-unit)}\;\;}
    \UnaryInfC{$\scalebox{.8}{\input{journal-figures/a-cup.tikz}} : (\varepsilon : \varepsilon \mid A:\omega)$}
  \end{prooftree}
  \begin{prooftree}
    \AxiomC{$A:\omega$}
    \AxiomC{$B:\omega$}
    \RightLabel{\customlabel{term:comonoid}{(comonoid)}\;\;}
    \BinaryInfC{$\scalebox{.8}{\input{journal-figures/copants.tikz}} : (AB:\omega \mid A:\omega, B:\omega)$}
    \DisplayProof
    \AxiomC{$A:\omega$}
    \RightLabel{\customlabel{term:codiag}{(codiag)}\;\;}
    \UnaryInfC{$\scalebox{.8}{\input{journal-figures/pants-copy.tikz}} : (A:\omega, A:\omega \mid A:\omega)$}
  \end{prooftree}
  \caption{Rules for generating the fibrational terms.\label{fig:fibrational-terms}}
\end{figure}

The structural fibrational equations are likewise obtained by dualising the structural opfibrational equations (Definition~\ref{def:str-opfib-eqns}). For the sake of completeness, and since they will be used for defining the structural deflational equations, we state the definition explicitly.
\begin{definition}[Structural fibrational equations]\label{def:str-fib-eqns}
By the {\em structural fibrational equations} we mean the structural equations, the external symmetry equations, as well as the following sets of 1-equations, where ``dual'' and ``dualising'' means reflecting the terms horizontally: (1) the equations of uniform monoids (the dual of Definition~\ref{def:thy-univ-comonoids}) for the terms~\ref{term:codiag} and~\ref{term:codiag-unit}, (2) the defining identities of monoidal categories obtained by dualising Figure~\ref{fig:structural-twocells-monoidal}, and (3) the defining identities of monoidal functors obtained by dualising Figure~\ref{fig:structural-twocells-functors-ext}.
\end{definition}
With these modifications, any properties of fibrational theories can be obtained by dualising the corresponding properties of opfibrational theories: all the diagrams are simply written in reverse direction.

We now arrive to our last -- and most important -- class of theories. Deflational theories can be thought of as glueing a pair of opfibrational and fibrational theories together along the internal terms.

\begin{definition}[Deflational layered theory]\label{def:deflational-theory}
A layered theory is {\em deflational} if its recursive sorting procedure consists of the rules for the symmetry terms of Definition~\ref{def:symmetry-terms}, as well as of the rules for opfibrational and fibrational terms in Figures~\ref{fig:opfibrational-terms} and~\ref{fig:fibrational-terms} (in addition to the basic terms in Figure~\ref{fig:layered-terms}).
\end{definition}
We denote the resulting category of deflational layered theories by $\DeflTh$.

In a deflational theory, every opfibrational term $x:(T\mid S)$ generated by a rule in Figure~\ref{fig:opfibrational-terms} has a corresponding fibrational term $\bar x:(S\mid T)$ generated by the corresponding rule in Figure~\ref{fig:fibrational-terms}. We use this to define the {\em structural 2-cells}, expressing $\bar x$ as the right adjoint to $x$.
\begin{definition}[Structural 2-cells]\label{def:str-twocells}
Given a deflational layered theory with layered signature $\mathcal L$, the {\em structural 2-cells} are given by the choice of 2-cells $\eta_{\mathsf{str}}$ defined as follows: for every opfibrational term $x:(T\mid S)$ generated by a rule in Figure~\ref{fig:opfibrational-terms} one has $\eta_{\mathsf{str}}\left(\id_{T},x;\bar x\right) \coloneq\{\eta_x\}$, $\eta_{\mathsf{str}}\left(\bar x;x,\id_{S}\right) \coloneq\{\varepsilon_x\}$, and $\eta_{\mathsf{str}}$ returns the empty set otherwise.
\end{definition}
We display the 10 structural 2-cells explicitly in Figure~\ref{fig:structural-twocells-adjoints}: the left-hand side contains the unit 2-cells ($\eta_x$), while the right-hand side contains the counit 2-cells ($\varepsilon_x$). We consistently omit the labels of the structural 2-cells, as the domain and the codomain determine them uniquely.

\begin{figure}
  \centering\small\noindent
  \scalebox{.75}{\input{journal-figures/structural-twocells-adjoints.tikz}}
  \caption{Structural 2-cells for deflational theories. Left: unit 2-cells. Right: counit 2-cells; the sections (drawn as dashed gray arrows) may be added without imposing restrictions on the functors.\label{fig:structural-twocells-adjoints}}
\end{figure}

\begin{definition}[Structural deflational equations]\label{def:str-defl-eqns}
By the {\em structural deflational equations} we mean: the structural opfibrational equations (Definition~\ref{def:str-opfib-eqns}), the structural fibrational equations (Definition~\ref{def:str-fib-eqns}) and the following 2-equations for each opfibrational term $x:(T\mid S)$ and all terms $y:(T\mid T)$ and $z:(S\mid S)$ that are either internal or a product (obtained by the~\ref{term:ext-tensor}) of internal terms:
\begin{align*}
\left(\eta_x*\id_x\right);\left(\id_x*\varepsilon_x\right) &=\id_x &
\left(\id_{\bar x}*\eta_x\right);\left(\varepsilon_x*\id_{\bar x}\right) &=\id_{\bar x} &
\id_y*\eta_x &= \eta_x*\id_y &
\id_z*\varepsilon_x &= \varepsilon_x*\id_z.
\end{align*}
We denote the structural deflational {1-} and 2-equations by $S_{\mathsf{defl}}^1$ and $S_{\mathsf{defl}}^2$.
\end{definition}
Note that the structural deflational equations contain, in particular, the structural {0-}, {1-} and 2-equations of Definitions~\ref{def:str-0eqns}, \ref{def:str-1eqns} and~\ref{def:str-2eqns}. The additional 2-equations that are required to hold are the usual zigzag equations defining an adjunction, as well as the equations stating the compatibility of the structural 2-cells with the ``sliding through'' 1-equations for internal terms in Figures~\ref{fig:structural-twocells-monoidal} and~\ref{fig:structural-twocells-functors-ext}.

\begin{definition}[Counit sections and equations]
Given a deflational layered theory with structural 2-cells, its {\em counit sections} are given by the choice of 2-cells $\eta_{\varepsilon s}$ defined as $\eta_{\varepsilon s}\left(\id_{S},\bar x;x\right) \coloneq\{\kappa_x\}$ for every opfibrational term $x:(T\mid S)$ generated by a rule in Figure~\ref{fig:opfibrational-terms}. The corresponding {\em counit section 2-equations} are given by $\kappa_x;\varepsilon_x=\id_x$.
\end{definition}
Given a deflational theory $\mathcal T$ with structural equations and 2-cells, let us write $\mathcal T_{\varepsilon s}$ for the theory obtained from $\mathcal T$ by adding the counit sections and equations. The sections are drawn as dashed gray arrows in Figure~\ref{fig:structural-twocells-adjoints}. While $\mathcal T$ and $\mathcal T_{\varepsilon s}$ are, in general, not equivalent as deflational theories, in~\cite{lmt-part2} we show that they have the same (op)indexed monoidal categories as models: the class of functors is not restricted by introducing the sections. Since the interpretation as a diagram of monoidal categories and functors is our main model of interest, in Sections~\ref{sec:functor-boxes} and~\ref{sec:layered-examples} we will, therefore, make the assumption that the counits in Figure~\ref{fig:structural-twocells-adjoints} have sections $\kappa$, as this will allow us to derive more properties within the formalism of deflational theories.

Opfibrational, fibrational and deflational theories are all externally symmetric (Definition~\ref{def:symmetry-1equations}). The monoidal theories within each layer may additionally happen to be symmetric. When the internal and the external symmetries interact in a coherent way, we call the deflational theory {\em symmetric}.
\begin{definition}[Symmetric deflational theory]\label{def:symm-def-thy}
A deflational theory is {\em symmetric} if it contains the 1-equations for a symmetric monoidal theory (Definition~\ref{def:symm-mon-thy}) for all internal terms for all layers, the generating 2-cells between the following terms: \scalebox{.8}{\input{journal-figures/symmetric-layered-theory.tikz}}, the 2-equations making this an isomorphism, as well as the dual 2-cells obtained by horizontally reflecting the terms.
\end{definition}

%% file: tex/tocl-article/functor-boxes.tex
{\em Functor boxes} (or {\em functorial boxes}) extend the graphical syntax for monoidal categories (i.e.~string diagrams) to monoidal functors. Given a monoidal functor $F:\cat C\rightarrow\cat D$, the idea is to draw a morphism $f:A\rightarrow B$ in $\cat C$ inside a box: \scalebox{.8}{\input{journal-figures/functor-box.tikz}}, which turns it into a morphism of type $FA\rightarrow FB$ in $\cat D$. Thus, one may treat such boxes as terms composable with morphisms in $\cat D$. This idea extends to (co)lax monoidal functors. We refer the reader to the tutorial by Melli\` es~\cite{functorial-boxes} for the details.

In this section, we show that functor boxes arise naturally (and formally) within deflational theories. We will further observe that a box decomposes into the functor boundaries. Within our formalism, a functor box is, therefore, a derived notion rather than a primitive one. This flexibility allows for defining the dual notion, which we dub a {\em cobox} (Definition~\ref{def:cobox}), by composing the refinement and coarsening maps in the opposite order. While a box can be thought of as acting on its interior, a cobox can be thought of as acting on its exterior. Another intuition is to think of a cobox as a {\em window} (Definition~\ref{def:window}), allowing one to peek at the semantics (or at a finer level of granularity) of a part of a diagram.

This section provides the first examples of reasoning {\em within} layered theories. This is in contrast with the previous section as well as the sequel~\cite{lmt-part2}, which focus on proving properties {\em of} layered theories. A feature of such results that might appear strange at first sight is that terms are related by 2-cells that go in both directions, but there are no equalities between the 2-cells. The 2-cells are not required to be isomorphisms or related in any other way. In such a situation, we say that the terms are {\em equal up to bidirectional 2-cells} (note, however, that there are no actual equations involved). We think of this as operational equality: a term of certain shape can always be transformed into another one (and vice versa), or as {\em implications} between the terms. The lack of equations is an artefact of the semantics of deflational theories as {\em monoidal deflations}: imposing additional equations would restrict the class of functors that can be described. An interesting future development would be to search for semantics where the bidirectional 2-cells become isomorphisms without restrictions on the functions. Despite this, there are situations in which we are able to work with strict equality rather than with the bidirectional 2-cells: this is often the case when defining a deflational theory syntactically. An important example of this is given by Proposition~\ref{prop:cobox-iff-faithful}, where one direction of the equivalence also holds for strict equalities: in Subsection~\ref{subsec:elec-circuits}, we use this feature to ``quotient'' the terms of the syntax (electrical circuits) by the equations under translation to the semantics (graphical affine algebra).

\subsection{Functor boxes}\label{subsec:functor-boxes}

In any (op)fibrational or deflational theory, boxes are readily provided by the basic terms generated by the rule~\ref{term:int-box}. In deflational theories, boxes decompose into an internal term sandwiched between a coarsening and a refinement map, so that the term can slide out through either side of the box. We make this observation precise in Proposition~\ref{prop:cowindow-box-equality}. We call a decomposed box a {\em cowindow}.

\begin{definition}[Cowindow]\label{def:cowindow}
In a deflational theory, a 1-cell of the form \scalebox{.8}{\input{journal-figures/cowindow.tikz}} is called a {\em cowindow}, where $x$ is any internal term.
\end{definition}

We now make explicit the connection between cowindows and the usual notation for functor boxes.
\begin{proposition}\label{prop:cowindow-box-equality}
In a deflational theory, we have 2-cells \scalebox{.8}{\input{journal-figures/cowindow-box-equality.tikz}} with $\kappa;\varepsilon=\id$.
\end{proposition}
\begin{proof}
The internal term $x$ slides through $\refine_f$ (or $\coarsen_f$) by the first equation in Figure~\ref{fig:structural-twocells-functors-ext}, after which we remove the fibrational-opfibrational generator pair by applying the counit $\varepsilon$ (Figure~\ref{fig:structural-twocells-adjoints}), which has a section $\kappa$.
\end{proof}
Note that in Proposition~\ref{prop:cowindow-box-equality}, the term on the right-hand side is obtained by a rule for basic terms (Figure~\ref{fig:layered-terms}), while the left-hand side is a composition of three terms in a deflational theory. Thus, functor boxes correspond directly to the terms on the right-hand side, while the proposition demonstrates that we can take one step further and decompose a functor box into three constituent parts. This implies that the standard functor equalities (Figure~\ref{fig:structural-twocells-functors-int}) that had to be imposed in fibrational and opfibrational models are, in fact, provable in deflational theories:
\begin{corollary}
In a deflational theory, the internal identities for monoidal functors (Figure~\ref{fig:structural-twocells-functors-int}) are derivable from the external identities for monoidal functors (Figure~\ref{fig:structural-twocells-functors-ext}), up to bidirectional 2-cells.
\end{corollary}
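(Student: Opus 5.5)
The plan is to rewrite each internal functor identity of Figure~\ref{fig:structural-twocells-functors-int} as a statement about cowindows, and then to prove it with the external identities of Figure~\ref{fig:structural-twocells-functors-ext} and the structural 2-cells. By Proposition~\ref{prop:cowindow-box-equality}, every box $f(x)$ produced by rule~\ref{term:int-box} is related in both directions to the cowindow $\coarsen_f;x;\refine_f$: in one direction by $\varepsilon$, in the other by its section $\kappa$. Both 2-cells can be whiskered with identities using $*$ and $\otimes$. So in any equation from Figure~\ref{fig:structural-twocells-functors-int} we may swap every box for a cowindow. It is then enough to relate the two resulting cowindow expressions up to bidirectional 2-cells, and convert back at the end.

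I would treat the identities one at a time.

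\emph{Composition.} The composite $f(x);f(y)$ turns into $\coarsen_f;x;\refine_f;\coarsen_f;y;\refine_f$. The middle $\refine_f;\coarsen_f$ is a pair of the form $x';\bar x'$, with $x'=\refine_f$ on the relevant type, so the unit $\eta$ inserts it and the counit $\varepsilon$ removes it, giving bidirectional 2-cells. Removing it yields $\coarsen_f;x;y;\refine_f$, which is the cowindow of $x;y$. This becomes $f(x;y)$ via Proposition~\ref{prop:cowindow-box-equality}.

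\emph{Identities.} The cowindow $\coarsen_f;\id_A;\refine_f$ equals $\coarsen_f;\refine_f$, which is $\bar x;x$ for $x=\refine_f$. The counit takes it to $\id_{f(A)}$, and the section $\kappa$ gives the reverse direction.

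\emph{Monoidal product.} For $f(x\otimes y)$ versus $f(x)\otimes f(y)$, I slide $x\otimes y$ through the boundaries. The first equation of Figure~\ref{fig:structural-twocells-functors-ext} moves $x\otimes y$ out past $\refine_f$, producing $f(x\otimes y)$ on the $\tau$ side as an internal term. The remaining external monoidal-functor equations, those relating $\refine_f$ to \ref{term:monoid} and \ref{term:monoid-unit}, then split the boundary into a tensor of two boundaries joined by pants. The monoidal-category identities of Figure~\ref{fig:structural-twocells-monoidal} then let $x$ and $y$ pass separately through the pants. Once the box is in cowindow form, this rewrites $\coarsen_f;(x\otimes y);\refine_f$ into $(\coarsen_f;x;\refine_f)\otimes(\coarsen_f;y;\refine_f)$, possibly after inserting and removing unit/counit pairs for the monoid and comonoid terms. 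The strict unit case $f(\id_\varepsilon)$ is handled the same way, using \ref{term:monoid-unit} and \ref{term:counit}.

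The main obstacle is this monoidal product case. Here the box sits inside a single layer, while the external identities are stated in terms of pants and boundaries, so moving between internal tensoring and external pants requires the monoidal-category identities of Figure~\ref{fig:structural-twocells-monoidal}. The direction of each 2-cell also has to be tracked carefully, so that every step really has both a forward and a backward 2-cell, built from $\eta$/$\varepsilon$ and from $\varepsilon$/$\kappa$. Finally, the derived 2-cells must be closed under whiskering, so the local rewrites apply inside arbitrary contexts; this follows from the 2-term rules $*$ and $\otimes$.
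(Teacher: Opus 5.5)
Your strategy (boxes to cowindows, sliding through boundaries, cancelling boundary pairs with $\varepsilon$ and its section $\kappa$) matches the paper's, but two steps fail as written.

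\textbf{Composition.} You cancel the middle pair $\refine_f;\coarsen_f$ with the counit. For $x=\refine_f$, however, the counit is $\varepsilon:\coarsen_f;\refine_f\Rightarrow\id_{f(B):\tau}$. The only structural 2-cell involving $\refine_f;\coarsen_f$ is the unit $\eta:\id_{B:\omega}\Rightarrow\refine_f;\coarsen_f$, which has no section. A 2-cell $\refine_f;\coarsen_f\Rightarrow\id$ is essentially the window condition \eqref{eq:cobox-pres-id}. Proposition~\ref{prop:cobox-iff-faithful} ties that condition to faithfulness of $f$, so it is not available in a general deflational theory. As written, you only obtain $f(x;y)\Rightarrow f(x);f(y)$. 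The fix is to slide before cancelling. By the sliding equation, $\refine_f;f(x);f(y)=x;\refine_f;f(y)=x;y;\refine_f=\refine_f;f(x;y)$. After precomposing with $\coarsen_f$, only a $\coarsen_f;\refine_f$ pair needs inserting or removing, and $\kappa$ and $\varepsilon$ handle that in both directions.

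\textbf{Monoidal product.} You cannot swap boxes for cowindows on the side $f(x)\otimes f(y)$. That tensor is formed by \ref{term:int-tensor}, which only accepts internal terms, and a cowindow is not internal. Moreover, 2-terms are not formed through that rule. Correspondingly, your target $(\coarsen_f;x;\refine_f)\otimes(\coarsen_f;y;\refine_f)$ fails either way:
\begin{itemize}
\item with internal $\otimes$, it is not a term;
\item with external $\otimes$, it has sort $(f(A):\tau,f(C):\tau\mid f(B):\tau,f(D):\tau)$, so it is not parallel to $\coarsen_f;(x\otimes y);\refine_f$, whose sort is $(f(AC):\tau\mid f(BD):\tau)$.
\end{itemize}
The missing idea is to externalise the internal tensor before touching the boxes. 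Write $m$ and $\bar m$ for the \ref{term:monoid} and \ref{term:comonoid} terms.
\begin{enumerate}
\item The counit $\bar m_\tau;m_\tau\Rightarrow\id$ and its section, together with the sliding equation of Figure~\ref{fig:structural-twocells-monoidal}, give $f(x)\otimes f(y)\rightleftarrows\bar m_\tau;(f(x)\otimes f(y));m_\tau$, with an external $\otimes$ on the right.
\item Only now can each box be replaced by a cowindow, legitimately, via $\otimes$ and $*$ of 2-terms.
\item Figure~\ref{fig:structural-twocells-functors-ext} and its fibrational dual move $\refine_f\otimes\refine_f$ past $m$ and $\coarsen_f\otimes\coarsen_f$ past $\bar m$. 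This produces $\coarsen_f;\bar m_\omega;(x\otimes y);m_\omega;\refine_f$.
\item Repeating the counit/section argument in $\omega$, and then applying Proposition~\ref{prop:cowindow-box-equality}, yields $f(x\otimes y)$.
\end{enumerate}
The paper's derivation for this case must pass through the external tensor in this way. Your identity case is fine as it stands.
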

\begin{proof}
We give the derivation of preservation of monoidal products below; other identities follow similarly:
\begin{equation*}
\scalebox{.8}{\input{journal-figures/ext-to-int-monoidal-product.tikz}}.
\end{equation*}
\end{proof}

\begin{corollary}\label{cor:defl-twocells-extend}
In a deflational theory, any 2-cell $\alpha:x\rightarrow y$ between internal terms $x$ and $y$ extends to all the terms obtained from $x$ and $y$ by applying a rule from Figure~\ref{fig:layered-terms}.
\end{corollary}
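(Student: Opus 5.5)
The rules of Figure~\ref{fig:layered-terms} that can be applied to an internal term are \ref{term:int-box}, \ref{term:int-tensor}, \ref{term:comp} and \ref{term:ext-tensor}. For each of them I will build the extended 2-cell by expressing the new term through external terms, using whiskering ($*$) and $\otimes$ on 2-terms, possibly conjugated by the bidirectional 2-cells of Proposition~\ref{prop:cowindow-box-equality}. The rules \ref{term:comp} and \ref{term:ext-tensor} need no work: they are exactly the rules of Definition~\ref{def:2terms}. For $x;z$ and $y;z$ we take $\alpha*\id_z$, and similarly $\id_z*\alpha$ when $z$ comes first. For $x\otimes z$ we take $\alpha\otimes\id_z$.

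For \ref{term:int-box}, let $f\in\mathcal F(\omega,\tau)$ and let $x,y:(A:\omega\mid B:\omega)$. I will define the 2-cell from the box of $x$ to the box of $y$ as a composite of three steps.
\begin{itemize}
\item First, apply the 2-cell of Proposition~\ref{prop:cowindow-box-equality} that turns the box of $x$ into the cowindow $\coarsen_f;x;\refine_f$.
\item Second, apply $\id_{\coarsen_f}*\alpha*\id_{\refine_f}$, which turns this cowindow into $\coarsen_f;y;\refine_f$.
\item Third, apply the opposite 2-cell of Proposition~\ref{prop:cowindow-box-equality} to collapse the cowindow back into the box of $y$.
\end{itemize}
Here the generic outer term $\coarsen_f;x;\refine_f$ is written with the type annotations of Definition~\ref{def:cowindow}. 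Since that proposition supplies 2-cells in both directions (built from $\varepsilon$ and its section $\kappa$, together with the sliding equations of Figure~\ref{fig:structural-twocells-functors-ext}), the composite is a well-typed 2-term. Iterating this handles nested boxes.

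For \ref{term:int-tensor}, let $z:(C:\omega\mid D:\omega)$ be internal. The internal tensor of $x$ and $z$ equals, by the sliding equations of Figure~\ref{fig:structural-twocells-monoidal}, up to the counit $\varepsilon$ and its section $\kappa$, the composite of the comonoid term (the fibrational splitting of $AC:\omega$), then $x\otimes z$ (external tensor), then the monoid term. This is the same argument as the corollary on internal monoidal functor identities. Whiskering $\alpha\otimes\id_z$ by these structural terms, and conjugating with the 2-cells just described, gives the required 2-cell. The case where $\alpha$ sits in the right factor is symmetric.

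Finally I will close under the recursive rules by induction on the derivation of the term. Each rule application is handled as above, and an arbitrary term built by repeated rule applications gets the composite of the step-by-step 2-cells.

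The main obstacle is a typing subtlety: the 1-equations hold only up to $\Oeq$, and 2-term sorts are taken up to $\Oeq$ via $P^2_{\mathcal L}$. So I must check that the sliding steps are genuine 1-equations and not merely 2-cells, so that the whiskered terms have the right sorts. Where a step is only a 2-cell, namely removing $\coarsen_f;\refine_f$ or its monoid/comonoid analogue, I use the pair $\kappa,\varepsilon$ explicitly. As in Proposition~\ref{prop:cowindow-box-equality}, this means the extension exists only in the \emph{up to bidirectional 2-cells} sense, not as a canonical strict extension.
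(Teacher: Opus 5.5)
Your proposal is correct and follows the paper's proof. The rules \ref{term:comp} and \ref{term:ext-tensor} are handled directly by the 2-term constructors. The box case goes through the cowindow: $\kappa$, then the sliding equation, then whiskering $\alpha$, then the sliding equation again, then $\varepsilon$. The \ref{term:int-tensor} case is done the same way using the counit of the monoid/comonoid pair and its section. One wording point: the composite you build is an honest 2-term from the box of $x$ to the box of $y$, so you need not qualify it as existing only ``up to bidirectional 2-cells''.
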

\begin{proof}
For the rules~\ref{term:comp} and~\ref{term:ext-tensor}, this is already the case for all terms by the construction of 2-terms (Definition~\ref{def:2terms}).

For the rule~\ref{term:int-box}, a 2-cell between internal terms $\alpha:x\rightarrow y$ with sort $(A:\omega\mid B:\omega)$ extends to
\begin{equation*}
\scalebox{.8}{\input{journal-figures/twocells-extend-internal-terms.tikz}}
\end{equation*}
for any $f\in\mathcal F(\omega,\tau)$. The case of the rule~\ref{term:int-tensor} is similar, using the counit and its section.
\end{proof}

\subsection{Functor coboxes}\label{subsec:functor-coboxes}

In the definition of a cowindow (Definition~\ref{def:cowindow}), there was no inherent reason to start with the backwards boundary $\coarsen$ and end with the forward boundary $\refine$. The reason we chose this order of composition is to obtain the usual functor box: the internal term $x$ is in the domain of $f$, while the whole term is in the codomain. Reversing the order of composition leads to the dual notion, a {\em cobox}, which, to the best knowledge of the authors, has hitherto not been studied abstractly, and has only appeared as a notational shorthand in string diagrammatic electrical circuit theory~\cite{electrical-circuits,boisseau-thesis}, which we make explicit as one of our case studies in Subsection~\ref{subsec:elec-circuits}.

Unlike a box, a cobox cannot be represented as an internal term, that is, there is no analogue of Proposition~\ref{prop:cowindow-box-equality} for coboxes. Nonetheless, in many respects a cobox behaves as if it was an internal term, as we demonstrate in Propositions~\ref{prop:cobox-slides} and~\ref{prop:cobox-composition}. To do that, we need to separate the notion of a window (Definition~\ref{def:window}) from that of a cobox (Definition~\ref{def:cobox}), which allows some wires to bypass the window.

\begin{definition}[Window]\label{def:window}
In a deflational theory, a 1-cell of the form \scalebox{.8}{\input{journal-figures/window.tikz}} is called a {\em window}, where $x$ is any internal term.
\end{definition}

At the level of syntax, coboxes behave, in some ways, like a dual of the box terms obtained by rule~\ref{term:int-box}. We then impose a 1-equation expressing the cobox as an existing term.
\begin{definition}[Cobox]\label{def:cobox}
Given a deflational theory, a {\em cobox} is a term generated by the rule
\begin{center}
  \begin{prooftree}
    \AxiomC{$f\in\mathcal F(\omega,\tau)$}
    \AxiomC{$A,B,C,D:\omega$}
    \AxiomC{$\scalebox{.8}{\input{journal-figures/internalxdiag.tikz}} : (f(A):\tau\mid f(B):\tau)$}
    \RightLabel{\customlabel{term:cobox}{(cobox)}\;\;}
    \TrinaryInfC{$\scalebox{.8}{\input{journal-figures/f-cobox.tikz}} : (CAD:\omega \mid CBD:\omega)$}
  \end{prooftree}
\end{center}
subject to the following 1-equations: \scalebox{.8}{\input{journal-figures/cobox-def.tikz}}.
\end{definition}
Note that the equations say that the terms introduced by the rule~\ref{term:cobox} can always be eliminated. Thus, adding coboxes does not increase the expressive power of a deflational theory. Adding such terms is rather of conceptual value, as it allows us to state equations that would otherwise be difficult to come about or express. The situation should be compared to that of a box, which has two equivalent representations. Not unimportantly, the cobox also somewhat reduces the size of the diagrams, making the notation more compact. From now on, we shall assume that all deflational theories come with coboxes and the corresponding 1-equations.

We reiterate that a cobox is {\em not} an internal term. In particular, this means that we cannot apply the rules \ref{term:int-box} and~\ref{term:int-tensor} to the cobox. Some properties of the internal terms and boxes do carry over, as the propositions below demonstrate.
\begin{proposition}\label{prop:cobox-slides}
In a symmetric deflational theory (Definition~\ref{def:symm-def-thy}), we have the following bidirectional 2-cells:
\begin{equation*}
\scalebox{.8}{\input{journal-figures/cobox-naturality.tikz}}.
\end{equation*}
\end{proposition}
\begin{proof}
We give the proof for the left-hand side in Figure~\ref{fig:cobox-naturality-proof}; the argument for the right-hand side is symmetric.
\end{proof}

\begin{figure}
  \centering
  \scalebox{.8}{\input{journal-figures/cobox-naturality-proof.tikz}}
  \caption{Proof of Proposition~\ref{prop:cobox-slides}.\label{fig:cobox-naturality-proof}}
\end{figure}

Likewise, despite the rules \ref{term:int-box} and~\ref{term:int-tensor} not being available, we may still ``tensor'' the cobox with internal terms, as well as ``apply the box to the cobox'' via the following identifications:
\begin{equation*}
\scalebox{.8}{\input{journal-figures/cobox-products-box.tikz}}.
\end{equation*}
With Proposition~\ref{prop:cobox-slides} and the above identifications, the cobox may, in most cases, be treated as an internal term. One should, however, be aware that in reality it is not: this becomes apparent from the equations that hold (or fail to hold) for the cobox. One should also beware that while some identities that hold for boxes also hold for coboxes, in general, the box identities will not hold. We give some examples of this below.
\begin{proposition}\label{prop:cobox-derivable-twocells}
The following 2-cells are derivable in any deflational theory (there is, in general, no reverse 2-cell):
\begin{equation*}
\scalebox{.8}{\input{journal-figures/cobox-derivable-twocells.tikz}}.
\end{equation*}
\end{proposition}
\begin{proof}
We construct the third 2-cell in Figure~\ref{fig:cobox-derivable-twocells-proof} -- the other two cases are similar. We note that in each case the non-reversibility comes from applying units.
\end{proof}

\begin{figure}
  \centering
  \scalebox{.8}{\input{journal-figures/cobox-derivable-twocells-proof.tikz}}
  \caption{Proof of the third case of Proposition~\ref{prop:cobox-derivable-twocells}.\label{fig:cobox-derivable-twocells-proof}}
\end{figure}

\noindent\begin{minipage}{0.5\textwidth}
\begin{proposition}\label{prop:cobox-composition}
The bidirectional 2-cells on the right are derivable in any deflational theory:
\end{proposition}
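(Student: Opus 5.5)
The plan is to reduce everything to windows and then eliminate adjacent boundaries using the counit $\varepsilon$ and its section $\kappa$. I read the displayed cells as a functoriality statement for coboxes: composing two coboxes over the same $f$ should be interchangeable with the single cobox of the composite internal term. If the figure also includes a parallel (tensor) version, the same method applies.

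First, use the 1-equations of Definition~\ref{def:cobox} to rewrite each cobox as an honest term. Doing this, with the side wires $C$ and $D$ handled as the definition prescribes, should give a window (Definition~\ref{def:window}) flanked by the appropriate comonoid/monoid terms. This leaves only opfibrational, fibrational and internal terms, all of which the 2-cell calculus of Definition~\ref{def:str-defl-eqns} can manipulate. The window built around $x : (f(A):\tau\mid f(B):\tau)$ is the refinement $\refine_f$ at $A$, then $x$, then the coarsening $\coarsen_f$ at $B$.

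Next, place two such windows in sequence, the first around $x$ and the second around $y$. In the middle, the coarsening at $B$ is immediately followed by the refinement at $B$. This is exactly a composite $\bar r;r$, where $r$ is the~\ref{term:ext-gen} term at $B$. The counit $\varepsilon_r : \bar r;r\Rightarrow\id_{f(B):\tau}$ deletes this pair, and the counit section $\kappa_r$ reinserts it. Whiskering both with the rest of the diagram via $*$ and $\otimes$ (Definition~\ref{def:2terms}) gives 2-cells in both directions between the composite of windows and the single window around $x;y$. Finally, fold back into cobox notation using the 1-equations of Definition~\ref{def:cobox}.

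If the bypass wires $C$ and $D$ appear, or a tensor variant is claimed, I would proceed as follows. First use the monoidal-category identities (Figure~\ref{fig:structural-twocells-monoidal} and its dual) together with the uniformity of the (co)diagonals to bring the middle comonoid/monoid pair into the form $\bar m;m$. That pair is again removed by $\varepsilon$ and restored by $\kappa$, exactly as for the $\refine/\coarsen$ pair. The internal terms on the bypass wires are slid across these boundaries using the functor identities of Figure~\ref{fig:structural-twocells-functors-ext}.

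I expect the main obstacle to be this bookkeeping for the bypass wires. Each rearrangement must be an honest 1-equality (so it can be reused in both directions), and only counits, never units, may be used. As Proposition~\ref{prop:cobox-derivable-twocells} shows, using a unit destroys reversibility. So at each step I would check that every structural 2-cell used is an $\varepsilon$, whose reverse is the section $\kappa$.
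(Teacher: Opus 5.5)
For the case the proposition actually asserts, coboxes whose bypassing contextual wires are the monoidal unit, your argument is the paper's. In that case each cobox is just a window, and in $\refine_f;x;\coarsen_f;\refine_f;y;\coarsen_f$ the middle pair is $\bar r;r$ for the \ref{term:ext-gen} term $r$. Whiskering $\varepsilon_r$ and its section $\kappa_r$ then gives both directions.

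You leave one step implicit: passing from the cobox with $C=D=\varepsilon$ to the bare window (the paper's ``reduces to a single non-branching layer''). Any $\varepsilon:\omega$ sheets that the unfolding of Definition~\ref{def:cobox} splits off can be removed reversibly. Use $\varepsilon$ and $\kappa$ for the composite of \ref{term:counit} followed by \ref{term:monoid-unit}, together with (co)unitality. This stays within your counits-only discipline.

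The genuine problem is your paragraph on non-trivial bypass wires. You present it as bookkeeping, but it cannot succeed.

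With $C$ or $D$ non-trivial, each cobox unfolds to a \ref{term:comonoid} split, then $\id_C\otimes w\otimes\id_D$ with $w$ a window, then a \ref{term:monoid} merge. Between the two windows therefore sits the merge of $C,B,D$ into $CBD$ \emph{followed by} the split back. That is $m;\bar m$ with $m$ the \ref{term:monoid} term. It is the codomain of the unit $\eta_m:\id\Rightarrow m;\bar m$, not the domain of any counit.

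No counit-shaped pair is available either:
\begin{itemize}
\item The only pairs $\bar m;m$ in the term are each cobox's own split and merge. These are separated by the window, which is not an internal term and so cannot be slid through them. This is why the paper attributes the failure to the cobox not being internal.
\item There is no bialgebra-type law exchanging \ref{term:monoid} and \ref{term:comonoid}.
\end{itemize}
So all you get is the one-way cell from the cobox of $x;y$ to the composite of the two coboxes, by inserting $\eta_m$. This is the irreversible first case of Proposition~\ref{prop:cobox-derivable-twocells}, and the paper notes right after Proposition~\ref{prop:cobox-composition} that the bidirectional cells do not extend to coboxes with contextual wires.

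In particular, your step ``place two such windows in sequence'' is legitimate only once the flanking terms have disappeared, which requires $C=D=\varepsilon$. That restriction is essential, so the proposition is not a general functoriality statement for coboxes.

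The same caution applies to your remark that a tensor variant follows by the same method. There you would need to replace \ref{term:monoid} in $\tau$ followed by $\coarsen_f$ with $\coarsen_f\otimes\coarsen_f$ followed by \ref{term:monoid} in $\omega$. In general only the opposite (lax) direction is available, and again only by using a unit.
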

\end{minipage}
\begin{minipage}{0.5\textwidth}
\begin{equation*}
\scalebox{.8}{\input{journal-figures/cobox-composition.tikz}}.
\end{equation*}
\end{minipage}
\begin{proof}
Since the cobox whose bypassing contextual wires are monoidal units reduces to a single non-branching layer (cf.~derivations in Figures~\ref{fig:cobox-naturality-proof} and~\ref{fig:cobox-derivable-twocells-proof}), this follows by the counit and its section for the $\coarsen\refine$-pair.
\end{proof}
Note that since the cobox is not an internal term, the 2-cells in Proposition~\ref{prop:cobox-composition} do not extend to the general case of sequential composition of coboxes (with the contextual wires present), as indicated by the first case in Proposition~\ref{prop:cobox-derivable-twocells}.

The following proposition establishes the most important property of coboxes that will be used in the next section: it allows detecting when the functor the cobox is representing is faithful (up to bidirectional 2-cells).
\begin{proposition}\label{prop:cobox-iff-faithful}
In a deflational theory, there are bidirectional 2-cells~\eqref{eq:cobox-pres-id} if and only if for all terms $x$ and $y$ with the sort $(A:\omega\mid B:\omega)$, the existence of 2-cells on the left of~\eqref{eq:box-term-faithfulness} implies the existence of 2-cells on the right of~\eqref{eq:box-term-faithfulness}:

\noindent\begin{minipage}{0.35\textwidth}
\begin{equation}\label{eq:cobox-pres-id}
\scalebox{.8}{\input{journal-figures/cobox-pres-id.tikz}}
\end{equation}
\end{minipage}%
\begin{minipage}{0.65\textwidth}
\begin{equation}\label{eq:box-term-faithfulness}
\scalebox{.8}{\input{journal-figures/box-term-faithfulness.tikz}}.
\end{equation}
\end{minipage}

Moreover, if~\eqref{eq:cobox-pres-id} is an equality, then the implication~\eqref{eq:box-term-faithfulness} holds also when the bidirectional 2-cells on both sides are replaced with equalities (the converse does not hold in general).
\end{proposition}
\begin{proof}
Suppose the 2-cells~\eqref{eq:cobox-pres-id} exist, and let $x$ and $y$ be terms such that $fx\rightleftarrows fy$. We then compute as follows:
\begin{equation*}
\scalebox{.8}{\input{journal-figures/box-term-faithfulness-proof-1.tikz}}.
\end{equation*}
It is then clear that the same argument shows the equational case, upon replacing the bidirectional 2-cells with equalities in~\eqref{eq:cobox-pres-id} and in the antecedent of~\eqref{eq:box-term-faithfulness}.

Conversely, suppose that the implication~\eqref{eq:box-term-faithfulness} (as stated with the bidirectional 2-cells) holds for all terms. We observe that \scalebox{.8}{\input{journal-figures/box-term-faithfulness-proof-2.tikz}}, whence existence of 2-cells~\eqref{eq:cobox-pres-id} follows.
\end{proof}

%% file: tex/tocl-article/layered-examples.tex
Here we give extended examples of layered monoidal theories, boxes and coboxes. The examples with digital circuits (Section~\ref{subsec:digital-circuits}), electrical circuits (Section~\ref{subsec:elec-circuits}), ZX-calculus (Section~\ref{subsec:zx-extraction}), and probabilistic channels (Section~\ref{subsec:prob-channels}) all build on monoidal theories from existing literature. We discuss how several levels of monoidal structure are used within existing work, and show how layered monoidal theories explicate and formalise such levels. In the remaining two examples, the calculus of communicating systems (Section~\ref{subsec:ccs}) and chemical reactions (Section~\ref{subsec:glucose}), the monoidal theories are constructed here for the first time in order to allow for a layered perspective.

The case studies are intended to be self-contained and independent of each other. At the beginning of each case study, we give a minimal introduction required to construct the layered theory at hand, and refer the reader to the literature for more details. It is certainly not necessary to read the case studies in order, nor are the details in any one of them required to understand the other examples.

While obtaining novel results in each of the areas is not our aim here -- rather, we aim to identify layered theories where they already appear implicitly -- it is worth highlighting where the formalism of layered theories contributes to the area in question:
\begin{itemize}
\item for digital circuits (Subsection~\ref{subsec:digital-circuits}), we recover the notation for {\em bundlers} from~\cite{kaye-thesis} not as additional generators but as the bidirectional functor boundaries;
\item for electrical circuits (Subsection~\ref{subsec:elec-circuits}), we likewise recover an existing notion: namely, the {\em impedance box} introduced in~\cite{electrical-circuits,boisseau-thesis}, where it is used as syntactic sugar, whereas within the layered theory it is definable as a combination of a box with a cobox (Definition~\ref{def:impedance-box});
\item for probabilistic channels (Subsection~\ref{subsec:prob-channels}), we clarify the usage of `boxes' for conditionals and normalisation (see e.g.~\cite{jacobs-spr,lorenz-tull-causal,jacobs-szeles-stein25}): they resemble the functor boxes in the sense that the corresponding terms are obtained using the rule~\ref{term:int-box}, however, they are not exactly the functor boxes, as the functoriality equations (Figures~\ref{fig:structural-twocells-functors-int} and~\ref{fig:structural-twocells-functors-ext}) do not hold in general.
\end{itemize}
We emphasise that in each case mentioned above, as well as in the other case studies, the terms of a deflational theory provide expressivity beyond the typical situation of a functor interpreting syntax in a semantics. In the case of digital circuits (Subsection~\ref{subsec:digital-circuits}), the mixed terms of Figure~\ref{fig:ALU} cannot be syntactically represented as a functor applied to term, as internal terms from two distinct layers are mixed within a single term. The impedance box (Subsection~\ref{subsec:elec-circuits}) is a paradigmatic example of how a cobox can be used to obtain {\em partial translations} of a term -- without the need to translate the entire diagram. In similar vein, a cobox can be thought of as a {\em hole} in the coarser layer into which a diagram in the finer layer can be plugged. An extreme example of this is given by the term representing a chemical reaction in Figure~\ref{fig:phosphorylation-layers} (Subsection~\ref{subsec:glucose}): despite the coarser layer only having the identity morphisms, the layered theory contains a (highly non-trivial) term between objects in the coarser layer.

\subsection{Digital circuits}\label{subsec:digital-circuits}
\input{./tex/tocl-article/examples/digital-circuits}

\subsection{Electrical circuits}\label{subsec:elec-circuits}
\input{./tex/tocl-article/examples/electrical-circuits}

\subsection{ZX-calculus and quantum circuit extraction}\label{subsec:zx-extraction}
\input{./tex/tocl-article/examples/zx}

\subsection{Calculus of communicating systems}\label{subsec:ccs}
\input{./tex/tocl-article/examples/ccs}

\subsection{Chemical reactions}\label{subsec:glucose}
\input{./tex/tocl-article/examples/glucose}

\subsection{Probabilistic channels}\label{subsec:prob-channels}
\input{./tex/tocl-article/examples/prob-channels}

%% file: tex/tocl-article/examples/digital-circuits.tex
We give a very simple example of an {\em arithmetic logic unit} ({ALU}) that is able to perform two operations on tuples of bits, the choice between which is controlled by one bit. To this end, we define the layered monoidal theory of {\em simple arithmetic circuits} with one layer for each $n\in\N_{+}$: the interpretation of logic and arithmetic gates within a layer is that they operate on $n$-bit wires. One benefit of our approach is that the monoidal signatures within each layer are nearly identical (a total of eight generators with arities and coarities in $\{0,1,2\}$), resulting in a rather compact definition. While the example {ALU} is very simple, the theory we define is expressive enough to represent an {ALU} for any computing circuit with arbitrary bitwise logic and arithmetic operations, such as a central processing unit ({CPU}).

In Kaye~\cite[Example~3.20]{kaye-thesis}, the simple arithmetic circuits are introduced as the {\em generalised circuit signature for simple arithmetic circuits}, which is defined as an ordinary monoidal signature with an infinite number of colours (one for each number of bits $n\in\N_{+}$), and an infinite number of generators. The generators represent the same (finite) operations on bits as we define below in~\eqref{eq:sac-generators}: an infinite number of generators is needed as their arities and coarities keep track of the number of incoming and outgoing wires, as well as the number of bits within each wire. In contrast, the layered theory presented here has a finite number of generators (eight) and colours (just a single one) in each layer, but there are now infinitely many layers (again, one for each number of bits).

Consider the following layered signature: $(\N_{+}, \mathcal F, \Sigma_n)$, where for each $n\in\N_{+}$, the set $\mathcal F(n,1)$ has exactly one element, and $\mathcal F$ returns the empty set otherwise. For $n\neq 1$, the monoidal signature $\Sigma_n$ is defined to have a single colour $n$, and the following monoidal generators:
\begin{equation}\label{eq:sac-generators}
\scalebox{.6}{\input{journal-figures/wireunit.tikz}}, \scalebox{.6}{\input{journal-figures/cofork.tikz}}, \scalebox{.6}{\input{journal-figures/wirecounit.tikz}}, \scalebox{.6}{\input{journal-figures/fork.tikz}}, \scalebox{.6}{\input{journal-figures/and.tikz}}, \scalebox{.6}{\input{journal-figures/or.tikz}}, \scalebox{.6}{\input{journal-figures/plus.tikz}}, \scalebox{.6}{\input{journal-figures/not.tikz}},
\end{equation}
while for $n=1$, the monoidal signature $\Sigma_1$ is as above, except that we replace the adder (generator labelled with `+') with \scalebox{.6}{\input{journal-figures/plus-one.tikz}}. Note that the label $n$ is added for emphasis: there is just one colour within each layer, and hence the generators are the same for all layers, except for when $n=1$. We refer to this layered signature as {\em simple arithmetic circuits}. The intended interpretation is that the wires in layer $n$ carry an $n$-bit signal, while the above generators modify the signal as follows: introduce a wire with no signal, combine the signal from two wires into a single wire, delete a wire, copy the signal, bitwise {AND}, bitwise {OR}, binary sum, bitwise {NOT}.

Next, we define the deflational theory for the simple arithmetic circuits. For the bitwise operators {AND}, {OR} and {NOT}, the usual expected equalities hold in layer $1$. Similarly, in layer $1$, the (co)commutative (co)monoid equations (Definitions~\ref{def:thy-mon} and~\ref{def:thy-comon}) hold for the first four generators. The remaining equations of the layered theory are introduced in Figure~\ref{fig:dig-circ-eqns}, defined by recursion on $n$, so that each $f\in\mathcal F(n,1)$ can be thought of as a functor expressing the logical and arithmetic operations on $n$-bit wires in terms of $1$-bit wires. Here \scalebox{.6}{\input{journal-figures/varbox-circ.tikz}} ranges over \scalebox{.6}{\input{journal-figures/cofork-nolabel.tikz}}, \scalebox{.6}{\input{journal-figures/and-nolabel.tikz}} and \scalebox{.6}{\input{journal-figures/or-nolabel.tikz}}.
\begin{figure}
    \centering
    \scalebox{.6}{%
        \input{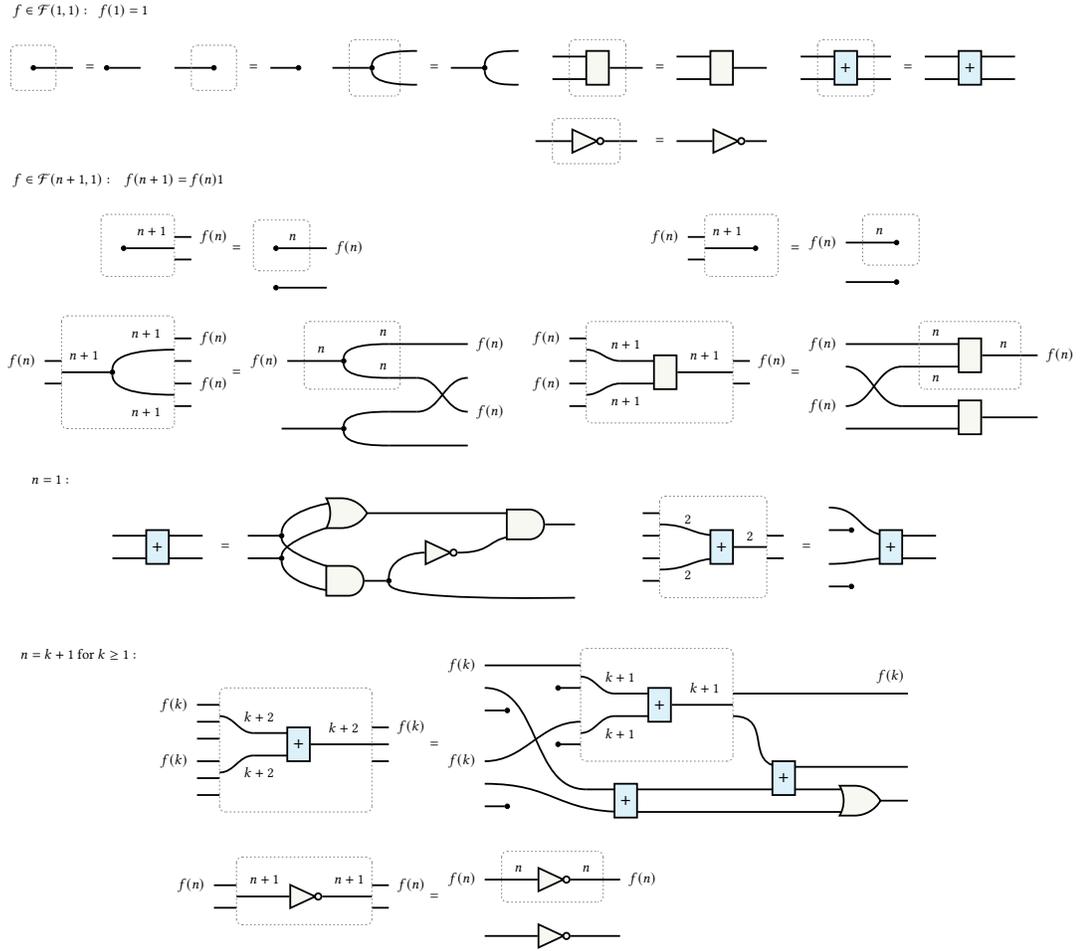}
    }
    \caption{Layered theory defining the simple arithmetic circuits.\label{fig:dig-circ-eqns}}
\end{figure}

We demonstrate the organisational power of layered theories by recasting Example~3.22 from Kaye~\cite{kaye-thesis} as a term in the theory of simple arithmetic circuits: we draw it on the left-hand side of Figure~\ref{fig:ALU}. The term represents a simple {ALU} (arithmetic logic unit) operating on four bit wires with a one bit control wire: when the control bit is false, it performs bitwise {AND}, while when the control bit is true, it adds the numbers represented in binary by the first three bits. We note that the labels $1$ and $4$ refer to the whole enclosed region, corresponding to the layers $n=1$ and $n=4$. The layers are separated by a functor boundary labelled with $\coarsen$, corresponding to what is called a {\em bundler} in Kaye~\cite{kaye-thesis}. The translation from $4$-bit wires to $1$-bit wires can, therefore, be thought of as a bracketing, demarcating the two layers from each other. The formalism, however, allows for more than just separating different layers: using the equations within this layered theory, we observe that the term is equal to the one shown on the right-hand side. Note that while the functor boundary $\coarsen$ ``acts'' from right to left, the overall logical flow of the diagram is from left to right.
\begin{figure}
    \centering
    \scalebox{.6}{%
        \input{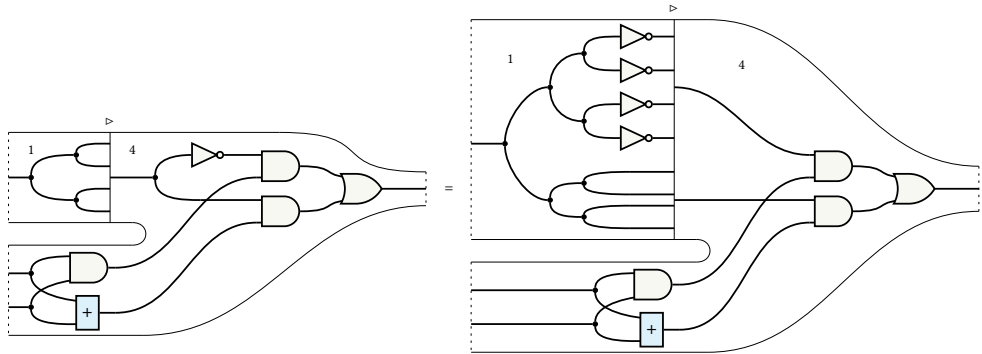}
    }
    \caption{A simple arithmetic logic unit.\label{fig:ALU}}
\end{figure}

%% file: figures/journal-figures/wireunit.tikz
\begin{tikzpicture}
	\begin{pgfonlayer}{nodelayer}
		\node [style=none] (1) at (0.5, 0) {};
		\node [style=none] (4) at (1, 0) {$n$};
		\node [style=vertex] (5) at (-1, 0) {};
	\end{pgfonlayer}
	\begin{pgfonlayer}{edgelayer}
		\draw [style=wire] (1.center) to (5);
	\end{pgfonlayer}
\end{tikzpicture}

%% file: figures/journal-figures/wirecounit.tikz
\begin{tikzpicture}
	\begin{pgfonlayer}{nodelayer}
		\node [style=none] (1) at (-0.5, 0) {};
		\node [style=none] (4) at (-1, 0) {$n$};
		\node [style=vertex] (5) at (1, 0) {};
	\end{pgfonlayer}
	\begin{pgfonlayer}{edgelayer}
		\draw [style=wire] (1.center) to (5);
	\end{pgfonlayer}
\end{tikzpicture}

%% file: tex/tocl-article/examples/electrical-circuits.tex
We formalise the notion of an {\em impedance box}, introduced in the study of compositional electrical circuit theory by Boisseau \& Sobociński~\cite{electrical-circuits} and Boisseau~\cite{boisseau-thesis}. We find that it corresponds to a composition of a box with a cobox in a layered monoidal theory generated by translating electrical circuits to graphical affine algebra (Definition~\ref{def:impedance-box}). Within this deflational theory, we are able to replicate what is called the {\em impedance calculus} in~\cite{electrical-circuits}. To illustrate this, we derive the rule governing the sequential composition of resistors using our formalism. We summarise the layers we define in the table below:
\begin{center}
\begin{tabular}{ c | c }
$\ECirc$ & The electrical circuits \\
\hline
$\Bip$ & Bipole electrical circuits
\end{tabular}\quad%
\begin{tabular}{ c | c }
$\GAA$ & Graphical affine algebra: axiomatises affine relations \\
\hline
$\Imp$ & All terms of $\GAA$ with single input and output
\end{tabular}
\end{center}
Compositional electrical circuit theory~\cite{electrical-circuits,boisseau-thesis,compositional-networks,graphical-affine-algebra} treats the components of electrical circuits as generators in certain monoidal theory. The terms (string diagrams) in the theory bear very close resemblance to the electrical circuit diagrams in classical electrical circuit theory, and hence to the physical circuit. As formal mathematics, the diagrams allow for equational reasoning and a semantic interpretation as affine relations, bridging the gap between the physical wiring of the diagrams and proving their properties by the means of calculations. Our starting point is the work of Boisseau and Sobociński~\cite{electrical-circuits}, which introduced impedance boxes as a notational device for simplifying and clarifying proofs: the idea is to allow parts of the electrical circuit diagram to be translated to the semantics (graphical affine algebra).

\noindent\begin{minipage}{0.65\textwidth}
Let $\R(x)$ denote the field of fractions of the polynomial ring over the real numbers. We define the layered monoidal theory with the shape~\eqref{ecirc-layers}:
\end{minipage}
\begin{minipage}{0.35\textwidth}
\begin{equation}\label{ecirc-layers}
\scalebox{1}{\input{journal-figures/ecirc-layers.tikz}}.
\end{equation}
\end{minipage}
Here $\Bip$ is the layer of {\em bipoles}, $\ECirc$ the layer of {\em electrical circuits}, $\GAA_{\R(x)}$ the layer of {\em graphical affine algebra} over $\R(x)$, and $\Imp$ is the {\em impedance layer}, consisting of all terms that may appear inside an impedance box (Definition~\ref{def:impedance-box}). We drop the subscript $\R(x)$ for brevity, implicitly assuming that all the parameters in $\GAA$ come from the field $\R(x)$.

Each of the four layers has exactly one colour. Following~\cite{electrical-circuits}, the wires and the generators in $\ECirc$ and $\Bip$ are coloured blue, while in $\GAA$ and $\Imp$ they are coloured black, however, this plays no formal role. The generators in $\GAA$ are given by the generators of {\em graphical affine algebra}:
\begin{equation}\label{gaa-generators}
\scalebox{.65}{\input{journal-figures/gaa-generators.tikz}},
\end{equation}
where $k\in\R(x)$. The generators in $\Bip$ are given by the {\em bipoles}~\eqref{ecirc-bipoles} below, where $R,L,C\in\R_{+}$ and $V,I\in\R$. The generators in $\ECirc$ are given by the bipoles~\eqref{ecirc-bipoles}, together with the generators~\eqref{ecirc-non-bipoles} below:

\noindent\begin{minipage}{.6\textwidth}
\begin{equation}\label{ecirc-bipoles}
\scalebox{.65}{\input{journal-figures/ecirc-bipoles.tikz}}
\end{equation}
\end{minipage}%
\begin{minipage}{.4\textwidth}
\begin{equation}\label{ecirc-non-bipoles}
\scalebox{.65}{\input{journal-figures/ecirc-non-bipoles.tikz}}
\end{equation}
\end{minipage}

Finally, the generators in $\Imp$ are given by all the terms in $\GAA$ with sort $(1,1)$.

The equations of the layered monoidal theory are given by the following:
\begin{itemize}
\item in $\GAA$, the equations of the graphical affine algebra hold: for the generators on the first two lines in~\eqref{gaa-generators}, the equations of {\em interacting Hopf algebras}~\cite{survey-signal-flow,zanasi-thesis} hold, together with three additional equations that make the interpretation of the last generator the relation relating the unique element in the zero dimensional vector space to the unit vector in the one dimensional vector space~\cite{graphical-affine-algebra,electrical-circuits},
\item in $\Imp$, the equations of $\GAA$ hold inside the generators, together with the following additional equations, where the identity on the left-hand side of the right equation is the identity term in $\Imp$, {\em not} the identity in $\GAA$:
\begin{equation*}
\scalebox{.65}{\input{journal-figures/imp-equations.tikz}},
\end{equation*}
\item on the colours, the functors $\mathcal I:\ECirc\rightarrow\GAA$ and $W:\Imp\rightarrow\GAA$ are defined by $\bullet\mapsto\bullet\bullet$, so that the resulting map on objects is $n\mapsto 2n$,
\item the functor $B:\Bip\rightarrow\Imp$ is identity on colours,
\item the functor $\Bip\hookrightarrow\ECirc$ is identity on both colours and the generators, so that the resulting map is the inclusion,
\item the functor $\mathcal I:\ECirc\rightarrow\GAA$ is defined by equations in Figure~\ref{fig:ecirc-layered-theory},
\item the functor $B:\Bip\rightarrow\Imp$ is defined by equations in Figure~\ref{fig:ecirc-layered-theory-2},
\item the functor $W:\Imp\rightarrow\GAA$ is defined by the equation \scalebox{.65}{\input{journal-figures/wrapping-equation.tikz}},
\item finally, we add the equation \scalebox{.65}{\input{journal-figures/i-cobox-id-eqn.tikz}}.
\end{itemize}
Recall that by Proposition~\ref{prop:cobox-iff-faithful}, adding the last equation ``quotients'' the internal terms in $\ECirc$ by equality under the functor $\mathcal I$. This equation can thus be thought of as transporting the equalities in the semantics to terms in the syntax.

\begin{figure}
    \centering
    \scalebox{0.65}{
        \input{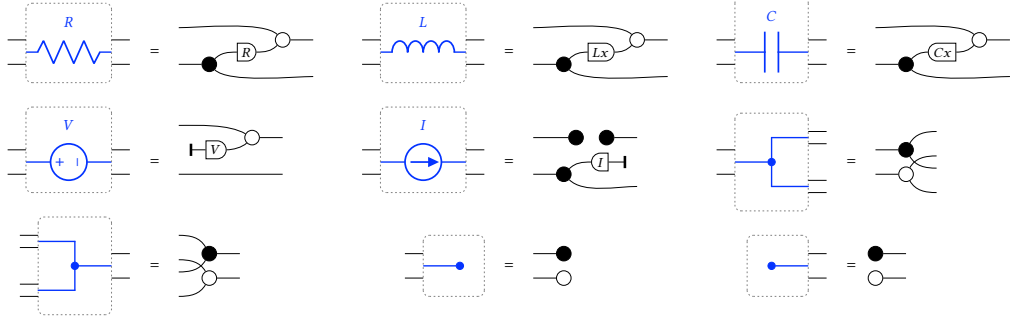}
    }
    \caption{Equations between terms defining the functor $\mathcal I:\ECirc\rightarrow\GAA$.\label{fig:ecirc-layered-theory}}
\end{figure}

\begin{figure}
    \centering
    \scalebox{0.65}{
        \input{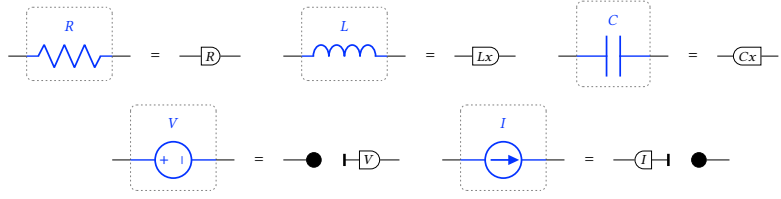}
    }
    \caption{Equations between terms defining the functor $B:\Bip\rightarrow\Imp$.\label{fig:ecirc-layered-theory-2}}
\end{figure}

\begin{remark}
The definition of the layer $\Imp$ might appear somewhat strange: it seems that all the terms disconnect due to the identity term disconnecting. This is, however, not the case, as only the identity generator of $\Imp$ disconnects: the wires inside the terms of $\GAA$ remain connected. Note that the composition is defined by the monoidal product of terms that is then connected on both sides to obtain a term with a single input and output. The composition and the identity are well-defined due to the black and white generators of $\GAA$ (and hence ultimately those of interacting Hopf algebras) being (co)associative and (co)unital. For example, the identity indeed works as expected:
\begin{equation*}
\scalebox{0.65}{\input{journal-figures/imp-identity-example.tikz}}.
\end{equation*}
\end{remark}

Note that the equations of the layered monoidal theories imply that~\eqref{ecirc-layers} is a commutative square. We use this to define the {\em impedance box} of~\cite{electrical-circuits}.
\begin{definition}[Impedance box]\label{def:impedance-box}
Let $C$ be a generator of $\Imp$ (i.e.~a term with exactly one input and output in $\GAA$). Define the {\em impedance box} as the following box-cobox combination: \scalebox{0.65}{\input{journal-figures/impedance-box-definition.tikz}}.
\end{definition}

The impedance box can now be treated like in~\cite{electrical-circuits}, where it is introduced as an additional set of generators for $\ECirc$. However, viewing the impedance box as an emergent feature of the layered monoidal theory reveals some subtleties, as we will see in the following proposition. Note that the scalar $-1$ is abbreviated as the black square in~\eqref{eq:imp-box-iii}.
\begin{proposition}[Lemma~1 in~\cite{electrical-circuits}]\label{prop:impedance-box-derived}
The following equations are derivable:

\noindent\begin{minipage}{.45\textwidth}
\begin{equation}\label{eq:imp-box-i}
\scalebox{.65}{\input{journal-figures/imp-box-compose-seq.tikz}}\tag{i}
\end{equation}
\begin{equation}\label{eq:imp-box-iii}
\scalebox{.65}{\input{journal-figures/imp-box-snake.tikz}}\tag{iii}
\end{equation}
\end{minipage}%
\hspace{0.05\textwidth}%
\begin{minipage}{.45\textwidth}
\begin{equation}\label{eq:imp-box-ii}
\scalebox{.65}{\input{journal-figures/imp-box-compose-par.tikz}}\tag{ii}
\end{equation}
\begin{equation}\label{eq:imp-box-iv}
\scalebox{.65}{\input{journal-figures/imp-box-units.tikz}}\tag{iv},
\end{equation}
\end{minipage}
\end{proposition}
The first difference between Proposition~\ref{prop:impedance-box-derived} and Lemma~1 in~\cite{electrical-circuits} is that we can, in fact, derive equation~\eqref{eq:imp-box-iv} without quotienting $\ECirc$ by the equality under the translation functor to $\GAA$, i.e.~without assuming the last equality of the theory. Up to bidirectional 2-cells, we can even derive equation~\eqref{eq:imp-box-i} without this last equality. The interpretation is that equations~\eqref{eq:imp-box-i}, \eqref{eq:imp-box-ii} and~\eqref{eq:imp-box-iii} require more than existence of a functorial translation $\ECirc\rightarrow\GAA$: they transfer information from the lower level ($\GAA$) to the higher one ($\ECirc$). The second difference is the explicit appearance of a cobox in equation~\eqref{eq:imp-box-iv}, which can be viewed as ensuring that the types on both side of the equality remain in $\ECirc$.

We conclude our discussion of electrical circuits by showing how the impedance box can be used to derive the rule for composing two resistors in $\ECirc$: \scalebox{.65}{\input{journal-figures/resistors-high.tikz}}, replicating part~(i) of Proposition~3 of~\cite{electrical-circuits}. The derivation is given below:
\begin{equation*}
\scalebox{.65}{\input{journal-figures/resistors-add.tikz}}.
\end{equation*}

%% file: tex/tocl-article/examples/zx.tex
We show how using rewriting of graphs representing measurement based quantum computations (MBQC) to extract a quantum circuit can be seen as a procedure inside a layered monoidal theory, whose layers are quantum circuits, graphs representing MBQC-computations and the ZX-calculus. We demonstrate the advantage of the layered approach by clearly separating the layers of graph rewriting -- i.e.~formal manipulations of labelled graphs -- from the semantics in terms of the ZX-calculus. The procedure of circuit extraction can then be seen as an interaction between these layers. We summarise the four layers at play in the following table:
\begin{center}
\begin{tabular}{ c | c }
$\ZX$ & The ZX-calculus: describes all linear maps between qubits \\
\hline
$\QCirc$ & Quantum circuits defined as compositions of gates \\
\hline
$\MBQC$ & Graphs representing a measurement based quantum computation \\
\hline
$\MBQCLC$ & An extension of MBQC-graphs convenient for technical reasons
\end{tabular}
\end{center}

The {\em ZX-calculus} is a prop (monoidal theory with a single colour) with the following generators:
\begin{equation}\label{eq:zx-generators}
\scalebox{1}{\input{journal-figures/zx-generators.tikz}},
\end{equation}
called the {\em Z-spider}, the {\em X-spider} and the {\em Hadamard gate}, subject to the following monoidal theory:
\begin{equation}\label{eq:zx-rules}
\scalebox{1}{\input{journal-figures/ZX-rules.tikz}},
\end{equation}
where $\alpha, \beta \in [0, 2 \pi)$ is called the {\em phase} and the addition is modulo $2\pi$. Note that the ellipsis is to be read as `zero or more wires', hence e.g.~on the left-hand side of the spider fusion equation \SpiderRule the spiders are connected by one or more wires. In addition to these generators and equations, the Hadamard gate is often abbreviated to dashed blue edge:
\begin{equation*}
\scalebox{1}{\input{journal-figures/hadamard-shorthand.tikz}}.
\end{equation*}

The ZX-calculus is able to describe any linear map between any finite number of qubits: we discuss this in more detail in Appendix~\ref{sec:zx-mbqc}. While this makes the ZX-calculus highly expressive, in practice one has to restrict available linear maps to those that can be realistically implemented as physical procedures. There are two dominant paradigms to obtain such an operational restriction of quantum maps: {\em circuit based quantum computation} and {\em measurement based quantum computation} (MBQC for short). In the former, the computation is performed by applying {\em gates}, i.e.~certain subset of unitary linear maps, to qubits, and the resulting circuit diagrams look very similar to electrical and digital circuits as encountered in engineering and computer science literature (as well as in Subsections~\ref{subsec:digital-circuits} and~\ref{subsec:elec-circuits} of this work). In the latter, the computation proceeds by applying single qubit destructive measurements to a state consisting of a finite number of qubits whose every pair may be entangled. Both models of quantum computation can be represented within the ZX-calculus. For circuits, this is straightforward by restricting the ZX-diagrams to those that are circuit-like.
\begin{definition}[Quantum circuits]\label{def:quantum-circ}
The prop $\QCirc$ of {\em quantum circuits} is the sub-prop of the ZX-calculus generated by the following morphisms: \scalebox{1}{\input{journal-figures/cnot.tikz}} ({CNOT}), \scalebox{1}{\input{journal-figures/z-alpha.tikz}} ($\text Z_{\alpha}$), \scalebox{1}{\input{journal-figures/hadamard.tikz}} ({H}).
\end{definition}
Other gates commonly used in circuit based quantum computation are then definable using the generating gates in Definition~\ref{def:quantum-circ}. In particular, we have:
\begin{equation}\label{eq:circuit-generators-2}
\scalebox{1}{\input{journal-figures/circuit-derivable.tikz}}.
\end{equation}

The translation from MBQC protocols (known as {\em measurement patterns}, Definition~\ref{def:measurement-pattern}) to the ZX-calculus is more complicated: we give the details in Appendix~\ref{sec:zx-mbqc}. However, the translation of {\em deterministic} patterns (i.e.~those which result in the same linear map at each run) is given by the composition of functors $D\iota:\MBQC\rightarrow\ZX$ that we define as part of the layered theory presented here.

In MBQC, the computation starts with a {\em graph state}: a finite number of qubits, which may be pairwise entangled by applying the CZ gate. This is represented by an {\em undirected graph}, whose vertices represent qubits (Z-spiders with the zero phase) and whose edges represent CZ-gates (i.e. a Hadamard edge between two qubits). Once we add the information about how the qubits are measured, we obtain the notion of an MBQC-graph (formalised in Definition~\ref{def:mbqc-graph}). It turns out that certain graph-theoretic operations preserve the semantics of measurement patterns.

We denote a graph by $G=(V,E)$, where $V$ is a set of {\em vertices} and $E$ is a binary relation on $V$ that specifies the {\em edges}. Given a vertex $u\in V$, we denote the set of {\em neighbours} of $u$ by $N_G(u)$. The graphs we consider are undirected, in which case there is no distinction between forward and backward neighbours.
\begin{definition}[Local complementation]\label{def:local-comp}
Let $G=(V,E)$ be an undirected irreflexive graph, and let $u\in V$ be a vertex. The {\em local complementation} about $u$ is the graph $G\star u$ defined by $G\star u\coloneq (V, E \Delta \{(a,b) : a,b\in N_G(u) \text{ and } a\neq b\})$, where $\Delta$ denotes the symmetric difference.
\end{definition}
In other words, $G\star u$ has the same vertices as $G$, and any two neighbours $a$ and $b$ of $u$ are connected in $G\star u$ if and only if they are not connected in $G$. All other edges are the same as in $G$.
\begin{definition}[Pivot]\label{def:pivot}
Let $G=(V,E)$ be an undirected irreflexive graph, and let $(u,v)\in E$. The {\em pivot} about $(u,v)$ is defined as $G\wedge uv\coloneq ((G\star u)\star v)\star u$.
\end{definition}
A pivot about $(u,v)$ results in interchanging $u$ and $v$, and complementing the edges between the three sets $N_G(u)\cap N_G(v)$, $(N_G(u)\cap N_G(v)^c)\setminus\{v\}$ and $(N_G(v)\cap N_G(u)^c)\setminus\{u\}$: any vertices $a$ and $b$ from any two distinct sets are connected in $G\wedge uv$ if and only if they are not connected in $G$. We illustrate the effect of a pivot in the following picture, where $A\coloneq N_G(u)\cap N_G(v)$, $B\coloneq (N_G(u)\cap N_G(v)^c)\setminus\{v\}$, $C\coloneq (N_G(v)\cap N_G(u)^c)\setminus\{u\}$, and crossing lines between two sets indicate complementing the edges:
\begin{equation*}
\scalebox{1}{\input{journal-figures/pivot.tikz}}.
\end{equation*}

In order to define an MBQC-graph, we define the following sets and terminology. Let $\mathcal P\coloneq\{\XYplane,\XZplane,\YZplane\}$ denote the set of {\em measurement planes}. Let $\VS$ be a fixed countable set of {\em vertex names}. By an {\em ordered set} we mean a finite list with no repeated elements.
\begin{definition}[MBQC-graph]\label{def:mbqc-graph}
An {\em MBQC-graph} is a tuple $(V,E,I,O,\lambda)$ such that (1) $V\sse\VS$ is a finite set, (2) $(V,E)$ is an undirected irreflexive graph, (3) $I\sse V$ is an ordered set of {\em inputs}, (4) $O\sse V$ is an ordered set of {\em outputs}, and (5) $\lambda : V\setminus O\rightarrow\mathcal P\times [0,2\pi)$ is a {\em measurement labelling function}, assigning a measurement plane and angle to each non-output vertex.
\end{definition}
An MBQC-graph represents the ``desired'' execution of an MBQC-computation, where all the measurements have yielded the expected outcome, so that no corrections were needed. Formally, an MBQC-graph carries the same information as the branch of a measurement pattern where all signals are evaluated to $0$, vanishing all the correction commands (see the discussion in Appendix~\ref{sec:zx-mbqc}). Thus, in this scenario, the measurement angles specified by $\lambda$ translate directly to measurements in the ZX-diagram, as depicted in Table~\ref{tab:measurement-planes}.

 \begin{table}
  \centering
  \renewcommand{\arraystretch}{2}
  \begin{tabular}{ c || c | c | c}
   measurement plane & $\XYplane$ & $\XZplane$ & $\YZplane$ \\ \hline
   ZX-diagram & \input{journal-figures/XY-effect-uncorrected.tikz} & \input{journal-figures/XZ-effect-uncorrected.tikz} & \input{journal-figures/YZ-effect-uncorrected.tikz}
  \end{tabular}
  \renewcommand{\arraystretch}{1}
  \caption{Correspondence between the measurement planes and the ZX-diagrams.\label{tab:measurement-planes}}
 \end{table}

In general, different executions of a measurement pattern will yield different linear maps. In practise, one is interested in those patterns where every execution gives the same outcome linear map (up to a global scalar). Such patterns are called {\em deterministic} (Definition~\ref{def:determinism}). Thus, for a deterministic pattern, every execution is equal to the desired one with no correction commands, i.e.~the one represented by an MBQC-graph. It turns out that there is a condition on MBQC-graphs -- called {\em generalised flow}, or {\em gflow} -- that characterises deterministic measurement patterns~\cite{browne-gflow}. We omit the definition here, as it is somewhat lengthy and requires defining additional notation, and we will not need the details. We refer the reader to~\cite{extended-meas-calculus,thereandback} for the definition.

\begin{theorem}[Browne et al.~\cite{browne-gflow}]\label{thm:gflow}
A measurement pattern with no correction commands can be completed to an equivalent strongly, uniformly and stepwise deterministic pattern (by adding signals and corrections) if and only if the MBQC-graph corresponding to the pattern has a gflow.
\end{theorem}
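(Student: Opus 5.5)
The plan is to follow the strategy of Browne et al.\ and treat the two directions separately. Both directions rest on the graph-state stabilisers: for every vertex $v$ of the open graph $(V,E)$, the operator $X_v\prod_{w\in N_G(v)}Z_w$ stabilises the graph state. For a set $K\sse V\setminus I$, multiplying these stabilisers gives $X_K Z_{\mathrm{Odd}(K)}$, where $\mathrm{Odd}(K)$ is the set of vertices with an odd number of neighbours in $K$. This identity is the bridge between combinatorics on the MBQC-graph and the effect of Pauli corrections on the measured state.

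\emph{Sufficiency (gflow $\Rightarrow$ completion).} Let $(g,\prec)$ be a gflow. For each non-output vertex $i$, measured in the plane and at the angle given by $\lambda(i)$, I would add a correction conditioned on the outcome signal $s_i$. This correction applies $X$ to $g(i)\setminus\{i\}$ and $Z$ to $\mathrm{Odd}(g(i))\setminus\{i\}$. Measurements are ordered by a linear extension of $\prec$, and the gflow conditions ensure that every corrected vertex lies in the future of $i$. The key local step is a case analysis on the measurement plane of $i$, using the three gflow conditions: $i\in\mathrm{Odd}(g(i))$ but $i\notin g(i)$ for $\XYplane$, $i$ in both for $\XZplane$, and $i\in g(i)$ but $i\notin\mathrm{Odd}(g(i))$ for $\YZplane$. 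In each case I would check that the restriction of the stabiliser $X_{g(i)}Z_{\mathrm{Odd}(g(i))}$ to vertex $i$ anticommutes with the measurement projector of $i$. It follows that the branch with outcome $1$ equals the branch with outcome $0$ after the correction is applied to the remaining qubits. Induction along the order then gives stepwise determinism. Uniformity holds because the argument never uses the value of the angle, and strong determinism holds because all branches agree up to a global phase.

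\emph{Necessity (deterministic completion $\Rightarrow$ gflow).} Conversely, from a strongly, uniformly and stepwise deterministic pattern I would read off $g(i)$ as the set of qubits on which the pattern applies $X$ or $Y$ corrections depending on $s_i$, and take $\prec$ to be the order in which signals are used. The rest of the argument extracts the combinatorial conditions. Uniform determinism means the correction must map the $1$-branch to the $0$-branch for \emph{every} angle. Taking two distinct angles in the plane of $i$, such as Pauli angles, forces the correction, multiplied by a graph-state stabiliser, to act on qubit $i$ as the Pauli operator anticommuting with that plane. Because stabilisers are of the form $X_K Z_{\mathrm{Odd}(K)}$, this yields exactly the plane-dependent membership conditions for $i$ in $g(i)$ and $\mathrm{Odd}(g(i))$. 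Stepwise determinism then supplies the order conditions: corrections may only act on unmeasured qubits, so every $j\neq i$ in $g(i)$ or $\mathrm{Odd}(g(i))$ satisfies $i\prec j$.

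I expect this necessity step to be the main obstacle: showing that the correction's action on the state must come from a \emph{single} stabiliser, with $Z$-parts reconstructed through odd neighbourhoods. This needs a linear-algebra argument over $\mathbb F_2$ on the stabiliser group, showing that equality of the corrected branches for all angles pins down the Pauli operator up to the stabiliser group. I would first try the argument with the Pauli angles $0$ and $\pi/2$ in each plane, where the branches are stabiliser states and the comparison reduces to a comparison of stabiliser groups.
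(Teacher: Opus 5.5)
The paper does not prove this theorem; it quotes it from Browne et al., so the comparison is with their argument.

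\textbf{Sufficiency.} Your sufficiency direction is their argument and is fine in outline. Two points should be made explicit.
First, the single-vertex stabilisers $X_v\prod_{w\in N_G(v)}Z_w$ exist only for non-input $v$, because the input is arbitrary. So you need $g(i)\subseteq V\setminus I$, which is part of the definition of gflow.
Second, the induction works because the gflow order conditions, together with the linear extension, ensure that $g(i)\cup\mathrm{Odd}(g(i))$ contains no vertex measured before $i$. Hence earlier projections and earlier Pauli corrections leave the state a $\pm1$-eigenvector of $X_{g(i)}Z_{\mathrm{Odd}(g(i))}$.
A small correction: the Pauli acting on $i$ anticommutes with the measured observable, i.e.\ it swaps the two outcome states. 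It cannot anticommute with a rank-one projector.

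\textbf{The gap is in necessity.} You read off $g(i)$ as the set of qubits receiving $X$ or $Y$ corrections conditioned on $s_i$ (in the $\XZplane$ and $\YZplane$ cases you must also add $i$ itself). You then appeal to ``corrections only act on unmeasured qubits''. This does not give the order condition on $\mathrm{Odd}(g(i))$. The odd neighbourhood is computed in the whole graph, so it can contain vertices measured before $i$ even though the $Z$-part of the correction never mentions them. Nothing so far shows that this $Z$-part equals $\mathrm{Odd}(g(i))\setminus\{i\}$, or that $g(i)\cap I=\emptyset$.

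All of this is the ``single stabiliser'' statement you defer, and your suggested route does not reach it.
\begin{itemize}
\item Choosing Pauli angles only at $i$ leaves the pre-measurement state shaped by arbitrary-angle measurements of earlier qubits. That state is not a stabiliser state.
\item Choosing Pauli angles at earlier qubits too does not fix this. The surviving stabilisers are restrictions of $X_KZ_{\mathrm{Odd}(K)}$ where $K$ may contain earlier vertices (e.g.\ $k\in K$, $k\notin\mathrm{Odd}(K)$ after an $X$-measurement of $k$). That is exactly what gflow forbids, so you would have to combine several Pauli choices at every earlier vertex, and you give no such argument.
\end{itemize}

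\textbf{The missing idea} is to use uniformity in the angles of the \emph{earlier} measurements as well.
\begin{itemize}
\item In each plane, the bras $\langle +_{\lambda,\alpha}|$ with $\alpha\in[0,2\pi)$ span the dual of $\C^2$.
\item Hence the step-$i$ condition lifts to the operator identity $(C_i\otimes P_i\otimes\mathrm{id}_{\mathrm{past}})\,E_GN_{I^c}\propto E_GN_{I^c}$. Here $C_i$ is the $s_i$-dependent correction and $P_i$ is $Z$, $Y$ or $X$ according to the plane of $i$. This lifting requires controlling the proportionality constants supplied by strong determinism, e.g.\ showing they are independent of the angles.
\item Conjugating by the Clifford $E_G$ gives a Pauli fixing $|\psi\rangle\otimes|{+}\cdots{+}\rangle$ for every input $\psi$. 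Such a Pauli has the form $X_K$ with $K\subseteq V\setminus I$, so $C_i\otimes P_i\propto X_KZ_{\mathrm{Odd}(K)}$.
\item Set $g(i)=K$ and let $\prec$ be the measurement order. The plane conditions are read off from $P_i$. The order conditions hold because the left-hand side acts trivially on every previously measured vertex.
\end{itemize}
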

Thus, instead of working with measurement patterns (Definition~\ref{def:measurement-pattern}) with intricate determinism conditions (Definition~\ref{def:determinism}), this characterisation allows us to work with MBQC-graphs that have a gflow.

For the purposes of rewriting, it is convenient to work with a slight generalisation of MBQC-graphs that allow local Clifford operators at each input and output.

Let us denote by $\LC$ the free monoid generated by $\left\{\frac{r}{2},r,-\frac{r}{2},\frac{g}{2},g,-\frac{g}{2}\right\}$. The monoid $\LC$ captures syntactically the {\em local Clifford operators}, i.e.~X- and Z-spiders with a single input and output whose phase is an integer multiple of $\frac{\pi}{2}$. The translation functor $\LC\rightarrow\ZX$ is defined by the action on the generators in Table~\ref{tab:local-cliffords}.

 \begin{table}
  \centering
  \renewcommand{\arraystretch}{2}
  \begin{tabular}{ c || c | c | c | c | c | c }
   $\LC$ generator & $\frac{r}{2}$ & $r$ & $-\frac{r}{2}$ & $\frac{g}{2}$ & $g$ & $-\frac{g}{2}$ \\ \hline
   ZX-diagram & \input{journal-figures/red-half-pi.tikz} & \input{journal-figures/red-pi.tikz} & \input{journal-figures/minus-red-half-pi.tikz} & \input{journal-figures/green-half-pi.tikz} & \input{journal-figures/green-pi.tikz} & \input{journal-figures/minus-green-half-pi.tikz}
  \end{tabular}
  \renewcommand{\arraystretch}{1}
  \caption{Translation from $\LC$ to ZX.\label{tab:local-cliffords}}
 \end{table}

\begin{definition}[MBQC+LC-graph]\label{def:mbqc-lc-graph}
An {\em MBQC+LC-graph} is a tuple $(V,E,I,O,\lambda,\ell^i,\ell^o)$ where $(V,E,I,O,\lambda)$ is an MBQC-graph, while $\ell^i:I\rightarrow\LC$ and $\ell^o:O\rightarrow\LC$ are {\em input} and {\em output labelling functions}.
\end{definition}
Note that every MBQC-graph can be viewed as an MBQC+LC-graph by choosing all the additional labels (i.e.~values returned by the functions $\ell^i$ and $\ell^o$) to be the empty word. While it seems that we are capturing a larger class of computations by allowing the local Clifford operators at the ends, Proposition~\ref{prop:mbqc-is-mbqc-lc} will show that, in fact, the class of linear maps that is captured by the MBQC+LC-graphs is exactly the same as the one of MBQC-graphs. We say that an MBQC+LC-graph has a gflow if the underlying MBQC-graph has a gflow.

The following definitions combine local complementation (Definition~\ref{def:local-comp}) and pivoting (Definition~\ref{def:pivot}) with the additional data of an MBQC+LC-graph. While the precise details are somewhat technical, the importance lies in the facts that all the operations (1) preserve the semantics of the graph when interpreted as a ZX-diagram (Theorem~\ref{thm:zx-soundness}), and (2) preserve the existence of gflow (Theorem~\ref{thm:pres-gfow}). We point out that the operations add local Clifford operators to inputs and outputs, which is the reason we need to work with MBQC+LC-graphs rather than MBQC-graphs.

\begin{definition}[Local complementation on MBQC+LC-graphs]\label{def:local-comp-mbqc}
Given a vertex name $u\in\VS$, we define the partial function $\star u$ on the set of MBQC+LC-graphs as follows. A graph $G$ is in the domain if $u\in V_G$, and the output graph $G_{\star}$ is defined by the following cases:
\begin{enumerate}
\item if $u\notin I_G$, then, on the vertices and edges we apply the local complementation about $u$: $G_{\star}\coloneq G\star u$, while $I_{\star}\coloneq I_G$, $O_{\star}\coloneq O_G$, and the labelling functions are defined as follows:
\begin{itemize}
\item if $u\in O_G$, let $\ell^o_{\star}(u)\coloneq \frac{r}{2}\cdot\ell^o_G(u)$,
\item if $u\notin O_G$ and $\lambda_G(u)=(\XYplane,\alpha)$, let $\lambda_{\star}(u)\coloneq \left(\XZplane,\frac{\pi}{2}-\alpha\right),$
\item if $u\notin O_G$ and $\lambda_G(u)=(\XZplane,\alpha)$, let $\lambda_{\star}(u)\coloneq \left(\XYplane,\alpha -\frac{\pi}{2}\right),$
\item if $u\notin O_G$ and $\lambda_G(u)=(\YZplane,\alpha)$, let $\lambda_{\star}(u)\coloneq \left(\YZplane,\alpha +\frac{\pi}{2}\right),$
\item if $v\in N_G(u)\cap O_G$, let $\ell^o_{\star}(v)\coloneq -\frac{g}{2}\cdot\ell^o_G(v),$
\item if $v\in N_G(u)\cap O_G^c$ and $\lambda_G(v)=(\XYplane,\alpha)$, let $\lambda_{\star}(v)\coloneq \left(\XYplane, \alpha - \frac{\pi}{2}\right),$
\item if $v\in N_G(u)\cap O_G^c$ and $\lambda_G(v)=(\XZplane,\alpha)$, let $\lambda_{\star}(v)\coloneq \left(\YZplane, \alpha \right),$
\item if $v\in N_G(u)\cap O_G^c$ and $\lambda_G(v)=(\YZplane,\alpha)$, let $\lambda_{\star}(v)\coloneq \left(\XZplane, -\alpha \right),$
\item all other labels are the same as in $G$,
\end{itemize}
\item if $u\in I_G$, we first choose a vertex name $u'\in\VS$ not appearing in $G$, and define the graph $G'$ as follows:
\begin{align*}
V'\coloneq V\cup\{u'\} &&
E'\coloneq E\cup\{(u,u')\} &&
I'\coloneq I_G[u'/u] &&
O'\coloneq O_G &&
\lambda'(u')\coloneq (\XYplane, 0) &&
(\ell^i)'(u')\coloneq \ell^i_G(u)\cdot\frac{g}{2}\cdot\frac{r}{2}\cdot\frac{g}{2},
\end{align*}
while all other labels are the same as in $G$; now, by construction, $u\notin I'$, so that Case~1 applies, whence we define the output graph $G'_{\star}$ using the construction of Case~1.
\end{enumerate}
\end{definition}
Given an MBQC+LC graph $G$ in the domain of $\star u$, we denote the resulting graph by $G\star u$ -- we will point out whether we mean a simple graph or an MBQC+LC-graph, unless this is clear from the context.

\begin{definition}[Pivot on MBQC+LC-graphs]\label{def:pivot-mbqc}
Given two vertex names $u,v\in\VS$, we define the partial function $\wedge uv$ on the set of MBQC+LC-graphs as follows. A graph $G$ is in the domain if $u,v\in V_G$ and $(u,v)\in E_G$, in which case we define $G\wedge uv\coloneq ((G\star u)\star v)\star u$ via three consecutive applications of the partial functions implementing local complementation.
\end{definition}

Certain vertices can also be removed altogether without altering the semantics of the computation. This serves as a starting point for many simplification methods for circuits and MBQC-patterns that use the ZX-calculus.
\begin{definition}[Vertex removal on MBQC+LC-graphs]\label{def:vertex-removal}
Given a vertex name $u\in\VS$, we define the partial function $\setminus u$ as follows. A graph $G$ is in the domain if the following conditions hold: (1) $u\in V_G$ and $u\notin I_G\cup O_G$, (2) $\lambda_G(u)=(P,a\pi)$ with $P\in\{\YZplane,\XZplane\}$ and $a\in\{0,1\}$, in which case we define the output graph $G\setminus u = (V_u,E_u,I_u,O_u,\lambda_u,\ell^i_u,\ell^o_u)$ by letting $V_u\coloneq V_G\setminus\{u\}$, $E_u\coloneq E\setminus\{(u,x) : x\in V_G\}$, $I_u\coloneq I_G$, $O_u\coloneq O_G$, while the labels are defined by the following cases:
\begin{itemize}
\item if $v\in N_G(u)\cap O_G$, let $\ell^o_u(v)\coloneq g^a\cdot\ell^o_G(v)$,
\item if $v\in N_G(u)\cap O_G^c$ and $\lambda_G(v)=(Q,\alpha)$, define $\lambda_u(v)\coloneq (Q, (-1)^a\alpha)$ if $Q\in\{\YZplane,\XZplane\}$, and $\lambda_u(v)\coloneq (\XYplane, \alpha + a\pi)$ if $Q=\XYplane$,
\item if $v\notin N_G(u)$, the labels are the same as in $G$.
\end{itemize}
\end{definition}

Finally, we need to be able to rename vertices.
\begin{definition}[Renaming]\label{def:zx-renaming}
Let $G$ be an MBQC+LC-graph, let $U\sse V_G$ be an ordered subset of vertices of $G$, and let $W\sse\VS$ be an ordered set of vertices such that $|U|=|W|$ and $W\cap V_G=\eset$. We define the MBQC+LC-graph $G[W/U]$ by replacing each vertex in $U$ with a vertex in $W$ in the specified order, keeping the rest of the data as in $G$.
\end{definition}

We now have all the ingredients to define the layered monoidal theory with the following shape:
\begin{equation*}
\scalebox{1}{\input{journal-figures/zx-layers.tikz}}.
\end{equation*}
All the layers are props, i.e.~generated by a single colour, while the generators and the equations within the layers are defined as follows:
\begin{itemize}
\item $\ZX$ is the ZX-calculus, with generators~\eqref{eq:zx-generators} and equations~\eqref{eq:zx-rules},
\item $\QCirc$ is generated by the ZX-diagrams corresponding to quantum circuits in Definition~\ref{def:quantum-circ} and~\eqref{eq:circuit-generators-2}, with the equations of the ZX-calculus,
\item a generator in $\MBQCLC$ with type $n\rightarrow m$ is given by an MBQC+LC-graph $G$ such that $|I_G|=n$ and $|O_G|=m$; the 2-cells and the 2-equations are defined in Figure~\ref{fig:twocells-eqns-mbqclc},
\item the layer $\MBQC$ is defined as $\MBQCLC$, with the graphs restricted to MBQC-graphs.
\end{itemize}

\begin{figure}
    \centering
    \scalebox{1}{%
        \input{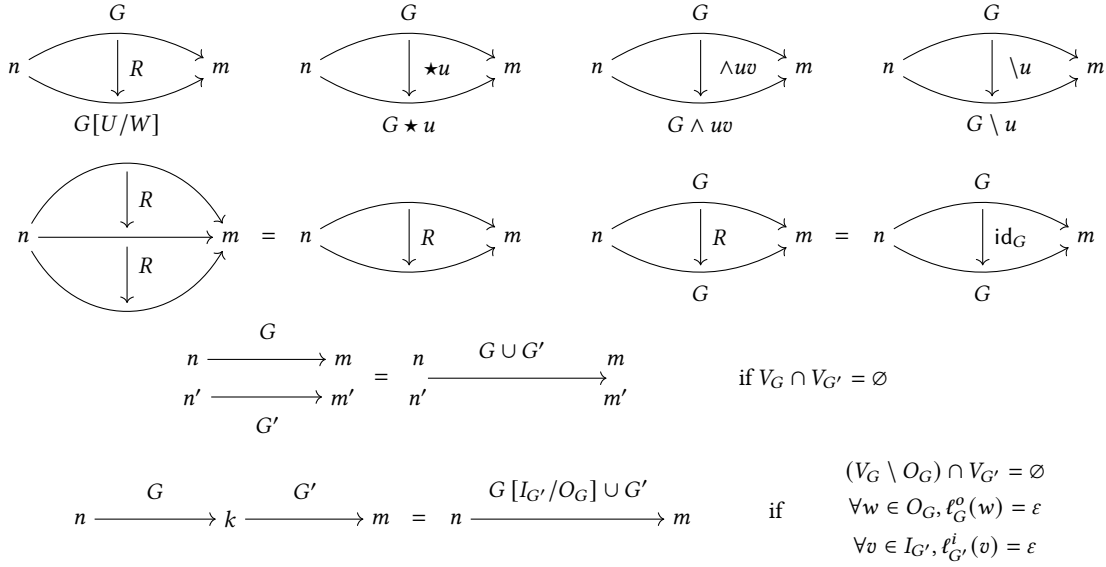}
    }
    \caption{Generating 2-cells and equations in $\MBQCLC$.\label{fig:twocells-eqns-mbqclc}}
\end{figure}

The equations defining the functors make $\QCirc\hookrightarrow\ZX$ and $\iota:\MBQC\hookrightarrow\MBQCLC$ into inclusions, while the functor $D:\MBQCLC\rightarrow\ZX$ is identity on objects and is defined on a generator $G:n\rightarrow m$ as follows:
\begin{enumerate}
\item for each $u\in V_G$, draw a Z-spider with zero phase,
\item for each $(u,v)\in E_G$, draw a Hadamard gate connecting $u$ and $v$,
\item connect all the vertices in $I_G$ to the $n$ input wires in the specified order,
\item connect all the vertices in $O_G$ to the $m$ output wires in the specified order,
\item to each vertex $u\in V_G\setminus O_G$ with $\lambda_G(P,\alpha)$, attach the measurement angle $\alpha$ in the plane $P$, as specified in Table~\ref{tab:measurement-planes},
\item precompose each input $v\in I_G$ with the translation of the monoid element $\ell^i_G(v)$, as specified in Table~\ref{tab:local-cliffords},
\item postcompose each output $w\in O_G$ with the translation of the monoid element $\ell^o_G(w)$, as specified in Table~\ref{tab:local-cliffords}.
\end{enumerate}
Graphically, the action of $D$ on a generator $G:n\rightarrow m$ is depicted by the following equation between terms (note that the inputs $I$ and the outputs $O$ need not be disjoint):
\begin{equation*}
\scalebox{.8}{\input{journal-figures/mbqclc-to-zx.tikz}}.
\end{equation*}

The layered point of view allows for a rather succinct statement of the following result.
\begin{proposition}[Lemma~4.1 in~\cite{thereandback}]\label{prop:mbqc-is-mbqc-lc}
We have $D(\MBQCLC) = D(\iota(\MBQC))$. In other words, every ZX-diagram that is a translation of an MBQC+LC-graph is equal to a ZX-diagram that is a translation of an MBQC-graph.
\end{proposition}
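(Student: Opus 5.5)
The inclusion $D(\iota(\MBQC))\sse D(\MBQCLC)$ is immediate, because $\iota$ is the inclusion and every MBQC-graph is an MBQC+LC-graph whose labels $\ell^i,\ell^o$ are all the empty word. The substance is the reverse inclusion. We must show that for every generator $G:n\rightarrow m$ of $\MBQCLC$ there is an MBQC-graph $G'$ with the same number of inputs and outputs and $D(G)=D(\iota(G'))$ in $\ZX$. Since $D$ is a strict monoidal functor and both layers are props, it suffices to treat generators. A general term of $\MBQCLC$ is a composite or tensor of generators. Composing or tensoring MBQC-graphs is again expressible by a single MBQC-graph: glue outputs to inputs, each identified output vertex becomes an $\XYplane$-measured vertex with an extra wire vertex, and spider fusion ensures this does not change the ZX-semantics. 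Alternatively, one can read the statement as concerning generators only.

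The plan is to remove the local Clifford labels one at a time, replacing each by additional measured vertices. First, factor $\ell^i(v)$ and $\ell^o(w)$ into the generators $\frac{r}{2},r,-\frac{r}{2},\frac{g}{2},g,-\frac{g}{2}$. We then show that a single generator attached at an output can be absorbed by a gadget. For an output $w$ with label $c\cdot\ell$, replace $w$ by a fresh chain $w-w_1-w_2$, where $w$ becomes a non-output vertex and $w_2$ becomes the new output with label $\ell$ after renaming as in Definition~\ref{def:zx-renaming}. We choose the measurement labels on $w$ and $w_1$ so that the resulting ZX-diagram equals the old one. Concretely, a Z-spider with phase $\alpha$ followed by two Hadamard edges through an $\XYplane$-measured vertex gives $H\,Z_{\beta}\,H=X_{\beta}$ by the colour-change and spider-fusion rules~\eqref{eq:zx-rules}. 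Green phases $\frac{g}{2},g,-\frac{g}{2}$ can be fused directly into the measurement of the previous vertex once it becomes non-output, using Table~\ref{tab:measurement-planes}. Red phases are produced by the Hadamard-sandwich. Any arbitrary phase can thus be obtained by a chain of length two per generator. Inputs are handled symmetrically: prepend a fresh input vertex $u'$ connected to $u$, exactly as in Case~2 of Definition~\ref{def:local-comp-mbqc}, where $(\XYplane,0)$ measurements realise Hadamards.

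The key steps, in order, are as follows:
\begin{enumerate}
\item Prove the single-step ZX identities. A vertex measured in $(\XYplane,\alpha)$ adjacent via one Hadamard edge to a vertex $v$ implements $H Z_{\alpha}$ or $Z_{\alpha}H$, up to the scalar conventions that apply throughout.
\item Use them to show that each of the six generators at an input or output can be traded for a short chain of measured vertices.
\item Induct on the total length of all labels $\sum_v|\ell^i(v)|+\sum_w|\ell^o(w)|$. Each step strictly decreases this length and preserves $D$. The terminal graph has empty labels and hence lies in $\iota(\MBQC)$.
\end{enumerate}
Graph-theoretic conditions such as gflow need not be preserved, since the statement only concerns the image under $D$.

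The main obstacle will be Step~1. We must check carefully that the measurement effects in Table~\ref{tab:measurement-planes} compose with the Hadamard edges to give exactly the required local Clifford, including the orientation of $\frac{\pi}{2}$ signs and global scalars. If this is fiddly, the first thing to try is to absorb every label into the form $H Z_{\alpha}H Z_{\beta}$. Every element of $\LC$ is a single-qubit Clifford, and every single-qubit unitary is of the form $Z_{\gamma}X_{\beta}Z_{\alpha}$ by Euler decomposition. The chain $v-w_1-w_2$ with $\XYplane$ measurements then gives $Z$ rotations interleaved with Hadamards, so it suffices to express $X_{\beta}=HZ_{\beta}H$. This yields a uniform gadget of fixed length per input or output, avoiding case analysis over the six generators.
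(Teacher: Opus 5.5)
Your argument is correct and is essentially the one the paper relies on. The paper gives no proof of its own but cites Lemma~4.1 of~\cite{thereandback}, which likewise absorbs the local Cliffords on inputs and outputs into chains of fresh XY-measured vertices, using spider (un)fusion and the colour-change rule $HZ_\beta H=X_\beta$; it additionally tracks gflow, which, as you note, is irrelevant here. Two minor remarks: reducing to generators already suffices by functoriality of $D$ and $\iota$, so the graph-glueing aside is unnecessary; and your fallback uniform gadget cannot be the chain $v-w_1-w_2$, which only realises $Z_aX_b$ (16 of the 24 single-qubit Cliffords), so it needs a longer chain such as $w-w_1-w_2-w_3-w_4$, which realises $Z_aX_bZ_c$.
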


The main observations about the translation $D:\MBQCLC\rightarrow\ZX$ are that all the 2-cells preserve the semantics of the translation as well as the existence of gflow, which we record in the following two theorems. They can be seen as high-level summaries of the results presented in Sections~3.1, 4.2 and~4.3 of~\cite{thereandback}.

\begin{theorem}[Soundness]\label{thm:zx-soundness}
Let $G,G':n\rightarrow m$ be two terms in $\MBQCLC$ such that there is a 2-cell $\eta:G\rightarrow G'$. Then $D(G)=D(G')$ in $\ZX$.
\end{theorem}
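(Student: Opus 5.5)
Since 2-terms are generated from the generating 2-cells of Figure~\ref{fig:twocells-eqns-mbqclc} by identities, vertical composition $;$, monoidal product $\otimes$ and horizontal composition $*$, I will argue by induction on the structure of the 2-term $\eta:G\rightarrow G'$. The statement is most naturally formulated for arbitrary terms of $\MBQCLC$ (composites of graph generators), with $D$ extended functorially; the case of generators $G,G':n\rightarrow m$ then follows as a special instance. For the inductive cases, identities give $D(G)=D(G)$ trivially. If $\eta=\alpha;\beta$ with $\alpha:G\rightarrow G''$ and $\beta:G''\rightarrow G'$, transitivity of equality in $\ZX$ gives $D(G)=D(G'')=D(G')$. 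For $\alpha\otimes\beta$ and $\alpha*\beta$, I use that $D$ is a strict monoidal functor (the structural equations of Figures~\ref{fig:structural-twocells-functors-int} and~\ref{fig:structural-twocells-functors-ext}). Hence $D(t_1\otimes t_2)=D(t_1)\otimes D(t_2)$ and $D(t_1;t_2)=D(t_1);D(t_2)$, and the equalities for the components combine by congruence in $\ZX$. The 2-equations play no role here, because we only need existence of an equality, not equality of 2-cells.

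The substance is the base case. Here $\eta$ is one of the generating 2-cells of Figure~\ref{fig:twocells-eqns-mbqclc}: local complementation $G\rightarrow G\star u$ (Definition~\ref{def:local-comp-mbqc}), pivot $G\rightarrow G\wedge uv$ (Definition~\ref{def:pivot-mbqc}), vertex removal $G\rightarrow G\setminus u$ (Definition~\ref{def:vertex-removal}), renaming (Definition~\ref{def:zx-renaming}), and possibly their reverses. For each, I must show that the ZX-diagram produced by the seven-step description of $D$ is preserved by the ZX rules~\eqref{eq:zx-rules}. The cases are handled as follows.

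\emph{Renaming} is immediate: $D$ never inspects vertex names, only the graph structure and the orderings of $I$ and $O$.

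\emph{Local complementation, case~1} ($u\notin I_G$) is the key step. I will invoke the standard ZX local complementation lemma from~\cite{thereandback}. It says that a graph state with a Hadamard-edge neighbourhood of $u$ equals the locally complemented graph state, with an $X$-phase $-\frac{\pi}{2}$ (red) on $u$ and $Z$-phases $\frac{\pi}{2}$ (green) on each neighbour. The correction on $u$ is absorbed by the rules in Table~\ref{tab:measurement-planes}. These are checked plane by plane using spider fusion and colour change, giving $\XYplane\mapsto\XZplane$ with $\frac{\pi}{2}-\alpha$, and so on. The corrections on the neighbours are absorbed either into their measurement effects or into the output labels $\ell^o$ via Table~\ref{tab:local-cliffords}. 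This amounts to a finite table of ZX calculations, one per plane, mirroring the clauses of Definition~\ref{def:local-comp-mbqc}. The sign and phase bookkeeping here is the main obstacle, and I would verify it plane by plane.

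\emph{Local complementation, case~2} ($u\in I_G$) reduces to case~1. The extra vertex $u'$ measured at $(\XYplane,0)$, joined to $u$ by a Hadamard edge and carrying the input label $\frac{g}{2}\cdot\frac{r}{2}\cdot\frac{g}{2}$, translates to a Hadamard decomposed as $\frac{g}{2}\frac{r}{2}\frac{g}{2}$ composed with a Hadamard edge. This equals the identity wire by the Euler decomposition of $H$ and the Hadamard self-inverse rule, so $D(G)=D(G')$, after which case~1 applies.

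\emph{Pivot} follows from three applications of the local complementation case.

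\emph{Vertex removal} is handled by the $\pi$-copy rule. A $\YZplane$ or $\XZplane$ measurement with angle $a\pi$ translates to a Pauli effect on a phase-free Z-spider. After colour change this becomes an X-spider state that copies through the Hadamard edges, depositing $Z$-phase $a\pi$ on each neighbour. That phase is absorbed as in the clauses of Definition~\ref{def:vertex-removal}: $g^a$ on outputs, negated angle in the $\XZplane$ and $\YZplane$ planes, $+a\pi$ in the $\XYplane$ plane. Global scalars are ignored, as is customary for the equations~\eqref{eq:zx-rules}.

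Finally, if the reverse 2-cells are also generators, symmetry of equality handles them.
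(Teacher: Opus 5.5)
Your proposal is correct and takes the same route as the paper. The paper gives no separate argument: it presents the theorem as a summary of the soundness results in Sections~3.1, 4.2 and~4.3 of~\cite{thereandback} for local complementation (including the auxiliary-vertex trick for inputs), pivoting and vertex removal, which are exactly your base cases, and your induction over the 2-term constructors is the routine closure step the paper leaves implicit (note only that the $\otimes$ of 2-terms is the external tensor, so for 2-cells between internal terms $n\rightarrow m$ of $\MBQCLC$ the relevant closure is under $;$, whiskering and the internal tensor, which your strict-monoidality argument for $D$ covers in the same way).
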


\begin{theorem}[Preservation of gflow]\label{thm:pres-gfow}
Let $G,G':n\rightarrow m$ be two terms in $\MBQCLC$ such that there is a 2-cell $\eta:G\rightarrow G'$. Then, if $G$ has a gflow, then so does $G'$.
\end{theorem}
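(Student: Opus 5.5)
The proof will go by induction on the construction of 2-terms (Definition~\ref{def:2terms}). The idea is to reduce everything to the generating 2-cells of Figure~\ref{fig:twocells-eqns-mbqclc}, where the graph-theoretic results of~\cite{thereandback} apply. Since $\MBQCLC$ is a prop whose generators are MBQC+LC-graphs, I first fix what ``$G$ has a gflow'' means for an arbitrary term $G:n\rightarrow m$. Every such term evaluates to a single MBQC+LC-graph: $\otimes$ is disjoint union (after renaming apart), and $;$ glues the outputs of the first graph to the inputs of the second, merging the local Clifford labels into the monoid $\LC$. A term has a gflow when this graph does, and this is well defined up to the 1-equations of the layer. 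Existence of gflow is invariant under renaming (Definition~\ref{def:zx-renaming}), since gflow only refers to the combinatorial structure and the labels $\lambda$. The local Clifford labels $\ell^i,\ell^o$ play no role, because gflow is defined on the underlying MBQC-graph.

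\emph{Generating 2-cells.} I will check each generator in turn.
\begin{itemize}
\item Local complementation $\star u$ in Case~1 of Definition~\ref{def:local-comp-mbqc} ($u\notin I_G$). Here I invoke the result of~\cite[Section~4.2]{thereandback}: local complementation about a non-input vertex, with the plane/angle updates listed in the definition, preserves the existence of gflow.
\item Local complementation in Case~2 ($u\in I_G$). I first show that the auxiliary graph $G'$ has a gflow whenever $G$ does. The new input $u'$ is measured in $\XYplane$ and its only neighbour is $u$. So I extend the gflow by $g(u')=\{u\}$ and place $u'$ strictly before every vertex in the order. After that, Case~1 applies.
\item Pivot. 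This is a composite of three local complementations (Definition~\ref{def:pivot-mbqc}), so it follows from the previous two cases.
\item Vertex removal. This is the Pauli-vertex removal of~\cite[Section~4.3]{thereandback}: a non-input, non-output vertex measured in $\YZplane$ or $\XZplane$ with angle in $\{0,\pi\}$ can be deleted while keeping a gflow.
\item Renaming. This is immediate.
\end{itemize}

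\emph{Closure under the 2-term rules.} The cases $\id_t$ and vertical composition $\alpha;\beta$ are trivial, by transitivity of the implication. For $\alpha\otimes\beta$, a gflow on a disjoint union is exactly a pair of gflows, one on each component. Hence the properties ``$t_1\otimes t_2$ has a gflow'' and ``both $t_i$ have gflows'' coincide, and the inductive hypotheses suffice.

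\emph{Horizontal composition (main obstacle).} The remaining case is $\alpha*\beta:(t_1;t_2,s_1;s_2)$. Here a gflow of the glued graph $t_1;t_2$ need not restrict to gflows of $t_1$ and $t_2$ separately, so I cannot apply the inductive hypothesis to each factor directly. My plan is as follows.
\begin{enumerate}
\item Use the structural interchange 2-equations (Figure~\ref{fig:str-2eqns}) to write $\alpha*\beta=(\alpha*\id_{t_2});(\id_{s_1}*\beta)$. This reduces the case to whiskerings of a single generating 2-cell by an arbitrary term on either side, or in parallel with an identity.
\item Show that each whiskered generator coincides with the same generating rewrite applied directly to the glued graph, possibly followed by a renaming.
\item Conclude by the generator cases above.
\end{enumerate}
The delicate point in step 2 is that the neighbourhood of an output vertex of $t_1$ extends into $t_2$ after gluing. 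So local complementation on the small graph and on the glued graph differ, and I must check that the definition of the rewrites on MBQC+LC-graphs, in particular the bookkeeping of output Clifford labels such as $-\frac{g}{2}$, exactly absorbs this difference. I expect the 2-equations of Figure~\ref{fig:twocells-eqns-mbqclc} to be set up precisely so that this holds. If they are not, the fallback is to prove directly that any vertex at which a generating rewrite acts can be treated as a non-output vertex of the composite graph. The gflow-preservation lemmas of~\cite{thereandback} are stated for arbitrary open graphs, so they would then apply to the glued graph as a whole.
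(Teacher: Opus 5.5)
Your generator-by-generator check, together with closure under vertical composition, is exactly what the paper relies on. The paper itself gives no argument beyond attributing the statement to Sections~3.1, 4.2 and~4.3 of~\cite{thereandback}. Your extension $g(u')=\{u\}$ in Case~2 is correct, since no $g(v)$ can contain the old input $u$.

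The gap is the extension to composite terms. You introduce it yourself, and it cannot be completed as planned.

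\emph{The notion is ill-defined.} A composite term does not evaluate to an MBQC+LC-graph. The labels $\ell^i,\ell^o$ live only on inputs and outputs, so the Clifford $\ell^o(u)\cdot\ell^i(i)$ on a glued vertex must be resolved somehow, and the choice changes the answer. For example:
\begin{itemize}
\item Let $t_1:1\to 2$ consist of a vertex that is both input and output, plus an isolated output $u\notin I$.
\item Let $t_2:2\to 1$ consist of a vertex that is both input and output, plus an isolated input $i$ measured in $\YZplane$ at angle $\theta$.
\item Then $t_1;t_2$ glues to an isolated $\YZplane$-vertex, which has a gflow ($g(u)=\{u\}$).
\item Local complementation about $u$ in $t_1$ only sets $\ell^o(u)=\frac{r}{2}$.
\item If you resolve this with new vertices via $X(\frac{\pi}{2})=H\,Z(\frac{\pi}{2})\,H$, you get the path with edges $uw$ and $wi$, where $u,w$ are in $\XYplane$ and $i$ is in $\YZplane$. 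This graph has no gflow: the conditions force $u\prec w\prec i$, which leaves no admissible $g(i)$.
\item If instead you absorb the Clifford into the measurement of $i$, you get an isolated $\YZplane$-vertex again, which has a gflow.
\end{itemize}
So ``a term has a gflow'' is not well defined. Under the new-vertices convention, the whiskered statement is actually false.

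\emph{Step~2 fails in general.} Whiskering local complementation about an output $u$ of $t_1$ leaves the $t_2$-neighbours of $u$ untouched. Local complementation about $u$ in the glued graph instead complements every edge between its $t_1$- and $t_2$-neighbours and relabels its measurement. The two graphs are not related by renaming, and your fallback applies the lemmas of~\cite{thereandback} to precisely this other graph. Separately, $\alpha\otimes\beta$ in Definition~\ref{def:2terms} is the external tensor, not a disjoint union inside the layer.

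The fix is to avoid gflow for composites altogether. This is also how the paper reads the statement: gflow is only defined for MBQC+LC-graphs, so $G$ and $G'$ are single graphs. Run your induction on 2-terms with the following invariant instead: a 2-term $t\Rightarrow s$ leaves the wiring of $t$ unchanged, and replaces each graph-box $H$ by a graph reachable from $H$ through a finite chain of generating 2-cells of Figure~\ref{fig:twocells-eqns-mbqclc}. This holds for generating 2-cells and identities, and it is stable under $;$, $\otimes$ and $*$. It relies on the layer carrying only structural 1-equations, so that the boxes of parallel terms correspond. For $t=G$ and $s=G'$ single graphs, the invariant gives a chain of generating 2-cells from $G$ to $G'$. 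Your generator cases plus transitivity then conclude the proof, and the horizontal-composition obstacle never arises.
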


Compositions of the generating 2-cells already allow for some simplification of MBQC+LC-graphs (and hence of the corresponding measurement patterns). For example, by combining the vertex removal 2-cells with the local complementation and pivoting 2-cells, one can show that any non-input Clifford vertices\footnote{Vertices whose measurement angle is an integer multiple of $\frac{\pi}{2}$.} can be removed (\cite[Theorem~4.12]{thereandback}). One could, of course, add more 2-cells. For example, we could add 2-cells that transform local Clifford labels (given by the functions $\ell^i$ and $\ell^o$) into measurements, hence transforming an MBQC+LC-graph into an MBQC-graph one step at a time. This, in turn, could be used to prove Proposition~\ref{prop:mbqc-is-mbqc-lc} within the layered theory.

Theorems~\ref{thm:zx-soundness} and~\ref{thm:pres-gfow} give a general pattern on how to add new 2-cells to $\MBQCLC$. Namely, one needs to choose the 2-cells in such a way that the two theorems remain valid. In other words, one has to check that the graph is changed in such a way that both preserves the semantics under the functor $D$ and the existence of gflow.

Likewise, we could add the remaining 2-cells needed for {\em circuit extraction} -- the process of transforming any MBQC-graph with a gflow into a semantically equivalent circuit. If an MBQC-graph has gflow, there is an efficient algorithm for rewriting it step-by-step into an equivalent circuit form. This shows that the image of graphs (MBQC or MBQC+LC) with a gflow is, in fact, contained in $\QCirc$. In fact, the whole algorithm can be presented within a layered theory. For example, the following equation appears as Step~1 of the extraction procedure (\cite[p.~42]{thereandback}):
\begin{equation*}
\scalebox{1}{\input{journal-figures/example-unfuse-gates.tikz}}.
\end{equation*}
We conclude this case study by noting that the informal boundary between the ``unextracted'' and ``extracted'' parts of the diagram now becomes a formal term within the layered theory:
\begin{equation*}
\scalebox{1}{\input{journal-figures/circuit-extraction-in-layers.tikz}},
\end{equation*}
and each step of the algorithm consists of applying a 2-cell on the left-hand side ($\MBQCLC$) and moving a part of the graph to the right-hand side ($\QCirc$).

%% file: figures/journal-figures/z-alpha.tikz
\begin{tikzpicture}
	\begin{pgfonlayer}{nodelayer}
		\node [style=Z phase dot] (0) at (0, 0) {$\alpha$};
		\node [style=none] (1) at (-1, 0) {};
		\node [style=none] (2) at (1, 0) {};
	\end{pgfonlayer}
	\begin{pgfonlayer}{edgelayer}
		\draw (1.center) to (0);
		\draw (0) to (2.center);
	\end{pgfonlayer}
\end{tikzpicture}

%% file: figures/journal-figures/hadamard.tikz
\begin{tikzpicture}
	\begin{pgfonlayer}{nodelayer}
		\node [style=none] (0) at (-1, 0) {};
		\node [style=none] (1) at (1, 0) {};
		\node [style=hadamard] (2) at (0, 0) {};
	\end{pgfonlayer}
	\begin{pgfonlayer}{edgelayer}
		\draw (0.center) to (2);
		\draw (2) to (1.center);
	\end{pgfonlayer}
\end{tikzpicture}

%% file: tex/tocl-article/examples/ccs.tex
We construct a layered monoidal theory capturing two kinds of semantics for a fragment of the calculus of communicating systems (CCS)~\cite{milner} -- {\em reduction semantics} and {\em labelled transition system} (LTS) semantics -- as well as a functorial translation from a subcategory of the former to the latter. Intuitively, the LTS semantics may be seen as a lower level implementation of the concurrent processes described more coarsely by the reduction semantics. We use the layered monoidal theory to show soundness and completeness of the reduction semantics with respect to a fragment of the LTS semantics (Corollary~\ref{cor:ccs-soundness-completeness}). All the results we obtain here are standard (cf.~Corollaries~\ref{cor:lts-simulation} and~\ref{cor:ccs-soundness-completeness}); however, the treatment via monoidal theories and string diagrams is novel. The proofs of the statements here can be found in Appendix~\ref{sec:ccs-proofs}.

The calculus of communicating systems~\cite{milner} is widely used to reason about concurrent programs and processes. One of its main features is {\em synchronisation} of two processes, modelled as a {\em reduction} or a {\em handshake}. Here we consider a restricted version of CCS with only action prefixing and parallel composition, and two ways to give semantics to the CCS expressions: the reduction semantics (Definition~\ref{def:reduction-semantics}) operates at a higher level, while the LTS semantics (Definition~\ref{def:lts-semantics}) is somewhat more fine-grained. We loosely follow the diagrams from the talk by Krivine~\cite{krivine-talk}.

Let us fix a set $A$, whose elements are called the {\em action names}. We define the sets $\bar A\coloneqq\{\bar a : a\in A\}$ and $\Act\coloneqq A\cup\bar A\cup\{\tau\}$, the latter called the set of {\em actions}, and $\tau$ called the {\em silent action}. The set of {\em processes} is defined recursively as follows, where $x$ ranges over $\Act$: \begin{tabular}{ c c | c | c}
  $P\Coloneqq$ & $0$ & $x.P$ & $P\parallel P$.
\end{tabular}

\begin{definition}[Congruence]
The {\em congruence} is the equivalence relation $\sim$ on the set of processes generated by:
\begin{align*}
P\parallel Q &\sim Q\parallel P, & (P\parallel Q)\parallel R &\sim P\parallel (Q\parallel R), & 0\parallel P &\sim P, & \text{if } P\sim P' \text{ and } Q\sim Q', &\text{ then } P\parallel Q \sim P'\parallel Q'.
\end{align*}
\end{definition}

\begin{definition}[Reduction semantics]\label{def:reduction-semantics}
A {\em rewrite rule} in {\em reduction semantics} is an ordered pair of processes, which we write as $P\rightarrow Q$, generated by the following three production rules, where $a\in A$:
\begin{center}
  \begin{prooftree}
    \AxiomC{\phantom{$P\rightarrow Q$}}
    \RightLabel{\customlabel{red-sem:tau}{\texttt s}}
    \UnaryInfC{$\tau.P\rightarrow P$}
    \DisplayProof
    \AxiomC{$P\rightarrow Q$}
    \RightLabel{\customlabel{red-sem:p}{\texttt p}}
    \UnaryInfC{$P\parallel R\rightarrow Q\parallel R$}
    \DisplayProof
    \AxiomC{\phantom{$P\rightarrow Q$}}
    \RightLabel{\customlabel{red-sem:r}{\texttt r}}
    \UnaryInfC{$a.P\parallel\bar a.Q\rightarrow P\parallel Q$}
    \DisplayProof
    \AxiomC{$P\rightarrow Q$}
    \AxiomC{$P\sim P'$}
    \AxiomC{$Q\sim Q'$}
    \RightLabel{\customlabel{red-sem:c}{\texttt c}}
    \TrinaryInfC{$P'\rightarrow Q'$}
  \end{prooftree}
\end{center}
\end{definition}
In other words, the rewrite rules are parallel compositions~\ref{red-sem:p} of either the silent action~\ref{red-sem:tau} or the {\em reduction}~\ref{red-sem:r}, up to the congruence~\ref{red-sem:c}. For instance, we can derive the following rewrite rule, where $a\in A$ and $x\in\Act$:
\begin{equation}\label{eqn:CCS-derivation}
a.0\parallel (x.0\parallel\bar a.0)\rightarrow 0\parallel (x.0\parallel 0).
\end{equation}

\begin{definition}[LTS semantics]\label{def:lts-semantics}
A {\em labelled transition} is a triple $(P,x,Q)$, written as $P\xrightarrow x Q$, where $P$ and $Q$ are processes and $x\in\Act$, generated by the production rules below, where $a\in A$:
\begin{center}
  \begin{prooftree}
    \AxiomC{\phantom{$P\xrightarrow x Q$}}
    \RightLabel{\customlabel{lts-sem:a}{\texttt{a}}}
    \UnaryInfC{$x.P\xrightarrow x P$}
    \DisplayProof
    \AxiomC{$P\xrightarrow x Q$}
    \RightLabel{\customlabel{lts-sem:pl}{\texttt{pl}}}
    \UnaryInfC{$R\parallel P\xrightarrow x R\parallel Q$}
    \DisplayProof
    \AxiomC{$P\xrightarrow x Q$}
    \RightLabel{\customlabel{lts-sem:pr}{\texttt{pr}}}
    \UnaryInfC{$P\parallel R\xrightarrow x Q\parallel R$}
    \DisplayProof
    \AxiomC{$P\xrightarrow a P'$}
    \AxiomC{$Q\xrightarrow{\bar a} Q'$}
    \RightLabel{\customlabel{lts-sem:r1}{\texttt{r1}}}
    \BinaryInfC{$P\parallel Q\xrightarrow{\tau} P'\parallel Q'$}
    \DisplayProof
    \AxiomC{$P\xrightarrow{\bar a} P'$}
    \AxiomC{$Q\xrightarrow a Q'$}
    \RightLabel{\customlabel{lts-sem:r2}{\texttt{r2}}}
    \BinaryInfC{$P\parallel Q\xrightarrow{\tau} P'\parallel Q'$}
  \end{prooftree}
\end{center}
\end{definition}

We are now ready to define the layered monoidal theory for the two semantics of CCS defined above, which we refer to as CCS. CCS has three layers: $\Red$ capturing reduction semantics, $\LTS$ capturing LTS semantics, and $\Comp(\Red)$ capturing those reduction semantics rules where some {\em computation} occurs, that is, the rules without the congruence (i.e.~only the production rules~\ref{red-sem:tau}, \ref{red-sem:r} and~\ref{red-sem:p}). CCS has the following shape: \scalebox{1}{\input{journal-figures/ccs-layers.tikz}}.

The colours of the monoidal signature $\Red$ are given by the set containing three symbols for every process: $P$, $P\uparrow$ and $P\quest$. We refer to the plain process symbols as the {\em active fragment}, to the process symbols with $\uparrow$ as the {\em silent fragment}, and to the processes with $\quest$  as the {\em subsilent fragment}. The intuition behind the fragments is as follows:
\begin{itemize}
\item for the processes $P$ in the active fragment, no computation has yet occurred -- the only transformations that keep a process inside the active fragment correspond to the congruence,
\item for the processes $P\uparrow$ in the silent fragment, a computation (silent action, reduction or parallel composition with a process in the silent fragment) has occurred, and no further reductions or silent actions are possible
\item the processes $P\quest$ in the subsilent fragment are needed in order to capture the congruence in the silent fragment -- each process in the subsilent fragment needs to be eventually merged with a process in the silent fragment in order to obtain a valid rewrite rule (see Proposition~\ref{prop:red-string-corresp}).
\end{itemize}

\noindent\begin{minipage}{0.41\textwidth}
The generators of $\Red$ are given by: (1) the generators~\eqref{generators-red}, where $a\in A$, capturing the production rules~\ref{red-sem:tau}, \ref{red-sem:r} and~\ref{red-sem:p} of Definition~\ref{def:reduction-semantics}, and (2) the {\em structural generators}, consisting of the symmetry and the generators in Figure~\ref{fig:ccs-structural-generators}.
\vspace{3pt}
\end{minipage}
\begin{minipage}{0.59\textwidth}
\vspace{-10pt}
\begin{equation}\label{generators-red}
\scalebox{.8}{\input{journal-figures/generators-CSS.tikz}}
\end{equation}
\begin{equation}\label{generator-ccs-splitting}
\scalebox{.8}{\input{journal-figures/generator-ccs-splitting.tikz}}
\end{equation}
\end{minipage}

\begin{figure}
    \centering
    \scalebox{0.9}{
        \input{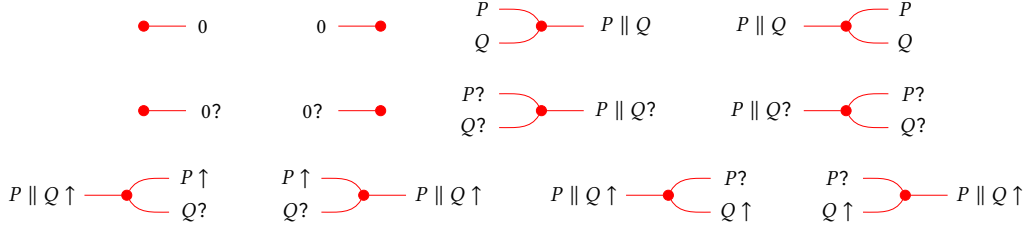}
    }
    \caption{The structural generators of $\Red$.\label{fig:ccs-structural-generators}}
\end{figure}

The monoidal signature $\Comp(\Red)$ is given by restricting the signature $\Red$ to the active and the silent fragments on colours, and the only generators are given by~\eqref{generators-red} and the ``splitting'' structural generator~\eqref{generator-ccs-splitting}. Note that, in particular, $\Comp(\Red)$ does not have the symmetry generators.

The monoidal signature $\LTS$ has as colours the symbols $P$ and $P\uparrow x$ for every process $P$ and an action $x\in\Act$. Similarly to $\Red$, we refer to the plain process symbols as the {\em active fragment}. The process symbols with $\uparrow x$ are referred to as the {\em pending fragment}: one may think of $P\uparrow x$ as a process with a ``pending'' action $x$. In this case, we call the {\em silent fragment} the colours of the form $P\uparrow\tau$, i.e.~a pair of a process and the silent action. The generators of $\LTS$ consist of the ``splitting'' structural generator~\eqref{generator-ccs-splitting}, together with the following generators capturing the production rules in Definition~\ref{def:lts-semantics}, where $a\in A$ and $x\in\Act$:
\begin{equation}\label{generators-LTS}
\scalebox{.8}{\input{journal-figures/generators-LTS.tikz}}.
\end{equation}

The equations of the layered theory for CCS are given in Figure~\ref{fig:ccs-layered-theory}, in addition to which the usual identities of a symmetric monoidal theory (Definition~\ref{def:symm-mon-thy}) hold in $\Red$. The reason for calling the generators in Figure~\ref{fig:ccs-structural-generators} {\em structural} is that they capture the congruence by casting it as (non-strict) symmetric monoidal structure. The first set of equations in Figure~\ref{fig:ccs-layered-theory} says that the structural generators in $\Red$ are isomorphisms. One then obtains the congruence relation by considering isomorphisms which have exactly one input and one output wire such that both are in the same fragment (active, silent or subsilent). We make this precise in the following lemma.

\begin{figure}
    \centering
    \scalebox{0.9}{
        \input{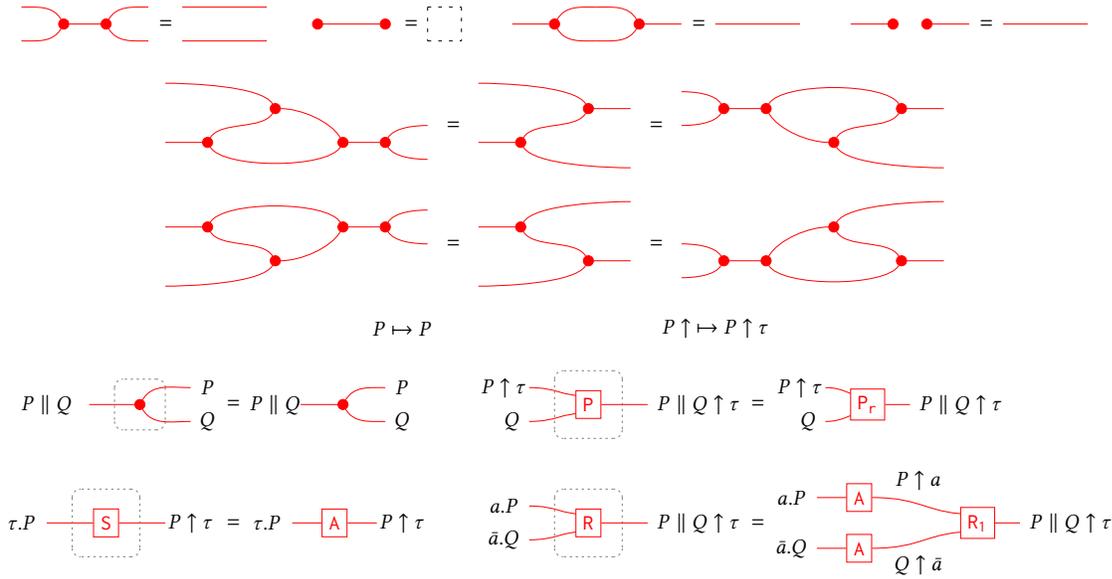}
    }
    \caption{Layered theory for CCS. The equations in the first four rows hold for any structural generators in $\Red$ for which the composition is defined. The remaining equations define the functor $\Comp(\Red)\rightarrow\LTS$, where the arrows $\mapsto$ denote the 0-equations. The functor $\Comp(\Red)\hookrightarrow\Red$ is defined as the inclusion, whose defining equations we have, therefore, omitted from the figure.\label{fig:ccs-layered-theory}}
\end{figure}

\begin{lemma}\label{lma:iso-congruence}
Let $P$ and $Q$ be processes. The following are equivalent, where all the isomorphisms are in $\Red$:
\begin{align*}
\text{(1) } P &\sim Q, & \text{(2) } P &\xrightarrow{\sim}Q, & \text{(3) } {P\quest} &\xrightarrow{\sim}Q\quest, & \text{(4) } {P\uparrow} &\xrightarrow{\sim}Q\uparrow.
\end{align*}
\end{lemma}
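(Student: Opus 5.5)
I will prove the cycle of implications $(1)\Rightarrow(2)$, $(2)\Rightarrow(1)$, and separately $(1)\Leftrightarrow(3)$ and $(1)\Leftrightarrow(4)$. The fragment-specific structural generators of Figure~\ref{fig:ccs-structural-generators} mirror one another across the active, silent and subsilent fragments. So the arguments for (2), (3) and (4) are the same, and I will write them once for a generic fragment, indicated by a decoration $(-)^\dagger\in\{(-),(-)\uparrow,(-)\quest\}$.

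For $(1)\Rightarrow(2),(3),(4)$, I will induct on the derivation of $P\sim Q$ as the least congruence generated by commutativity, associativity and unitality of $\parallel$. Each generating clause corresponds to a structural generator, which the first rows of Figure~\ref{fig:ccs-layered-theory} make invertible: the commutativity clause $P\parallel Q\sim Q\parallel P$ corresponds to the symmetry conjugated by the splitting/merging generators, associativity to the associator-type generators, and $0\parallel P\sim P$ to the unitor-type generator. Each of these yields an isomorphism $(P\parallel Q)^\dagger\to (Q\parallel P)^\dagger$, and so on. Reflexivity is handled by identities, symmetry by inverses, and transitivity by composition. For the congruence clause, given $P\xrightarrow{\sim}P'$ and $Q\xrightarrow{\sim}Q'$, I split $(P\parallel Q)^\dagger$ into $P^\dagger$ and $Q^\dagger$, apply the two isomorphisms in parallel, and merge back. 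This is again an isomorphism, since the splitting generator is invertible. The one delicate point is the silent fragment: splitting $(P\parallel Q)\uparrow$ may produce a silent and a subsilent wire, and I will use whichever splitting generator the figure provides.

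For the converse, I define an invariant $\llbracket-\rrbracket$ from colours to multisets of prefix-processes. It sends $P$, $P\uparrow$ and $P\quest$ to the multiset of non-$0$ top-level $\parallel$-components of $P$ (prefixed processes $x.R$, compared up to $\sim$ inside, by recursion on size), and it remembers the fragment of each wire. It extends to lists of colours by multiset union of fragment-tagged multisets. It is standard that $P\sim Q$ iff $P$ and $Q$ have the same component multiset up to $\sim$. I then show by induction on terms that every term of $\Red$ built only from structural generators, symmetries, identities, $;$ and $\otimes$ preserves the total tagged invariant. This is checked generator by generator, and the symmetric monoidal structure trivially preserves multiset union. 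An isomorphism $P^\dagger\xrightarrow{\sim}Q^\dagger$ need not be built only from structural generators, since the inverse could in principle involve \texttt{s}/\texttt{r}/\texttt{p} generators. So I instead use a second invariant, the total number of prefixes. This number strictly decreases along the generators~\eqref{generators-red} and is preserved by structural ones. Since $f;f^{-1}=\id$, and equations of the theory are sort-preserving, no equation can cancel a computation generator. To make this precise I will define a monoid homomorphism from terms to $(\N,+)$ counting computation generators, show it is invariant under all equations in Figure~\ref{fig:ccs-layered-theory} and the symmetric monoidal identities, and conclude that both $f$ and $f^{-1}$ have count $0$. Hence $f$ is equal to a purely structural term, and the invariant gives $P\sim Q$.

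I expect the main obstacle to be this last step: verifying that the counting map is well defined on equivalence classes, i.e.\ that every equation in the theory has equal counts on both sides. If some equation in the figure mixes computation and structural generators in a way that breaks this, I would fall back on a semantic interpretation of $\Red$ into a category of multisets with partial ``consumption'' maps, in which computation generators are non-invertible. Isomorphisms must then go to bijections preserving the multiset, which again yields $P\sim Q$.
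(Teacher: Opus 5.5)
Your overall route is sound, and in one respect it is more complete than the paper's. The paper only observes that single-wire terms built from structural generators change a process exactly by adding or removing $0$, rebracketing and reordering. It illustrates this with associativity in the silent fragment, where the three choices of which component stays silent are identified by the theory. It then tacitly reads ``isomorphism in $\Red$'' as ``structural term''. Your count of computation-generator occurrences is what justifies that reading. Every equation available in $\Red$ has the same computation generators on both sides (counting inside boxes): the symmetric monoidal identities, including naturality of the symmetry; the structural-isomorphism rows of Figure~\ref{fig:ccs-layered-theory}; and the equations of the inclusion $\Comp(\Red)\hookrightarrow\Red$. So the count is well defined, $f;f^{-1}=\id$ forces it to vanish on $f$, and you do not need the semantic fallback. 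For $(1)\Rightarrow(2),(3),(4)$, make the induction simultaneous over the three fragments, since the $\parallel$-congruence case for $(P\parallel Q)\uparrow$ uses the subsilent hypothesis for $Q$.

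The invariant for the converse, however, needs three corrections.

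First, the congruence $\sim$ here is generated only by the three axioms and closure under $\parallel$; it is not closed under prefixing. So $x.(0\parallel R)\not\sim x.R$, and indeed these are distinct atomic colours with no isomorphism between them. Comparing prefixed components ``up to $\sim$ inside'' makes your invariant too coarse, and the step ``equal invariants imply $P\sim Q$'' fails on exactly this pair. Compare top-level prefixed components syntactically instead; structural generators still preserve this finer invariant, since they never act inside a prefix.

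Second, the fragment tags are not preserved. The silent splitting $(P\parallel Q)\uparrow\to P\uparrow,Q\quest$, which you mention yourself, retags the components of $Q$. Drop the tags: the source and target of your isomorphism lie in the same fragment, so the untagged multiset is all you need.

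Third, the number of prefixes does not strictly decrease along the parallel-composition generator $R\uparrow,T\to(R\parallel T)\uparrow$. Because the prefix count is a function of the sort, it can only rule out the silent-action and reduction generators, not this one (take $T=0$). It is the occurrence count, not the prefix count, that must carry the argument.
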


Certain terms in $\Red$ and $\LTS$ correspond to the derivable rewrite rules and transitions in reduction semantics (Definition~\ref{def:reduction-semantics}) and LTS semantics (Definition~\ref{def:lts-semantics}). We begin with the correspondence with reduction semantics.
\begin{proposition}\label{prop:red-string-corresp}
Let $P$ and $Q$ be processes. There is a term in $\Red(P,{Q\uparrow})$ if and only if the rewrite rule $P\rightarrow Q$ is derivable using the production rules of reduction semantics in Definition~\ref{def:reduction-semantics}.
\end{proposition}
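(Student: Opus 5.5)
The plan is to prove the two directions separately, each by induction, using Lemma~\ref{lma:iso-congruence} to handle congruence.

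\emph{Soundness} (a term in $\Red(P,Q\uparrow)$ gives a derivable rewrite rule $P\rightarrow Q$). I would generalise the claim. For a term $t$ in $\Red$ from a list of colours $w$ to a list $w'$, I would define the process obtained by taking the parallel composition of the processes underlying the wires, keeping track of which fragment each wire lies in. The claim is then: if some input wire is active and the silent/subsilent wires are eventually merged into a single silent output, the combined rewrite is derivable. Concretely:
\begin{itemize}
\item Interpret the generators~\eqref{generators-red} as the rules \ref{red-sem:tau} and \ref{red-sem:r}, and the parallel-composition generator as \ref{red-sem:p}.
\item Interpret the structural generators as instances of $\sim$, justified by Lemma~\ref{lma:iso-congruence}, so they are absorbed by \ref{red-sem:c}.
\item Handle the symmetry via commutativity of $\parallel$.
\end{itemize}
I would first pick a normal form for terms using the symmetric monoidal structural identities: a sequential composite of layers, each a single generator tensored with identities. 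The induction then runs over the number of generators. The invariant is that exactly one silent ``computation'' occurs, because once a wire is in the silent fragment no generator producing a further reduction accepts it. So a term $P\to Q\uparrow$ contains exactly one \ref{red-sem:tau}- or \ref{red-sem:r}-generator; everything else is structural or \ref{red-sem:p}-merging.

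\emph{Completeness} (a derivable $P\rightarrow Q$ gives a term). I would induct on the derivation in Definition~\ref{def:reduction-semantics}:
\begin{itemize}
\item Case \ref{red-sem:tau} and case \ref{red-sem:r}: the corresponding generators, pre-composed with the structural isomorphism splitting $a.P\parallel\bar a.Q$ into its two components.
\item Case \ref{red-sem:p}: given $t:P\to Q\uparrow$, build $t\otimes\id_R$, route $R$ into the subsilent fragment via a structural generator, and merge it with $Q\uparrow$ to obtain $(Q\parallel R)\uparrow$.
\item Case \ref{red-sem:c}: pre-compose with the isomorphism $P'\xrightarrow{\sim}P$ and post-compose with the isomorphism $Q\uparrow\xrightarrow{\sim}Q'\uparrow$, both supplied by Lemma~\ref{lma:iso-congruence} (items (2) and (4)).
\end{itemize}

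I expect the main obstacle to be the soundness direction, specifically the bookkeeping for terms with several intermediate wires. The problem is formulating an invariant that survives arbitrary interleavings of structural generators, symmetries and subsilent wires, and that forces exactly one computation step. My first attempt would be to assign each wire a process and a fragment tag, and prove by induction on layers that the parallel composition of the active wires rewrites in at most one step to the composition of silent and subsilent wires. Subsilent wires would be allowed only together with at least one silent wire, or as material that must later merge with one. The statement would then be the special case with a single active input and a single silent output.
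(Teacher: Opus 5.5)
Your overall route matches the paper's. In one direction the production rules become recursion schemes for terms. In the other, a term $P\to Q\uparrow$ is shown to contain exactly one $\tau$- or reduction generator, and the rest is read off as instances of \ref{red-sem:p} and \ref{red-sem:c}.

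\textbf{The exactly-one-computation step.} In the term-to-derivation direction this step rests on the wrong fact, and the invariant you propose does not hold. That a silent wire is never fed into another $\tau$- or reduction generator does not give ``exactly one computation''. Splitting $\tau.0\parallel\tau.0$ and applying the $\tau$-generator to both components performs two computations, yet no silent wire re-enters a reduction. What excludes this in a term whose codomain is the single wire $Q\uparrow$ is that silent wires are never fused with each other or discarded. No generator of $\Red$ has fewer silent wires in its coarity than in its arity, so the number of silent wires is non-decreasing along the layers. Since only the $\tau$- and reduction generators create silent wires, a single silent output forces exactly one of them. This is the observation the paper's proof turns on: ``there is no way to combine the objects in the silent fragment''.

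\textbf{The invariant.} For the same reason your invariant fails. It lets the active wires rewrite ``in at most one step'' to the silent and subsilent ones while permitting ``at least one silent wire''. With two silent wires the one-step claim is false, and intermediate layers mix all three fragments in any case. The following invariant does work. Write $\langle w\rangle$ for the parallel composition of the processes underlying \emph{all} wires of a layer $w$, forgetting fragment tags. Then $P\sim\langle w\rangle$ if $w$ has no silent wire, and $P\to\langle w\rangle$ is derivable if $w$ has exactly one.
\begin{itemize}
\item Structural generators, symmetries and the parallel-composition generator preserve $\langle w\rangle$ up to $\sim$, which \ref{red-sem:c} absorbs.
\item By monotonicity, a $\tau$- or reduction generator can only fire while there is no silent wire yet. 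There, \ref{red-sem:tau} or \ref{red-sem:r}, followed by \ref{red-sem:p} and \ref{red-sem:c}, gives the step.
\end{itemize}
Once corrected this way, your layer-by-layer reading is a slightly more direct version of the paper's argument. The paper instead rearranges the term into isomorphisms, the single computation generator, and parallel-composition generators interleaved with isomorphisms.

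\textbf{The \ref{red-sem:p} case.} In the derivation-to-term direction, the cases \ref{red-sem:tau}, \ref{red-sem:r} and \ref{red-sem:c} are fine, but \ref{red-sem:p} needs repair. First, $t\otimes\id_R$ has the two wires $P,R$ as its domain, so you must precompose with the splitting generator $P\parallel R\to P,R$. Second, you should not route $R$ into the subsilent fragment. Subsilent wires arise only when a silent wire is split, in order to express congruence inside the silent fragment. A generator sending an active wire to a subsilent one would also make the parallel-composition generator pointless. That generator, which you yourself use as the counterpart of \ref{red-sem:p} in the other direction, takes $Q\uparrow$ and the active wire $R$ directly to $(Q\parallel R)\uparrow$. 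The required term is therefore: split $P\parallel R$, apply $t\otimes\id_R$, then apply this generator.
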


For LTS semantics, the connection between derivable labelled transitions and string diagrammatic terms requires a further restriction on the latter, as in this case there are no isomorphisms. For example, one can construct the term~\eqref{lts-non-derivable-term}, but the corresponding labelled transition $x.P\parallel (Q\parallel R)\xrightarrow x (P\parallel Q)\parallel R$ is not derivable in LTS semantics. This is because the string diagrammatic representation imports some associativity equations into the terms. The correct term corresponding to the labelled transition $x.P\parallel (Q\parallel R)\xrightarrow x P\parallel (Q\parallel R)$ would be~\eqref{lts-correct-derivable-term}.

\noindent\begin{minipage}{0.52\textwidth}
\begin{equation}\label{lts-non-derivable-term}
\scalebox{.8}{\input{journal-figures/lts-non-derivable-term.tikz}}
\end{equation}
\end{minipage}
\begin{minipage}{0.48\textwidth}
\begin{equation}\label{lts-correct-derivable-term}
\scalebox{.8}{\input{journal-figures/lts-correct-derivable-term.tikz}}
\end{equation}
\end{minipage}

We, therefore, need to restrict to the terms of the following form.
\begin{definition}[Standard terms]
The set of {\em standard} terms in $\LTS$ is recursively generated as follows:
\noindent\paragraph{Base case:} for every process $P$, the following terms are standard: \scalebox{.8}{\input{journal-figures/ccs-normal-form-base-case.tikz}},
\noindent\paragraph{Recursive case:} if the terms $\mathtt t$ and $\mathtt s$ are standard, then so is \scalebox{.8}{\input{journal-figures/ccs-normal-form-recursive-case.tikz}}, where $\mathtt G\in\Gen\coloneq\{\Pgenl,\Pgenr,\Rgenone,\Rgentwo\}$, whenever the composition is defined.
\end{definition}
As expected, the term~\eqref{lts-correct-derivable-term} is standard, whereas the term~\eqref{lts-non-derivable-term} is not.

\begin{proposition}\label{prop:lts-string-corresp}
Let $P$ and $Q$ be processes, and let $x\in\Act$ be an action. There is a standard term in $\LTS(P,{Q\uparrow x})$ if and only if the labelled transition $P\xrightarrow x Q$ is derivable using the production rules of the LTS semantics in Definition~\ref{def:lts-semantics}.
\end{proposition}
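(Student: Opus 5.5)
The plan is to prove the two directions separately, each by induction, relating standard terms in $\LTS$ to derivations in Definition~\ref{def:lts-semantics}. The point of the restriction to standard terms is that their recursive shape mirrors the production rules one-to-one. The base case terms correspond to rule~\ref{lts-sem:a}. The four generators in $\Gen=\{\Pgenl,\Pgenr,\Rgenone,\Rgentwo\}$ correspond to rules~\ref{lts-sem:pl}, \ref{lts-sem:pr}, \ref{lts-sem:r1} and~\ref{lts-sem:r2}, with the splitting generator~\eqref{generator-ccs-splitting} decomposing $P\parallel R$ into its two components.

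For the direction ``derivable implies term'', I will induct on the derivation of $P\xrightarrow x Q$.
\begin{itemize}
\item Rule~\ref{lts-sem:a}: take the base-case standard term for $x.P$, which has sort $(x.P, P\uparrow x)$.
\item Rules~\ref{lts-sem:pl} and~\ref{lts-sem:pr}: the induction hypothesis gives a standard term $\mathtt t : P\to Q\uparrow x$. Take the identity (base case) on $R$ as $\mathtt s$, and form the recursive-case term with $\Pgenl$ or $\Pgenr$. The typing of these generators in~\eqref{generators-LTS} should make the composite have sort $(R\parallel P, (R\parallel Q)\uparrow x)$, respectively $(P\parallel R,(Q\parallel R)\uparrow x)$.
\item Rules~\ref{lts-sem:r1} and~\ref{lts-sem:r2}: the two premises give standard terms $\mathtt t: P\to P'\uparrow a$ and $\mathtt s:Q\to Q'\uparrow\bar a$ (or the reverse). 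Combine them with $\Rgenone$ or $\Rgentwo$ to get a term into $(P'\parallel Q')\uparrow\tau$.
\end{itemize}
Each step only requires checking that the composition is defined, which is a direct reading of the colours of~\eqref{generators-LTS}.

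For the converse, I will induct on the generation of the standard term, not on arbitrary terms. This is where standardness is essential: term~\eqref{lts-non-derivable-term} shows that general terms can reassociate. Suppose a standard term has sort $(P, Q\uparrow x)$.
\begin{itemize}
\item If it is a base case, its sort forces $P=x.Q$ and the transition follows from~\ref{lts-sem:a}. Base-case identity terms have codomain an active colour rather than a pending one, so they are excluded by the sort.
\item In the recursive case, the term is the splitting of $P=P_1\parallel P_2$, then $\mathtt t\otimes\mathtt s$, then some $\mathtt G\in\Gen$. The typing of $\mathtt G$ determines the possible codomains of $\mathtt t$ and $\mathtt s$: for $\Pgenl$, $\mathtt t$ is the identity-type term on $P_1$ and $\mathtt s$ has sort $(P_2,Q_2\uparrow x)$, and symmetrically for the others.
\end{itemize}
The induction hypothesis applied to the subterms with pending codomain, followed by the matching LTS rule, yields $P\xrightarrow x Q$.

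The main obstacle is the inversion step in the converse. I need to know that a standard term of a given sort decomposes uniquely into the recursive-case shape, with subterms of the sorts forced by $\mathtt G$. Since standard terms are defined syntactically but we reason up to the term congruence, I would first fix representatives: define the set of standard terms as literal terms and show the claim for any representative. Existence of a standard term of a given sort then involves no quotienting, because the sort is preserved by all 1-equations and we only need existence. I would also need a small lemma that base-case terms with active codomain are exactly identities, so that a subterm fed into the ``passive'' port of $\Pgenl$ or $\Pgenr$ carries no transition. This follows by inspecting the two base-case shapes.
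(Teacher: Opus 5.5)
Your proposal is correct and follows essentially the paper's route: the paper's proof is a straightforward induction on labelled transitions in one direction and on the generation of standard terms in the other, resting on the one-to-one correspondence between the production rules of Definition~\ref{def:lts-semantics} and the generators~\eqref{generators-LTS}. Your additional remarks make explicit details the paper leaves implicit: treating standard terms as literal generated terms, and noting that a standard term with active codomain must be an identity, because every $\mathtt G\in\Gen$ and the action generator land in the pending fragment.
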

\begin{proof}
By straightforward induction on labelled transitions / standard terms, after observing that there is a one-to-one correspondence between the production rules in Definition~\ref{def:lts-semantics} and the generators~\ref{generators-LTS}.
\end{proof}

A term in $\Comp(\Red)$ is {\em standard} if its translation to $\LTS$ is standard. The following proposition shows the connection between the layer $\Red$ and the standard terms in $\Comp(\Red)$.
\begin{lemma}\label{lma:red-term-to-comp}
Let $\mathtt t\in\Red(P,{Q\uparrow})$ be a term. Then there is a term in the layered theory CCS
\begin{equation}\label{red-iso-standard}
\scalebox{.8}{\input{journal-figures/red-iso-standard.tikz}},
\end{equation}
such that the term $t'\in\Comp(\Red)(P',{Q'\uparrow})$ is standard, and $\mathtt i$ and $\mathtt j$ are isomorphisms.
\end{lemma}

The following proposition shows that the congruence is a simulation relation for the standard terms in $\LTS$.
\begin{proposition}\label{prop:lts-bisimulation}
Let $\mathtt t\in\LTS(P,{Q\uparrow x})$ be a standard term. If there is a process $P'$ with $P\sim P'$, then there is a standard term $\mathtt t'\in\LTS(P',{Q'\uparrow x})$ such that $Q\sim Q'$.
\end{proposition}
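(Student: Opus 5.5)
The proof goes via Proposition~\ref{prop:lts-string-corresp}, which lets us pass between standard terms and derivable labelled transitions. Given a standard term $\mathtt t\in\LTS(P,{Q\uparrow x})$, that proposition yields a derivable transition $P\xrightarrow x Q$. It therefore suffices to prove the purely syntactic statement: if $P\xrightarrow x Q$ is derivable and $P\sim P'$, then there is $Q'$ with $Q\sim Q'$ such that $P'\xrightarrow x Q'$ is derivable. Applying Proposition~\ref{prop:lts-string-corresp} in the converse direction then produces the required standard term $\mathtt t'\in\LTS(P',{Q'\uparrow x})$.

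We first reduce to single congruence steps. The congruence $\sim$ is the smallest equivalence relation that contains the commutativity, associativity and unit axioms and is closed under the $\parallel$-context rule. Hence $P\sim P'$ is witnessed by a finite chain of one-step rewrites, each applying a single axiom (in either direction) inside some $\parallel$-context. The relation "$\xrightarrow x$ is simulated up to $\sim$" composes along such chains because $\sim$ is transitive on the targets. So it is enough to treat one axiom instance $C[L]\to C[R]$, where $L\to R$ is an axiom and $C$ is a context built only from $\parallel$. Note that prefixing is not a congruence context here, so $C$ never passes under a prefix.

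The single-step case is an induction on the derivation of $P\xrightarrow x Q$, combined with case analysis on where the rewritten redex sits.
\begin{itemize}
\item If the redex lies strictly inside one parallel component, we apply the induction hypothesis to the premise concerning that component, or observe that the component is untouched (the rules \texttt{pl}, \texttt{pr}, \texttt{r1}, \texttt{r2}). We then rebuild with the same rule and use closure of $\sim$ under $\parallel$.
\item If the redex is at the root, we consider each axiom in turn.
\begin{itemize}
\item \emph{Commutativity.} Swap \texttt{pl} with \texttt{pr}, and \texttt{r1} with \texttt{r2}.
\item \emph{Unit.} A transition of $0\parallel P$ must arise by \texttt{pl} from a transition of $P$, since $0$ has no transitions; the converse direction is immediate.
\item \emph{Associativity.} This is the main obstacle. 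We must reshuffle derivations of $(P\parallel Q)\parallel R$ into derivations of $P\parallel(Q\parallel R)$ and vice versa. For pending moves we nest \texttt{pl} and \texttt{pr} appropriately. For a synchronisation between, say, $P$ and $R$ across the inner bracket, we extract the premise transitions of $P$ and $R$ by inverting the rules, then re-derive using \texttt{r1}/\texttt{r2} together with \texttt{pl}, so that $R$'s move is lifted through $Q\parallel R$. The resulting target differs from the original only by reassociation, hence is congruent to it.
\end{itemize}
\end{itemize}
The rule \texttt{a} only applies to prefix terms, which contain no root-level $\parallel$-redex, so that case is vacuous.

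The delicate point is the inversion step: we need that every derivation of a transition from a parallel composition ends in one of \texttt{pl}, \texttt{pr}, \texttt{r1}, \texttt{r2} applied to transitions of its components. This holds because the rules are syntax-directed and no congruence rule is present in the LTS semantics. Once this inversion is in place, the associativity cases (six for each direction) are routine bookkeeping.
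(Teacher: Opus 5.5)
Your proof is correct, and its combinatorial core is the same as the paper's. There too one inverts the last generator of a standard term, swaps the left and right parallel generators (and the two reduction generators) for commutativity, notes that a standard term out of $0\parallel R$ can only come from a standard term out of $R$, and reshuffles the two nested generators for associativity.

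What differs is the level at which the argument runs and the order of the steps:
\begin{itemize}
\item The paper argues directly on standard terms, by induction on the construction of $P\sim P'$. The axioms are the base cases, and closure under $\parallel$ is the inductive case, with the induction hypothesis applied to both components at once. Only afterwards does it read off Corollary~\ref{cor:lts-simulation} via Proposition~\ref{prop:lts-string-corresp}.
\item You go the other way round. You prove Corollary~\ref{cor:lts-simulation} on derivation trees and then lift it back to standard terms, using Proposition~\ref{prop:lts-string-corresp} in both directions. This is legitimate, because that proposition does not depend on the present one.
\item You decompose $\sim$ into single axiom steps inside $\parallel$-contexts and use induction on the derivation for the context case. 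This makes the equivalence closure, in particular transitivity, explicit, which the paper's induction leaves implicit.
\end{itemize}
The paper's route keeps the reasoning inside the layered theory, which is the point of the case study. Yours is a self-contained argument about the transition rules whose only nontrivial ingredient is inversion of syntax-directed rules.

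One small slip: the case of rule \texttt{a} is not vacuous. The reversed unit axiom $P\to 0\parallel P$ has an arbitrary process as its redex, possibly a prefix. That case is, however, already covered by your direct treatment of that direction via \texttt{pl}.
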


Translating the above proposition into standard notation (using Proposition~\ref{prop:lts-string-corresp}), we obtain the following.

\noindent\begin{minipage}{0.65\textwidth}
\begin{corollary}\label{cor:lts-simulation}
The following rule is admissible in the LTS semantics:
\end{corollary}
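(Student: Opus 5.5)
The rule in Corollary~\ref{cor:lts-simulation} is the standard-notation form of Proposition~\ref{prop:lts-bisimulation}. Its premises are a derivable transition $P\xrightarrow x Q$ together with $P\sim P'$; its conclusion is $P'\xrightarrow x Q'$ for some $Q'$ with $Q\sim Q'$. The plan is to move from the transition to a standard term, apply the proposition, and move back.

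\textbf{Step 1.} Suppose $P\xrightarrow x Q$ is derivable using the rules of Definition~\ref{def:lts-semantics}. The ``if'' direction of Proposition~\ref{prop:lts-string-corresp} then gives a standard term $\mathtt t\in\LTS(P,{Q\uparrow x})$.

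\textbf{Step 2.} Since $P\sim P'$, Proposition~\ref{prop:lts-bisimulation} gives a standard term $\mathtt t'\in\LTS(P',{Q'\uparrow x})$ for some process $Q'$ with $Q\sim Q'$.

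\textbf{Step 3.} Applying the ``only if'' direction of Proposition~\ref{prop:lts-string-corresp} to $\mathtt t'$ shows that $P'\xrightarrow x Q'$ is derivable. Together with $Q\sim Q'$, this is exactly the conclusion of the rule, so the rule is admissible.

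The corollary itself is therefore routine, and the real work lies in Proposition~\ref{prop:lts-bisimulation}. If that proposition had to be proved here, I would argue by induction on the congruence $\sim$ and on the structure of standard terms. For each generating congruence step (commutativity, associativity, unit, and closure under $\parallel$), I would show how to rearrange a standard term into a standard term with the congruent domain:
\begin{itemize}
\item commutativity: swap $\Pgenl$ with $\Pgenr$ and $\Rgenone$ with $\Rgentwo$;
\item associativity: re-nest the parallel-composition generators;
\item unit: use the splitting generator with $0$.
\end{itemize}
The codomain changes only up to $\sim$ in each case.

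The main obstacle is the associativity case. As the example of terms~\eqref{lts-non-derivable-term} and~\eqref{lts-correct-derivable-term} shows, naive rebracketing can leave the class of standard terms, so the induction has to track how the standard term is decomposed into its recursive cases.
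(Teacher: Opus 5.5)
Your proposal is correct and follows the paper's route: the corollary is exactly Proposition~\ref{prop:lts-bisimulation} translated into standard notation via both directions of Proposition~\ref{prop:lts-string-corresp}. Your sketch of the underlying proposition also matches the appendix proof, which argues by induction on the construction of $\sim$ with unit, associativity, commutativity and congruence cases.
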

\end{minipage}
\begin{minipage}{0.35\textwidth}
  \noindent\begin{bprooftree}
    \AxiomC{$P'\sim P$}
    \AxiomC{$P\xrightarrow x Q$}
    \BinaryInfC{$\exists Q'.~{Q'\sim Q}.~{P'\xrightarrow x Q'}$}
  \end{bprooftree}.
\vspace{3pt}
\end{minipage}

The following lemma connects the silent fragments of $\LTS$ and $\Red$. Note that the conclusion of the lemma is exactly the same as the conclusion of Lemma~\ref{lma:red-term-to-comp}, but this time we start from a term in $\LTS$.
\begin{lemma}\label{lma:lts-term-to-comp}
Let $\mathtt t\in\LTS(P,{Q\uparrow\tau})$ be a standard term. Then there is a term~\eqref{red-iso-standard} in the layered theory CCS such that the term $t'\in\Comp(\Red)(P',{Q'\uparrow})$ is standard, and $\mathtt i$ and $\mathtt j$ are isomorphisms.
\end{lemma}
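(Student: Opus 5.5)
Our approach is induction on the structure of the standard term $\mathtt t\in\LTS(P,{Q\uparrow\tau})$, building at each stage a standard term $\mathtt t'$ in $\Comp(\Red)$ together with isomorphisms $\mathtt i:P\xrightarrow{\sim}P'$ and $\mathtt j:{Q'\uparrow}\xrightarrow{\sim}{Q\uparrow}$ in $\Red$. By Proposition~\ref{prop:lts-string-corresp}, a standard term of this sort is the same as a derivation of $P\xrightarrow{\tau}Q$. So we may organise the induction by the last production rule used, which also tells us the outermost generator of $\Gen$ (or the base case) in the standard term.

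The cases are as follows.
\begin{itemize}
\item \emph{Base case $\tau.P_0\xrightarrow{\tau}P_0$.} The standard base term is exactly the image under the functor $\Comp(\Red)\rightarrow\LTS$ of the silent-action generator in~\eqref{generators-red}, by the defining equations in Figure~\ref{fig:ccs-layered-theory}. We take $\mathtt t'$ to be that generator and $\mathtt i,\mathtt j$ to be identities.
\item \emph{Cases \texttt{pl} and \texttt{pr}.} We have $R\parallel P_0\xrightarrow{\tau}R\parallel Q_0$ (or its mirror). The induction hypothesis gives $\mathtt t'_0$, $\mathtt i_0$, $\mathtt j_0$ for $P_0\xrightarrow{\tau}Q_0$. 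We form $\mathtt t'$ by composing the splitting generator~\eqref{generator-ccs-splitting} with $\id_{R}\otimes\mathtt t'_0$ and then with the parallel generator $\mathtt p$. The isomorphisms $\mathtt i$, $\mathtt j$ are assembled from $\mathtt i_0$, $\mathtt j_0$ and the structural generators of Figure~\ref{fig:ccs-structural-generators}. For \texttt{pl} this needs a symmetry to move $R$ to the correct side, because $\Comp(\Red)$ has no symmetry and $\mathtt p$ puts the silent process on a fixed side. This is exactly why $\mathtt i$ and $\mathtt j$ are allowed and not just identities.
\item \emph{Cases \texttt{r1} and \texttt{r2}.} The premises are non-silent transitions $P_0\xrightarrow{a}P_0'$ and $Q_0\xrightarrow{\bar a}Q_0'$, so the induction hypothesis does not apply to them. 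We first prove an auxiliary lemma, by a separate induction: a standard term $P_0\to P_0'{\uparrow}a$ factors, up to isomorphisms in $\Red$ (Lemma~\ref{lma:iso-congruence}), as $P_0\cong a.P_1\parallel R$ with $P_0'\cong P_1\parallel R$. This is the usual "the action comes from some prefix, inside a parallel context" decomposition. Applying it to both premises, we rearrange with structural isomorphisms so that $a.P_1\parallel\bar a.Q_1$ sits next to the context $R\parallel S$. We then apply the reduction generator $\mathtt r$ together with $\mathtt p$ and splitting, which gives a standard $\mathtt t'$.
\end{itemize}

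The main obstacle is the standardness bookkeeping. We must check that the image of $\mathtt t'$ in $\LTS$ has the shape that the recursive definition of standard terms requires. We also need the associativity and symmetry rearrangements to lie entirely in $\mathtt i$ and $\mathtt j$, since $\mathtt i$ acts on the active fragment and $\mathtt j$ on the silent fragment, and Lemma~\ref{lma:iso-congruence} supplies the corresponding isomorphisms. If the silent-fragment rearrangement cannot be done directly, the fallback is to route it through the subsilent fragment, as in the proof of Lemma~\ref{lma:red-term-to-comp}. We would reuse that construction, running it in reverse.
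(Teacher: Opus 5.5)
Your proof is correct, but it takes a different route from the paper's.

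\textbf{The paper's route.} The paper does not induct on the derivation. It first notes that a standard term landing in the silent fragment contains one of two things:
\begin{itemize}
\item a single action generator for $\tau$, or
\item a single reduction generator fed by two action generators,
\end{itemize}
and otherwise only parallel-composition generators. It then rearranges $P$ up to $\sim$ into a left-bracketed process $P'$ whose leftmost component(s) are the firing prefix(es). Next it invokes Proposition~\ref{prop:lts-bisimulation} to transport $\mathtt t$ to a standard term out of $P'$, with codomain $Q'{\uparrow}\tau$ where $Q'\sim Q$. Finally it observes that this bracketing forces only the right parallel-composition generator to occur. The transported term is therefore literally in the image of $\Comp(\Red)\rightarrow\LTS$, and $\mathtt i$, $\mathtt j$ come from Lemma~\ref{lma:iso-congruence}.

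\textbf{Your route.} You build $\mathtt t'$ compositionally along the derivation, absorbing each swap or rebracketing into $\mathtt i$ and $\mathtt j$ step by step. In place of the simulation result you use your auxiliary ``prefix in a parallel context'' lemma for visible actions.

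\textbf{Comparison.}
\begin{itemize}
\item The paper's argument is shorter, given Proposition~\ref{prop:lts-bisimulation}, and it yields a canonical comb-shaped $\mathtt t'$ of the same shape as in Lemma~\ref{lma:red-term-to-comp}. However, its key step relies on a fact visible only from the proof of Proposition~\ref{prop:lts-bisimulation}, not its statement: the transported term fires the same prefix.
\item Your argument is self-contained and does not need that simulation result at all. Standardness of the image of $\mathtt t'$ can be checked rule by rule. The cost is a second induction and more bookkeeping.
\end{itemize}

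\textbf{One simplification.} Your fallback through the subsilent fragment is unnecessary. If you carry the invariants $P\sim P'$ and $Q'\sim Q$ through the induction, then Lemma~\ref{lma:iso-congruence}, (1)$\Leftrightarrow$(2) and (1)$\Leftrightarrow$(4), produces $\mathtt i$ and the silent-fragment isomorphism $\mathtt j$ directly at the end.
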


\begin{theorem}\label{thm:red-lts-term-iff}
There is a term in $\Red(P,{Q\uparrow})$ if and only if there is a process $Q'$ with $Q\sim Q'$ and a standard term in $\LTS(P,{Q'\uparrow\tau})$.
\end{theorem}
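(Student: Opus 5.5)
The plan is to prove the two directions separately, each time routing through the intermediate layer $\Comp(\Red)$ and using the functor $\Comp(\Red)\rightarrow\LTS$ together with the inclusion $\Comp(\Red)\hookrightarrow\Red$.

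For the forward direction, suppose $\mathtt t\in\Red(P,{Q\uparrow})$.
\begin{enumerate}
\item Apply Lemma~\ref{lma:red-term-to-comp} to obtain a term of the form~\eqref{red-iso-standard}. This consists of an isomorphism $\mathtt i$ in $\Red$ from $P$ to $P'$, a standard term $t'\in\Comp(\Red)(P',{Q'\uparrow})$, and an isomorphism $\mathtt j$ from ${Q'\uparrow}$ to ${Q\uparrow}$.
\item By Lemma~\ref{lma:iso-congruence}, items (1), (2) and (4), we get $P\sim P'$ and $Q'\sim Q$.
\item By definition of standardness in $\Comp(\Red)$, the translation of $t'$ along $\Comp(\Red)\rightarrow\LTS$ is a standard term in $\LTS(P',{Q'\uparrow\tau})$. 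Here I use the 0-equations of Figure~\ref{fig:ccs-layered-theory}, which send $P\mapsto P$ and ${Q\uparrow}\mapsto {Q\uparrow\tau}$ (the silent fragment goes to the silent fragment).
\item Since $P\sim P'$, Proposition~\ref{prop:lts-bisimulation} (used with $P'$ as the source and $P$ as the congruent process) gives a standard term in $\LTS(P,{Q''\uparrow\tau})$ with $Q''\sim Q'\sim Q$. This is exactly what is required.
\end{enumerate}

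For the converse, suppose $Q\sim Q'$ and $\mathtt t\in\LTS(P,{Q'\uparrow\tau})$ is standard.
\begin{enumerate}
\item Lemma~\ref{lma:lts-term-to-comp} gives a term~\eqref{red-iso-standard} in the layered theory. It has isomorphisms $\mathtt i\colon P\rightarrow P''$ and $\mathtt j\colon {Q''\uparrow}\rightarrow {Q'\uparrow}$ in $\Red$, and a standard $t'\in\Comp(\Red)(P'',{Q''\uparrow})$.
\item Push $t'$ along the inclusion $\Comp(\Red)\hookrightarrow\Red$. Because the inclusion is identity on colours and generators, this yields a term in $\Red(P'',{Q''\uparrow})$, which we compose with $\mathtt i$ and $\mathtt j$ inside $\Red$.
\item Finally, since $Q'\sim Q$, Lemma~\ref{lma:iso-congruence} provides an isomorphism ${Q'\uparrow}\xrightarrow{\sim}{Q\uparrow}$ in $\Red$. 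Postcomposing with it gives a term in $\Red(P,{Q\uparrow})$.
\end{enumerate}

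The main obstacle is justifying that the composite layered term genuinely yields an internal term of $\Red$. Concretely, in the converse direction the term~\eqref{red-iso-standard} may be presented with functor boundaries between $\Comp(\Red)$ and $\Red$. I would use the sliding equations of Figure~\ref{fig:structural-twocells-functors-ext} to push $t'$ through the inclusion boundary, and then the counit with its section to eliminate the remaining boundary pair, as in Proposition~\ref{prop:cowindow-box-equality}. If the boundaries are oriented so that only one refinement $\refine$ appears, a single slide suffices.

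A related bookkeeping point in the forward direction is checking that the translation functor sends the silent colour ${Q\uparrow}$ precisely to ${Q\uparrow\tau}$. I expect this to be read off directly from the 0-equations in Figure~\ref{fig:ccs-layered-theory}.
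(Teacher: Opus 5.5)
Your proposal is correct and follows the paper's proof essentially step for step. The forward direction goes through Lemma~\ref{lma:red-term-to-comp}, the translation to $\LTS$, and Proposition~\ref{prop:lts-bisimulation}; the converse goes through Lemma~\ref{lma:lts-term-to-comp} followed by postcomposition with the isomorphism ${Q'\uparrow}\simeq{Q\uparrow}$. Your extra care about eliminating functor boundaries is harmless but not needed: the paper simply composes, since the inclusion $\Comp(\Red)\hookrightarrow\Red$ is the identity on colours and generators.
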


In light of Theorem~\ref{thm:red-lts-term-iff}, one can think of the layered theory CCS as extending the (rather restricted) derivations in $\Comp(\Red)$ in two different but equivalent ways: $\Red$ closes all the terms under isomorphism (congruence), while $\LTS$ adds the symmetric counterparts of the parallel composition and reduction generators, as well as permits pending actions. The theorem then shows that $\Red$ and $\LTS$ have the same transitions from the active to the silent fragment. Using Propositions~\ref{prop:red-string-corresp} and~\ref{prop:lts-string-corresp}, we translate this to the usual terminology as follows.
\begin{corollary}[Soundness and completeness]\label{cor:ccs-soundness-completeness}
The reduction semantics is sound and complete with respect to the silent transitions in the LTS semantics: a rewrite rule $P\rightarrow Q$ is derivable if and only if there is a process $Q'$ such that $Q'\sim Q$ and the labelled transition $P\xrightarrow{\tau}Q'$ is derivable.
\end{corollary}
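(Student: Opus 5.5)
The plan is to prove the two implications separately, composing the earlier results: Proposition~\ref{prop:red-string-corresp} links reduction rules to terms in $\Red$, Proposition~\ref{prop:lts-string-corresp} links labelled transitions to standard terms in $\LTS$, and Lemmas~\ref{lma:red-term-to-comp} and~\ref{lma:lts-term-to-comp} route both through standard terms of $\Comp(\Red)$ flanked by isomorphisms.

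For the forward direction, start from a term $\mathtt t\in\Red(P,Q\uparrow)$. Lemma~\ref{lma:red-term-to-comp} rewrites it as $\mathtt i;t';\mathtt j$, where $t'\in\Comp(\Red)(P',Q''\uparrow)$ is standard and $\mathtt i:P\xrightarrow{\sim}P'$, $\mathtt j:Q''\uparrow\xrightarrow{\sim}Q\uparrow$ are isomorphisms in $\Red$. By Lemma~\ref{lma:iso-congruence}, this gives $P\sim P'$ and $Q''\sim Q$. Next, translate $t'$ along the functor $\Comp(\Red)\to\LTS$. By definition of standardness in $\Comp(\Red)$, the image is a standard term. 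By the $0$-equations in Figure~\ref{fig:ccs-layered-theory}, which send $R\uparrow$ to $R\uparrow\tau$, it lies in $\LTS(P',Q''\uparrow\tau)$.

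It remains to move the source from $P'$ back to $P$. By Proposition~\ref{prop:lts-string-corresp}, the standard term gives a transition $P'\xrightarrow{\tau}Q''$. Since $P\sim P'$, Proposition~\ref{prop:lts-bisimulation} (equivalently Corollary~\ref{cor:lts-simulation}) yields a standard term in $\LTS(P,Q'\uparrow\tau)$ with $Q'\sim Q''$. Transitivity then gives $Q'\sim Q$, which completes this direction.

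For the converse, start from a standard term in $\LTS(P,Q'\uparrow\tau)$. Lemma~\ref{lma:lts-term-to-comp} produces a term $\mathtt i;t';\mathtt j$ with $t'\in\Comp(\Red)(P'',Q''\uparrow)$ standard and $\mathtt i,\mathtt j$ isomorphisms in $\Red$. The inclusion $\Comp(\Red)\hookrightarrow\Red$ turns $t'$ into a term of $\Red$, so $\mathtt i;t';\mathtt j$ is a term in $\Red(P,Q'\uparrow)$. Here I assume the lemma's flanking isomorphisms go from $P$ and into $Q'\uparrow$. Finally, $Q\sim Q'$ gives an isomorphism $Q'\uparrow\xrightarrow{\sim}Q\uparrow$ by Lemma~\ref{lma:iso-congruence}, and postcomposing with it yields a term in $\Red(P,Q\uparrow)$.

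The main obstacle is bookkeeping rather than substance. First, one must confirm that the translation functor preserves standardness and maps the silent fragment $R\uparrow$ to $R\uparrow\tau$; this should follow directly from the defining equations. Second, one must read off precisely which endpoints the isomorphisms in Lemmas~\ref{lma:red-term-to-comp} and~\ref{lma:lts-term-to-comp} connect. If they do not match the endpoints assumed above, the simulation property (Proposition~\ref{prop:lts-bisimulation}) together with Lemma~\ref{lma:iso-congruence} absorbs the mismatch, as in the forward direction.
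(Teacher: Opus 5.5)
Your proposal is correct and is essentially the paper's argument. The paper first proves Theorem~\ref{thm:red-lts-term-iff} exactly as you do: Lemma~\ref{lma:red-term-to-comp}, then translation to $\LTS$, then Proposition~\ref{prop:lts-bisimulation}; conversely, Lemma~\ref{lma:lts-term-to-comp} followed by postcomposition with $Q'\uparrow\simeq Q\uparrow$. It then reads off the corollary via Propositions~\ref{prop:red-string-corresp} and~\ref{prop:lts-string-corresp}, and you should likewise state those two translation steps explicitly at the start and end of each direction, e.g.\ concluding that $P\xrightarrow{\tau}Q'$ is derivable from the standard term in $\LTS(P,Q'\uparrow\tau)$.
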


The terms in $\Red(P,{Q\uparrow})$ and $\LTS(P,{Q\uparrow x})$ correspond to the {\em derivations} of $P\rightarrow Q$ and $P\xrightarrow x Q$. Note, however, that the string diagrammatic terms carry slightly more information than the derivation trees, as they record the precise way in which the congruence is constructed.

We conclude this case study by using the layered monoidal theory to study different derivations of the rewrite rule~\eqref{eqn:CCS-derivation}: we give its derivation in $\Red$, and two derivations in $\LTS$. The first derivation in $\LTS$ is obtained by first decomposing the term as in Lemma~\ref{lma:red-term-to-comp} and then translating from $\Comp(\Red)$, while the second one is constructed directly and does not corresponding to any derivation in $\Red$. We begin with a derivation in $\Red$ (term on the top):
\begin{equation*}
\scalebox{.8}{\input{journal-figures/CSS-red.tikz}}.
\end{equation*}
In the term on the bottom, we have applied the translation $\Comp(\Red)\rightarrow\LTS$, drawn as the cobox. The isomorphisms at each side can be seen as ``preparations'' that ensure that the computations in $\Comp(\Red)$ can occur.

Next, we give the direct derivation as a standard term in $\LTS$:
\begin{equation*}
\scalebox{.8}{\input{journal-figures/example-term-lts.tikz}}.
\end{equation*}
Note that this derivation is not in in the image of the translation $\Comp(\Red)\rightarrow\LTS$. However, if we apply to this term the procedure of Lemma~\ref{lma:lts-term-to-comp}, we obtain precisely the term in $\Red$ we started with.

From the point of view of communication, one can think of the LTS semantics as a {\em refinement} of the reduction semantics: while the same synchronisations can be achieved in both, the LTS semantics has more redundancy in their implementation: there are multiple places where the passage from the active to the pending fragment can occur.

%% file: figures/journal-figures/ccs-layers.tikz
\begin{tikzpicture}
	\begin{pgfonlayer}{nodelayer}
		\node [style=objectbox] (6) at (-4, 0) {$\Red$};
		\node [style=objectbox] (7) at (0, 0) {$\Comp(\Red)$};
		\node [style=objectbox] (8) at (4, 0) {$\LTS$};
	\end{pgfonlayer}
	\begin{pgfonlayer}{edgelayer}
		\draw [style=morphism] (7) to (8);
		\draw [style=hookarrowmirror] (7) to (6);
	\end{pgfonlayer}
\end{tikzpicture}

%% file: tex/tocl-article/examples/glucose.tex
Here we construct an example that illustrates how in a layered theory of the shape $A\rightarrow C\leftarrow B$ the terms internal to $B$ and $C$ can be used to construct terms whose types are internal to $A$. We think of $A$ as a {\em high-level} language, of $C$ as a {\em low-level} description with few constraints, and of $B$ as an {\em explanatory} level describing those low-level processes which will actually occur. The fact that we can construct a term between the types in $A$ using only the terms from $B$ and $C$ can be seen, in a sense, as an instance of an emergent phenomenon: there is seemingly a process between the entities in $A$, while in reality all the processes live in $B$ and $C$.

We work inside a model for organic chemistry that has just enough structure to talk about an important biochemical process known as {\em phosphorylation of glucose}. The discussion here is inspired by a talk by Krivine~\cite{krivine-talk}, which, in turn, is motivated by the problem of systematising a vast amount of experimental data in systems biology in a way that is easy for humans to both understand and use~\cite{krivine-siglog}. Our strategy is to define a layered monoidal theory with three layers, each corresponding to viewing chemical change at a different abstraction level:
\begin{center}
\begin{tabular}{ c | c }
$\Name$ & English names of the relevant molecules \\
\hline
$\Scheme$ & Reaction schemes that represent reaction mechanisms \\
\hline
$\Disc$ & Local rewrite rules that capture all possible chemical change
\end{tabular}
\end{center}

Let us fix the following notions needed for Definition~\ref{def:chemlabgraph-layered}: a countable set of {\em vertex names} $\VS$, a finite set of {\em vertex labels} $\Atset$, which contains the special symbol $\alpha$, and finally the set of {\em edge labels} $\Lab\coloneqq\{0,1,2,3,4,\ib\}$.

We denote the vertex names by either positive integers or lowercase Latin letters, as appropriate to the situation. While formally the set of vertex labels can contain arbitrary symbols, in what follows we shall assume that $\Atset$ contains a symbol for each main-group element of the periodic table: $\{H,C,O,P,\dots\}\sse\Atset$. For this reason, we will also refer to $\Atset$ as the {\em atom labels}. The special symbol $\alpha$ may be thought of as representing an unpaired electron or a free charge. The integers $\{0,1,2,3,4\}$ in the set of edge labels stand for covalent bonds, while $\ib$ stands for an ionic bond.

\begin{definition}[Chemically labelled graph]\label{def:chemlabgraph-layered}
A {\em chemically labelled graph} is a triple $(V,\tau,m)$, where $V\sse\VS$ is a finite set of {\em vertices}, $\tau:V\rightarrow\Atset\times\Z$ is a {\em vertex labelling function}, and $m:V\times V\rightarrow\Lab$ is an {\em edge labelling function} satisfying $m(v,v)=0$ and $m(v,w)=m(w,v)$ for all $v,w\in V$.
\end{definition}
Thus, a chemically labelled graph is irreflexive (we interpret the edge label $0$ as no edge) and symmetric, and each of its vertices is labelled with an element of $\Atset$, together with an integer indicating the charge.

When drawing chemically labelled graphs, we adopt the following conventions: (1) the vertex label from $\Atset$ is drawn at the centre of a vertex, (2) the vertex name is drawn as a superscript on the left (so within a single graph, no left superscript appears twice), (3) a non-zero charge is drawn as a superscript on the right (hence the lack of a right superscript indicates zero charge), (4) the charge $-1$ is abbreviated as $-$, and similarly the charge $1$ as $+$, (5) $n$-ary covalent bonds are drawn as $n$ parallel lines, (6) ionic bonds are drawn as dashed lines.

\noindent\begin{minipage}{0.6\textwidth}
\begin{example}\label{ex:phosphate}
The phosphate functional group with one unpaired electron is drawn as a chemically labelled graph on the right:
\end{example}
\end{minipage}
\begin{minipage}{0.4\textwidth}
\begin{equation*}
\scalebox{.8}{\input{journal-figures/example-phosphate.tikz}}.
\end{equation*}
\end{minipage}

A {\em disconnection rule} is a partial endofunction on the set of chemical graphs that only changes one atom or bond (and the atoms it connects). We define four classes of disconnection rules, all of which have a clear chemical significance: two versions of {\em electron detachment}, {\em ionic bond breaking} and {\em covalent bond breaking}. We give an informal definition in Figure~\ref{fig:disc-rules-layered}: the left-hand side of the arrow with the conditions written above define the application conditions, and the output of the function is obtained by replacing the left-hand side with the right-hand side. The full formal definition can be found in~\cite{tcs-cat-model-org-chem}.

\begin{figure}
\centering
\scalebox{.8}{\input{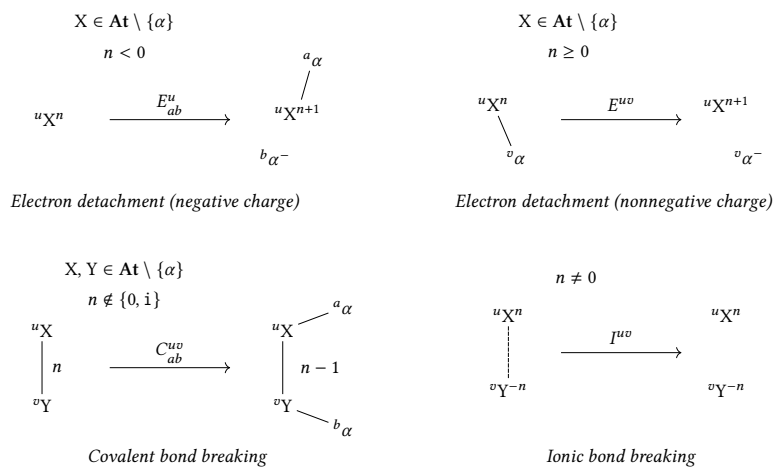}}
\caption{The four disconnection rules.\label{fig:disc-rules-layered}}
\end{figure}

We observe that each disconnection rule is injective (as a partial function), and hence has an inverse partial function. The disconnection rules or their inverses only add or remove the $\alpha$-vertices, hence they are ``matter preserving''. Moreover, valence is preserved locally at each atom, in the sense that the sum of the absolute value of the charge and the number of incident bonds remains unchanged; while the charge is preserved globally, in the sense that the net charge is the same before and after a rule is applied. In~\cite{tcs-cat-model-org-chem}, we show that the disconnection rules (together with their inverses) capture all the charge and matter preserving combinatorial rearrangements of chemical graphs, which can be thought of as formal reactions.

We define a layered monoidal theory of the shape \scalebox{1}{\input{journal-figures/glucose-layers.tikz}}. There are no generators in $\Name$, whose colours are given by $\{\glucose, \ATP, \glucosesix, \ADP, \hydrogenion\}$. Here $\ATP$ and $\ADP$ stand for {\em adenosine triphosphate} and {\em adenosine diphosphate}.

For both $\Disc$ and $\Scheme$, the colours are the chemically labelled graphs. The {\em structural generators} of $\Disc$ are the symmetry together with the following: \scalebox{.8}{\input{journal-figures/glucose-structural-generators.tikz}}, where $\eset$ is the unique chemically labelled graph on the empty set, $A$ and $B$ have disjoint sets of vertex names and $u(A,B)$ is the union graph, while $r(A)$ is any graph obtained from $A$ by only renaming some vertex names. We assume that the union of two graphs is a symmetric operation: $u(A,B)=u(B,A)$. The structural generators capture the up-to-an-isomorphism notion of the disjoint union of chemically labelled graphs as a non-strict monoidal structure. The ``union'' generators can also be given a chemical interpretation: either the molecules $A$ and $B$ have come close enough ($u(A,B)$) for a reaction to occur, or, dually, a group of molecules $u(A,B)$ splits into spatially separated groups $A$ and $B$, so that reactions can no longer occur between them.

The other generators in $\Disc$ are given by: \scalebox{.8}{\input{journal-figures/disc-generators.tikz}}, where $d$ is a disconnection rule (defined as four classes of partial function in Figure~\ref{fig:disc-rules-layered}), $A$ is in the domain of $d$, and $d(A)$ is the result of applying $d$ to $A$.

The only generators of $\Scheme$ are given by the family
\begin{equation*}
\scalebox{.8}{\input{journal-figures/scheme-generator-phosph.tikz}},
\end{equation*}
which is universally quantified over all vertex names: that is, the vertices can be renamed arbitrarily, as long as renaming is done simultaneously in the domain and the codomain graphs.

The layered monoidal theory is defined as follows. The structural generators are isomorphisms in $\Disc$: the first four equations in Figure~\ref{fig:ccs-layered-theory} hold whenever the composition is defined and the types match, and similarly, we add the following equations\footnote{Note that the first equation is well-typed since the union is symmetric.}: \scalebox{.8}{\input{journal-figures/glucose-structural-eqns.tikz}}. The functor $\Name\rightarrow\Disc$ is defined in Figure~\ref{fig:translation-name-disc}, while the functor $\Scheme\rightarrow\Disc$ is identity on colours, and is defined on the generators by the following equation: \scalebox{.8}{\input{journal-figures/phosph-scheme-as-disc-rules.tikz}}. While it might seem that the functor $\Scheme\rightarrow\Disc$ is not doing much, as it simply encodes a sequence of disconnection rules in $\Disc$, it will allow us to define where the non-trivial chemical change is happening: namely, we shall require that the only such change happens if a morphism in $\Disc$ is in the image of this functor.

\begin{figure}
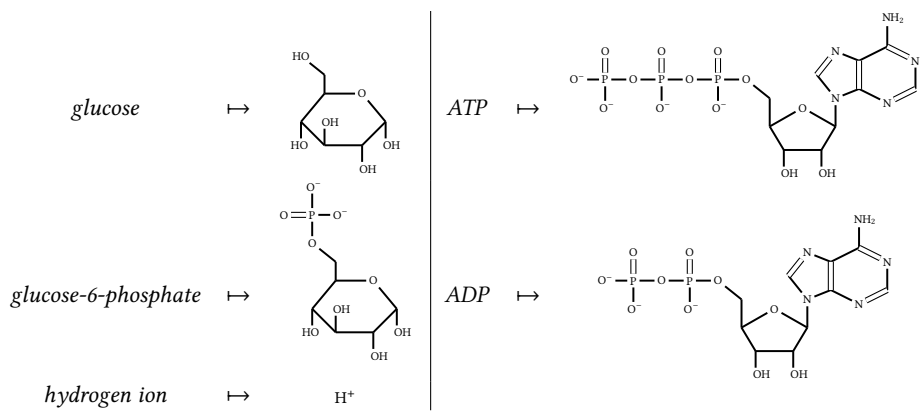

\centering
\begin{tabular}{ c c c | c c c }
$\glucose$ & $\mapsto$ & \scalebox{.5}{\input{journal-figures/glucose.tikz}} & $\ATP$ & $\mapsto$ & \scalebox{.5}{\input{journal-figures/ATP.tikz}} \\
$\glucosesix$ & $\mapsto$ & \scalebox{.5}{\input{journal-figures/glucosesix.tikz}} & $\ADP$ & $\mapsto$ & \scalebox{.5}{\input{journal-figures/ADP.tikz}} \\
$\hydrogenion$ & $\mapsto$ & \scalebox{.7}{\input{journal-figures/H.tikz}} & & &
\end{tabular}
\caption{The functor $\Name\rightarrow\Disc$. We adopt the usual organic chemistry convention that an unlabelled vertex stands for a carbon atom with an appropriate number of hydrogen atoms attached to make the valence add up to $4$. For clarity, we omit the vertex names, as their precise choice is immaterial: the only requirement we impose is that the vertex sets of all five graphs are pairwise disjoint.\label{fig:translation-name-disc}}
\end{figure}

We are now ready to show that this mathematically fairly simple setup allows us to derive the reaction for glucose phosphorylation, which in $\Name$ would be expressed as:
\begin{equation}\label{eqn:phosphorylation}
\glucose+\ATP \longrightarrow \glucosesix+\ADP+\hydrogenion,
\end{equation}
where $+$ denotes the monoidal product. Of course, such reaction cannot exist as a morphism in $\Name$, as it only contains the identity morphisms. However, it is derivable within the layered theory. Note that reaction~\eqref{eqn:phosphorylation} is very close to what one might see as a high-level explanation in a chemistry textbook. We give the derivation in Figure~\ref{fig:phosphorylation-layers}, where we omit the parts of the larger molecules that remain unchanged. Note that the term in Figure~\ref{fig:phosphorylation-layers} can be constructed such that the following properties hold: (1) inside the cobox, no other generators of $\Disc$ occur after the renaming generators, (2) the vertex names in the subscript introduced by a single disconnection rule ($ab$ or $cd$) must also be removed by a single inverse rule (the generators with a bar). Importantly, restriction (2) does not apply to the generators appearing inside a box in the translation of $\mathtt{phosph}$. The interpretation is that part of a sequence of disconnection rules is recognised as a translation of a reaction scheme in $\Scheme$. Properties (1) and (2) guarantee that the only non-trivial chemical change occurs within the reaction scheme: the other disconnections must be patched back to restore the original configuration.

Without restrictions (1) and (2), one could always derive a sequence of disconnection rules in $\Disc$ resulting in reaction~\ref{eqn:phosphorylation}, as both sides of the reaction have the same atoms and charge as the ``building material''. The layers $\Disc$ and $\Scheme$ can, therefore, be thought to carry information at distinct epistemic levels: the terms in $\Disc$ are {\em possibilistic}, expressing which chemical change is possible as far as conservation of atoms and charge is concerned, while the terms in $\Scheme$ contain {\em reaction mechanisms}, which can represent empirical knowledge or hypothetical assumptions.

\begin{figure}
\centering
\scalebox{.8}{\input{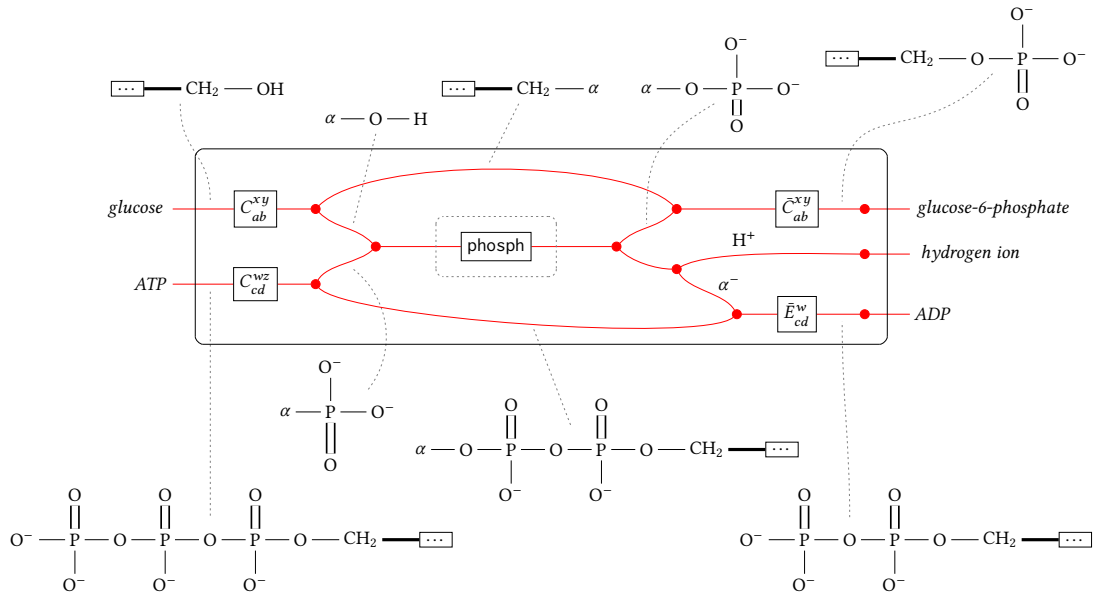}}
\caption{Glucose phosphorylation as a layered term.\label{fig:phosphorylation-layers}}
\end{figure}

%% file: figures/journal-figures/glucose-layers.tikz
\begin{tikzpicture}
	\begin{pgfonlayer}{nodelayer}
		\node [style=textbox] (3) at (-4, 0) {$\Name$};
		\node [style=textbox] (4) at (4, 0) {$\Scheme$};
		\node [style=textbox] (5) at (0, 0) {$\Disc$};
	\end{pgfonlayer}
	\begin{pgfonlayer}{edgelayer}
		\draw [style=diredge] (3) to (5);
		\draw [style=diredge] (4) to (5);
	\end{pgfonlayer}
\end{tikzpicture}

%% file: tex/tocl-article/examples/prob-channels.tex
Here our aim is to capture within a layered theory the `shaded box' notation introduced by Jacobs~\cite{jacobs-spr} in order to represent {\em conditionals} in probability theory. In particular, we define the conditional of a channel as a certain term (in fact, a box) in a three-layered monoidal theory. In our formalisation of the shaded box notation, we upgrade it in several significant ways, which are discussed in detail in Remark~\ref{rem:shaded-box}. As the first step, we use {\em (co)parametric channels}, which allow part of the output (or input) of a channel as a parameter to be conditioned over. We then define the conditional of a coparametric channel as a certain parametric channel. The original shaded box notation for conditionals fails to be functorial for two reasons: its action on objects depends on whether an object appears as a domain or a codomain of a channel, while on the level of channels, the normalisation results in non-preservation of composition. Our construction remedies the first issue by passing to (co)parametric categories. We further show that the construction is functorial on an important fragment: it preserves composition with the `trivially' coparameterised channels (Lemma~\ref{lma:disintegration-functorial}). We use this ``restricted functoriality'' to prove Proposition~\ref{prop:disintegration-depara}, which gives a diagrammatic proof of uniqueness of conditional channels with marginally full support (Definition~\ref{def:marg-full-support}).

Our presentation here is less axiomatic than in the previous examples: we do not give a monoidal theory axiomatising finite probabilistic channels $\Chan$ with the Kronecker product (Definition~\ref{def:prob-channels}), as such an axiomatisation is currently not known. However, Fritz has given an axiomatisation using the coproduct monoidal structure~\cite{fritz2009}, while Piedeleu, Torres-Ruiz, Silva and Zanasi have axiomatised the full monoidal subcategory of $\Chan$ on sets whose cardinality is a power of $2$~\cite{complete-disc-prob-prog2024}; Bonchi and Cioffo have given an alternative axiomatisation of the same subcategory using two monoidal products~\cite{tapesstochasticmatrices2026}. Hence, enough monoidal structure can be interpreted in the synthetic definition~\ref{def:prob-channels}, justifying usage of string diagrams, the lack of a complete axiomatisation notwithstanding. To make the usage of string diagrams more precise, Proposition~\ref{prop:disintegration-depara} holds for any class of morphisms in a Markov category~\cite{fritz-markov20} for which Lemmas~\ref{lma:disintegration-functorial} and~\ref{lma:fs-id} can be established.

By a {\em distribution} on a set $X$ we mean a function $\omega:X\rightarrow [0,1]$ from $X$ to the unit interval. The {\em support} of a distribution $\omega:X\rightarrow [0,1]$ is the subset $\supp(\omega)\coloneq\{x\in X : \omega(x)\neq 0\}$. We say that a distribution $\omega:X\rightarrow [0,1]$ has {\em full support} if $\supp(\omega) = X$, and write $\left|\supp(\omega)\right|<\aleph_0$ for when the support is finite.

Let $\mathcal D:\Set\rightarrow\Set$ be the finite distribution monad, i.e. $\mathcal D(X)\coloneq\left\{ \omega : X\rightarrow [0,1] : \left|\supp(\omega)\right|<\aleph_0, \sum_{x\in X} \omega(x) = 1\right\}$, and given a function $f:X\rightarrow Y$, the map $\mathcal D(f):\mathcal D(X)\rightarrow\mathcal D(Y)$ is defined by
$\mathcal D(f)(\omega)(y)\coloneq\sum_{x\in f^{-1}(y)}\omega(x)$.

\begin{definition}[Probabilistic channels]\label{def:prob-channels}
The category of {\em finite probabilistic channels} $\Chan$\footnote{The morphisms in this category are often called {\em finite stochastic maps / processes / matrices}. We stick with probabilistic channels, often simply saying `channels'.} has the finite sets as objects, while a morphism $X\circarrow Y$ is given by a function $X\rightarrow\mathcal D(Y)$. The identity $X\circarrow X$ is given by the delta distribution: $x\mapsto\delta_x$, while the composition of $f:X\circarrow Y$ and $g:Y\rightarrow Z$ is given by the formula $gf(x)(z) = \sum_{y\in Y} f(x)(y)\cdot g(y)(z)$.
\end{definition}
We say that a channel $f:X\circarrow Y$ has {\em full support} if for every $x\in X$, the distribution $f(x)$ has full support. Note that $\Chan$ is the full subcategory on the finite sets of the Kleisli category of the finite distribution monad $\mathcal D$.

The parametric channels arise via the {\em (co)para construction}, whose construction as a monoidal theory we cover in Appendix~\ref{sec:para-copara}. We refer to the morphisms in $\Para(\Chan)$ as {\em parametric channels}, and to the morphisms in $\Copara(\Chan)$ as {\em coparametric channels}. Explicitly, the parametric channel $(X,f):Z\rightarrow Y$ is a channel $f:X\times Z\rightarrow Y$, where $X$ is viewed as part of the morphism data, and the coparametric channel $(V,g):Z\rightarrow Y$ is a channel $g:Z\rightarrow V\times Y$, where $V$ is viewed as part of the morphism data. Graphically, the parametric and coparametric channels are represented as follows: \scalebox{.75}{\input{journal-figures/para-copara-channels.tikz}}. In the above situation, we call $X$ and $V$ the {\em parameter} and the {\em coparameter}, and say that $f$ is parameterised by $X$ and that $g$ is coparameterised by $V$.

\begin{definition}[Jacobs~\cite{jacobs-spr}, Definition~7.3.1]\label{def:disintegration}
A parametric channel $(X,f'):Z\rightarrow Y$ is the {\em conditional} for a coparametric channel $(X,f):Z\rightarrow Y$ if equation~\eqref{eq:disintegration} holds, and for any channel $h:Z\rightarrow X$ with full support and any parametric channel $(X,g):Z\rightarrow Y$, the equality on the left of~\eqref{eq:disintegration-unique} implies the equalities on the right of~\eqref{eq:disintegration-unique}:
\noindent\begin{minipage}{.4\textwidth}
\begin{equation}\label{eq:disintegration}
\scalebox{.75}{\input{journal-figures/disintegration-eqn.tikz}},
\end{equation}
\end{minipage}%
\begin{minipage}{.6\textwidth}
\begin{equation}\label{eq:disintegration-unique}
\scalebox{.75}{\input{journal-figures/disintegration-unique.tikz}}.
\end{equation}
\end{minipage}
\end{definition}

\begin{definition}\label{def:marg-full-support}
We say that a morphism $(X,f):Z\rightarrow Y$ in $\Copara(\Chan)$ has {\em marginally full support} if the channel \scalebox{.75}{\input{journal-figures/marg-full-support.tikz}} has full support.
\end{definition}
Note that a coparametric channel $(X,f):Z\rightarrow Y$ has marginally full support if and only if for all $z\in Z$ and $x\in X$ we have that the sum $\sum_{y\in Y}f(z)(x,y)$ is non-zero. We will solely focus on coparametric channels with marginally full support, as for them the conditional channels exist in the sense of Definition~\ref{def:disintegration} (see Proposition~\ref{prop:disintegration-depara}). Let us denote by $\Mfs$ those coparametric channels in $\Copara(\Chan)$ that have marginally full support.

Define the {\em conditional box} $B:\Mfs\rightarrow\ParaSwap(\Chan)$ by mapping $(X,f):Z\rightarrow Y$ to $(X,B(f))$, where $B(f):X\times Z\circarrow Y$ is the channel defined by $B(f)(x,z)(y)\coloneq\frac{f(z)(x,y)}{\mathcal N_f(z,x)}$, where $\mathcal N_f(z,x)\coloneq\sum_{y\in Y}f(z)(x,y)$.

\noindent\begin{minipage}{0.4\textwidth}
We thus have the diagram on the right, where the unlabelled arrows are the faithful embeddings mapping each morphism to itself (co)parameterised by the monoidal unit:
\end{minipage}
\begin{minipage}{0.6\textwidth}
\begin{equation*}
\scalebox{1}{\input{journal-figures/b-embeddings.tikz}}.
\end{equation*}
\end{minipage}
Note that $B$ is {\em not} a functor -- hence we cannot assume the functoriality equations of a layered theory (Figures~\ref{fig:structural-twocells-functors-int} and~\ref{fig:structural-twocells-functors-ext}). However, since $B$ preserves the domain and the codomain, we may still use the terms generated by treating the above diagram as a layered signature to reason about the properties of conditionals. Moreover, Lemma~\ref{lma:disintegration-functorial} shows that $B$ is functorial on (a superset of) the embedding of $\Chan$, and that the above diagram commutes.

Graphically, the action of $B$ is denoted as follows: \scalebox{.75}{\input{journal-figures/disintegration-functor.tikz}}. Note that we omit the subscript (label) $B$ as there is only one translation involved.

\begin{remark}[A note on deparameterisation]\label{rem:deparameterisation}
Since applying the conditional box to a coparametric channel yields a parametric channel, the result admits deparameterisation, as defined in Appendix~\ref{sec:para-copara}, denoted by~\eqref{eq:disintegration-channel}. This notation makes explicit the computational steps taken to obtain this channel: one starts with a {\em bona fide} channel $f:Z\rightarrow X\times Y$, one then specifies which part of the codomain is to be regarded as the coparameter, thus passing to the coparametric channel $(X,f):Z\rightarrow Y$, after which one applies the conditional box, obtaining the parametric channel $(X,Bf):Z\rightarrow Y$.

\vspace{3pt}
\noindent\begin{minipage}{0.75\textwidth}
Finally, one views the parametric channel as the honest channel $Bf:X\times Z\rightarrow Y$. Proposition~\ref{prop:disintegration-depara} that we are about to prove shows that the deparameterisation of a conditional box gives the conditional of the original channel (Definition~\ref{def:disintegration}).
\end{minipage}
\begin{minipage}{0.25\textwidth}
\begin{equation}\label{eq:disintegration-channel}
\scalebox{.75}{\input{journal-figures/disintegration-channel.tikz}}
\end{equation}
\end{minipage}
\end{remark}

\begin{remark}\label{rem:shaded-box}
The graphical depiction of the conditional box has been introduced as an informal notation by Jacobs~\cite[Section~7.3]{jacobs-spr}, where it is called the `shaded box'. However, our notation differs from the shaded box notation in three crucial ways. First, it is defined formally as a certain term in a layered monoidal theory; consequently, inside our shaded box, all the usual equalities of channels hold. Second, we keep a track of the order of the parameters. For example, the second equality in~\cite[Exercise~7.3.7]{jacobs-spr} becomes~\eqref{disintegration-nested}, making it apparent that there is a swap of order of the parameters on the right-hand side. This equality can then be used to justify ignoring the order. Third, thinking of the coparameter wire as being attached to the right side of the box gives a good visual intuition for the equalities that hold for the shaded box: any morphism that is not attached (so that there are no obstructions to dragging it out) is able to slide out of the

\vspace{3pt}
\noindent\begin{minipage}{0.55\textwidth}
box -- see Lemma~\ref{lma:disintegration-functorial} and~\cite[Exercise~7.3.7]{jacobs-spr}. Since the parameter on the left has to have the same type as the coparameter attached to the right side of the box, the intuition of ``bending" the wire backwards as in~\cite{jacobs-spr} is preserved by thinking of the boundary of the box as carrying the type information.
\end{minipage}
\begin{minipage}{0.45\textwidth}
\begin{equation}\label{disintegration-nested}
\scalebox{.75}{\input{journal-figures/disintegration-nested.tikz}}
\end{equation}
\end{minipage}
\end{remark}

\begin{lemma}\label{lma:disintegration-functorial}
Let $(1,g):Y\rightarrow W$ be a channel coparameterised by the terminal set $1$. Then $B$ acts as identity on $(1,g)$, in the sense that equality~\eqref{eq:disintegration-id} holds. Moreover, if $(X,f):Z\rightarrow Y$ is any coparametric channel in $\Mfs$ composable with $(1,g)$, then $B$ preserves the composition $(X,f);(1,g)$, in the sense that equality~\eqref{eq:disintegration-functorial-wobox} holds:

\noindent\begin{minipage}{.5\textwidth}
\begin{equation}\label{eq:disintegration-id}
\scalebox{.75}{\input{journal-figures/disintegration-id.tikz}},
\end{equation}
\end{minipage}%
\begin{minipage}{.5\textwidth}
\begin{equation}\label{eq:disintegration-functorial-wobox}
\scalebox{.75}{\input{journal-figures/disintegration-functorial-wobox.tikz}}.
\end{equation}
\end{minipage}
\end{lemma}

\begin{lemma}\label{lma:fs-id}
Let $h:Z\rightarrow X$ be any channel with full support. We then have \scalebox{.75}{\input{journal-figures/disintegration-fs-id.tikz}}.
\end{lemma}

\begin{proposition}\label{prop:disintegration-depara}
Every channel in $\Mfs$ has the conditional channel, and the unique conditional channel of $(X,f):Z\rightarrow Y$ is given by the deparameterisation of $B(X,f)$~\eqref{eq:disintegration-channel}.
\end{proposition}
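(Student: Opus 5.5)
Proving Proposition~\ref{prop:disintegration-depara} means checking the two clauses of Definition~\ref{def:disintegration} for the parametric channel $(X,Bf)$. First, equation~\eqref{eq:disintegration} must hold. Second, the uniqueness implication~\eqref{eq:disintegration-unique} must hold. Once both are established, any conditional coincides with the deparameterisation of $B(X,f)$, which is then the unique conditional.

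For equation~\eqref{eq:disintegration} we compute pointwise. The left-hand side recombines the marginal of $f$ on $X$, namely $z\mapsto \mathcal N_f(z,-)$, with $Bf$ by copying $x$. This gives
\[
\mathcal N_f(z,x)\cdot \frac{f(z)(x,y)}{\mathcal N_f(z,x)} = f(z)(x,y).
\]
The division is legitimate exactly because $(X,f)\in\Mfs$, i.e.\ $\mathcal N_f(z,x)\neq 0$ for all $z$ and $x$. This recovers $f$. Diagrammatically, I would phrase the same step as sliding the marginal out of the conditional box via Lemma~\ref{lma:disintegration-functorial}, which applies because discarding $Y$ is a channel trivially coparameterised by $1$.

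For uniqueness, suppose $h:Z\rightarrow X$ has full support and $(X,g)$ is a parametric channel with $f$ equal to the composite of $h$ (copied) with $g$, as on the left of~\eqref{eq:disintegration-unique}. We apply the conditional box to both sides of this equality. On the right-hand side, the part built from $h$ and $g$ is composed with $h$ only through the coparameter wire, while the $g$-part is a trivially coparameterised channel. Lemma~\ref{lma:disintegration-functorial} therefore lets us slide $g$ out of the box, leaving the box applied to the copy of $h$. That remaining term is the identity on $X$ (with $Z$ discarded), which is exactly Lemma~\ref{lma:fs-id}. Hence $B(X,f)$ equals $(X,g)$ as parametric channels, and deparameterising gives $Bf=g$. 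It remains to show that the marginal of $f$ equals $h$, which follows by discarding $Y$ in the assumed equality and using that $g$ is a channel, so it is causal and discarding its output discards it.

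The main obstacle is that $B$ is not functorial. Every step that slides a morphism through the box must therefore be justified by the restricted functoriality of Lemma~\ref{lma:disintegration-functorial}, and in particular the term built from $h$ and $g$ must be decomposed with $g$ sitting in the trivially coparameterised part. The order and swap of parameters also need care: the symmetry may appear, as in~\eqref{disintegration-nested}, which is why the codomain is $\ParaSwap(\Chan)$.
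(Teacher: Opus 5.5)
Your proposal follows the paper's proof. Existence is checked by directly computing the composite channel, with membership in $\Mfs$ justifying the normalisation. The marginal equality comes from discarding $Y$, and $g=Bf$ comes from sliding $g$ out of the box via Lemma~\ref{lma:disintegration-functorial} and collapsing the remaining box via Lemma~\ref{lma:fs-id}.

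Two small corrections are needed. First, $g$ consumes both $X$ and $Z$, so the term left inside the box must still copy $Z$ and pass it through. It therefore collapses to the identity on $X$ and $Z$, not to the identity on $X$ with $Z$ discarded; the latter would not even compose with $g$. Second, your suggested diagrammatic reading of the existence step does not work: Lemma~\ref{lma:disintegration-functorial} applied to the discard only says that the box of the marginal is the discard, so it is the pointwise computation that establishes~\eqref{eq:disintegration}.
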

\begin{proof}
Verifying that equation~\ref{eq:disintegration} holds when we take $f'$ to be the above box is a matter of computing the composite channel. For uniqueness, we give a diagrammatic argument using Lemmas~\ref{lma:disintegration-functorial} and~\ref{lma:fs-id}. Hence suppose $h$ and $g$ are as in Definition~\ref{def:disintegration}, such that their composite is equal to $f$. The first equality then follows straightforwardly:
\begin{equation*}
\scalebox{.75}{\input{journal-figures/disintegration-uniqueness-eq1.tikz}}.
\end{equation*}
The second equality relies on the aforementioned Lemmas:
\begin{equation*}
\scalebox{.75}{\input{journal-figures/disintegration-uniqueness-eq2.tikz}}.
\end{equation*}
\end{proof}

We have chosen to present the construction of the conditional box concretely in $\Chan$ for the sake of demonstrating how one can capture existing constructions using layered theories. However, it should be possible to express the conditional box more generally (and in a more principled manner) in the setting of copy-discard categories. Such a construction would follow existing axiomatisations of normalisation by Lorenz and Tull~\cite{lorenz-tull-causal} and its use for axiomatising conditionals in the discrete case by~\cite{lorenz-tull-causal} and Jacobs, Sz\'eles and Stein~\cite{jacobs-szeles-stein25}.

%% file: tex/tocl-article/layered-discussion.tex
We have introduced layered monoidal theories (Definition~\ref{def:layered-theory}) as an algebraic means for studying layers of abstraction. Our analysis led us to identifying three families of structural equations, giving rise to opfibrational, fibrational and deflational layered theories (Subsection~\ref{subsec:opfib-defl-theories}). This gives a flavour of what kinds of abstract theories are definable using the formalism. Beyond defining the theories, we have started exploring what kinds of constructions are definable {\em within} deflational theories by constructing functor boxes and coboxes in Section~\ref{sec:functor-boxes}, and showing how they can be used to detect properties about the monoidal theories and translations at hand. In Section~\ref{sec:layered-examples}, we have provided ample examples of how we envision applications of the general theory, hopefully convincing the reader that the theoretical developments have been worthwhile. We conclude with a discussion of related work and potential future applications.

\subsection{Related work}\label{subsec:related-work}

{\em Functor boxes} are a common graphical notation for reasoning about functors, as discussed for example by Melli\` es~\cite{functorial-boxes}. The {\em functor boundaries} appear as part of a graphical notation for categories and functors in Hinze \& Marsden~\cite{hinze-marsden-2023}. The decomposition of a functor box into the ``forward'' and the ``backward'' boundaries (Proposition~\ref{prop:cowindow-box-equality}), as well as the notion of a {\em functor cobox} (Definition~\ref{def:cobox}) are an evolution of these graphical notations.

The {\em internal string diagram} notation was first introduced in the study of topological quantum field theories (TQFTs) by Bartlett, Douglas, Schommer-Pries and Vicary~\cite{modular-categories}. The notation was further exploited in the study of profunctors and traced monoidal categories by Hu~\cite{hu-thesis} and Hu \& Vicary~\cite{hu-vicary21}. While these works focus more on the semantic aspects, our contribution provides a syntactic perspective via study of generators and relations for internal string diagrams.

Rom\' an~\cite{roman} has introduced so-called {\em open diagrams} that represent elements of coends. The semantics of the diagrams is given by {\em pointed profunctors}, which are closely related to the models of deflational theories.

Braithwaite and Rom\' an~\cite{braithwaite-roman23} have studied {\em collages of string diagrams}, which is closely related to the semantics of deflational theories. A collage ``flattens'' -- or creates a collage of -- a {\em bimodular category}: a category with compatible left and right monoidal actions. The terms~\ref{term:monoid} and~\ref{term:comonoid} can be thought of as the left and right monoidal actions of a monoidal category on itself. This gives a potential direction for a generalisation of this work: what kinds of diagrams account for an action of distinct monoidal categories?

One way to think about the cobox (Definition~\ref{def:cobox}) is as a ``hole'' into which any morphism in the codomain can be plugged. There are many instances of holey string diagrams, such as in the definition of context-free languages of string diagrams by Earnshaw and Rom\' an~\cite{earnshaw-roman24}. Particularly interesting are the holes in the study of quantum combs, as these have been characterised by using internal string diagrams by Hefford and Comfort~\cite{hefford-comfort23}.

The diagrams for deflational theories visually resemble {\em tape diagrams}, which also consist of thin string diagrams embedded into thick string diagrams~\cite{tape-diagrams23,diagrammatic-program-logics25,tape-monoidal-monads25}. The resemblance is, however, largely superficial -- both syntactically and semantically. While the terms for the tape diagrams are a subset of the deflational terms (namely, the terms given by~\ref{term:diag}, \ref{term:diag-counit}, \ref{term:codiag} and~\ref{term:codiag-unit}), such terms do not satisfy the bialgebra equations that are part of the defining axioms of tape diagrams; moreover, the 2-cells between these terms (Figure~\ref{fig:structural-twocells-adjoints}) go in the opposite direction to that of the inequalities of tape diagrams (see e.g.~Figures~1 \&~2 in~\cite{tape-diagrams23}). Consequently, the semantics of the terms is different: while deflational theories ultimately encode indexed monoidal categories, the tape diagrams are interpreted in {\em rig categories} -- the two different monoidal products correspond to the ``internal'' tensor inside the tapes and the ``external'' tensor of the tapes.

\subsection{Further developments}\label{subsec:potential-applications}

The immediate theoretical question arising from the syntactic constructions presented here is the semantic interpretation: what are the models characterised by layered monoidal theories? This is the subject of Part~II~\cite{lmt-part2} of this work. Here we list other potential developments and applications of layered monoidal theories.

\paragraph{Connections with linear logic}
The connection between linear logic and profunctors has been explored in the literature several times, for example by Dunn~\cite{dunn-thesis} and Comfort~\cite{comfort-profunctors-linear-logic}. We expect that the syntactic approach presented here will help in making precise the connection between internal string diagrams and proof nets~\cite{comfort-profunctors-linear-logic}.

\paragraph{Infinite number of layers}
All the examples of layered monoidal theories we have seen in Section~\ref{sec:layered-examples} -- with the exception of digital circuits in Subsection~\ref{subsec:digital-circuits} -- have had a finite number of layers. There is, of course, no reason to restrict one's attention to the finite case: the digital circuits example already gives a glimpse of why it is natural to consider an infinite number of layers in certain contexts. Other situations featuring an infinite number of layers are:
\begin{enumerate}
\item {\em infinite monoidal products}: following the construction of infinite tensor products by Fritz and Rischel~\cite{fritz-rischel20}, it would be interesting to explore the case where the layers consist of the finite subsets of an infinite set representing the infinite monoidal product; the resulting layered theory would describe the ``finite stages of construction'' of the infinite tensor product,
\item {\em distances between string diagrams}: following the quantale enriched string diagrams introduced by Lobbia, Różowski, Sarkis and Zanasi~\cite{quantitative-monoidal-algebra}, it would be interesting to see whether the distance structure can arise from indexing rather than enrichment; the layers would be upward closed sets of a quantale ordered by inclusion, and passing from one layer to another could be seen as either losing or gaining precision,
\item more generally, it would be interesting to study order or domain theoretic structure on the indexing category, and see how it interacts with the monoidal structure.
\end{enumerate}

\paragraph{Time evolution}
The examples we have seen are static, in the sense that there is no notion of change, and all layers are assumed to be simultaneously accessible. One could introduce {\em time dependence} by having an order on the layers, and some notion of dynamics, determining which future layers are accessible from the current one. This would yield a picture in which a system evolves over time, and even the rules (i.e.~equations and 2-cells) that govern the system may differ depending on the time instance.

\paragraph{Microstates versus macrostates}
The view on abstraction taken in this work seems to be particularly suited for discrete systems with relatively few parts (or those that can be usefully modelled as such). Note that this excludes systems with dynamical qualities, such as perhaps the most well-known coarse-fine theory pair: thermodynamics and statistical mechanics in classical physics. The reason these theories do not fit the view presented here is that two different microstates (e.g.~motion of particles) in statistical mechanics (finer layer) may correspond to the same macrostate (e.g.~temperature) in thermodynamics (coarser level). Thus, in this case, the functional translation seems to be going in the reverse direction from that in Figure~\ref{fig:layers-of-abstraction}. It would be interesting to construct a layered monoidal theory capturing more dynamic settings of this kind.

\paragraph{Hierarchy of causality}
Hierarchical models of causality have become important with the emergence of neural network based machine learning~\cite{nashed2023,englberger2025,causalcompositionalabstraction2026}. Especially the models of causal abstractions by Englberger \& Dhami~\cite{englberger2025} and Lorenz \& Tull~\cite{causalcompositionalabstraction2026} fit particularly well with layered monoidal theories: both use string diagrams to represent causal models, while abstraction takes the form of a natural transformation.

\paragraph{Functional versus mechanistic explanations in biology}
Given the abundance of biological data from different abstraction levels, there is a need for formal tools in systems biology that are able to rigorously separate the levels of description, most notably the {\em mechanistic} and {\em functional} rules~\cite{rosen-life-itself,modular1999,krivine-siglog}. Differentiating the levels has at least two advantages. First, it gives a way to incorporate the vast amount of experimental data (that is growing at a fast rate) into mechanistic models, where the empirically inferred rules take the role of a higher level description~\cite{krivine-siglog}. Second, it provides a formal framework for {\em counterfactual models}, which has become an important aspect when reasoning about complex systems~\cite{counterfactual, pearl-causality, blaming2015, nashed2023}. Roughly, the lower layers can be thought of as {\em causes} for the {\em effects} at higher layers, allowing to mix-and-match the processes at lower levels to answer counterfactual questions.

\paragraph{Scalable ZX-calculus}
The {\em scalable notation} for the ZX-calculus~\cite{szx-calculus19} extends the monoidal theory of the ZX-calculus (Subsection~\ref{subsec:zx-extraction} and Appendix~\ref{sec:zx-mbqc}) to allow for wires that carry multiple qubits; the scalable notation for graphical linear algebra~\cite{graphicalsymplecticalgebra24} extends the same idea to all finite-dimensional vector spaces. The generators used are highly similar to the {\em bundlers} of Kaye~\cite{kaye-thesis} -- suggesting a similar treatment as the layered theory of digital circuits in Subsection~\ref{subsec:digital-circuits} is possible. The notation further introduces matrices and their transposes that translate back and forth between the wires of different thickness: the laws obeyed by such matrices are reminiscent of the ``sliding'' equations for the bidirectional functor boundaries (Figure~\ref{fig:structural-twocells-functors-ext}) -- suggesting an even closer connection with deflational theories.

%% file: tex/tocl-article/appendix/ccs-proofs.tex
Here we give the proofs omitted in the discussion of the calculus of communicating systems.

\begin{lemma}[Lemma~\ref{lma:iso-congruence}]
Let $P$ and $Q$ be processes. The following are equivalent, where all the isomorphisms are in $\Red$:
\begin{align*}
\text{(1) } P &\sim Q, & \text{(2) } P &\xrightarrow{\sim}Q, & \text{(3) } {P\quest} &\xrightarrow{\sim}Q\quest, & \text{(4) } {P\uparrow} &\xrightarrow{\sim}Q\uparrow.
\end{align*}
\end{lemma}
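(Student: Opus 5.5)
The plan is to prove the cycle of implications (1)$\Rightarrow$(2), (1)$\Rightarrow$(3), (1)$\Rightarrow$(4), and then (2)$\Rightarrow$(1), (3)$\Rightarrow$(1), (4)$\Rightarrow$(1). The structural generators of Figure~\ref{fig:ccs-structural-generators} are assumed to be available uniformly on the active, silent and subsilent fragments. Together with the symmetry, they implement commutativity, associativity and unitality of $\parallel$ in each fragment, and also congruence closure via the splitting generator~\eqref{generator-ccs-splitting} and its inverse. The first four rows of equations in Figure~\ref{fig:ccs-layered-theory} make all of them isomorphisms. Hence the two directions have a quite different flavour. Soundness of isomorphisms with respect to $\sim$ is an invariant argument. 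Completeness is an induction on the derivation of $P\sim Q$.

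\emph{From (1) to (2)--(4).} I will induct on the generation of $\sim$ as the smallest equivalence relation closed under the four listed clauses. Reflexivity is witnessed by identities, symmetry by the inverse isomorphism, and transitivity by composition. For the base cases $P\parallel Q\sim Q\parallel P$, $(P\parallel Q)\parallel R\sim P\parallel(Q\parallel R)$ and $0\parallel P\sim P$, I build explicit terms. First split $P\parallel Q$ into $P,Q$ with the splitting generator. Then apply the symmetry or reassociate the wires. Finally merge with the inverse of the splitting generator. The unit case uses the generator relating $0$ to the empty wire. For the congruence clause, given isomorphisms $i:P\to P'$ and $j:Q\to Q'$, I take split; $(i\otimes j)$; merge. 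Each construction is carried out in each fragment by decorating every colour with nothing, $\uparrow$, or $\quest$ respectively. So the same term schema yields (2), (3) and (4).

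\emph{From (2)--(4) to (1).} I will define an invariant on colours sending $P$, $P\uparrow$ and $P\quest$ to the $\sim$-class of $P$, and sending a list of colours to the $\sim$-class of their parallel composition (with $0$ for the empty list). Then I check that every structural generator, and the symmetry, preserves this invariant and preserves the fragment. The latter is where I must be careful, since some structural generators may mix fragments (e.g.\ merging a $\quest$ process with a $\uparrow$ one). I expect the invariant to be preserved anyway, because $\sim$ ignores the decoration. The next step is to argue that every isomorphism in $\Red$ between single colours is, up to the equations, a composite of structural generators only. The computational generators~\eqref{generators-red} cannot appear, since they are not invertible and they change the invariant (e.g.\ $\tau.P\mapsto P\uparrow$).

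This last point is the main obstacle. Showing that no term containing a computational generator can be an isomorphism requires an argument that the equations of the theory never identify such a term with a structural one. I would handle it with a semantic model. Interpret each colour as its $\sim$-class decorated by fragment. Interpret morphisms as the relation ``reachable by reduction steps'', i.e.\ a preorder in which structural generators are equalities and computational generators are strict reductions. Then argue that an isomorphism $P\to Q$ yields both $[P]\le[Q]$ and $[Q]\le[P]$. Since reduction strictly decreases process size (the number of prefixes), this forces every step to be structural, and hence $P\sim Q$. The preorder model is well-defined because all equations in Figure~\ref{fig:ccs-layered-theory} and the symmetric monoidal axioms hold in it. With this model, the interpretation of any term gives $[P]\ge[Q]$, so the existence of isomorphisms in both directions gives (1) in each of the three fragment cases.
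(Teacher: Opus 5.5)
Your strategy is sound, and for the direction from (2)--(4) to (1) it is more careful than the paper. The paper's proof is a single observation: terms built only from structural generators with single-colour domain and codomain change the underlying process only by adding or removing $0$, rebracketing and reordering, i.e.\ exactly up to $\sim$. It illustrates the other direction with one explicit associativity isomorphism, noting that the choice of which component stays silent is immaterial by the equations of Figure~\ref{fig:ccs-layered-theory}. It tacitly identifies isomorphisms of $\Red$ with such structural terms and never explains why $\Sgen$ or $\Rgen$ cannot occur in one.

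Your reachability/size model supplies exactly this missing justification, and more cheaply than you expect. Read each list of colours as the parallel composition of its underlying processes, ignoring decorations. The codomain of every generator is then reachable from its domain in zero or more reduction steps, and $\Sgen$, $\Rgen$ strictly lower the number of prefixes. So the mere existence of terms $P\to Q$ and $Q\to P$ already forces $P\sim Q$. The equations of the theory play no role here. Hence your intermediate claim that every isomorphism is a composite of structural generators is unnecessary, and so is your worry about what the equations might identify.

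One correction: the parallel-composition generator $R\uparrow,T\to(R\parallel T)\uparrow$ in~\eqref{generators-red} does not change the underlying process. It is therefore a non-strict step in your preorder. Your remarks that the generators of~\eqref{generators-red} ``change the invariant'' and are ``strict reductions'' apply only to $\Sgen$ and $\Rgen$; the conclusion is unaffected.

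The step that fails as written is (1)$\Rightarrow$(4), together with your assumption that the structural generators act uniformly on all three fragments. Decorating every colour with $\uparrow$ does not produce terms of $\Red$. No structural generator can split $(P\parallel Q)\uparrow$ into two silent wires, because its inverse (the structural generators are isomorphisms) would combine two silent objects, which the paper explicitly rules out. A silent process instead splits into one $\uparrow$-wire and $\quest$-wires, as the appendix's remark about which process ``remains in the silent fragment'' indicates.

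Your witnesses for (4) must therefore pass through mixed wires, for example:
\begin{itemize}
\item symmetry: $(P\parallel Q)\uparrow\to P\uparrow,Q\quest\to Q\quest,P\uparrow\to(Q\parallel P)\uparrow$;
\item unit: $(0\parallel P)\uparrow\to 0\quest,P\uparrow\to P\uparrow$;
\item congruence: $(P\parallel Q)\uparrow\to P\uparrow,Q\quest\to P'\uparrow,Q'\quest\to(P'\parallel Q')\uparrow$.
\end{itemize}
The congruence clause thus uses the induction hypothesis in form (4) on one component and form (3) on the other. So prove (2) and (3) fragment by fragment as you propose, and then prove (4) by an induction on $P\sim Q$ that invokes (3), or prove all three simultaneously. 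With this repair your argument is complete.
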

\begin{proof}
One observes that the terms consisting of only the structural generators whose arity and coarity are both a single colour change the process only up to adding or removing $0$, reassociating the brackets and reordering, i.e.~precisely up to the congruence. For example, $(P\parallel Q)\parallel R \sim P\parallel (Q\parallel R)$ is witnessed by the following isomorphism:
\begin{equation*}
\scalebox{.8}{\input{journal-figures/ccs-associativity-iso.tikz}}.
\end{equation*}
Note that there is an apparent choice of two other terms that achieve the same rebracketing, corresponding to choosing which of the processes $P$ and $Q$ remains in the silent fragment (i.e.~whether $P\uparrow$ or $Q\uparrow$ appears in the term). All three terms are, however, equal under the theory CCS in Figure~\ref{fig:ccs-layered-theory}.
\end{proof}

\begin{proposition}[Proposition~\ref{prop:red-string-corresp}]
Let $P$ and $Q$ be processes. There is a term in $\Red(P,{Q\uparrow})$ if and only if the rewrite rule $P\rightarrow Q$ is derivable using the production rules of reduction semantics in Definition~\ref{def:reduction-semantics}.
\end{proposition}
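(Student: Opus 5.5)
We prove the two directions separately. The forward direction is an induction on the structure of terms; the converse is an induction on derivations of rewrite rules. The generators in~\eqref{generators-red} are in correspondence with the production rules \ref{red-sem:tau}, \ref{red-sem:r} and \ref{red-sem:p}. The structural generators of Figure~\ref{fig:ccs-structural-generators} account for the congruence rule \ref{red-sem:c}, via Lemma~\ref{lma:iso-congruence}.

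\emph{Derivable rules give terms.} We argue by induction on the derivation of $P\rightarrow Q$.
\begin{itemize}
\item For \ref{red-sem:tau}, the $\tau$-generator of~\eqref{generators-red} already has sort $(\tau.P, P\uparrow)$.
\item For \ref{red-sem:r}, we precompose the reduction generator with the structural generator that splits $a.P\parallel\bar a.Q$ into its two components. This yields a term $a.P\parallel\bar a.Q\to (P\parallel Q)\uparrow$.
\item For \ref{red-sem:p}, the induction hypothesis gives a term $t:P\to Q\uparrow$. We split $P\parallel R$ into $P$ and $R$, apply $t$ to $P$ and the identity to $R$, and then merge $Q\uparrow$ with $R$ using the parallel-composition generator. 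This produces $(Q\parallel R)\uparrow$.
\item For \ref{red-sem:c}, Lemma~\ref{lma:iso-congruence} gives isomorphisms $P'\xrightarrow{\sim}P$ and $Q\uparrow\xrightarrow{\sim}Q'\uparrow$. We pre- and postcompose the term with them.
\end{itemize}

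\emph{Terms give derivable rules.} This direction is harder, because an arbitrary term in $\Red(P,Q\uparrow)$ may split into many wires. These wires can lie in the active, silent or subsilent fragments, and they are recombined in arbitrary order. We therefore prove a stronger invariant by induction on the construction of terms, taken modulo the equations. Consider a term with domain a list of colours $P_1\cdots P_n$ and codomain $X_1\cdots X_m$. Its \emph{total process} on each side is the parallel composition of the underlying processes, with the decorations $\uparrow$ and $\quest$ forgotten. The invariant states: if the term contains at least one generator from~\eqref{generators-red}, then the domain total process rewrites to a process congruent to the codomain total process; otherwise the two total processes are congruent.

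We check this invariant generator by generator. The structural generators and the symmetry preserve the total process up to $\sim$. The generators of~\eqref{generators-red} perform exactly a \ref{red-sem:tau}, \ref{red-sem:r} or \ref{red-sem:p} step, placed in a parallel context via \ref{red-sem:p} and \ref{red-sem:c}.

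The compositional cases are where the argument needs care.
\begin{itemize}
\item Tensor products are handled using \ref{red-sem:p} together with commutativity of $\parallel$.
\item Sequential composition is the main obstacle. Reduction semantics has no transitivity rule, so two consecutive computation steps cannot be composed into a single rewrite rule.
\end{itemize}
To overcome this, we use the typing discipline: once a wire is silent, there is no generator that takes a silent colour back to an active one and then reduces again. The silent fragment is closed in this sense, and the reduction generators consume only active colours. To establish this, we inspect the sorts of the generators in~\eqref{generators-red} and Figure~\ref{fig:ccs-structural-generators}. It follows that any term from an active colour to a silent colour contains exactly one computation generator ($\tau$ or reduction). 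The remaining pieces are parallel-composition merges into the silent fragment and structural isomorphisms, and the subsilent wires merely carry congruence information until they are merged back.

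The refined invariant is therefore the following. A term from active colours to colours containing exactly one silent wire, possibly together with active and subsilent wires, contains exactly one computation generator. The domain total process then rewrites in one step to the codomain total process, where the step is built from \ref{red-sem:p} and \ref{red-sem:c}. Applying this invariant to a term $P\to Q\uparrow$ gives a single derivable rule $P\rightarrow Q$.
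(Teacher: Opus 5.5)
Your direction from derivations to terms is correct and is the same argument as the paper's. The converse has a genuine gap at its central step. The passage from ``the term contains exactly one computation generator'' to ``$P\rightarrow Q$ is derivable'' is announced as a refined invariant but never proved, and the claims offered in its support do not hold.

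(i) Your first invariant fails even before sequential composition enters. $\Pgen$, which merges $R\uparrow$ with an active $T$ into $(R\parallel T)\uparrow$, does not ``perform a \ref{red-sem:p} step'': it leaves the total process unchanged, and \ref{red-sem:p} only lifts an already existing step. So $\Pgen$ with $R=T=0$ contains a generator of~\eqref{generators-red}, yet $0\parallel 0$ has no reduct at all. Likewise $\Sgen\otimes\Sgen$ on $\tau.0,\tau.0$ breaks your tensor case, just as composition breaks the sequential one: $\tau.0\parallel\tau.0$ has no single-step reduct congruent to $0\parallel 0$.

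(ii) ``Silent wires never become active again'' does not give ``exactly one'', because it does not exclude two $\Sgen$'s on parallel wires. What excludes them is a count. Inspecting the sorts, no generator merges two silent wires or discards one, and only $\Sgen$ and $\Rgen$ create one. Hence the number of computation generators equals the number of silent wires in the codomain minus the number in the domain. This is the paper's observation that silent objects cannot be combined.

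(iii) The refined invariant is not inherited by subterms: in $t_1;t_2$ the factor $t_2$ typically has silent or subsilent inputs. So the induction on term construction you announce cannot establish it as stated.

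The repair is short and keeps your total-process idea.
\begin{itemize}
\item Treat $\Pgen$ like the structural generators and symmetries, and prove by induction on terms that any term built from these preserves the total process up to $\sim$. This statement is closed under $;$ and $\otimes$.
\item Given $t\in\Red(P,Q\uparrow)$ with its unique computation generator $g$, write $t=t_1;(\id_U\otimes g\otimes\id_V);t_2$ with $t_1,t_2$ free of computation generators. This uses that every term is equal, modulo the structural identities, to a composite of one-generator slices.
\item Then $P\sim X\parallel W$ and $Q\sim Y\parallel W$, where $X\rightarrow Y$ is an instance of \ref{red-sem:tau} or \ref{red-sem:r} and $W$ is the total process of $U,V$.
\item One use each of \ref{red-sem:tau} or \ref{red-sem:r}, then \ref{red-sem:p}, then \ref{red-sem:c} derives $P\rightarrow Q$.
\end{itemize}

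The paper argues differently. It rewrites $t$ into a normal form: isomorphisms, then the single $\Sgen$ or $\Rgen$, then $\Pgen$'s interleaved with isomorphisms. It then reads off a derivation with one \ref{red-sem:p} per $\Pgen$. Once repaired, your total-process route avoids having to justify that particular normal form, and it collapses all the $\Pgen$'s into a single \ref{red-sem:p} with the whole context.
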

\begin{proof}
To show that we have a term whenever a rewrite rule is derivable (the `if' direction), we show that every deduction rule holds as a recursion scheme for terms. First, the silent action rule~\ref{red-sem:tau} holds as we have the generators $\Sgen : {\tau.P}\rightarrow {P\uparrow}$, and the reduction rule~\ref{red-sem:r} holds since we have the terms \scalebox{.8}{\input{journal-figures/red-sem-r-string.tikz}}.

For the parallel composition rule~\ref{red-sem:p}, suppose that we have a term $\mathtt t:P\rightarrow {Q\uparrow}$. One then obtains the term as required by the rule: \scalebox{.8}{\input{journal-figures/red-sem-p-string.tikz}}.

The congruence rule~\ref{red-sem:c} follows from Lemma~\ref{lma:iso-congruence} by replacing the assumed congruences with isomorphisms in the active and silent fragments.

To show the `only if' direction, i.e.~that every term corresponds to a derivable rewrite rule, we show that every term in $\Red(P,{Q\uparrow})$ is equal to a diagram that can be obtained by the recursion scheme above. Hence suppose that there is a term $\mathtt t:P\rightarrow {Q\uparrow}$. Since there is no way to combine the objects in the silent fragment, we have two cases: (1) $\mathtt t$ contains exactly one instance of the generator $\Sgen : {\tau.R}\rightarrow {R\uparrow}$ and no instances of the generator $\Rgen$, and (2) $\mathtt t$ contains exactly one instance of the generator \scalebox{.8}{\input{journal-figures/red-sem-rgen.tikz}}, and no instances of the generator $\Sgen$.

In Case~(1) we can rewrite the term $\mathtt t$ to have the following form:
\begin{equation}\label{red-sem-generic-form-1}
\scalebox{.8}{\input{journal-figures/red-sem-generic-form-1.tikz}},
\end{equation}
where the vertical dashed lines indicate any number of isomorphisms, that is, symmetries and the structural generators (Figure~\ref{fig:ccs-structural-generators}). The ellipsis stands for a sequence of parallel composition generators, of which there are assumed to be $n$: at step $i$, the process $R_i\uparrow$ in the silent fragment is composed with the $i$th identity wire $T_i$, resulting in ${R_i\parallel T_i}\uparrow$, which is isomorphic to ${R_{i+1}\uparrow}$ (taking ${R_{n+1}\uparrow}={Q\uparrow}$). We conclude Case~(1) by observing that the segments between the vertical dashed lines corresponds to an application of a reduction semantics production rule (either~\ref{red-sem:tau} or~\ref{red-sem:p}), and the dashed lines to an application of the congruence rule~\ref{red-sem:c}.

In Case~(2), the term may likewise be rewritten to have the following form:
\begin{equation}\label{red-sem-generic-form-2}
\scalebox{.8}{\input{journal-figures/red-sem-generic-form-2.tikz}},
\end{equation}
where the structure of the isomorphisms and parallel composition is depicted as in Case~(1). We then read off a derivation in reduction semantics exactly as in Case~(1), except that we replace the silent action rule~\ref{red-sem:tau} with the reduction rule~\ref{red-sem:r}.
\end{proof}

\begin{lemma}[Lemma~\ref{lma:red-term-to-comp}]
Let $\mathtt t\in\Red(P,{Q\uparrow})$ be a term. Then there is a term in the layered theory CCS
\begin{equation*}
\scalebox{.8}{\input{journal-figures/red-iso-standard.tikz}},
\end{equation*}
such that the term $t'\in\Comp(\Red)(P',{Q'\uparrow})$ is standard, and $\mathtt i$ and $\mathtt j$ are isomorphisms.
\end{lemma}
\begin{proof}
By the proof of Proposition~\ref{prop:red-string-corresp}, there are two cases: (1) $\mathtt t$ contains exactly one instance of the generator $\Sgen$ and no instances of the generator $\Rgen$, and (2) $\mathtt t$ contains exactly one instance of the generator $\Rgen$ and no instances of the generator $\Sgen$. In Case~(1), existence of the term~\eqref{red-sem-generic-form-1} implies existence of the following term:
\begin{equation*}
\scalebox{.8}{\input{journal-figures/red-sem-generic-form-1-standard.tikz}},
\end{equation*}
where, as before, the vertical dashed lines stand for some isomorphisms, and we have $n$ parallel composition generators $\Pgen$. This time, the equations $R_{i+1}=R_i\parallel T_i$ hold strictly for all $i=1,\dots,n-1$, and we still have an isomorphism ${R_n\parallel T_n\uparrow}\simeq {Q\uparrow}$. We observe that the term between the vertical dashed lines is in $\Comp(\Red)$ and is standard, so that it is indeed of the required form. This term will not, in general, be equal to the original term, as they correspond to different derivation trees, but they do derive the same rewrite rule.

Case~(2) is nearly identical, except that we replace the generator $\Sgen$ with $\Rgen$.
\end{proof}

\begin{proposition}[Proposition~\ref{prop:lts-bisimulation}]
Let $\mathtt t\in\LTS(P,{Q\uparrow x})$ be a standard term. If there is a process $P'$ with $P\sim P'$, then there is a standard term $\mathtt t'\in\LTS(P',{Q'\uparrow x})$ such that $Q\sim Q'$.
\end{proposition}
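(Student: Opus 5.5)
We argue by induction on the generation of the congruence $\sim$ and, within each step, by induction on the structure of the standard term $\mathtt t$. Since $\sim$ is the smallest equivalence relation closed under parallel composition and containing commutativity, associativity and the unit law, and since the conclusion composes (if $P\sim P'\sim P''$, apply the claim twice and use transitivity of $\sim$ on the outputs), it suffices to treat a single elementary step. Such a step rewrites one subprocess occurrence of $P$ by an instance of $R\parallel S\leftrightarrow S\parallel R$, $(R\parallel S)\parallel U\leftrightarrow R\parallel(S\parallel U)$ or $0\parallel R\leftrightarrow R$, in either direction. Symmetry of $\sim$ is free because each basic congruence is used in both directions.

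By Proposition~\ref{prop:lts-string-corresp}, a standard term $\mathtt t\in\LTS(P,Q\uparrow x)$ corresponds exactly to a derivation of $P\xrightarrow x Q$. We may therefore work with derivation trees, which is cleaner, and translate back at the end. Each standard term is either a base case, which is an action prefix, or is built by splitting $P=P_1\parallel P_2$ and applying a generator $\mathtt G\in\Gen$ to standard subterms $\mathtt t,\mathtt s$. We first handle the case where the elementary rewrite occurs strictly inside one component $P_i$. If that component is an active wire fed into a subterm, the induction hypothesis applied to that subterm yields $\mathtt t'$ with an output congruent to the original. We then reassemble with the same $\mathtt G$ and use closure of $\sim$ under $\parallel$. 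If the component is an idle wire $R$ (the context in \texttt{pl}/\texttt{pr}), we replace it by $R'$, and the output changes by $R\parallel Q\sim R'\parallel Q$. If the rewrite sits under a prefix $x.R$, the base case gives $x.R'\xrightarrow x R'\sim R$ directly.

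The main case, and the expected obstacle, is when the rewrite acts at the top-level parallel split, i.e.\ $P$ is exactly the redex. Here we do a case analysis on the last rule used, which is one of \texttt{pl}, \texttt{pr}, \texttt{r1} or \texttt{r2}.

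For commutativity, \texttt{pl} and \texttt{pr} swap into each other, and \texttt{r1} and \texttt{r2} swap into each other. The output is then $Q_2\parallel Q_1\sim Q_1\parallel Q_2$.

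For the unit law, a transition of $0\parallel R$ must come from \texttt{pl}, because $0$ has no transitions. Hence it arises from $R\xrightarrow x Q$ with output $0\parallel Q\sim Q$. Conversely, a transition $R\xrightarrow x Q$ yields $0\parallel R\xrightarrow x 0\parallel Q$.

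Associativity is the delicate case. For $(R\parallel S)\parallel U$, the last rule may act on $U$ alone, on $R\parallel S$ alone, or synchronise $R\parallel S$ with $U$. In the last case we must look one step deeper: the action of $R\parallel S$ came from either $R$ or $S$ via \texttt{pl}/\texttt{pr}, or was itself a $\tau$ and so cannot synchronise. We then rebuild the derivation for $R\parallel(S\parallel U)$:
\begin{itemize}
\item if the synchronising action came from $R$, use \texttt{r1}/\texttt{r2} between $R$ and $S\parallel U$, obtained by \texttt{pr} on $U$;
\item if it came from $S$, first synchronise $S$ with $U$ inside $S\parallel U$, then apply \texttt{pl} with $R$.
\end{itemize}
In each subcase the resulting target differs from the original only by reassociation, so it is congruent to $Q$. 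The reverse direction of associativity is symmetric.

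Checking these subcases exhaustively, and confirming that every reconstructed derivation corresponds to a standard term, is the bulk of the work. This last check is immediate via Proposition~\ref{prop:lts-string-corresp}.
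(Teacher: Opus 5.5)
Your argument is correct. Its core case analysis matches the paper's: unit, associativity and commutativity at the root, plus closure under $\parallel$. The difference is the object you work on and how you organise the induction.

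\textbf{The paper's route.} The paper stays inside the layered theory and inducts directly on the inductive definition of $\sim$. It reads off the generic shape of a standard term on $R\parallel S$: a split, subterms $\mathtt r,\mathtt s$, and a final generator. It then builds $\mathtt t'$ by diagram surgery, swapping generators for commutativity and re-pairing $(\mathtt G,\mathtt H)$ for associativity. The congruence case applies the induction hypothesis to both components at once, an idle component being an identity wire. Corollary~\ref{cor:lts-simulation} is derived afterwards from the proposition.

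\textbf{Your route.} You transport the problem to derivation trees via Proposition~\ref{prop:lts-string-corresp}. In effect you prove the classical admissibility result of Corollary~\ref{cor:lts-simulation} first and transfer it back, reversing the paper's logical order. There is no circularity, since Proposition~\ref{prop:lts-string-corresp} is proved independently. You also replace the congruence case by a reduction to single rewrite steps in parallel contexts plus structural induction on the derivation. This makes the symmetry and transitivity clauses of $\sim$ explicit, which the paper leaves implicit.

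\textbf{What each buys.} Your version is the textbook argument and is easy to check exhaustively with named rules. The paper's version is the one that actually exercises the diagrammatic syntax and constructs $\mathtt t'$ explicitly, which is the point of the case study.

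\textbf{Small things to tidy.}
\begin{itemize}
\item In the associativity case, ``acts on $R\parallel S$ alone'' also needs the one-step-deeper look. If $R\parallel S\xrightarrow{\tau}R'\parallel S'$ is itself a synchronisation, it must be rebuilt as \texttt{r1}/\texttt{r2} between $R$ and $S\parallel U$, with $S\parallel U\xrightarrow{\bar a}S'\parallel U$ by \texttt{pr}. This is the second place where a synchronisation crosses the rebracketing, and it should be spelled out like the one you did treat.
\item In your first bullet, the transition $S\parallel U\xrightarrow{\bar a}S\parallel U'$ comes from \texttt{pl}, not \texttt{pr}, under the paper's conventions: \texttt{pl} adds the idle context on the left.
\item The paper's $\sim$ is closed only under $\parallel$, not under prefixing. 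So your ``rewrite under a prefix'' case never arises, though handling it does no harm.
\end{itemize}
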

\begin{proof}
We argue by induction on construction of the congruence $P\sim P'$.

\noindent\paragraph{Base case~1:} $P={0\parallel R}$ and $P'=R$. Since the term $\mathtt t:{0\parallel R}\rightarrow {Q\uparrow x}$ is standard, it must, in fact, be of the form
\begin{equation*}
\scalebox{.8}{\input{journal-figures/lts-bisimulation-case1.tikz}},
\end{equation*}
so that the sought-after term is given by $\mathtt r:R\rightarrow {R'\uparrow x}$. The argument for the case when $P=R$ and $P'={0\parallel R}$ is symmetric.

\noindent\paragraph{Base case~2:} $P=(R\parallel S)\parallel T$ and $P'=R\parallel (S\parallel T)$. There are two cases for the form of the standard term $\mathtt t:(R\parallel S)\parallel T\rightarrow {Q\uparrow x}$. Either it is of the form
\begin{equation*}
\scalebox{.8}{\input{journal-figures/lts-bisimulation-case2-1.tikz}},
\end{equation*}
or it is of the form
\begin{equation*}
\scalebox{.8}{\input{journal-figures/lts-bisimulation-case2-2.tikz}},
\end{equation*}
where $\mathtt G,\mathtt H\in\Gen$. In the first case, the sought-after term $\mathtt t'$ is simply given by
\begin{equation*}
\scalebox{.8}{\input{journal-figures/lts-bisimulation-case2-1-new.tikz}}.
\end{equation*}
The second case has two subcases: first, when $\mathtt G=\mathtt H=\Pgenr$ (and consequently $\mathtt s$ and $\mathtt u$ are identities), and second, when either $\mathtt G\neq\Pgenr$ or $\mathtt H\neq\Pgenr$. In the first subcase, the sought-after term $\mathtt t'$ is given by
\begin{equation*}
\scalebox{.8}{\input{journal-figures/lts-bisimulation-case2-2-1-new.tikz}}.
\end{equation*}
For the second subcase, we first define the function $(-',-'):\Gen\times\Gen\rightarrow\Gen\times\Gen$ as follows: if $\mathtt G\neq\Pgenr$, then $(\mathtt G',\mathtt H')\coloneq (\mathtt H,\mathtt G)$, and $(\Pgenr',\mathtt H')\coloneq (\Pgenl,\mathtt H)$. With this notation, taking $(\mathtt G,\mathtt H)$ that appear in the term as the input for the function $(-',-')$, the sought-after term $\mathtt t'$ is given by
\begin{equation*}
\scalebox{.8}{\input{journal-figures/lts-bisimulation-case2-new.tikz}}.
\end{equation*}
The argument for the case when $P$ and $P'$ are interchanged is symmetric.

\noindent\paragraph{Base case~3:} $P=R\parallel S$ and $P'=S\parallel R$. In this case, the standard term $\mathtt t:R\parallel S\rightarrow {Q\uparrow x}$ is of the generic form
\begin{equation*}
\scalebox{.8}{\input{journal-figures/lts-bisimulation-case3.tikz}}.
\end{equation*}
Let us define $\Pgenl^*\coloneq\Pgenr$, $\Pgenr^*\coloneq\Pgenl$, $\Rgenone^*\coloneq\Rgentwo$ and $\Rgentwo^*\coloneq\Rgenone$. With this notation, the sought-after term $\mathtt t'$ is given by
\begin{equation*}
\scalebox{.8}{\input{journal-figures/lts-bisimulation-case3-new.tikz}}.
\end{equation*}

\noindent\paragraph{Inductive case:} $P=R\parallel S$ and $P'=R'\parallel S'$, such that $R\sim R'$ and $S\sim S'$, and the statement holds for these instances of the congruence. As in the previous case, the standard term $\mathtt t:R\parallel S\rightarrow {Q\uparrow x}$ is of the generic form
\begin{equation*}
\scalebox{.8}{\input{journal-figures/lts-bisimulation-case4.tikz}}.
\end{equation*}
By the induction hypothesis, there are standard terms $\mathtt r':R'\rightarrow T'$ and $\mathtt s':S'\rightarrow U'$ such that $T'\sim T$ and $U'\sim U$ (note that if the codomain of either $\mathtt r$ or $\mathtt s$ remains in the active fragment, then the term must be the identity). The sought-after term $\mathtt t'$ is thus given by
\begin{equation*}
\scalebox{.8}{\input{journal-figures/lts-bisimulation-case4-new.tikz}},
\end{equation*}
thereby completing the induction.
\end{proof}

\begin{lemma}[Lemma~\ref{lma:lts-term-to-comp}]
Let $\mathtt t\in\LTS(P,{Q\uparrow\tau})$ be a standard term. Then there is a term in the layered theory CCS
\begin{equation*}
\scalebox{.8}{\input{journal-figures/red-iso-standard.tikz}},
\end{equation*}
such that the term $t'\in\Comp(\Red)(P',{Q'\uparrow})$ is standard, and $\mathtt i$ and $\mathtt j$ are isomorphisms.
\end{lemma}
\begin{proof}
There are two cases: either (1) $\mathtt t$ contains the generator $\Agen : \tau.R\rightarrow R\uparrow\tau$ for some process $R$ and all other generators in $\mathtt t$ are instances of the parallel composition generators $\Pgenr$ and $\Pgenl$, or (2) $\mathtt t$ contains exactly one of the generators \scalebox{.8}{\input{journal-figures/lts-reduction-gens.tikz}} for some processes $R$ and $S$ and some $a\in A$, the generators $\Agen : a.R\rightarrow R\uparrow a$ and $\Agen :\bar a.S\rightarrow S\uparrow\bar a$, and all other generators in $\mathtt t$ are instances of the parallel composition generators $\Pgenr$ and $\Pgenl$.

In Case~(1), let us rearrange the processes separated by $\parallel$ in $P$ in such a way that $\tau.R$ is the leftmost process, and the parentheses are associated to the left: $P'\coloneq (\dots((\tau.R\parallel T_1)\parallel T_2)\dots)\parallel T_n$. Since $P'\sim P$, Proposition~\ref{prop:lts-bisimulation} produces a standard term $\mathtt t'\in\LTS(P', {Q'\uparrow\tau})$ such that $Q'\sim Q$. Due to the choice of bracketing, only the right parallel composition generators $\Pgenr$ can appear in $\mathtt t'$ (in addition to the generator $\Agen : \tau.R\rightarrow R\uparrow\tau$). Thus, $\mathtt t'$ is in the image of the translation $\Comp(\Red)\rightarrow\LTS$, producing the required standard term in $\Comp(\Red)$ upon replacing $\Agen$ with $\Sgen$ and each $\Pgenr$ with $\Pgen$.

In Case~(2), we similarly rearrange the processes in $P$ so as to obtain the following $P'$, with $P'\sim P$:
$$P'\coloneq (\dots(a.R\parallel\bar a.S)\parallel T_1)\dots)\parallel T_n.$$
The same argument as in Case~(1) then produces the required term.
\end{proof}

\begin{theorem}[Theorem~\ref{thm:red-lts-term-iff}]
There is a term in $\Red(P,{Q\uparrow})$ if and only if there is a process $Q'$ with $Q\sim Q'$ and a standard term $\LTS(P,{Q'\uparrow\tau})$.
\end{theorem}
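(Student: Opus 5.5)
The plan is to prove the two directions separately, moving through the middle layer $\Comp(\Red)$ in each case. Lemmas~\ref{lma:red-term-to-comp} and~\ref{lma:lts-term-to-comp} give exactly the same normal form~\eqref{red-iso-standard}, namely isomorphisms $\mathtt i$ and $\mathtt j$ in $\Red$ wrapped around a standard term in $\Comp(\Red)$. So each direction should reduce to passing this normal form across the translation $\Comp(\Red)\rightarrow\LTS$ or the inclusion $\Comp(\Red)\hookrightarrow\Red$, and then adjusting the endpoints using Lemma~\ref{lma:iso-congruence} and Proposition~\ref{prop:lts-bisimulation}.

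For the `only if' direction, start from a term $\mathtt t\in\Red(P,{Q\uparrow})$. Lemma~\ref{lma:red-term-to-comp} yields isomorphisms $\mathtt i:P\xrightarrow{\sim}P'$ and $\mathtt j:{Q''\uparrow}\xrightarrow{\sim}{Q\uparrow}$, together with a standard term $t'\in\Comp(\Red)(P',{Q''\uparrow})$. Since $\mathtt i$ is an isomorphism between active-fragment colours, Lemma~\ref{lma:iso-congruence} gives $P\sim P'$. Similarly $\mathtt j$ gives $Q''\sim Q$.

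Next, translate $t'$ into $\LTS$. By the definition of standardness in $\Comp(\Red)$, its image is a standard term in $\LTS(P',{Q''\uparrow\tau})$; the 0-equations in Figure~\ref{fig:ccs-layered-theory} send ${Q''\uparrow}$ to ${Q''\uparrow\tau}$. Applying Proposition~\ref{prop:lts-bisimulation} to this term and $P'\sim P$ produces a standard term in $\LTS(P,{Q'\uparrow\tau})$ with $Q'\sim Q''\sim Q$. Transitivity of $\sim$ then finishes this direction.

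For the `if' direction, start from a standard term in $\LTS(P,{Q'\uparrow\tau})$ with $Q\sim Q'$. Lemma~\ref{lma:lts-term-to-comp} gives the form~\eqref{red-iso-standard}. Embedding the standard $\Comp(\Red)$ part via $\Comp(\Red)\hookrightarrow\Red$ and composing with $\mathtt i$ and $\mathtt j$ produces a term in $\Red(P,{Q'\uparrow})$. Postcompose with the isomorphism ${Q'\uparrow}\xrightarrow{\sim}{Q\uparrow}$ provided by Lemma~\ref{lma:iso-congruence} from $Q'\sim Q$; this gives the required term in $\Red(P,{Q\uparrow})$. One could instead route this direction through Proposition~\ref{prop:red-string-corresp} and Corollary~\ref{cor:lts-simulation}, but the term-level argument above is more direct.

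The step I expect to need the most care is bookkeeping the domains and codomains in the lemmas. Lemma~\ref{lma:lts-term-to-comp} as stated only guarantees \emph{some} isomorphisms $\mathtt i$ and $\mathtt j$ whose outer types must be $P$ and ${Q\uparrow}$ up to congruence, and in both directions I need to check that the outer types are exactly what is required. If the outer endpoints of~\eqref{red-iso-standard} are the original $P$ and codomain, everything is immediate. If not, the fallback is to insert the extra isomorphisms from Lemma~\ref{lma:iso-congruence}, which is harmless because every mismatch is a congruence between processes in a single fragment.
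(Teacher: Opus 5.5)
Your proof is correct and follows the paper's argument step for step. In the forward direction you pass through Lemma~\ref{lma:red-term-to-comp}, translate the standard $\Comp(\Red)$ term into $\LTS$, and move the domain back to $P$ with Proposition~\ref{prop:lts-bisimulation}. In the converse you apply Lemma~\ref{lma:lts-term-to-comp} and postcompose with ${Q'\uparrow}\simeq{Q\uparrow}$. Your explicit use of Lemma~\ref{lma:iso-congruence} to turn $\mathtt i$ and $\mathtt j$ into congruences only spells out what the paper leaves implicit. The paper also takes the outer types of~\eqref{red-iso-standard} to be exactly $P$ and the original codomain, so your fallback is harmless but not needed.
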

\begin{proof}
First, let $\mathtt t\in\Red(P,{Q\uparrow})$ be a term. By Lemma~\ref{lma:red-term-to-comp}, we obtain a standard term $\mathtt t':P'\rightarrow {Q'\uparrow}$ in $\Comp(\Red)$, with $P\sim P'$ and $Q\sim Q'$. Therefore, the translation of $\mathtt t'$ to $\LTS$ gives the standard term $P'\rightarrow {Q'\uparrow\tau}$. By Proposition~\ref{prop:lts-bisimulation}, there is a standard term in $\LTS(P,{Q''\uparrow\tau})$ such that $Q''\sim Q'$, whence also $Q''\sim Q$, as required.

Conversely, let $\mathtt t\in\LTS(P,{Q'\uparrow\tau})$ be a standard term and let $Q\sim Q'$. By Lemma~\ref{lma:lts-term-to-comp}, there is a term
\begin{equation*}
\scalebox{.8}{\input{journal-figures/red-iso-standard-equiv.tikz}},
\end{equation*}
so that we obtain the required term in $\Red(P,{Q\uparrow})$ by composing with the isomorphism ${Q'\uparrow}\simeq {Q\uparrow}$.
\end{proof}

%% file: tex/tocl-article/appendix/para-copara.tex
Here we cover the graphical notation for the {\em (co)para construction}. The construction has appeared several times in the (applied) category theory literature, explicitly under this name in~\cite{fong2019backprop,gavranovic19,cruttwell21}, and it allows morphisms to depend (or be indexed by) objects in the monoidal category at hand.

\begin{definition}\label{def:copara}
Let $\mathcal T = (C,\Sigma,E)$ be a symmetric monoidal theory. We denote by $\Copara(\mathcal T)$ the {\em coparametric monoidal theory over $\mathcal T$}: the colours are given by $C$, the monoidal generators are
\begin{equation*}
\scalebox{.75}{\input{journal-figures/copara-generators.tikz}},
\end{equation*}
while the equations are shown in Figure~\ref{fig:copara-eqns}. Additionally, it will be convenient to omit the box parameterised by the monoidal unit, and to allow ``deparemeterisation'':
\begin{equation*}
\scalebox{.75}{\input{journal-figures/copara-shorthand.tikz}}.
\end{equation*}
\end{definition}
\begin{figure}
  \centering
  \scalebox{.75}{\input{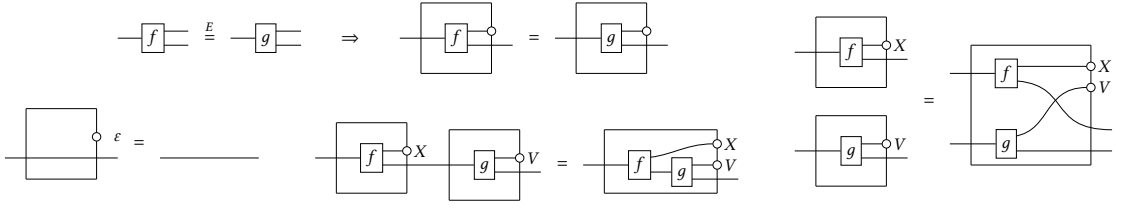}}
  \caption{Equations of the coparametric monoidal theory.\label{fig:copara-eqns}}
\end{figure}

\begin{proposition}
The identity functions $\mathcal T\rightarrow\Copara(\mathcal T)$ on colours and generators give a morphism of monoidal theories.
\end{proposition}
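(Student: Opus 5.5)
The claim is that the identity on colours and generators, $\mathcal T\rightarrow\Copara(\mathcal T)$, is a morphism of monoidal theories. By the definition of a morphism in $\MTh$, two things must be checked. First, the identity on $C$ together with the inclusions $\Sigma(a,b)\rightarrow\Sigma_{\Copara}(a,b)$ must form a morphism of monoidal signatures. Second, every equation $(s,t)\in E$ must be sent to an equation of $\Copara(\mathcal T)$.

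\emph{Step 1: the signature morphism.} I would read off from the generators of $\Copara(\mathcal T)$ in Definition~\ref{def:copara} that every generator $\sigma\in\Sigma(a,b)$ of $\mathcal T$ is a generator of the same sort $(a,b)$ there. I take this to be the case in which $\sigma$ is trivially coparameterised, i.e.\ the coparameter is the monoidal unit, which the shorthand convention allows us to draw as $\sigma$ itself. The function on colours is the identity, so $f^*$ is the identity on $C^*$. Hence $f_{a,b}$ is simply the inclusion $\Sigma(a,b)\hookrightarrow\Sigma_{\Copara}(a,b)$, and the sorts match.

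\emph{Step 2: preservation of equations.} The morphism acts as the identity on generators, so by the recursive extension to terms it sends each term $t$ of $\mathcal T$ to the literally same term, now read in $\Copara(\mathcal T)$. It therefore suffices to show that $E\subseteq E_{\Copara}$.

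\emph{Main obstacle.} I expect this inclusion to be the delicate point. It relies on Figure~\ref{fig:copara-eqns} containing the equations of $\mathcal T$ for unit-coparameterised terms. Two subtleties arise:
\begin{itemize}
\item Composing two unit-coparameterised generators yields a term coparameterised by $\varepsilon\varepsilon=\varepsilon$. This is strictly the unit, since colours are lists.
\item The symmetry of $\mathcal T$ must be sent to a symmetry generator of $\Copara(\mathcal T)$ satisfying Definition~\ref{def:symm-mon-thy}. This should follow because $\Copara(\mathcal T)$ is defined over a symmetric monoidal theory and inherits the symmetry generators.
\end{itemize}

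If instead the equations of $\Copara(\mathcal T)$ only impose $E$ inside coparametric boxes, I would argue as follows. First, unfold each side of an equation $(s,t)\in E$ into a single box with coparameter $\varepsilon$, using the composition and tensor equations of Figure~\ref{fig:copara-eqns}. Then apply the equation inside the box, and refold. Strictly, this shows preservation only up to the term congruence $\Eeq$ of $\Copara(\mathcal T)$. That is the standard reading under which morphisms of theories are taken, or else one passes to the closure under $\Eeq$.
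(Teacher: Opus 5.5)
The paper states this proposition without proof, treating it as immediate from the construction. Your argument is the routine verification it leaves implicit, and it is correct: the identity on colours and generators is a signature morphism, reading each $\sigma$ as its unit-coparameterised box. Each $(s,t)\in E$ is then preserved by fusing the unit boxes into a single box via the composition, tensor and identity equations of $\Copara(\mathcal T)$, using that $s$ and $t$ are equal in $\mathcal T$ inside that box, and splitting it again. Your caveat that this places $(fs,ft)$ in the term congruence of $\Copara(\mathcal T)$, rather than literally in its equation set, is apt given how morphisms of monoidal theories are defined. Your second bullet is unnecessary: the symmetry is just another generator of $\mathcal T$, and its equations already lie in $E$, so the general argument covers them without requiring $\Copara(\mathcal T)$ itself to be symmetric.
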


The dual construction, with the generators flipped, is the {\em parametric monoidal theory}, denoted by $\Para(\mathcal T)$. A variant of this construction is given by $\ParaSwap(\mathcal T)$, where the equation defining composition is replaced by
\begin{equation*}
\scalebox{.75}{\input{journal-figures/paraswap-composition.tikz}}.
\end{equation*}

%% file: tex/tocl-article/appendix/zx-mbqc.tex
The basic unit of quantum computation is a {\em qubit}, represented by the Hilbert space $\C^2$. The {\em state} of a qubit is a ray in $\C^2$, and can, therefore, be represented by a linear combination of the standard basis vectors $\ket{0}\coloneq (1,0)$ and $\ket{1}\coloneq (0,1)$, where two linear combinations represent the same state if they are equal up to multiplication by a non-zero scalar. As is standard in the quantum information literature, we represent the basis states in the tensor products of qubits by listing the elements in each component: e.g.~$\ket{00}=\ket{0}\otimes\ket{0}$ is a state in $\C^2\otimes\C^2$. Similarly, the linear maps between qubits are represented by reversing the notation for states: e.g.~$\bra{0}:\C^2\rightarrow\C$ is the linear functional defined by taking the inner product with $\ket{0}$, and $\ketbra{0}{0}:\C^2\rightarrow\C^2$ is the projection to the state $\ket{0}$ (this is known as Dirac's {\em bra-ket notation}). In the context of quantum computing, the standard orthonormal basis $\left\{\ket{0},\ket{1}\right\}$ is often referred to as the {\em computational basis}, to distinguish it from the {\em Hadamard basis} $\left\{\ket{+},\ket{-}\right\}$, defined by $\ket{+}\coloneq\frac{1}{\sqrt 2}\left(\ket{0}+\ket{1}\right)$ and $\ket{-}\coloneq\frac{1}{\sqrt 2}\left(\ket{0}-\ket{1}\right)$. The ZX-calculus can then be seen as an axiomatisation of how these two bases interact.

There is a strict monoidal functor from $\ZX$ (with generators~\eqref{eq:zx-generators} and equations~\eqref{eq:zx-rules}) to the category of finite-dimensional Hilbert spaces given by $n\mapsto\bigotimes_{i=1}^n\C^2$ (the n-fold tensor product of $\C^2$ with itself) on objects, and the Z- and X-spiders are, respectively, mapped to
$$\ketbra{0\cdots 0}{0\cdots 0} + e^{i\alpha}\ketbra{1\cdots 1}{1\cdots 1}\text{ and } \ketbra{+\cdots +}{+\cdots +} + e^{i\alpha}\ketbra{-\cdots -}{-\cdots -}.$$
Note that equation \HadamardDef in~\eqref{eq:zx-rules} then forces the interpretation of the Hadamard gate to be (up to a scalar) the transformation that maps the computational basis to the Hadamard basis: $\ket{0}\mapsto\ket{+}$ and $\ket{1}\mapsto\ket{-}$. As an example of this functorial interpretation, the computational basis states $\ket{0}$ and $\ket{1}$ are now represented by the X-spiders \input{journal-figures/comp-zero.tikz} and \input{journal-figures/comp-one.tikz}, while the Hadamard basis states $\ket{+}$ and $\ket{-}$ are represented by the Z-spiders \input{journal-figures/hadamard-plus.tikz} and \input{journal-figures/hadamard-minus.tikz}.

The above functor is full, corresponding to universality of the calculus: any linear map between finite tensor products of qubits can be expressed as a ZX-diagram. Under the equations we chose to present, the functor is not quite faithful, so the calculus is not complete. However, restricting the diagrams to the {\em Clifford fragment}, i.e.~the diagrams where the phase $\alpha$ is a multiple of $\frac{\pi}{2}$ makes the calculus complete, so long as we ignore the non-zero scalars~\cite{backens-2014}. Adding more equations, the calculus has been shown complete for the Clifford+T fragment~\cite{jeandel-univ-complete}, and finally for the full ZX-calculus~\cite{hadzihasanovic-zx18}.

Recall that $\mathcal P\coloneq\{\XYplane,\XZplane,\YZplane\}$ denotes the set of {\em measurement planes}.
\begin{definition}[Measurement pattern~\cite{measurement-calculus}]\label{def:measurement-pattern}
Let $(V,I,O)$ be a triple of a finite set $V$ with two chosen subsets $I,O\sse V$. The set $\mathcal S$ of {\em signals} consists of the formal sums of $0$, $1$ and $s_i$, where $i\in V\setminus O$. A {\em measurement pattern} over $(V,I,O)$ is an element in the free monoid generated by the set $\left\{N_j, E_{nm}, M^{\lambda,\alpha}_k, X_n^s, Z_n^s\right\}$, where $j\in V\setminus I$, $k\in V\setminus O$, $n,m\in V$ with $n\neq m$, $\lambda\in\mathcal P$, $\alpha\in [0,2\pi)$, and $t,s\in\mathcal S$, and for $C\in\{X,Z\}$ we identify $C_n\coloneq C_n^1$ and $C_n^0=\varepsilon$.
\end{definition}
A measurement pattern on $(V,I,O)$ can be thought of as a sequence of commands that takes in $|I|$ qubits, executes the commands one at a time in the order they appear, and outputs $|O|$ qubits. The intuition behind the commands is as follows (see also Table~\ref{tab:MBQC-to-ZX} for the translation of measurement patterns to the ZX-calculus):
\begin{itemize}
\item $N_j$ prepares the non-input qubit $j$ in the state $\ket{+}$,
\item $E_{nm}$ entangles two distinct qubits $n$ and $m$ by applying a CZ-gate to them,
\item $M^{\lambda,\alpha}_k$ destructively measures the non-output qubit $k$ by projecting it to the Hadamard basis in the plane $\lambda$ in the Bloch sphere shifted by the angle $\alpha$,
\item the corrections $X_n^s$ and $Z_n^s$ act as the Pauli-X and Pauli-Z operators on the qubit $n$ if the signal $s$ evaluates to $1$ in $\Z_2$, and as the identities otherwise.
\end{itemize}
Once a measurement on qubit $k$ is performed, the qubit $k$ is either in the (shifted) state $\ket{+}$, which is modelled by setting the signal $s_k$ to $0$, or in the (shifted) state $\ket{-}$, which is modelled by setting the signal $s_k$ to $1$. A correction is said to {\em depend} on the outcome of measuring a qubit $k$ if the signal $s_k$ appears in its expression.

Clearly, not all measurement patterns are physically realisable: e.g.~no correction should depend on an outcome of a qubit that has not yet been measured, and commands should only act on qubits that have been either prepared or are input qubits. Patterns satisfying such realisability conditions are called {\em runnable}~\cite{browne-gflow,thereandback}. Any runnable pattern can be written in a standard form, without changing the semantics in terms of linear maps~\cite{measurement-calculus,browne-gflow}. In the standard form, the classes of commands appear in the following order: preparations, entanglement commands, measurements, corrections. Note that this corresponds to the physical order in which the commands would be executed in the experimental implementation. We now proceed to define the semantics of patterns.

Given a runnable measurement pattern with $n$ measurement commands, a {\em branch} of the pattern is a sequence of $0$'s and $1$'s of length $n$. The interpretation of a branch is the experimental scenario in which $i$th measurement has yielded the outcome specified by the corresponding entry in the sequence. A {\em partial branch} is a sequence of $0$'s, $1$'s and symbols $x$ of length $n$. The interpretation of a partial branch is that only some of the qubits were measured, with the outcomes specified by the $0$ and $1$ entires in the sequence. Given a partial branch $b=b_1\cdots b_n$, define the {\em support} of $b$ as $\supp(b)\coloneq\{i : b_i\neq x\}$. We say that a partial branch $c$ {\em complements} a branch $b$ if $\supp(b)\cap\supp(c)=\eset$ and $\supp(b)\cup\supp(c)=\{1,\dots,n\}$. We note that a branch is a partial branch with full support, i.e.~with no symbols $x$, and that any two partial branches $b$ and $c$ that complement each other can be uniquely combined into a branch $b*c$.

\begin{definition}[Pattern coarsening]
Given a measurement pattern $P$ and a partial branch $b=b_1\cdots b_n$, we obtain a new pattern $P_b$ by modifying $P$ according to each entry $b_k$ as follows:
\begin{itemize}
\item if $b_k=x$, do nothing,
\item if $b_k=0$, substitute $s_k\mapsto 0$,
\item if $b_k=1$, substitute $M^{\lambda,\alpha}_k\mapsto M^{\lambda,\alpha+\pi}_k$ and $s_k\mapsto 1$.
\end{itemize}
We say that the pattern $P_b$ is the {\em coarsening} of the pattern $P$ by the partial branch $b$.
\end{definition}
Coarsening a runnable pattern by a branch results in a pattern with all signals replaced with formal sums of $0$'s and $1$'s.

Any branch $b$ of a runnable pattern $P$ realises a linear map from $|I|$ qubits to $|O|$ qubits~\cite{measurement-calculus} as follows:
\begin{enumerate}[label={(\arabic*)}]
\item coarsen the pattern, obtaining the pattern $P_b$ with no signal symbols $s_k$,
\item replace each signal in $P_b$ with either $0$ or $1$ by evaluating the formal sums in $\Z_2$,
\item translate the pattern to a ZX-diagram according to the translation procedure in Table~\ref{tab:MBQC-to-ZX},
\item translate the resulting diagram to a linear map.
\end{enumerate}
The conditions for a runnable pattern guarantee that the procedure defined in Table~\ref{tab:MBQC-to-ZX} is well-defined and terminating. Given a runnable measurement pattern $P=(V,I,O)$ and a branch $b$, we denote the linear map $\bigotimes_{i\in I}\C^2\rightarrow\bigotimes_{o\in O}\C^2$ realised by $b$ by $L(P_b)$. We remark that, since the ZX-calculus is universal and can be made complete, the last step of the translation can often be omitted, and we can reason directly with ZX-diagrams rather than linear maps.

 \begin{table}
  \renewcommand{\arraystretch}{2}
  \begin{tabular}{c || c | c | c | c | c | c}
   command / vertex & $i\in I$ & $N_j$ & $E_{nm}$ & $M^{\XYplane,\alpha}_k$ & $M^{\XZplane,\alpha}_k$ & $M^{\YZplane,\alpha}_k$ \\ \hline
   diagram & \input{journal-figures/id-input.tikz} & \input{journal-figures/Z-zero.tikz} & \input{journal-figures/cz.tikz} & \input{journal-figures/XY-effect.tikz} & \input{journal-figures/XZ-effect.tikz} & \input{journal-figures/YZ-effect.tikz}
  \end{tabular}
  \begin{tabular}{c | c | c}
   $X_n$ & $Z_n$ & $o\in O$ \\ \hline
   \input{journal-figures/Pauli-X.tikz} & \input{journal-figures/Pauli-Z.tikz} & \input{journal-figures/id-output.tikz}
  \end{tabular}
  \renewcommand{\arraystretch}{1}
  \caption{Translation from measurement patterns to ZX-diagrams: match the labels on the left to labels that have already been drawn, append the specified diagram, and replace the labels with the ones on the right.\label{tab:MBQC-to-ZX}}
 \end{table}

In general, different branches of a runnable pattern may result in different linear maps. For useful algorithms and computations, one is interested in patterns where the linear maps realised by different branches are related. This gives rise to different flavours of {\em determinism}.
\begin{definition}[Determinism~\cite{measurement-calculus,browne-gflow}]\label{def:determinism}
A runnable measurement pattern $P$ is
\begin{itemize}
\item {\em deterministic} if for any two branches $b$ and $c$, the linear maps they realise are pointwise proportional: for any input $q$, the vectors $L(P_b)(q)$ and $L(P_c)(q)$ are scalar multiples of each other,
\item {\em strongly deterministic} if any two branches result in the same linear map, up to a non-zero scalar,
\item {\em uniformly deterministic} if it is deterministic for any choice of the measurement angles,
\item {\em stepwise deterministic} if for any two partial branches $b$ and $c$ with the same support, there is a sequence $C$ of correction commands, which depend only on the qubits that are not in the common support of $b$ and $c$, making the coarsened patterns $P_b$ and $P_c$ equivalent in the following sense: for any partial branches $b'$ and $c'$ that complement $b$ and $c$, the resulting linear maps $L\left((C\cdot P)_{b*b'}\right)$ and $L\left(P_{c*c'}\right)$ are pointwise proportional.
\end{itemize}
\end{definition}

\begin{example}
Consider the pattern\footnote{Note that the usual order of commands in a measurement pattern is right to left, corresponding to the usual order of composition of operators (functions). We, however, choose to write the commands from left to right, so as to match the diagrammatic order of composition.} $N_2E_{12}M_2^{\XYplane,0}X_1^{s_2}$ acting on $(\{1,2\},\{1\},\{1\})$. It has two branches, corresponding to the measurement outcomes of $\ket{+}$ ($s_2=0$) and $\ket{-}$ ($s_2=1$), resulting in the following ZX-diagrams upon translating according to Table~\ref{tab:MBQC-to-ZX}:
\begin{equation*}
\scalebox{1}{\input{journal-figures/bitflip-correction-1.tikz}}.
\end{equation*}
Simplifying using the equations of the ZX-calculus, we get that the above diagrams are equal to
\begin{equation*}
\scalebox{1}{\input{journal-figures/bitflip-correction-2.tikz}},
\end{equation*}
which are $\ketbra{0}{0}$ and $\ketbra{0}{1}$, so that the outcome state is always proportional to $\ket{0}$. Thus the pattern is deterministic, but not strongly deterministic.
\end{example}

\begin{example}
Consider the pattern $N_2E_{12}M_1^{\XYplane,0}X_2^{s_1}$ acting on $(\{1,2\},\{1\},\{2\})$. The two branches result in the following diagrams:
\begin{equation*}
\scalebox{1}{\input{journal-figures/hadamard-pattern.tikz}},
\end{equation*}
both of which simplify to the Hadamard gate. Thus the pattern is strongly deterministic, and each branch implements the Hadamard gate.
\end{example}

%% file: figures/journal-figures/comp-zero.tikz
\begin{tikzpicture}
	\begin{pgfonlayer}{nodelayer}
		\node [style=none] (1) at (1, 0) {};
		\node [style=X dot] (2) at (0, 0) {};
	\end{pgfonlayer}
	\begin{pgfonlayer}{edgelayer}
		\draw (2) to (1.center);
	\end{pgfonlayer}
\end{tikzpicture}

%% file: figures/journal-figures/comp-one.tikz
\begin{tikzpicture}
	\begin{pgfonlayer}{nodelayer}
		\node [style=none] (1) at (1, 0) {};
		\node [style=X phase dot] (3) at (0, 0) {$\pi$};
	\end{pgfonlayer}
	\begin{pgfonlayer}{edgelayer}
		\draw (3) to (1.center);
	\end{pgfonlayer}
\end{tikzpicture}

%% file: figures/journal-figures/hadamard-plus.tikz
\begin{tikzpicture}
	\begin{pgfonlayer}{nodelayer}
		\node [style=Z dot] (0) at (0, 0) {};
		\node [style=none] (1) at (1, 0) {};
	\end{pgfonlayer}
	\begin{pgfonlayer}{edgelayer}
		\draw (0) to (1.center);
	\end{pgfonlayer}
\end{tikzpicture}

%% file: figures/journal-figures/hadamard-minus.tikz
\begin{tikzpicture}
	\begin{pgfonlayer}{nodelayer}
		\node [style=Z phase dot] (0) at (0, 0) {$\pi$};
		\node [style=none] (1) at (1, 0) {};
	\end{pgfonlayer}
	\begin{pgfonlayer}{edgelayer}
		\draw (0) to (1.center);
	\end{pgfonlayer}
\end{tikzpicture}

%% file: bibliography.bib
@incollection{JacobsZanasi-essentials,
  author       = {Bart Jacobs and
                  Fabio Zanasi},
  editor       = {Gilles Barthe and
                  Joost{-}Pieter Katoen and
                  Alexandra Silva},
  title        = {The Logical Essentials of Bayesian Reasoning},
  booktitle    = {Foundations of Probabilistic Programming},
  pages        = {295--332},
  publisher    = {Cambridge University Press},
  year         = {2020},
  url          = {https://doi.org/10.1017/9781108770750.010},
  doi          = {10.1017/9781108770750.010},
  address      = {Cambridge},
}

@article{kissinger19,
   title={A categorical semantics for causal structure},
   volume={Volume 15, Issue 3},
   ISSN={1860-5974},
   url={http://dx.doi.org/10.23638/LMCS-15(3:15)2019},
   DOI={10.23638/lmcs-15(3:15)2019},
   journal={Logical Methods in Computer Science},
   publisher={Centre pour la Communication Scientifique Directe (CCSD)},
   author={Kissinger, Aleks and Uijlen, Sander},
   year={2019},
   pages={15:1--15:48},
}

@article{JacobsKZ21,
  author       = {Bart Jacobs and
                  Aleks Kissinger and
                  Fabio Zanasi},
  title        = {Causal inference via string diagram surgery: {A} diagrammatic approach
                  to interventions and counterfactuals},
  journal      = {Math. Struct. Comput. Sci.},
  volume       = {31},
  number       = {5},
  pages        = {553--574},
  year         = {2021},
  url          = {https://doi.org/10.1017/S096012952100027X},
  doi          = {10.1017/S096012952100027X},
  bibsource    = {dblp computer science bibliography, https://dblp.org}
}

@misc{krivine-talk,
      title={Physical systems, composite explanations and diagrams}, 
      author={Jean Krivine},
      year={2019},
      series={SYCO 5 / STRINGS 3},
      type={Conference talk},
      NOTE = "\url{https://youtu.be/w8vTTvA0HTE} and 
              \url{https://www.cl.cam.ac.uk/events/syco/strings3-syco5/}",
}

@Inbook{survey-signal-flow,
author="Bonchi, Filippo
and Soboci{\'{n}}ski, Pawe{\l}
and Zanasi, Fabio",
title="A Survey of Compositional Signal Flow Theory",
bookTitle="Advancing Research in Information and Communication Technology: IFIP's Exciting First 60+ Years, Views from the Technical Committees and Working Groups",
year="2021",
publisher="Springer International Publishing",
address="Cham",
pages="29--56",
isbn="978-3-030-81701-5",
doi="10.1007/978-3-030-81701-5\_2",
}

@INPROCEEDINGS{graphical-affine-algebra,
  author={Bonchi, Filippo and Piedeleu, Robin and Sobociński, Pawel and Zanasi, Fabio},
  booktitle={2019 34th Annual ACM/IEEE Symposium on Logic in Computer Science (LICS)}, 
  title={Graphical Affine Algebra}, 
  year={2019},
  pages={1-12},
  doi={10.1109/LICS.2019.8785877},
  publisher={IEEE},
  address={Vancouver, BC, Canada},
}

@inproceedings{functorial-boxes,
author = {Paul-André Melliès},
address = {Berlin, Heidelberg},
booktitle = {Computer Science Logic},
copyright = {Springer-Verlag Berlin Heidelberg 2006},
isbn = {9783540454588},
issn = {0302-9743},
language = {eng},
pages = {1-30},
publisher = {Springer Berlin Heidelberg},
series = {Lecture Notes in Computer Science},
title = {Functorial Boxes in String Diagrams},
volume = {4207},
year = {2006},
doi={10.1007/11874683\_1},
}

@article{roman,
	doi = {10.4204/eptcs.333.5},
  
	year = 2021,
	month = {feb},
  
	publisher = {Open Publishing Association},
  
	volume = {333},
  
	pages = {65--78},
  
	author = {Mario Rom{\'{a}}n},
  
	title = {Open Diagrams via Coend Calculus},
  
	journal = {Electronic Proceedings in Theoretical Computer Science}
}

@article{krivine-siglog,
    author = {Krivine, Jean},
    title = {Systems Biology},
    year = {2017},
    publisher = {Association for Computing Machinery},
    address = {New York, NY, USA},
    volume = {4},
    number = {3},
    doi = {10.1145/3129173.3129182},
    journal = {ACM SIGLOG News},
    month = {jul},
    pages = {43–61},
    numpages = {19}
    }

@mastersthesis{hu-thesis,
  title = {External traced monoidal categories},
  author = {Nick Hu},
  year = {2019},
  institution = {University of Oxford},
  school = {Department of Computer Science},
  url = {https://www.cs.ox.ac.uk/files/11696/project.pdf},
}

@misc{modular-categories,
  doi = {10.48550/ARXIV.1509.06811},
  
  author = {Bartlett, Bruce and Douglas, Christopher L. and Schommer-Pries, Christopher J. and Vicary, Jamie},
  
  title = {Modular categories as representations of the 3-dimensional bordism 2-category},
  
  publisher = {arXiv},
  
  year = {2015},
  
  copyright = {arXiv.org perpetual, non-exclusive license}
}

@misc{electrical-circuits,
  doi = {10.48550/ARXIV.2106.07763},
  
  author = {Boisseau, Guillaume and Sobociński, Paweł},
  
  title = {String Diagrammatic Electrical Circuit Theory},
  
  publisher = {arXiv},
  
  year = {2021},
}

@misc{compositional-networks,
  doi = {10.48550/ARXIV.1504.05625},
  
  author = {Baez, John C. and Fong, Brendan},
  
  title = {A Compositional Framework for Passive Linear Networks},
  
  publisher = {arXiv},
  
  year = {2015},
}

@book{rosen-life-itself,
author = {Rosen, Robert},
address = {New York},
isbn = {0231075642},
publisher = {Columbia University Press},
series = {Complexity in ecological systems series},
title = {Life itself : a comprehensive inquiry into the nature, origin, and fabrication of life},
year = {1991},
}

@inproceedings{counterfactual,
author = {Laurent, Jonathan and Yang, Jean and Fontana, Walter},
title = {Counterfactual resimulation for causal analysis of rule-based models},
year = {2018},
isbn = {9780999241127},
publisher = {AAAI Press},
booktitle = {Proceedings of the 27th International Joint Conference on Artificial Intelligence},
pages = {1882–1890},
numpages = {9},
series = {IJCAI'18},
doi = {10.5555/3304889.3304920},
address = {Stockholm, Sweden},
}

@inbook{pearl-causality,
address={Cambridge},
edition={2},
title={The Logic of Structure-Based Counterfactuals},
DOI={10.1017/CBO9780511803161.009},
booktitle={Causality},
publisher={Cambridge University Press},
author={Pearl, Judea},
year={2009},
pages={201–258}
}

@book{milner,
author = {Milner, Robin},
address = {Berlin, Heidelberg},
copyright = {Springer-Verlag Berlin Heidelberg 1980},
isbn = {9783540102359},
issn = {0302-9743},
language = {eng},
organization = {SpringerLink (Online service)},
publisher = {Springer Berlin Heidelberg},
series = {Lecture Notes in Computer Science},
title = {A Calculus of Communicating Systems},
volume = {92},
year = {1980},
doi = {10.1007/3-540-10235-3},
}

@phdthesis{zanasi-thesis,
  doi = {10.48550/ARXIV.1805.03032},
  
  author = {Zanasi, Fabio},
  
  title = {Interacting Hopf Algebras: the theory of linear systems},
  
  institution = "\' Ecole Normale Sup\' erieure de Lyon",
  
  school = "Laboratoire de l’Informatique du Parall\' elisme",
  
  year = {2015},
  
}

@article{selinger,
   title={A Survey of Graphical Languages for Monoidal Categories},
   ISBN={9783642128219},
   ISSN={1616-6361},
   journal={Lecture Notes in Physics},
   publisher={Springer Berlin Heidelberg},
   author={Selinger, Peter},
   year={2010},
   doi={10.1007/978-3-642-12821-9\_4},
   volume={813},
   pages="289--355",
}

@article{lobski-zanasi,
title={String Diagrams for Layered Explanations},
volume={380},
ISSN={2075-2180},
DOI={10.4204/eptcs.380.21},
journal={Electronic Proceedings in Theoretical Computer Science},
publisher={Open Publishing Association},
author={Lobski, Leo and Zanasi, Fabio},
year={2023},
pages={362–382}
}

@article{modular1999,
author = {Hartwell, Leland H. and Hopfield, John J. and Leibler, Stanislas and Murray, Andrew W.},
address = {LONDON},
issn = {0028-0836},
journal = {Nature (London)},
language = {eng},
number = {6761},
pages = {C47-C52},
publisher = {MACMILLAN MAGAZINES LTD},
title = {From molecular to modular cell biology},
volume = {402},
year = {1999},
doi = {10.1038/35011540},
}

@article{blaming2015,
author = {Goessler, Gregor and Le Metayer, Daniel},
address = {AMSTERDAM},
issn = {0167-6423},
journal = {Science of computer programming},
language = {eng},
number = {Part 3},
pages = {223-235},
publisher = {Elsevier B.V},
title = {A general framework for blaming in component-based systems},
volume = {113},
year = {2015},
doi = {10.1016/j.scico.2015.06.010hal-01211484},
}

@misc{zx-for2020,
  eprint = {2012.13966},
  author = {van de Wetering, John},
  title = {{ZX}-calculus for the working quantum computer scientist},
  year = {2020},
}

@article{thereandback,
author = {Backens, Miriam and Miller-Bakewell, Hector and de Felice, Giovanni and Lobski, Leo and van de Wetering, John},
publisher = {{Verein zur F{\"{o}}rderung des Open Access Publizierens in den Quantenwissenschaften}},
title = {There and back again: {A} circuit extraction tale},
year = {2021},
journal = {{Quantum}},
issn = {2521-327X},
volume = {5},
pages = {421},
doi = {10.22331/q-2021-03-25-421},
}

@article{chem-trans-motifs,
author = {Andersen, Jakob L. and Flamm, Christoph and Merkle, Daniel and Stadler, Peter F.},
issn = {1545-5963},
journal = {IEEE/ACM transactions on computational biology and bioinformatics},
number = {2},
pages = {510-523},
publisher = {IEEE},
title = {Chemical Transformation Motifs -- Modelling Pathways as Integer Hyperflows},
volume = {16},
year = {2019},
doi = {10.1109/TCBB.2017.2781724}
}

@article{monoidal-gro,
  doi = {10.70930/tac/4tsjzc1o},
  author = {Moeller, Joe and Vasilakopoulou, Christina},  
  title = {Monoidal Grothendieck construction},
  journal = {Theory and Applications of Categories},
  volume = {35},
  number = {31},
  pages = {1159-1207},
  year = {2020}
}

@book{heunen-vicary-book,
    author = {Heunen, Chris and Vicary, Jamie},
    title = "{Categories for Quantum Theory: An Introduction}",
    publisher = {Oxford University Press},
    year = {2019},
    isbn = {9780198739623},
    doi = {10.1093/oso/9780198739623.001.0001},
    address = {Oxford},
}

@book{piedeleu-zanasi,
series={Elements in Applied Category Theory},
title={An Introduction to String Diagrams for Computer Scientists},
publisher={Cambridge University Press},
author={Piedeleu, Robin and Zanasi, Fabio},
year={2025},
doi={10.1017/9781009625715},
address={Cambridge},
}

@misc{fong2019backprop,
      title={Backprop as Functor: A compositional perspective on supervised learning}, 
      author={Brendan Fong and David I. Spivak and Rémy Tuyéras},
      year={2019},
      doi={10.48550/arXiv.1711.10455}, 
}

@article{tcs-cat-model-org-chem,
title = {A categorical model for organic chemistry},
journal = {Theoretical Computer Science},
volume = {1032},
pages = {115084},
year = {2025},
issn = {0304-3975},
doi = {10.1016/j.tcs.2025.115084},
author = {Ella Gale and Leo Lobski and Fabio Zanasi},
}

@misc{kaye-thesis,
      title={Foundations of Digital Circuits: Denotation, Operational, and Algebraic Semantics}, 
      author={George Kaye},
      year={2025},
      doi={10.48550/arXiv.2502.08497},
}

@misc{lorenz-tull-causal,
      title={Causal models in string diagrams}, 
      author={Robin Lorenz and Sean Tull},
      year={2023},
      eprint={2304.07638},
}

@article{fritz-markov20,
title = {A synthetic approach to Markov kernels, conditional independence and theorems on sufficient statistics},
journal = {Advances in Mathematics},
volume = {370},
pages = {107239},
year = {2020},
issn = {0001-8708},
doi = {10.1016/j.aim.2020.107239},
author = {Tobias Fritz},
}

@misc{jacobs-spr,
      title={Structured Probabilistic Reasoning}, 
      author={Bart Jacobs},
      year={2025},
      url={https://www.cs.ru.nl/B.Jacobs/PAPERS/ProbabilisticReasoning.pdf},
}

@misc{discocat,
      title={Mathematical Foundations for a Compositional Distributional Model of Meaning}, 
      author={Bob Coecke and Mehrnoosh Sadrzadeh and Stephen Clark},
      year={2010},
      eprint={1003.4394},
}

@article{hu-vicary21,
   title={Traced Monoidal Categories as Algebraic Structures in Prof},
   volume={351},
   ISSN={2075-2180},
   DOI={10.4204/eptcs.351.6},
   journal={Electronic Proceedings in Theoretical Computer Science},
   publisher={Open Publishing Association},
   author={Hu, Nick and Vicary, Jamie},
   year={2021},
   pages={84–97}
}

@article{braithwaite-roman23,
   title={Collages of String Diagrams},
   volume={397},
   ISSN={2075-2180},
   DOI={10.4204/eptcs.397.3},
   journal={Electronic Proceedings in Theoretical Computer Science},
   publisher={Open Publishing Association},
   author={Braithwaite, Dylan and Román, Mario},
   year={2023},
   pages={39–53},
}

@article{ponto-shulman12,
title = {Duality and traces for indexed monoidal categories},
journal = {Theory and Applications of Categories},
volume = {26},
number = {23},
pages = {582-659},
year = {2012},
doi = {10.70930/tac/2ckgqep9 },
author = {Kate Ponto and Michael Shulman},
}

@misc{earnshaw-roman24,
      title={Context-Free Languages of String Diagrams}, 
      author={Matt Earnshaw and Mario Román},
      year={2024},
      eprint={2404.10653},
}

@article{hefford-comfort23,
   title={Coend Optics for Quantum Combs},
   volume={380},
   ISSN={2075-2180},
   DOI={10.4204/eptcs.380.4},
   journal={Electronic Proceedings in Theoretical Computer Science},
   publisher={Open Publishing Association},
   author={Hefford, James and Comfort, Cole},
   year={2023},
   pages={63–76},
}

@article{fritz-rischel20,
    title      = {Infinite products and zero-one laws in categorical probability},
    author     = {Tobias Fritz and Eigil Fjeldgren Rischel},
    doi        = {10.32408/compositionality-2-3},
    journal    = {Compositionality},
    issn       = {2631-4444},
    volume     = {Volume 2 (2020)},
    year       = {2020},
    numpages   = {20},
}

@misc{quantitative-monoidal-algebra,
      title={Quantitative Monoidal Algebra: Axiomatising Distance with String Diagrams}, 
      author={Gabriele Lobbia and Wojciech Różowski and Ralph Sarkis and Fabio Zanasi},
      year={2025},
      eprint={2410.09229},
}

@phdthesis{boisseau-thesis,
author={Boisseau, Guillaume},
year={2023},
title={Graphical Electrical Circuit Theory},
isbn={9798302876584},
institution={University of Oxford},
school = {Department of Computer Science},
url={https://www.proquest.com/dissertations-theses/graphical-electrical-circuit-theory/docview/3165317412/se-2},
}

@mastersthesis{dunn-thesis,
  title = {Profunctor Semantics for Linear Logic},
  author = {Dunn~III, Lawrence H.},
  year = {2015},
  institution = {University of Oxford},
  school = {Department of Computer Science},
  url = {https://www.cs.ox.ac.uk/people/aleks.kissinger/theses/bob/Dunn.pdf},
}

@misc{comfort-profunctors-linear-logic,
      title={Notes on profunctors and compact multiplicative linear logic},
      author={Comfort, Cole},
      year={2024},
      url={https://colecomfort.github.io/pdfs/notes.pdf},
}

@article{browne-gflow,
doi = {10.1088/1367-2630/9/8/250},
year = {2007},
volume = {9},
number = {8},
author = {Browne, Daniel E and Kashefi, Elham and Mhalla, Mehdi and Perdrix, Simon},
title = {Generalized flow and determinism in measurement-based quantum computation},
journal = {New Journal of Physics},
pages={250},
}

@article{measurement-calculus,
author = {Danos, Vincent and Kashefi, Elham and Panangaden, Prakash},
title = {The measurement calculus},
year = {2007},
publisher = {Association for Computing Machinery},
volume = {54},
number = {2},
issn = {0004-5411},
doi = {10.1145/1219092.1219096},
journal = {J. ACM},
pages = {8–es},
}

@incollection{extended-meas-calculus,
title={Extended Measurement Calculus},
booktitle={Semantic Techniques in Quantum Computation},
publisher={Cambridge University Press},
author={Danos, Vincent and Kashefi, Elham and Panangaden, Prakash and Perdrix, Simon},
editor={Gay, Simon and Mackie, Ian},
year={2009},
pages={235-310},
doi={10.1017/CBO9781139193313.008},
address={Cambridge},
}

@article{backens-2014,
doi = {10.1088/1367-2630/16/9/093021},
year = {2014},
publisher = {IOP Publishing},
volume = {16},
number = {9},
author = {Backens, Miriam},
pages = {093021},
title = {The ZX-calculus is complete for stabilizer quantum mechanics},
journal = {New Journal of Physics},
}

@article{jeandel-univ-complete,
    title      = {Completeness of the ZX-Calculus},
    author     = {Emmanuel Jeandel and Simon Perdrix and Renaud Vilmart},
    doi        = {10.23638/LMCS-16(2:11)2020},
    journal    = {Logical Methods in Computer Science},
    issn       = {1860-5974},
    volume     = {16},
    issue      = {2},
    year       = {2020},
    pages      = {11:1--11:72},
}

@inproceedings{hadzihasanovic-zx18,
author = {Hadzihasanovic, Amar and Ng, Kang Feng and Wang, Quanlong},
title = {Two complete axiomatisations of pure-state qubit quantum computing},
year = {2018},
isbn = {9781450355834},
publisher = {Association for Computing Machinery},
doi = {10.1145/3209108.3209128},
booktitle = {Proceedings of the 33rd Annual ACM/IEEE Symposium on Logic in Computer Science},
pages = {502-511},
series = {LICS '18},
address = {New York, NY, USA},
}

@misc{gavranovic19,
      title={Compositional Deep Learning}, 
      author={Bruno Gavranović},
      year={2019},
      eprint={1907.08292},
}

@misc{cruttwell21,
      title={Categorical Foundations of Gradient-Based Learning}, 
      author={G. S. H. Cruttwell and Bruno Gavranović and Neil Ghani and Paul Wilson and Fabio Zanasi},
      year={2021},
      eprint={2103.01931},
}

@article{shulman2008,
title = {Framed bicategories and monoidal fibrations},
journal = {Theory and Applications of Categories},
volume = {20},
number = {18},
pages = {650-738},
year = {2008},
doi = {10.70930/tac/2m83wy59},
author = {Michael Shulman},
}

@article{hofstra-demarchi2006,
title = {Descent for monads},
journal = {Theory and Applications of Categories},
volume = {16},
number = {24},
pages = {668-699},
year = {2006},
doi = {10.70930/tac/nfs954c1},
author = {Pieter Hofstra and Federico De Marchi},
}

@article{mellies-cat-semantics-linear-logic,
    title      = {Categorical semantics of linear logic},
    author     = {Paul-André Melliès},
    journal    = {Panoramas \& Synth\` eses},
    volume     = {27},
    year       = {2009},
    pages      = {1-197},
    url        = {https://www.irif.fr/~mellies/mpri/mpri-ens/biblio/categorical-semantics-of-linear-logic.pdf},
}

@article{jacobs-szeles-stein25,
   title={Compositional Inference for Bayesian Networks and Causality},
   volume={Volume 5 -- Proceedings of {MFPS} {XLI}},
   ISSN={2969-2431},
   doi={10.46298/entics.17029},
   journal={Electronic Notes in Theoretical Informatics and Computer Science},
   publisher={Centre pour la Communication Scientifique Directe (CCSD)},
   author={Jacobs, Bart and Széles, Márk and Stein, Dario},
   year={2025},
   numpages={21},
}

@misc{lmt-part2,
title={Layered Monoidal Theories {II}: Fibrational Semantics}, 
      author={Leo Lobski and Fabio Zanasi},
      year={2026},
      eprint={2602.22373},
}

@book{hinze-marsden-2023,
address={Cambridge},
title={Introducing String Diagrams: The Art of Category Theory},
publisher={Cambridge University Press},
author={Hinze, Ralf and Marsden, Dan},
year={2023},
doi={10.1017/9781009317825},
}

@misc{nashed2023,
      title={Causal Explanations for Sequential Decision Making Under Uncertainty}, 
      author={Samer B. Nashed and Saaduddin Mahmud and Claudia V. Goldman and Shlomo Zilberstein},
      year={2023},
      eprint={2205.15462}, 
}

@misc{englberger2025,
      title={Causal Abstractions, Categorically Unified}, 
      author={Markus Englberger and Devendra Singh Dhami},
      year={2025},
      eprint={2510.05033},
}

@misc{fritz2009,
      title={A presentation of the category of stochastic matrices}, 
      author={Tobias Fritz},
      year={2009},
      eprint={0902.2554},
}

@misc{complete-disc-prob-prog2024,
      title={A Complete Axiomatisation of Equivalence for Discrete Probabilistic Programming}, 
      author={Robin Piedeleu and Mateo Torres-Ruiz and Alexandra Silva and Fabio Zanasi},
      year={2024},
      eprint={2408.14701}, 
}

@misc{tapesstochasticmatrices2026,
      title={Tapes as Stochastic Matrices of String Diagrams}, 
      author={Filippo Bonchi and Cipriano Junior Cioffo},
      year={2026},
      eprint={2601.01472},
}

@misc{causalcompositionalabstraction2026,
      title={Causal and Compositional Abstraction}, 
      author={Robin Lorenz and Sean Tull},
      year={2026},
      eprint={2602.16612},
}

@misc{graphicalsymplecticalgebra24,
      title={Graphical Symplectic Algebra}, 
      author={Robert I. Booth and Titouan Carette and Cole Comfort},
      year={2024},
      eprint={2401.07914},
}

@inproceedings{szx-calculus19,
  doi = {10.4230/LIPICS.MFCS.2019.55}, 
  author = {Carette, Titouan and Horsman, Dominic and Perdrix, Simon},
  title = {SZX-Calculus: Scalable Graphical Quantum Reasoning},
  booktitle = {44th International Symposium on Mathematical Foundations of Computer Science (MFCS 2019)},
  journal = {LIPIcs, Volume 138, MFCS 2019},
  volume = {138},
  pages = {55:1-55:15},
  publisher = {Schloss Dagstuhl -- Leibniz-Zentrum für Informatik},
  year = {2019},
}

@article{tape-diagrams23,
author = {Bonchi, Filippo and Di Giorgio, Alessandro and Santamaria, Alessio},
title = {Deconstructing the Calculus of Relations with Tape Diagrams},
year = {2023},
publisher = {Association for Computing Machinery},
address = {New York, NY, USA},
volume = {7},
number = {POPL},
doi = {10.1145/3571257},
journal = {Proc. ACM Program. Lang.},
numpages = {31},
}

@InProceedings{diagrammatic-program-logics25,
author="Bonchi, Filippo and Di Giorgio, Alessandro and Di Lavore, Elena",
editor="Abdulla, Parosh Aziz and Kesner, Delia",
title="A Diagrammatic Algebra for Program Logics",
booktitle="Foundations of Software Science and Computation Structures",
year="2025",
publisher="Springer Nature Switzerland",
pages="308--330",
isbn="978-3-031-90897-2",
address="Cham",
}

@InProceedings{tape-monoidal-monads25,
  author =	{Bonchi, Filippo and Cioffo, Cipriano Junior and Di Giorgio, Alessandro and Di Lavore, Elena},
  title =	{{Tape Diagrams for Monoidal Monads}},
  booktitle =	{11th Conference on Algebra and Coalgebra in Computer Science (CALCO 2025)},
  pages =	{11:1--11:24},
  series =	{Leibniz International Proceedings in Informatics (LIPIcs)},
  ISBN =	{978-3-95977-383-6},
  ISSN =	{1868-8969},
  year =	{2025},
  volume =	{342},
  editor =	{C\^{i}rstea, Corina and Knapp, Alexander},
  publisher =	{Schloss Dagstuhl -- Leibniz-Zentrum f{\"u}r Informatik},
  doi =		{10.4230/LIPIcs.CALCO.2025.11},
  address = {Dagstuhl},
}

@misc{lobski-phdthesis,
      title={Layered Monoidal Theories}, 
      author={Leo Lobski},
      year={2025},
      eprint={2512.12139},
}
